\def\<{{\langle}}
\def\>{{\rangle}}
\newcommand{\tpose}{\top}
\renewcommand{\exp}[1]{\operatorname{exp}\left(#1\right)} % Exponential
\def\SL{\mathbb{SL}}
\def\gofbb{\mathbb{GOF}}%Swithc with \textrm{GoF}
\def\eofbb{\mathbb{EOF}}%Switch with \textrm{EoF}
\newtheoremstyle{dotless}{}{}{\itshape}{}{\bfseries}{}{ }{}
\theoremstyle{dotless}
\newtheorem*{defi}{Definition}
\theoremstyle{plain}
\newtheorem{myth}{Theorem}
\newtheorem{myprop}[myth]{Proposition}
\newtheorem{mylem}[myth]{Lemma}
\newtheoremstyle{named}{}{}{\itshape}{}{\bfseries}{.}{.5em}{#1 #3}
\theoremstyle{named}
\newtheorem*{namthm*}{Theorem}
\crefname{myth}{Theorem}{Theorems} 
\newcounter{parentnumber}
\newcommand{\edit}[1]{\textcolor{red}{#1}}
\newcommand{\ii}{\mathcal{I}}
\newcommand{\gof}{\mathrm{GoF}}
\newcommand{\eof}{\mathrm{EoF}}
\newcommand{\sll}{\mathrm{SL}}
\newcommand{\SNR}{\mathrm{SNR}}
\newcommand{\delete}{\mathrm{del}}
\newcommand{\qmix}{\left\langle \mathcal{Q} ^{\otimes n} \right\rangle}
\title{Limits on Testing Structural Changes in Ising Models}
\author{%
  Aditya Gangrade, Bobak Nazer, Venkatesh Saligrama\\ Boston University \\ \texttt{\{gangrade, bobak, srv\}@bu.edu}
}
\date{\vspace{-12pt}}
\begin{document}

\maketitle
\begin{abstract}
\if0
In many high-dimensional settings, such as brain networks and social networks, one is often interested in detecting changes in network connectivity, which can arise as a consequence of some external stimuli or event. One possible approach is to estimate the network connectivity before and after the external stimulus, and then check for changes, but this approach only works in a data-rich regime. Thus, it is of considerable interest to develop algorithms that can directly detect and localize changes, even when it is not possible to fully reconstruct the network from the available data. In order to obtain statistical guarantees, prior work in this context has leveraged concepts from compressed sensing and sparse recovery, such as the restricted eigenvalue condition, showing that under such conditions sparse changes can be localised with sample sizes scaling polynomially in the number of changes. In this paper, we adopt an information-theoretic perspective and examine the problem of detecting sparse changes in Ising graphical models. In contrast to the sparse recovery approach, we show that, in the worst-case (i.e., with respect to minimax risk), the sample complexity for detecting sparse changes is no better than learning the entire model even in settings with local sparsity. As an illustrative case study, we develop matching upper and lower bounds for binary trees. Taken together, our results imply that conventional graphical models may not be amenable to sparse change detection, and raise the need for careful study of when additional assumptions enabling this hold.
 
We present novel information theoretic limits on detecting sparse changes in Ising models, a problem that arises in many applications where network changes can occur due to some external stimuli. We show that the sample complexity for detecting sparse changes, in a minimax sense, is no better than learning the entire model even in settings with local sparsity. As an illustrative case study, we develop matching upper and lower bounds for \edit{testing of edge deletion in forest structured graphs, in which case testing of large changes \emph{is} separated from structure learning}. Taken together, our results imply that conventional graphical models may not be amenable to sparse change detection\edit{, and algorithm development should focus on large changes instead}. This is a surprising fact in light of prior work rooted in sparse recovery methods, which suggest that sample complexity in this context scales only with the number of network changes.

In many high-dimensional settings, such as brain networks and social networks, one is often interested in detecting changes in network connectivity, which can arise as a consequence of some external stimuli or event. One possible approach is to estimate the network connectivity before and after the external stimulus, and then check for changes, but this approach only works in a data-rich regime. Thus, it is of considerable interest to develop algorithms that can directly detect and localize changes, even when it is not possible to fully reconstruct the network from the available data. In order to obtain statistical guarantees, prior work in this context has leveraged concepts from compressed sensing and sparse recovery, such as the restricted eigenvalue condition, showing that under such conditions sparse changes can be localised with sample sizes scaling polynomially in the number of changes. In this paper, we adopt an information-theoretic perspective and examine the problem of detecting sparse changes in Ising graphical models. In contrast to the sparse recovery approach, we show that, in the worst-case (i.e., with respect to minimax risk), the sample complexity for detecting sparse changes is no better than learning the entire model even in settings with local sparsity. As an illustrative case study, we develop matching upper and lower bounds for binary trees. Taken together, our results imply that conventional graphical models may not be amenable to sparse change detection, and raise the need for careful study of when additional assumptions enabling this hold.
\fi 
We present novel information-theoretic limits on detecting sparse changes in Ising models, a problem that arises in many applications where network changes can occur due to some external stimuli. We show that the sample complexity for detecting sparse changes, in a minimax sense, is no better than learning the entire model even in settings with local sparsity. This is a surprising fact in light of prior work rooted in sparse recovery methods, which suggest that sample complexity in this context scales only with the number of network changes. To shed light on when change detection is easier than structured learning, we consider testing of edge deletion in forest-structured graphs, and high-temperature ferromagnets as case studies. We show for these that testing of small changes is similarly hard, but testing of \emph{large} changes is well-separated from structure learning. These results imply that testing of graphical models may not be amenable to concepts such as restricted strong convexity leveraged for sparsity pattern recovery, and algorithm development instead should be directed towards detection of large changes.
%To illustrate the point that 
%We further develop matching upper and lower bounds for \edit{testing of edge deletion in forest structured graphs, in which case testing of large changes \emph{is} separated from structure learning}. Taken together, our results imply that conventional graphical models may not be amenable to sparse change detection\edit{, and algorithm development should focus on large changes instead}. 
\end{abstract}

\section{Introduction}\label{sec:intro}
\if0
Outline should be simple. 
\begin{itemize}
    \item We develop lower bounds for structure testing. This is in contrast to statistical difference testing. Describe context. Structure estimation is well studied. Motivation for testing differences is that we have insufficient data or low snr. Structure testing (as opposed to statistical diff testing) is somewhat more compelling, representing loss of connectivity, and particularly of relevance in applications. 
    \item folk theorem: learning and localizing Sparse changes in support in statistical and signal processing models has sample complexity growing with number of changes. While this point of view is supported by compressing sensing (consider $y=Ax+n,\,y'=Ax'+n'$ then $(y-y')=A(x-x')+(n-n')$, and so although we get hit with an snr loss, more than gained in terms of sample complexity $n \approx MI(X-X';Y-Y'|A)$ grows linear with sparsity but logarithmically with loss of SNR). Inspired by these results, prior works have characterized such bounds for sparse changes in ising models. 
    \item RE condition: A fundamental tool in characterizing sparse recovery both in CS as well as in ising models with computationally efficient methods is the assumption on RE. Explain what is RE and how it helps. Basically, along the descent cone the eigenvalues are strictly bounded away from zero, allowing for one to obtain strong results. While the validity of this condition in CS, where one is free to design $A$ matrix is well documented and explored, its validity in situations where $A$ is no longer in our control such as guassian graphical models or ising has not been heretofore studied. \
    \item Rather than study RE in such dependent scenarios, we take an IT view and derive expressions for sample complexity that show that these gains cannot be realized. 
    \item we also demonstrate tightness of our bounds on some special examples such as binary trees.
    \item explain technical contributions as to how you are able to bound this.
\end{itemize}

The Ising model, with its storied history and myriad applications, is of deep interest in statistical contexts since it provides a highly expressive yet well-structured modeling tool in a variety of settings. Key to this utility is the underlying network structure of the model, which not only provides a succinct representation for the distribution, but also is a fundamental modelling tool that plugs into common downstream applications. Given this, various problems of structural inference and testing have been studied in the literature, most notable the problem of exact recovery of the underlying graph from samples, for which both information theoretic sample complexity and practical algorithms approaching these have been determined over the course of the past decade.

This paper is concerned with testing of large differences in the structure of Ising models in the minimax setting. Our emphasis is on lower bounds, and we demonstrate obstructions to the above which establish information-theoretic necessary conditions on the sample complexities of the same.

\subsection{A summary of this paper}

$G(P)$ denotes the network structure (\S\ref{sec:def}) of an Ising model $P$. All models have $p$ variables, or nodes.

There are two natural ways to pose testing and estimation problems for graphical models - under statistical and under structural metrics. In testing, statistical formulations asks one to distinguish graphical models that are separated in a statistical sense, i.e., some divergence between the two distributions is large. For example, \cite{daskalakis2019testing} pose the goodness-of-fit testing problem as testing $P = Q$ against $D_{\mathrm{SKL}}(P\|Q) \ge \varepsilon,$ where $P$ is a known model, while $Q$ is unknown, and $D_{\mathrm{SKL}}$ is symmetrised KL divergence. Such formulations follow the lead of a large body of work in testing of unstructured, or weakly structured distributions. On the other hand, structural formulations embrace the structured nature of graphical models, and test in the parameter space. For instance, in graphical models, one may ask if two models have the same underlying graphical structure, or if they differ in many edges (indeed, this is how question is posed below).

The focus of this work is on structural testing and estimation. While testing against statistical divergences is a natural question when there is little underlying structure available, this suffers from a lack of interpretability in a contexts with structure. This is critical in graphical models, where the network is a fundamental modeling tool that is fit and used for downstream tasks in practice. Further, divergences can be large in structurally irrelevant ways, e.g., if a few isolated nodes in a large graph become strongly interacting, a large KL divergence w.r.t. the original model is induced, but such a difference is not interesting when one is trying to reason about the system on the whole. In light of the downstream applications and this misalignment of goals relative to statistical formulations, testing in the parameter space is compelling, and testing the network structure is the simplest instance of this.
\fi 
\if0 
Recent technological advances have lead to the emergence of high-dimensional, network-valued datasets in a wide range of scientific disciplines~\cite{yang2017vivo,costanzo2010genetic,phizicky1995protein}, including social science, neuroscience, and molecular biology. One natural question is whether one can reliably infer the underlying network structure from these noisy data sets~\cite{alivisatos2012brain,mitra2013integrative}. For example, one could model neurons as the nodes of a graph, their functional connectivity as the edges, and the observations as realizations of a probabilistic graphical model (GM). The goal is then to recover the edges from these observations (e.g., the Neural Connectomics Challenge~\cite{neuralconnectomics}). There has been a flurry of recent activity on the mathematical aspects of this question, yielding sharp information-theoretic limits on when reconstruction is possible for GMs~\cite{drton2017structure,SanWai}, as well as algorithms that efficiently approach these limits (see, e.g.,  ~\cite{bresler2015structurelearning,lokhov2018optimal,wu2019sparse,bandeira18}). 

While obtaining the full network is sometimes useful, we are often more interested in \emph{changes} in network structure in response to external stimuli, such as changes in protein-protein interactions across different disease states~\cite{ideker2012differential} or changes in neuronal connectivity as a subject learns a task~\cite{mohammed2016integrative}. A baseline approach is to estimate the network at each stage, and then compare the differences. However, these datasets exhibit significant intra-trial and inter-sample variability, and so the amount of data available may be too small for this approach to yield meaningful results. On the other hand, \emph{reliably recovering network changes should be easier than full reconstruction}, and, although several inference algorithms have been recently proposed to explore this possibility~\cite{xia2015testing,FazBan,belilovsky2016testing,bodwin2018testing,zhang2019diffnetfdr,cai2019differential}, our mathematical understanding of when this is indeed easier is still in its infancy.
\fi 
Recent technological advances have lead to the emergence of high-dimensional datasets in a wide range of scientific disciplines~\cite{yang2017vivo,costanzo2010genetic,phizicky1995protein,bresler2015structurelearning,lokhov2018optimal,wu2019sparse,bandeira18}, where the observations are modeled as arising from a probabilistic graphical model (GM), and the goal is to recover the network~\cite{neuralconnectomics}. While full network recovery is sometimes useful, and there has been a flurry of activity \cite{drton2017structure,SanWai} in this context, we are often interested in \emph{changes} in network structure in response to external stimuli, such as changes in protein-protein interactions across different disease states~\cite{ideker2012differential} or changes in neuronal connectivity as a subject learns a task~\cite{mohammed2016integrative}.

A baseline approach is to estimate the network at each stage, and then compare the differences. However, such observations exhibit significant variability, and the amount of data available may be too small for this approach to yield meaningful results. On the other hand, \emph{reliably recovering network changes should be easier than full reconstruction}. While prior works have proposed inference algorithms to explore this possibility~\cite{zhao2014direct,xia2015testing,FazBan,belilovsky2016testing,bodwin2018testing,zhang2019diffnetfdr,cai2019differential}, we do not have a good mathematical understanding of when this is indeed easier.

To shed light on this question, we propose to derive information-theoretic limits for two structural inference problems over degree-bounded Ising models. The first is goodness-of-fit testing ($\gofbb$). Let $G(P)$ be the network structure (see \S\ref{sec:def}) of an Ising model $P$. $\gofbb$ is posed as follows.\vspace{-5pt} \begin{displayquote} $\mathbf{\gofbb}:$ \emph{Given an Ising model $P$ and i.i.d.~samples from another Ising model $Q$, determine if $P = Q$ or if $G(P)$ and $G(Q)$ differ in at least $s$ edges.} \vspace{-7pt} \end{displayquote}  The second is a related estimation problem, termed error-of-fit ($\eofbb$), that demands localising differences in $G(P)$ and $G(Q)$ (if distinct).\vspace{-5pt}  \begin{displayquote} $\mathbf{\eofbb}$: \emph{Given an Ising model $P$ and i.i.d.~samples from another Ising model $Q$ that is either equal to $P$, or has a network structure that differs from that of $P$ in $s$ edges or more, determine the edges where $G(P)$ and $G(Q)$ differ.} \vspace{-7pt} \end{displayquote}

Notice that the above problems are restricted to models that are either identical, or significantly different. `Tolerant' versions (separating small changes from large) are not the focus for us (although we discuss this setting for a special case in \S\ref{sec:testing_deletions}). 
\if0
The primary information theoretic question concerning these problems is that of \emph{sample complexity}: how many samples from the unknown $Q$ are required in to reliably test/estimate errors? A more nuanced view comes from noting that one trivial way to solve these problems is to estimate the network $G(Q)$, and then compare it with $G(P)$. Since both $\gofbb$ and $\eofbb$ are determining relatively coarse information about $G(Q)$, it may be hoped that the sample complexity of these is much smaller than that of recovering $G(Q)$ itself. 
\fi 
The main question of interest is: \emph{For what classes of Ising models is the sample complexity of the above inference problems significantly smaller than that of recovering the underlying graph directly?}

\if0
A slew of work approaches this question by constructing $M$-estimators for direct change estimation (DCE). The problem is usually phrased as estimating the difference in parameters for two \emph{unknown} distributions $P$ and $Q$ (this is stronger than $\eofbb$ (and a fortiori, $\gofbb$), which assumes $P$ is known). The emphasis is on sparse changes - the broad idea is that in dense graphs structure learning is quite intractable (since sample complexity grow exponentially with the max degree $d$), but sparse changes may directly be estimated. This hope is inspired by drawing an analogy with linear models, where if the sensing matrix is the same across samples, differences in $s$ locations in the parameters can be identified with $O(s\log p)$ samples. In the same vein, DCE work imposes conditions from sparse recovery, such as restricted eigenvalue or restricted strong convexity, upon the laws of the graphical models. This can be done in a few different ways, but the most common technique involves imposing such conditions on the log-ratio of the densities of the two models. Typical results in this line of work say that under the aforementioned conditions, and in degree controlled settings, such methods have sample complexity of $O(\mathrm{poly}(s) \log p)$.
\fi 

%While prior work has primarily focused on algorithms, we instead adopt an information-theoretic view towards a minimax formulations of the above problems, and 
{\bf Contribution.} We prove the following surprising fact: up to relatively large values of $s,$ the sample complexities of $\gofbb$ and $\eofbb$ are \emph{not} appreciably separated from that of structure learning ($\SL$). Our bound is surprising in light of the fact that prior works \cite{liu2014direct, liu2017, FazBan, kim2019two, cai2019differential} propose algorithms for $\gofbb$ and $\eofbb$, and claim recovery of \emph{sparse} changes is possible with sample complexity much smaller than $\SL$. Concretely, for models with $p$ nodes, degrees bounded by $d,$ and non-zero edge weights satisfying $\alpha \le |\theta_{ij}| \le \beta$ (see \S\ref{sec:def}), the sample complexity of $\SL$ is bounded as $O(e^{2\beta d}\alpha^{-2} \log p)$. We show that if $s \ll \sqrt{p},$ then the sample complexity of $\gofbb$ is at least $e^{2\beta d - O(\log(d))} \alpha^{-2} \log p,$ and that if $s \ll p,$ then the sample complexity of $\eofbb$ has the same lower bound. We further show that the same effect occurs in the restricted setting of detecting edge deletions in forest-structured Ising models, and, to some extent, in detecting edge deletions in high-temperature ferromagnets. In the case of forests, we tightly characterise this behaviour of $\gofbb$, showing that for $s\ll \sqrt{p},$ $\gofbb$ has sample complexity comparable to $\SL$ of forests, while for $s\gg \sqrt{p},$ it is vanishingly small relative to $\SL$. {For high-temperature ferromagnets, we show that detecting changes is easier than $\SL$ if $s \gg \sqrt{pd},$ while this does not occur if $s \ll \sqrt{pd}$. These are the first structural testing results for edge edits in natural classes of Ising models that show a clear separation from $\SL$ in sample complexity.} 

\if0
\emph{Sparse Recovery Based Bounds.} It is of independent interest to compare and contrast our results with recent bounds obtained by appealing to concepts rooted in sparsity pattern recovery \cite{liu2014direct, liu2017, FazBan, kim2019two, cai2019differential}, which claim that $\gofbb$ and $\eofbb$ are possible for \emph{sparse} changes with sample complexity much smaller than $\SL$. Since we consider conventional and generic model classes, our results question the general applicability of sparse recovery methods in this context.
\fi 

\emph{Technical Novelty.} The lower bounds are shown by constructing explicit and flexible obstructions, utilising Le Cam's method and $\chi^2$-based Fano bounds. The combinatorial challenges arising in directly showing obstructions on large graphs are avoided by constructing obstructions with well-controlled $\chi^2$-divergence on small graphs, and then {\it lifting} these to $p$ nodes via tensorisation in a process that efficiently deals with combinatorial terms. The main challenge is obtaining precise control on the $\chi^2$-divergence between graphs based on cliques, which is attained by an elementary but careful analysis that exploits the symmetries inherent in Ising models on cliques. The most striking instance of this is the `Emmentaler clique' (Fig.~\ref{fig:emmen_main}), which is constructed by removing $\Theta(d^2)$ edges from a $d$-clique in a structured way. Despite this large edit, we show that it is exponentially hard (in low temperatures) to distinguish this clique with large holes from a full clique.

\if0 
{\it Sparse Recovery Methods.} While of independent interest, the above results also serve as a counterpoint to the algorithmic direct change estimation (DCE) work mentioned above, which claims that $\gofbb$ and $\eofbb$ are possible for \emph{sparse} changes with sample complexity much smaller than $\SL$. Note that our bounds are derived for natural classes of models under no assumptions, hold for situations with explicitly sparse changes, and also for simple settings such as forests, and also in high-temperature settings (where they are tighter). Further, our proofs show that minimax testing of smaller changes is \emph{harder}. Thus, these results raise questions about the generality of the DCE work, and call for careful examination of the extent to which the sparse recovery style conditions that these papers impose actually apply in GMs.

Finally, since $\eofbb$ is reducible to $\SL$, our results also yield lower bounds on the complexity of $\SL$. These bounds improve upon existing results for the same, characterising it up to a $\tilde{O}(d)$ factor.
\fi

\subsection{Related Work}%\vspace{2pt}

\noindent \textbf{Statistical Divergence Based Testing.} Related to our problem, but different from our setup, $\gofbb$ of Ising models has been studied under various statistical metrics such as the symmetrised KL divergence \cite{daskalakis2019testing} and total variation \cite{bezakova_lowbd}. More refined results and extensions have appeared in \cite{gheissari2018concentration, daskalakis2017concentration,canonne2017testing,acharya-causal}. These are tests that certify whether or not a particular statistical distance between two distribution is larger than some threshold. In contrast, our focus is on \emph{structural} testing and estimation, namely, whether or not the change in the network is a result of edge-deletions or edge-additions. As such, statistically-based $\gofbb$ tests do not have a direct bearing on structural testing. Divergences can be large in structurally irrelevant ways, e.g., if a few isolated nodes in a large graph become strongly interacting, a large KL divergence is induced, but this is not a significant change in the network on the whole (Also see \S\ref{appx:stat_div}). In light of applications which demand structure testing as a means to interpret phenomena, and this misalignment of goals, testing in the parameter space is compelling, and testing the network is the simplest instance of this.

\if0 
\noindent \textbf{Statistical Divergence Based Testing} Related to our problem, $\gofbb$ of Ising models has been under statistical metrics was first studied in \cite{daskalakis2019testing}, which investigates testing equality against $D_{\mathrm{SKL}}(Q\|P) > \varepsilon,$ where $D_{\mathrm{SKL}}$ is symmetrised KL divergence. The work develops many schema, showing generic sample complexity bounds of $O((p d \beta/\varepsilon)^2)$ with a test that is efficient when the covariances of $P$ are available, further improved to $O((pd/\varepsilon)^2)$ when testing against total variation in \cite{bezakova_lowbd}. More refined results, some of which are tight, are known in special cases such as in forests, or in high-temperatures \cite{gheissari2018concentration, daskalakis2017concentration}, or in independence testing ($G(P)$ is empty). \cite{bezakova_lowbd} show \emph{computational} hardness of testing low-temperature, antiferromagnetic Ising models. Such statistical testing of GMs has since been extended to, e.g., Bayesian Networks \cite{canonne2017testing} and Causal Models \cite{acharya-causal}.

The focus of our work is on \emph{structural} testing and estimation. While testing against statistical divergences is natural when there is little underlying structure available, this suffers from a lack of interpretability in structured contexts. This is critical in GMs, where the network is a fundamental modeling tool that is fit and used for downstream tasks in practice. Further, divergences can be large in structurally irrelevant ways, e.g., if a few isolated nodes in a large graph become strongly interacting, a large KL divergence is induced, but this is not a significant change in the network on the whole. In light of the applications and this misalignment of goals, testing in the parameter space is compelling, and testing the network is the simplest instance of this.

\noindent \textbf{Sparse Recovery Based Structural Testing Methods.} 
%While such methods may improve computational behaviour, they cannot affect statistical rate simply because at least one unknown graph is being estimated
More directly related to our work, are those that are based on direct change estimation ({\it DCE}). These works attempt to directly characterize the difference of parameters $\delta = \theta_P - \theta_Q$ by leveraging sparsity of $\delta$.

%Another line of work pursues direct change estimation (DCE), which aims to reason about the difference of parameters $\delta = \theta_P - \theta_Q$ without estimating either $\theta$ under \emph{sparsity} assumptions on $\delta$. 

The most developed line of DCE methods for Ising models is via `KL Importance Estimation Procedure' (KLIEP) as in \cite{FazBan, liu2014direct, liu2017, liu2017learning, kim2019two}, which variously consider support recovery or estimation or testing of $\delta$. The key observation is that the density ratio $R(\delta) := \nicefrac{Q(x)}{P(x)} = N^{-1}(\delta; P) \exp{x^T \delta x},$ where $N(\delta;P) = \mathbb{E}_P[\exp{ X^T \delta X}].$  Considering $D(P\|RP)$ yields the loss function $\mathcal{L}(\delta) = -\langle \delta, \hat{\mathbb{E}}_Q[XX^T]\rangle + \log \hat{\mathbb{E}}_P[ \exp{X^T \delta X}]$, where $\hat{\mathbb{E}}$ denotes empirical means, which is minimised under sparsity constraints. The typical result is that if the change is sparse and if $\mathcal{L}$ is regular, in a sense reminiscent of the RSC condition in sparse recovery, then testing etc.~can be done in $O(\mathrm{poly}(s) \log p)$ samples (for any $P,Q$!), which is further much smaller than the number needed for $\SL$. These papers ostensibly maintain that conditions are imposed on the changes (the $\delta$) but not on $P,Q$ - which leads to effectiveness even for dense graphs. 

Our results clearly contradict such claims, at least in a minimax sense. Let us describe why. The error in the above papers lies in the claim that `conditions are imposed only on $\delta$' but not on $P$ or $Q$. The arguments of these papers rely on regularity assumptions on $\nabla^2 \mathcal{L}$ near the true change $\delta^*$. For example, \cite{liu2014direct, liu2017} require that for $\|u\| \le \|\delta^*\|, \nabla^2\mathcal{L}(\delta^* + u) \preccurlyeq \lambda_1I$ and that for $S$ the support of $\delta^*,$ the submatrix $(\nabla^2 \mathcal{L}(\delta^*) )_{S,S} \succcurlyeq \lambda_2 I$, where $\lambda_1, \lambda_2$ are constants independent of $P,Q$. \cite{FazBan} remove the second condition, and show that $\mathcal{L}$ has the  $\lambda_2$-RSC property, where $\lambda_2$ is claimed to be independent of $P,Q$. Sample costs of these methods increase as $\lambda_1$ and $\lambda_2^{-1}$ go up. These conditions mimic those in sparse recovery, e.g.~explicit RSC; or strong convexity near $\delta^*$ on a restricted subspace. They are also used similarly - e.g.~to argue that the minimiser of $\mathcal{L}$ lies in the descent cone of of the $\delta^*$, and then to utilise the strong convexity in the same to control errors. 

However, the assertion that $\lambda_1, \lambda_2$ are independent of $(P,Q)$ cannot hold in general - the only non-linear part in $\mathcal{L}$ is $\log \hat{\mathbb{E}}_P[ \exp{ X^T \delta X} ],$ which clearly depends on $P$! This dependence also occurs even if $P$ is known. Thus these `constants' $\lambda_1, \lambda_2$ are affected by the properties of $P$. So, the bounds of these papers, unfortunately, mischaracterise $P$-dependent quantities as constants, and hide them in big-O notation. In fact, our results indirectly show that, in the worst case, these quantities can be quite unfavourable. Since the results of these papers are positive, it is an important open problem to characterise for which $P,Q$ these regularity conditions in fact hold with favourable $\lambda_1, \lambda_2$. Our results suggest that these conditions are stronger than usual incoherence conditions such as Dobrushin's, since even in high temperatures $\gofbb$/$\eofbb$ of sparse changes is as hard as $\SL$. This suggests that subtle topological restrictions are implicitly being placed, which demand more than `simplicity' of the graphs, since the same effect holds even in the family of forests.
\fi 

\noindent \textbf{Sparse-Recovery-Based Structural Testing Methods.} 
More directly related to our work, are those that are based on direct change estimation ({\it DCE}) \cite{FazBan, liu2014direct, liu2017, liu2017learning, kim2019two}, which attempt to directly characterize the difference of parameters $\delta^* = \theta_P - \theta_Q$ by leveraging sparsity of $\delta^*$. These works leverage the `KL Importance Estimation Procedure' (KLIEP), the key insight of which is that the log-likelihood ratios can be written in a form that is suggestive of expressions from sparse-pattern recovery methods, to define the empirical loss function \[ \mathcal{L}(\delta) = -\langle \delta, \hat{\mathbb{E}}_Q[XX^T]\rangle + \log \hat{\mathbb{E}}_P[ \exp{X^T \delta X}],\] where $\hat{\mathbb{E}}$ denotes an empirical mean, and $\delta$ is sparse. The second term, which is the only non-linear term, is reminiscent of normalization factors in graphical models. In this context, it is useful to recall the key ideas from high-dimensional sparse estimation theory (see ~\cite{negahban2012unified}), which has served as a powerful generic tool. At a high-level, these results show that for a loss function $\mathcal{L}(\delta)$ paired with a decomposable regulariser (such as an $\ell_1$ norm on $\delta$), if the loss function satisfies restricted strong convexity, namely, strong convexity only in a suitable descent error set, as characterised by the regulariser and  the optimal value $\delta^*$, minimising the penalised empirical loss leads to a non-trivial estimation error bound. Leveraging these concepts of high-dimensional estimation, and exploiting sparsity, the sparse DCE works show that testing can be done in $O(\mathrm{poly}(s) \log p)$ samples (for any $P,Q$!), which is further much smaller than the number needed for $\SL$, a result which contradicts bounds we derive in this paper. The situation warrants further discussion.

From a technical perspective, the sample complexity gains of these methods arise from assuming law-dependent quantities to be constants. For example, \cite{liu2014direct, liu2017} require that for $\|u\| \le \|\delta^*\|, \nabla^2\mathcal{L}(\delta^* + u) \preccurlyeq \lambda_1I,$ and that for $S$ the support of $\delta^*,$ the submatrix $(\nabla^2 \mathcal{L}(\delta^*) )_{S,S} \succcurlyeq \lambda_2 I$, where $\lambda_1, \lambda_2$ are constants independent of $P,Q$. \cite{FazBan} removes the second condition, and shows that $\mathcal{L}$ has the  $\lambda_2$-RSC property, where $\lambda_2$ is claimed to be independent of $P,Q$. In each case, sample costs increase with $\lambda_1$ and $\lambda_2^{-1}$. However, the assertion that $\lambda_1, \lambda_2$ are independent of $(P,Q)$ cannot hold in general -- the only non-linear part in $\mathcal{L}$ is $\log \hat{\mathbb{E}}_P[ \exp{ X^T \delta X} ],$ which clearly depends on $P$! This dependence also occurs if $P$ is known. Thus, the `constants' $\lambda_1, \lambda_2$ are affected by the properties of $P$. More generically, the efficacy of sparse recovery techniques is questionable in this scenario. Since the data is essentially distinct across samples, and internally dependent, and since the sparse changes, $\delta^*,$ and the underlying distributions interact, it is unclear if meaningful notions of design matrix that allow testing with sub-recovery sample costs can be developed. %\todo{How about now?}\todo{great!-Bobak}

Nevertheless, it is an interesting question to understand what additional assumptions on $P,Q$ or topological restrictions are useful in terms of benefiting from sparsity. Our results suggest that these conditions are stronger than typical incoherence conditions such as high temperatures, and further that the topological restrictions demand more than just `simplicity' of the graphs.

\noindent \textbf{Other Methods.}\cite{cai2019differential} propose a method, whereby the parameters $\theta_P$ and $\theta_Q$ are only crudely estimated, and then tests using the biggest (normalised) deviations in the estimates as a statistic. The claims made in this paper are more modest, and do not show sample complexity below $n_{\sll}$. We point out, however, that $d$-dependent terms are treated as constants in this as well.

Much of the structural testing work studies Gaussian GMs instead of Ising (see the recent survey \cite{shojaie_survey}). We do not discuss these, but encourage the same careful examination of their assumptions. 

\noindent \textbf{Other Information-Theoretic Approaches.} We adopted a similar information-theoretic viewpoint in our earlier work \cite{gangrade2017lower, gangrade2018two}. Of these, the former only considers the restricted case of $s = 1$ (very sparse changes), and the bounds in the latter are very inefficient. As such, the present paper is a significant extension and generalization of this perspective. Our bounds further improve the approximate recovery lower bounds of \cite{scarlett2016difficulty}.

\noindent \textbf{Structural Testing Extensions.} A number of structural testing problems other than $\gofbb$ have been pursued. For instance, \cite{bresler-nagaraj} tests if the model is mean field or supported on a structured graph (sparse, etc.), \cite{bresler2019} tests mean-field models against those on an expander, \cite{cao2018high} tests independence against presence of structure in high temperatures, \cite{neykov2019property} tests combinatorial properties of the underlying graph such as whether it has cycles, or the largest clique it contains (also see \S\ref{appx:prop_test}).%\footnote{Our constructions also refine lower bounds for the max-clique problem, see Appx \ref{appx:max_clique}} 

\if0 
\noindent \textbf{Structure Learning} The structure learning literature is by now quite expansive, with many recent efficient algorithms with close-to-optimal sample complexity \cite{klivans2017learning, hamilton2017information, lokhov2018optimal,wu2019sparse}, and exploration of refined settings such as learning under corruptions \cite{goel2019learning}. Detailed discussion of this literature would take up too much space, but we highlight \cite{SanWai} as the original paper to establish information-theoretic bounds for the same, and \cite{breslerkarzand} for a neat analysis of the Chow-Liu algorithm, which are the only $\SL$ papers directly used in the following .
\fi 
\section{Problem Definitions and Notation} \label{sec:def}

The zero external field Ising Model specifies a law on a $p$-dimensional random vector $X = (X_1, \dots, X_p) \in \{ \pm 1\},$ parametrised by a symmetric matrix $\theta$ with $0$ diagonal, of the form 
\[ P_\theta(X = x) = \frac{\exp{ \sum_{i<j} \theta_{ij} x_i x_j} }{Z(\theta)},\] where $Z(\theta)$ is called the partition function. Notice that given $X_j$ for all $j \in \partial i:= \{ j: \theta_{ij} \neq 0\},$ $X_i$ is conditionally independent of $X_{[1:p] - \{i\} -\partial i}.$ Thus, the $\theta$ determine the local interactions of the model. With this intuition, one defines a simple, undirected graph $G(P_\theta) = ([1:p], E(P_\theta))$ with $E(P_\theta) = \{ (i,j): \theta_{ij} \neq 0\}.$ This graph is called the \emph{Markov network structure} of the Ising model, and $\theta$ can serves as a weighted adjacency matrix of $G(P_\theta)$. We often describe models by an unweighted graph, keeping weights implicit until required. 

The model above can display very rich behaviour as $\theta$ changes, and this strongly affects all inference problems on Ising models. With this in mind, we make two explicit parametrisations to help us track how $\theta$ affects the sample complexity of various inference problems. The first of these is degree control - we assume that the degree of every node is $G(P),G(Q)$ is at most $d$. The second is weight control - we assume that if $\theta_{ij} \neq 0,$ then $\alpha \le |\theta_{ij}| \le \beta$.

These are natural conditions: small weights are naturally difficult to detect, while large weights mask the nearby small-weight edges; degree control further sets up a local sparsity that tempers network effects in the models. The class of laws so obtained is denoted $\mathcal{I}_{d}(\alpha, \beta)$. We will usually work with a subclass $\mathcal{I} \subset \mathcal{I}_{d}$ which has \emph{unique network structures} (i.e., for $P, Q \in \ii, G(P) \neq G(Q))$. Note that we do not restrict $\alpha, \beta,d$ to have a particular behaviour - these are instead used as parametrisation to study how weights and degree affects sample complexity. In particular, they may vary with $p$ and each other. We do demand that $d \le p^{1-c}$ for some constant $c > 0$, and that $p$ is large ($\gg 1$).

We let $\mathcal{G}$ be the set of all graphs on $p$ nodes, and $\mathcal{G}_d \subset \mathcal{G}$ be those with degree at most $d.$ The symmetric difference of two graphs $G, H$ is denoted $G \triangle H,$ which is a graph with edge set consisting of those edges that appear in exactly one of $G$ and $H$.

Lastly, we say that two Ising models are \emph{$s$-separated} if their networks differ in at least $s$ edges. The `anti-ball' \( A_s(P) := \{Q \in \mathcal{I} : |G(Q) \triangle G(P)| \ge s\} \) is the set of $Q \in \ii$ $s$-separated from $P$.

\subsection{Problem Definitions}

Below we define three structural inference problems: goodness-of-fit testing, error-of-fit identification, and approximate structure learning. %While the former is the main focus of this paper, we treat the latter as the main point of comparison.

\noindent \textbf{Goodness-of-Fit Testing} Given $P$ and the dataset $X^n \sim Q^{\otimes n}$ where $Q \in \{P\} \cup A_s(P)$, we wish to distinguish between the case where the model is unchanged, $Q = P$, and the case where the network structure of the model differs in at least $s$ edges, $Q \in A_s(P).$ A goodness-of-fit test is a map $\Psi^{\gof}: \mathcal{I} \times \mathcal{X}^n \to \{0,1\}.$ The $n$-sample risk is defined as \[R^{\gof} (n, s, \mathcal{I}) :=  \adjustlimits \inf_{\Psi^{\gof}} \sup_{P \in \mathcal{I}} \left\{P^{\otimes n} (\Psi^{\gof}(P,X^n) = 1) + \sup_{Q \in A_s(P)} Q^{\otimes n}(\Psi^{\gof}(P,X^n) = 0) \right\}. \] 
\noindent \textbf{Error-of-Fit Recovery} Given $P$ and the dataset $X^n \sim Q^{\otimes n}$ where $Q \in \{P\} \cup A_s(P)$ we wish to identify where the structures of $P$ and $Q$ differ, if they do. 
    The error-of-fit learner is a graph-valued map $\Psi^{\eof}: \mathcal{I} \times \mathcal{X}^n \to \mathcal{G}.$ The $n$-sample risk is defined as \[R^{\eof} (n, s, \mathcal{I}) := \adjustlimits \inf_{\Psi^{\eof}} \sup_{P \in \mathcal{I}} \sup_{Q \in \{P\} \cup A_s(P)} Q^{\otimes n}\left( \left| \Psi^{\eof}(P,X^n) \triangle \left( G(P) \triangle G(Q)\right) \right| \ge (s-1)/2 \right). \] 
    In words, $\Psi^{\eof}$ attempts to recover $G(P) \triangle G(Q),$ and the risk penalises answers that get more than $(s-1)/2$ of the edges of this difference wrong. This problem is very similar to the following.
    
\noindent \textbf{s-Approximate Structure Learning} Given the dataset $X^n \sim Q^{\otimes n}$ we wish to determine the network structure of $Q$, with at most $s$ errors in the recovered structure. A structure learner is a graph-valued map $\Psi^{\textrm{SL}} : \mathcal{X}^n \to \mathcal{G},$ and the risk of structure learning is \[ R^{\textrm{SL}}(n, s, \mathcal{I}) :=  \adjustlimits \inf_{\Psi^{\textrm{SL}}} \sup_{Q \in \mathcal{I}} Q^{\otimes n}( | \Psi^{\textrm{SL}}(X^n) \triangle G(P)| \ge s ). \]
%Note above that the Error-of-fit problem as defined above bears close resemblance to the $s/2$-approximate structure learning problem - the only difference is a reduction of the search space in that the former is only required to consider graphs that are at least $s$-separated from a known null graph, while the latter is required to learn from the complete space of network structures. Strong versions of Error-of-fit are studied in \cite{Bannerjee and Liu and People} as ``differential network estimation'' (\textbf{check the terms}), in particular for very small $s \ll p$ - however, the above formulation stands in strong contrast to their results, and indeed makes it obvious that in a general setting, these problems should be essentially equally hard in terms of their informational requirements (see below).
The sample complexity of the above problems is defined as the smallest $n$ necessary for the corresponding risk to be bounded above by $1/4,$ i.e. \[ n_{\gof} (s, \mathcal{I}) := \inf\{ n : R^{\gof} (n,s,\mathcal{I}) \le 1/4 \},\] and similarly $n_{\eof}$ and $n_{\sll}$ but with the risk lower bound of $1/8$.\footnote{$1/4$ is convenient for bounds for $\gofbb$, but any risk smaller than $1$ is of interest, and can be boosted to arbitrary accuracy by repeating trials and majority. For $\eofbb, \SL$ we use $1/8$ for ease of showing Prop.~\ref{prop:sample_comp_order}.}

The above problems are listed in increasing order of difficulty, in that methods for $\SL$ yield methods for $\eofbb$, which in turn solve $\gofbb$. This is captured by the following statement, proved in \S\ref{appx:sample_comp_order_pf}. \begin{myprop}\label{prop:sample_comp_order} \(n_{\mathrm{SL}}( (s-1)/2, \mathcal{I}) \ge n_{\eof}( s, \mathcal{I}) \ge n_{\gof}( s, \mathcal{I}). \) \end{myprop}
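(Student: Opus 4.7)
The plan is to prove both inequalities by direct reductions, converting a solver for the harder problem into one for the easier problem, with a small amount of probability-of-error bookkeeping.

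For the right-hand inequality $n_{\eof}(s, \mathcal{I}) \ge n_{\gof}(s, \mathcal{I})$, given any EoF learner $\Psi^{\eof}$, I would define a GoF test by thresholding the edge count of its output, e.g., $\Psi^{\gof}(P, X^n) := \mathbb{I}\{ |\Psi^{\eof}(P, X^n)| \ge s/2 \}$. The key observation is that by the triangle inequality for the symmetric-difference metric on graphs, a successful EoF output $\hat{D}$ satisfies $|\hat{D}| < (s-1)/2$ when $Q = P$ (since $G(P) \triangle G(Q) = \emptyset$), and $|\hat{D}| > (s+1)/2$ when $Q \in A_s(P)$ (via $|\hat{D}| \ge |G(P) \triangle G(Q)| - |\hat{D} \triangle (G(P) \triangle G(Q))| > s - (s-1)/2$). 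Any integer threshold in the gap correctly classifies whenever EoF succeeds, so each of the two probabilities in $R^{\gof}$ is bounded by the corresponding EoF failure probability. Setting $n = n_{\eof}(s, \mathcal{I})$ so that each is at most $1/8$, the sum, which upper-bounds $R^{\gof}$, is at most $1/4$, giving $n_{\gof}(s, \mathcal{I}) \le n_{\eof}(s, \mathcal{I})$.

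For the left-hand inequality $n_{\sll}((s-1)/2, \mathcal{I}) \ge n_{\eof}(s, \mathcal{I})$, given any structure learner $\Psi^{\sll}$ at tolerance $(s-1)/2$, I would define an EoF learner by $\Psi^{\eof}(P, X^n) := \Psi^{\sll}(X^n) \triangle G(P)$. Using the elementary identity $(A \triangle B) \triangle (A \triangle C) = B \triangle C$, one sees that $\Psi^{\eof}(P, X^n) \triangle (G(P) \triangle G(Q)) = \Psi^{\sll}(X^n) \triangle G(Q)$, so the failure event of this EoF learner against $Q$ coincides with the failure event of $\Psi^{\sll}$ against $Q$ at tolerance $(s-1)/2$. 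Since $\{P\} \cup A_s(P) \subseteq \mathcal{I}$, taking suprema yields $R^{\eof}(n, s, \mathcal{I}) \le R^{\sll}(n, (s-1)/2, \mathcal{I})$, and setting $n = n_{\sll}((s-1)/2, \mathcal{I})$ yields the sample-complexity inequality.

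Nothing in this argument is technically delicate. The two small points to handle carefully are (i) the gap between the two cases in the GoF reduction has width exactly $1$, which is why the cleanest accounting uses the $1/8$-threshold for the EoF and SL sample complexities against the $1/4$-threshold for GoF, absorbing the union-bound factor of $2$; and (ii) the clean algebraic cancellation of $G(P)$ in the EoF-from-SL reduction, which is what makes the reduction lossless. Neither is a genuine obstacle.
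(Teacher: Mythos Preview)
Your proposal is correct and follows essentially the same approach as the paper: reduce $\eofbb$ to $\SL$ by outputting $\Psi^{\sll}(X^n)\triangle G(P)$, and reduce $\gofbb$ to $\eofbb$ by thresholding the edge count of the $\eofbb$ output, with the $1/8$-versus-$1/4$ risk convention absorbing the factor of two. The one cosmetic difference is that you invoke the exact identity $(A\triangle B)\triangle(A\triangle C)=B\triangle C$ to make the $\SL$-to-$\eofbb$ reduction lossless, whereas the paper phrases the same step as a triangle inequality in the Hamming metric; your formulation is slightly sharper but not materially different.
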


% We note a few useful trivial relations:

% \begin{myprop}
%     Let \textup{$* \in \{ \gof, \eof, \textrm{SL}\}. $} Then the following are true ($\cdot$ denotes irrelevant variables, which are kept the same on both sides of the relations below) \begin{enumerate}
%         \item $n < n' \implies R^*(n,\cdot,\cdot) \le R^*(n',\cdot,\cdot).$
%         \item $\mathcal{I} \subset \mathcal{J} \implies R^*(\cdot, \cdot, \mathcal{I} ) \le R^*(\cdot,\cdot, \mathcal{J}). $
%         \item $s < s' \implies R^*( \cdot, s, \cdot) \ge R^*(\cdot, s', \cdot)$
%         \item $ R^{\mathrm{SL}} (n,(s-1)/2, \cdot,) \ge  R^{\eof} (n, s, \cdot) \ge R^{\gof} (n, s, \cdot) .$ 
%     \end{enumerate}
    
%     Note that similar inequalities hold for the sample complexity $n_*(\cdot).$ In particular, we note that \begin{enumerate}
%         \item[4'.] $n_{\mathrm{SL}}( (s-1)/2, \mathcal{I}, \epsilon) \ge n_{\eof}( s, \mathcal{I}, \epsilon) \ge n_{\gof}( s, \mathcal{I}, \epsilon)  $
%     \end{enumerate} 
% \end{myprop}

Our main point of comparison with the literature on $\SL$ is the following result, which (mildly) extends \cite[Thm 3a)]{SanWai} due to Santhanam \& Wainwright.  We leave the proof of this to Appx.~\ref{appx:pf_of_sl_upper}.

\begin{myth}\label{thm:sl_upper} If $\mathcal{I}\subset \ii_d(\alpha, \beta)$ has unique network structures, then for $s \le pd/2, \exists C \le 64$ such that 
\[ n_{\sll}(s, \ii) \le C \frac{d e^{2\beta d}}{ \sinh^2 (\alpha/4) } \left( 1 +  \log \frac{p^2  }{2s} + O(1/s) \right).\]
\end{myth}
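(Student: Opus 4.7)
The plan is to analyse the canonical maximum likelihood estimator $\hat G = \argmax_{G\in\ii} \ell_n(G; X^n)$ and follow the Santhanam-Wainwright template, adapted to yield $s$-approximate recovery with the refined combinatorial term $\log(p^2/(2s))$. The event $\{|\hat G\triangle G(P)|\ge s\}$ implies the existence of a $Q\in\ii$ with $|G(Q)\triangle G(P)|\ge s$ and $\ell_n(Q)\ge\ell_n(P)$, so a union bound yields
\[
P^{\otimes n}\bigl(|\hat G\triangle G(P)|\ge s\bigr)\ \le\ \sum_{k\ge s}\sum_{Q:\,|G(Q)\triangle G(P)|=k}P^{\otimes n}\bigl(\ell_n(Q)\ge \ell_n(P)\bigr),
\]
and each pairwise probability is controlled by the Bhattacharyya-Chernoff bound $P^{\otimes n}(\ell_n(Q)\ge\ell_n(P))\le \rho(P,Q)^n\le \exp(-nH^2(P,Q)/2)$, where $\rho$ is the Bhattacharyya coefficient. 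The task then reduces to lower-bounding $H^2(P,Q)$ in terms of $k=|G(P)\triangle G(Q)|$, and counting the number of $Q$'s at each symmetric-difference size.

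The main technical step is a Hellinger lower bound that scales (approximately) linearly with $k$: for $P,Q\in\ii_d(\alpha,\beta)$,
\[
H^2(P,Q)\ \ge\ \frac{c_1\sinh^2(\alpha/4)}{d\,e^{2\beta d}}\cdot\min\bigl(k,\ d\,e^{2\beta d}\bigr),
\]
for an absolute constant $c_1$. The $e^{-2\beta d}$ factor arises from bounding the pointwise ratio of conditional densities $P(x_i|x_{-i})/Q(x_i|x_{-i})$ using the weight-degree control $|\theta_{ij}|\le\beta$ and $|\partial i|\le d$, while the $\sinh^2(\alpha/4)$ factor is the Hellinger contribution of a single flipped edge of weight at least $\alpha$ (obtained by an elementary conditional $\chi^2$/Hellinger computation). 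I would extend this single-edge bound to linear-in-$k$ scaling by telescoping along a carefully chosen path of intermediate Ising models in $\mathcal{G}_d$ (processing edge deletions before additions so that intermediate degrees are controlled), combined with an additive decomposition of the log-likelihood ratio $\log(Q/P)$ over the edges of $G(P)\triangle G(Q)$ together with a local-Markov variance bound.

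Plugging this into the union bound and using the count $|\{Q:|G(Q)\triangle G(P)|=k\}|\le \binom{\binom{p}{2}}{k}\le (ep^2/(2k))^k$ produces
\[
P^{\otimes n}(\mathrm{err})\ \le\ \sum_{k\ge s}\exp\!\Bigl(k\log\tfrac{ep^2}{2k}-\tfrac{n c_1 k \sinh^2(\alpha/4)}{d\,e^{2\beta d}}\Bigr)+(\mathrm{tail\ from\ saturation\ regime}).
\]
Choosing $n$ of the stated form with a suitable $C\le 64$ makes each summand at most $2^{-k}$, collapsing the series to a quantity at most $1/8$; the $O(1/s)$ term in the theorem absorbs the slack $\log(e)$ and the slowly-varying $\log(p^2/(2k))$ for $k\ge s$. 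The principal obstacle will be the linear-in-$k$ Hellinger scaling: a naive triangle inequality yields $H(P,Q)\le\sum H(P_{i-1},P_i)$, which is the wrong direction, and correlations across edges can in principle deflate individual Hellinger contributions. Executing the telescoping/KL-additivity argument will therefore require carefully exploiting the local Markov property and the weight-degree control to ensure that each flip contributes independently, and ordering additions and deletions so the intermediate graphs remain in $\mathcal{G}_d$; handling the saturation regime $k\gtrsim de^{2\beta d}$ separately, where $H^2=\Theta(1)$ suffices, is a comparatively minor loose end.
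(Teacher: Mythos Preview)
Your overall architecture---maximum likelihood estimator, union bound over alternates at each symmetric-difference size $k\ge s$, pairwise Chernoff/Bhattacharyya control, and the combinatorial count $\binom{\binom{p}{2}}{k}\le (ep^2/(2k))^k$---matches the paper's proof exactly. The paper's argument is, however, far shorter than what you sketch, because the paper does not attempt to establish the linear-in-$k$ Hellinger (or Bhattacharyya) bound from scratch. Instead it simply invokes Lemmas~3 and~4 of Santhanam--Wainwright, which already deliver
\[
P^{\otimes n}\bigl(\mathscr{L}(P)\le \mathscr{L}(Q)\bigr)\le \exp\!\left(-\frac{n\,k\,\kappa}{8d}\right),\qquad \kappa=\frac{\sinh^2(\alpha/4)}{3e^{2\beta d}+1},
\]
for any $P,Q\in\ii_d$ with $|G(P)\triangle G(Q)|=k$. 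The only new content in the paper's proof is the trivial observation that for $s$-approximate recovery one sums from $k=s$ rather than $k=1$, which replaces $\log p^2$ by $\log(p^2/s)$.

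The part you flag as the principal obstacle---getting $H^2(P,Q)$ to scale linearly in $k$ via telescoping and an additive decomposition of the log-likelihood ratio---is precisely the content of those cited lemmas, and your worry about the triangle inequality going the wrong way is legitimate: Santhanam--Wainwright do not argue via Hellinger telescoping but rather through a direct conditional-probability argument on the likelihood ratio that exploits the Markov structure edge by edge. Your ``saturation regime'' $k\gtrsim de^{2\beta d}$ is also unnecessary: the cited bound is genuinely linear in $k$ without capping. So your plan is sound but over-engineered; the fix is to cite the existing pairwise bound and keep only the $\ell\ge s$ union-bound refinement.
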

%Note that while the above assumes that each network structure $G$ is associated with only a single $\theta,$ this can be relaxed by constructing a net on the space of weighted graphs with structures in $\mathcal{I}.$ In particular, if we quantise weights at scale $1/g$, then we will suffer an additional multiplicative factor of less than $1 + 2(\beta - \alpha)g.$ A natural scaling of the same is $1/g= \alpha,$ which leads to a multiplicative factor of $\frac{2\beta}{\alpha}.$

\section[GoF and EoF Lower Bounds for degree-bounded Ising models]{Lower Bounds for $\gofbb$ and $\eofbb$ over $\ii_d(\alpha, \beta)$}\label{sec:Main_lowbs}

This section states our results, and discusses our proof strategy, but proofs for all statements are left to \S\ref{appx:lowbs}. The bound are generally stated in a weaker form to ease presentation, but the complete results are described in \S\ref{appx:lowbs}. We begin by stating lower bounds for the case of $s = O(p).$ Throughout $500 > K > 1$ is a constant independent of all parameters. 
\begin{myth}\label{thm:lb_small_s} If $20 \le d \le s \le \nicefrac{p}{K},$ then there exists a $C>0$ independent of $(s,p,d,\alpha, \beta)$ such that \begin{align*}
    n_{\gof}(s, \ii) &\ge C\max \left\{ \frac{e^{2\beta}}{\tanh^2\alpha} , \frac{e^{2\beta (d-3)}}{d^2 \min(1, \alpha^2 d^4)} \right\} \log\Big( 1 + C\frac{p}{s^2}\Big)\\
    n_{\eof}(s, \ii) &\ge C\max \left\{ \frac{e^{2\beta}}{\tanh^2\alpha} , \frac{e^{2\beta (d-3)}}{d^2 \min(1, \alpha^2 d^4)} \right\} \log \left( C\frac{p}{s} \right)
\end{align*}\end{myth}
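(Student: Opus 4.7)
The plan is to derive both bounds by constructing explicit families of hard alternative distributions. For $\gofbb$, I would use Le Cam's method: exhibit a null $P_0 \in \ii$ and a mixture $\bar{Q} = \mathbb{E}_\omega[Q_\omega]$ supported on $A_s(P_0)$, so that $\mathrm{TV}(P_0^{\otimes n}, \bar{Q}^{\otimes n})$ stays bounded away from $1$. The mixture is built by tensorising small gadgets: take $P_0$ to be $K = \Theta(p/p_0)$ disjoint copies of a single gadget on $p_0$ nodes, and let $Q_\omega$, indexed by an $m$-subset $\omega \subset [K]$, replace the gadget on each $k \in \omega$ by a distinguished alternative gadget that introduces $s_0$ structural changes; choosing $m = s/s_0$ makes every $Q_\omega \in A_s(P_0)$. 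The Ingster--Suslina identity then reads
\[ 1 + \chi^2(\bar{Q}^{\otimes n}, P_0^{\otimes n}) = \mathbb{E}_{\omega, \omega'}\!\left[(1 + \chi_0)^{n|\omega \cap \omega'|}\right], \]
where $\chi_0$ is the per-gadget $\chi^2$. Since $|\omega \cap \omega'|$ is hypergeometric with mean $m^2/K$, a Binomial-MGF bound gives $\chi^2 \lesssim e^{(m^2/K)(e^{n\chi_0} - 1)}$; demanding this to be $O(1)$ forces $n \gtrsim \chi_0^{-1}\log(1 + K/m^2)$.

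The two terms in the lower bound correspond to two gadget choices. For the first term $e^{2\beta}/\tanh^2\alpha$, I would use a constant-size gadget ($p_0 = O(1)$) containing a weak edge of weight $\alpha$ shielded by nearby strong weight-$\beta$ edges; a direct computation using the closed-form Ising partition function should yield $\chi_0 \lesssim \tanh^2\alpha \cdot e^{-2\beta}$ and $s_0 = 1$, giving $K/m^2 \asymp p/s^2$. For the second term, the gadget is the Emmentaler clique: a $(d+1)$-clique with all weights $\beta$ compared against a subgraph obtained by removing $\Theta(d^2)$ edges in a symmetric pattern, yielding $p_0 = d+1$, $s_0 = \Theta(d^2)$, and so $K/m^2 \asymp pd^3/s^2 \ge p/s^2$, while the target $\chi^2$ is $\chi_0 \lesssim d^2 \min(1, \alpha^2 d^4) \cdot e^{-2\beta(d-3)}$. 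For $\eofbb$, I would replace Le Cam's lemma by a $\chi^2$-based Fano inequality over a richer packing that also varies \emph{which} edges are removed inside each affected gadget; balancing the larger logarithmic packing term against $n\chi_0$ converts the $\log(p/s^2)$ factor into $\log(p/s)$.

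The main obstacle is the $\chi^2$ bound for the Emmentaler clique. I would exploit the $S_{d+1}$ permutation symmetry of the full clique --- under which $Z_{\mathrm{clique}}$ collapses to a sum over the number of $+1$ spins --- and choose the removed edge pattern to be as symmetric as possible (e.g., a regular subgraph of the complement) so that the Emmentaler partition function can be compared to that of the full clique term-by-term across spin-counts. Two regimes must then be handled: at low temperature ($\beta d$ large) the law concentrates on the two all-$\pm 1$ ground states, each additional excited spin costs roughly $e^{-2\beta(d-3)}$, and the removed edges only affect these higher-energy configurations; at moderate temperature (when $\alpha d^2$ is small), a Taylor expansion of $\log Z$ in the removed edge weights --- relying on the fact that these enter only through second-order cumulants after centering --- supplies the $\alpha^2 d^4$ factor. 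Carefully combining these regimes into the stated minimum while keeping track of the exponential prefactors is the technical crux of the argument.
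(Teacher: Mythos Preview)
Your framework --- Le Cam with a subset-mixture over disjoint gadget copies for $\gofbb$, a $\chi^2$-Fano bound for $\eofbb$, a constant-size triangle gadget for the $e^{2\beta}/\tanh^2\alpha$ term, and the Emmentaler clique for the second term --- is exactly the paper's. One parameter choice is inconsistent, though: you assert $s_0 = \Theta(d^2)$ for the Emmentaler while simultaneously targeting $\chi_0 \lesssim d^2\min(1,\alpha^2d^4)e^{-2\beta(d-3)}$, and these cannot both hold. The Emmentaler with group size $\ell+1$ removes $\sigma = d\ell/2$ edges and yields $\chi_0 \lesssim d^2\min(1,\alpha^2d^4)e^{-2\beta(d-1-\ell)}$, so pushing $\ell$ up to $\Theta(d)$ to get $s_0=\Theta(d^2)$ degrades the exponent to $2\beta d(1-\Theta(1))$. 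The paper instead takes $\ell=1$ here, giving $s_0=\Theta(d)$, $t=\Theta(s/d)$, $m/t^2\asymp pd/s^2\ge p/s^2$, and the exponent $d-2$ (then $d-3$ after a parity reduction on $d$). The $\Theta(d^2)$-removal Emmentaler is what the paper uses for Theorem~\ref{thm:lb_large_s}, where the weaker exponent is accepted in exchange for reaching $s$ up to $pd^{1-\zeta}$.

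Two smaller deviations. For $\eofbb$, the paper does \emph{not} enrich the packing by varying intra-gadget edge choices as you propose; it keeps the pair $(P_0,Q_0)$ fixed and packs only over the weight-$t$ slice of $\{0,1\}^m$ indexing which gadgets are switched, taking a maximal $t/2$-packing of size $\gtrsim(m/t)^{t/2}$, so that $\tfrac{1}{t}\log|\widetilde{\mathcal Q}|\asymp\log(m/t)\asymp\log(p/s)$ in Guntuboyina's bound --- your intra-gadget variation is unnecessary and would complicate the separation check. For the Emmentaler $\chi^2$: the Emmentaler has only the wreath symmetry $S_{\ell+1}\wr S_B$, not full $S_{d+1}$, so the paper expands $Z$ over the per-group negative-spin counts $(j_1,\dots,j_B)$, uses a monotonicity-in-$\eta$ argument plus a ratio trick to show the $\sum j_i=1$ layer dominates, and extracts the $\min(1,\alpha^2d^4)$ factor by bounding $\sum_k(1-e^{-2\mu\ell k})^2$ both crudely by $d/2$ and via $1-e^{-x}\le x$ --- not via a cumulant expansion of $\log Z$.
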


This statement is enough to make our generic point - for small $s$ (i.e., if $s \le p^{\nicefrac{1}{2}-c}$ in $\gofbb$ and if $s \le p^{1-c}$ in $\eofbb$), the above bounds are uniformly within a $O(\mathrm{poly}(d))$ factor of the the upper bound on $n_{\sll}$ in Theorem \ref{thm:sl_upper}. Notice also that the $\max$-terms are uniformly $\tilde{\Omega}(d^2)$ in the above - if $\beta d \ge 2\log d,$ then the second term in the max is $\Omega(d^2),$ while if smaller, the first term is $\Omega( (\nicefrac{d}{\log d})^2)$ because $\alpha \le \beta$. Thus, over $\ii_d$, the best possible sample complexity of $\gofbb$ and $\eofbb$ scales as $\tilde{\Omega}(d^2 \log p),$ and in particular cannot be generally $d$-independent.

Of course, graphs in $\mathcal{G}_d$ have upto $\sim pd$ edges, and so many more changes can be made. Towards this, we provide the following bound for $\gofbb$. A similar result for $\eofbb$ is discussed in \S\ref{appx:lowbs}.
\begin{myth}\label{thm:lb_large_s}
If for some $\zeta >0, s \le pd^{1-\zeta}/K,$ and $d \ge 10$, then there exists a constant $C>0$ independent of $(s,p,d,\alpha, \beta)$  such that \begin{enumerate}[leftmargin = .5in, parsep = 3pt, topsep = 2pt] 
    \item If $\alpha d^{1-\zeta} \le 1/32$ then \(\displaystyle n_{\gof} \ge C \frac{1}{d^{2-2\zeta} \alpha^2} \log \Big(1 + C\frac{pd^{3-3\zeta}}{s^2}\Big).\)
    \item If $\beta d \ge 4\log(d-4)$ then \( \displaystyle n_{\gof} \ge C\frac{e^{2\beta d(1 - d^{-\zeta})}}{d^2 \min(1, \alpha^2 d^4)} \log \Big(1 + C\frac{pd^{2-3\zeta}}{s^2}\Big). \)
\end{enumerate}
\end{myth}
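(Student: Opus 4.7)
The plan is to extend the Le Cam/$\chi^2$-Fano strategy used for Theorem \ref{thm:lb_small_s} to handle the much larger edit budget $s \le pd^{1-\zeta}/K$, by replacing per-component single-edge edits with the much richer \emph{Emmentaler clique} obstruction previewed in the introduction. The overall scaffolding is identical: construct a product-structure null $P = \bigotimes_{j=1}^{m} P^{(j)}$ on $m = \Theta(p/d)$ disjoint components of size $d$, build a combinatorial family $\mathcal{Q}$ of alternatives in $A_s(P)$ by modifying components independently, and invoke the Ingster--Suslina identity
\[
1 + \chi^2(\bar{Q}^{\otimes n} \,\|\, P^{\otimes n}) \;=\; \mathbb{E}_{Q,Q' \iid \mathrm{Unif}(\mathcal{Q})} \Bigl[ \mathbb{E}_{P}\bigl[(Q/P)(Q'/P)\bigr]^n \Bigr],
\]
with the goal of forcing the left side to be $O(1)$, which via Le Cam yields $R^{\gof}(n,s,\ii) \ge \Omega(1)$.

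For Part 2 of the theorem (low-temperature regime $\beta d \gtrsim \log d$), each $P^{(j)}$ will be a $d$-clique Ising model with uniform edge weight $\beta$, and each alternative replaces some subset of components with an \emph{edge-regular} Emmentaler clique: a clique with exactly $d^{1-\zeta}$ edges removed incident to every vertex, for a total of $\Theta(d^{2-\zeta})$ deletions per component. Regularity is used in two ways. First, it guarantees the modified graph still lies in $\ii_d(\alpha,\beta)$ and matches the target per-component effective degree $d(1-d^{-\zeta})$. Second, and more crucially, it preserves enough of the clique's permutation symmetry that the density ratio $Q^{(j)}/P^{(j)}$ and the partition-function ratio $Z(P^{(j)})/Z(Q^{(j)})$ can be expanded term-by-term by summing over spin configurations grouped by magnetisation and by their incidence with the holes. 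The calculation exploits the fact that at low temperature $P^{(j)}$ concentrates on the two aligned configurations $\pm\mathbf{1}$, which the Emmentaler edit leaves essentially invariant; the excess mass on non-aligned configurations is exponentially suppressed by a factor $e^{-2\beta d(1 - d^{-\zeta})}$. The main technical goal of this step is the single-component bound
\[
\chi^2\bigl(Q^{(j)} \,\big\|\, P^{(j)}\bigr) \;\lesssim\; \frac{e^{-2\beta d(1-d^{-\zeta})}}{d^2\min(1,\alpha^2 d^4)},
\]
including the polynomial correction, obtained by a symmetry-aware expansion that splits the $\chi^2$ into contributions from aligned, nearly-aligned, and balanced spin sectors.

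Once this per-component bound is in hand, the tensorisation across components is executed by letting each alternative independently choose, for each component, either $P^{(j)}$ or one of several Emmentaler variants, with a Bernoulli mixing probability $\pi$ tuned so that the expected number of modified components is $\Theta(s/d^{2-\zeta})$. A routine product expansion of the Ingster--Suslina formula then reduces the mixture $\chi^2$ to a quantity of the form $(1 + \pi^2 \cdot n \cdot \chi^2_{\text{per-comp}})^{m}$ up to combinatorial factors counting Emmentaler variants; the logarithmic $\log(1 + C p d^{2-3\zeta}/s^2)$ factor arises precisely by enforcing this quantity to be $O(1)$ and then optimising $\pi$. Part 1 of the theorem uses the same scaffolding, but replaces the Emmentaler with a sparse edit (only $d^{1-\zeta}$ edges removed per component) and bounds $\chi^2_{\text{per-comp}}$ by a second-order Taylor expansion valid under the high-temperature condition $\alpha d^{1-\zeta} \le 1/32$, yielding the $\alpha^2 d^{2-2\zeta}$ denominator in place of the exponential.

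The hard part is unquestionably the single-component $\chi^2$ bound for the Emmentaler in Step 2. Cliques at low temperature are notoriously delicate: the partition function is not simply dominated by low-energy configurations, because configurations with magnetisation $\pm 1$ dominate but configurations with any magnetisation balance $\pm k$ contribute through $\binom{d}{k}$ combinatorial factors. Tracking which contributions cancel between $P^{(j)}$ and $Q^{(j)}$ at the level of the ratio $Q^{(j)}/P^{(j)}$, while squaring and integrating to produce $\chi^2$, demands a careful accounting of cross-terms; it is here that the edge-regular structure of the Emmentaler is indispensable, since it renders certain sums over hole-incidences independent of the spin pattern. A secondary obstacle is the combinatorial design question underpinning Part 2: ensuring that one can actually construct enough distinct edge-regular Emmentalers on a $d$-clique with $d^{1-\zeta}$ deletions per vertex, so that the family $\mathcal{Q}$ is large enough to make the combinatorial $\log$ factor in the bound nontrivial; this will require a short counting argument of the number of $d^{1-\zeta}$-regular subgraphs of $K_d$.
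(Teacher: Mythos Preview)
Your scaffolding is right --- product of $m$ independent widgets, Le Cam via the Ingster--Suslina mixture $\chi^2$, and tensorisation --- and you correctly identify the Emmentaler clique as the key low-temperature widget. But several details diverge from the paper in ways that matter.

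\textbf{Part 2 (low temperature).} You have the null and alternative swapped. The paper takes $P_0$ to be the \emph{Emmentaler} (with $\ell+1 = d^{1-\zeta}$ holes, edge weight $\lambda=\beta$) and $Q_0$ to be the \emph{full clique}, obtained by filling the holes back in with weight $\mu=\alpha$. Since $\chi^2(Q\|P)$ is asymmetric and the Le Cam bound needs $P$ in the denominator, the direction matters; the paper's Proposition~\ref{prop:emmen_vs_full} only controls $\chi^2(Q_\ell\|P_\ell)$ in this direction. Relatedly, your use of uniform edge weight $\beta$ would lose the $\min(1,\alpha^2 d^4)$ factor in the stated bound, which comes precisely from the fact that the toggled edges have weight $\alpha$, not $\beta$.

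\textbf{Combinatorics.} Your ``secondary obstacle'' about counting distinct edge-regular Emmentalers is a red herring. The paper uses a \emph{single} fixed pair $(P_0,Q_0)$ per component; all combinatorial richness, and hence the $\log(1+Cpd^{2-3\zeta}/s^2)$ factor, comes from choosing \emph{which} $t \approx s/\sigma$ of the $m \approx p/d$ components to flip (Lemma~\ref{lem:lifting}). No family of Emmentaler variants is needed, and no regular-subgraph count enters.

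\textbf{Part 1 (high temperature).} The paper's widget here is entirely different from what you propose: it compares the \emph{empty graph} on $d^{1-\zeta}$ nodes against the full $d^{1-\zeta}$-clique with weight $\alpha$, invoking a result of Cao et al.\ (Proposition~\ref{prop:high_temp_clique}) that bounds $\chi^2(\text{clique}\|\text{empty}) \lesssim (\alpha d^{1-\zeta})^2$ when $\alpha d^{1-\zeta} \le 1/32$. The component size is $\nu = d^{1-\zeta}$ (not $d$), giving $m \approx p/d^{1-\zeta}$ and $\sigma \approx d^{2-2\zeta}/2$, which is what produces the $pd^{3-3\zeta}/s^2$ inside the logarithm. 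Your ``sparse edit on a $d$-clique with Taylor expansion'' might be workable but is not what the paper does, and you would need to justify the per-component $\chi^2$ bound from scratch rather than citing \cite{cao2018high}.
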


Thm.~\ref{thm:lb_large_s} leaves a (small) gap, since as $\zeta \to 0,$ $\alpha d^{1-\zeta} \le 1$ and $\beta d \ge 4\log(d)$ do not completely cover all possibilities. Barring this gap, we again notice that for $s\ll \sqrt{pd^{1-\zeta}},$ $n_{\gof}$ is separated from $n_{\sll}$ by at most a $\mathrm{poly}(d)$ factor. The first part of the above statement is derived using results of \cite{cao2018high}. For the limiting case of $\zeta = 0,$ i.e. when $s$ is linear in $pd$, we recover similar bounds, but with the distinction that the $2\beta d$ in the exponent is replaced by a $\beta d$. See \S\ref{appx:lowbs}.

Finally, since often the interest in DCE lies in \emph{very sparse} changes, we present the following - \begin{myth}\label{thm:very_small_s} 
    If $s \le d$, then there exists a $C > 0$ independent of $(s,p,d,\alpha, \beta)$ such that \begin{align*} n_{\gof}(s, \ii) &\ge C\max \left\{ \frac{e^{2\beta}}{\tanh^2\alpha} , \frac{e^{2\beta (d- 1- 2\sqrt{s})}}{d^6\sinh^2(\alpha \sqrt{s})} \right\} \log\Big( 1 + C\left(\frac{p}{s^2} \wedge \frac{p}{d}\right) \Big)\\
    n_{\eof}(s, \ii) &\ge C\max \left\{ \frac{e^{2\beta}}{\tanh^2\alpha} , \frac{e^{2\beta (d- 1- 2\sqrt{s})}}{d^6\sinh^2(\alpha \sqrt{s})} \right\} \log\left( C\frac{p}{d}\right) \end{align*}
\end{myth}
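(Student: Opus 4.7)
\textbf{Proof proposal for Theorem \ref{thm:very_small_s}.} The plan is to instantiate the general lower-bound machinery described for Theorems~\ref{thm:lb_small_s} and \ref{thm:lb_large_s}: construct small, flexible obstructions on $O(d)$-vertex graphs, tightly control their $\chi^2$-divergence, then lift to $p$ nodes via a tensorised/parallel-embedding argument and finish with Le Cam (for $\gofbb$) or $\chi^2$-Fano (for $\eofbb$). The novelty compared with Theorem~\ref{thm:lb_small_s} is that, in the ultra-sparse regime $s\le d$, the entire edit fits inside a single $d$-neighbourhood, so the obstruction must concentrate $s$ edge-changes within roughly $\sqrt{s}$ vertices; this is what produces $\sinh^2(\alpha\sqrt{s})$ and $e^{2\beta(d-1-2\sqrt{s})}$ in place of the $d^2$- and $e^{2\beta(d-3)}$-type terms of the denser regime.

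For the first term $e^{2\beta}/\tanh^2(\alpha)$ in the $\max$, I would reuse the ``isolated-edge'' obstruction that drives the $\SL$ lower bound. Take a pair $(P,Q)$ on $\Theta(d)$ vertices in which $Q$ is obtained from a $\beta$-weighted backbone by toggling a single $\alpha$-edge incident to a high-degree vertex; this two-point obstruction has $\chi^2 = \Theta(\tanh^2(\alpha)\, e^{-2\beta})$. To realise $s\le d$ simultaneous changes while preserving the exponent, place $s$ such single-edge toggles inside one $d$-neighbourhood (possible since $s\le d$) and/or stamp copies across disjoint $O(d)$-blocks. Tensorising across $\lfloor p/d\rfloor$ disjoint blocks and applying the $\chi^2$ product bound gives the $\log(1+Cp/d)$ factor.

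For the dominant term $e^{2\beta(d-1-2\sqrt{s})} / \bigl(d^6 \sinh^2(\alpha\sqrt{s})\bigr)$, I would build a clique-based gadget on roughly $d$ vertices in the spirit of the Emmentaler construction. Fix a $d$-clique with $\beta$-weighted edges; designate a cluster $A$ of $\sqrt{s}$ vertices and define $P$ vs.~$Q$ by toggling the $\binom{\sqrt{s}}{2}\asymp s/2$ edges inside $A$ at weight $\alpha$ (or, symmetrically, toggling an $\alpha$-edge that is ``shielded'' from a strong clique by a wall of $\sqrt{s}$ $\beta$-edges). Intuitively, conditioning on $A$ forces a coherent spin pattern, so the toggle acts like one composite edge of effective weight $\alpha\sqrt{s}$, which produces $\sinh^2(\alpha\sqrt{s})$ in the $\chi^2$; the roughly $2\sqrt{s}$ ``boundary'' edges between $A$ and the rest of the clique are the ones whose contribution does \emph{not} cancel in the ratio of partition functions, while the remaining $d-1-2\sqrt{s}$ incident $\beta$-edges push the denominator to $e^{2\beta(d-1-2\sqrt{s})}$. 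A factor $d^{-6}$ absorbs the combinatorial/location ambiguity in the lifting. After proving this $\chi^2$ bound on the gadget, tensorise over $\lfloor p/d \rfloor$ disjoint copies for $\gofbb$; but because a single gadget of size $d$ hosts only one $s$-edit, for the $\gofbb$ bound I can instead use Ingster-style mixtures with $\Theta(p/s^2 \wedge p/d)$ random placements of the toggle, which yields the $\log(1+C(p/s^2 \wedge p/d))$ factor via the standard second-moment computation.

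For $\eofbb$, the $\log(Cp/d)$ term comes from a $\chi^2$-Fano argument: among the $\Theta(p/d)$ disjoint $d$-neighbourhoods, the location of the $s$-edit is unknown, and any estimator that errs on more than $(s-1)/2$ edges must in particular fail to identify the hosting block. Combining the per-block $\chi^2$ bound from the gadget with Fano (in its $\chi^2$ form of \cite{scarlett2016difficulty}) produces the stated bound. The main technical obstacle is the tight $\chi^2$ computation on the clique gadget: partition functions of $d$-cliques are not expressible in closed form, so I would exploit permutation symmetry among clique vertices and Hubbard--Stratonovich-style decompositions to reduce $Z(\theta)$ and $\E_P[(Q/P)^2]$ to one-dimensional sums, then carry out elementary but careful bookkeeping on these sums in the regime $\alpha\sqrt{s}\le 1$ vs.\ $\alpha\sqrt{s}\ge 1$ to extract the clean $\sinh^2(\alpha\sqrt{s})\, e^{-2\beta(d-1-2\sqrt{s})}$ scaling up to $d$-polynomial slack; this is exactly the symmetry-exploiting analysis mentioned in the introduction.
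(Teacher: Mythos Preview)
Your low-temperature construction is essentially the paper's: the ``clique with a large hole'' widget (\S\ref{sec:cliquehole}) takes the $(d+1)$-clique and removes the $\binom{\ell}{2}$-subclique on $\ell=\lceil\sqrt{2s}\rceil$ vertices, then lifts with $t=1$ over $m=\lfloor p/(d+1)\rfloor$ blocks. So the core gadget and the lifting are as you describe, and the $\gofbb$/$\eofbb$ finish via Le~Cam and the $\chi^2$-Fano of Guntuboyina is also the same.

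Two places where your account diverges from the paper are worth flagging. First, for the $e^{2\beta}/\tanh^2\alpha$ term the paper does \emph{not} use a $d$-vertex gadget with $s$ toggles; it reuses the three-node triangle (Prop.~\ref{prop:triangle}) with $\sigma=1$ and lifts with $t=s$ over $m=\lfloor p/3\rfloor$ blocks, obtaining $\log(1+Cp/s^2)$ directly. Your proposal to pack $s$ single-edge toggles into one $d$-neighbourhood would not preserve the $\chi^2\asymp\tanh^2\alpha\,e^{-2\beta}$ scaling---the divergence would pick up a factor of order $s$---so that route would cost you a factor in the SNR; the paper's triangle is both simpler and tighter here. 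Second, your reading of the $d^{-6}$ as ``combinatorial/location ambiguity in the lifting'' is not what happens: the raw widget bound (Prop.~\ref{prop:clique_hole_P||Q}) gives $\kappa\lesssim \sqrt{s}\,\sinh^2(\alpha\sqrt{s})\,e^{-2\beta(d+1-\ell)}$, so the actual SNR premultiplier has only a $\sqrt{s}$ in the denominator; the $d^{6}$ is deliberately inserted slack so that the low-temperature term is dominated by the high-temperature term whenever $\beta d\le 3\log d$, which is what lets the two bounds be written as a single $\max$ valid unconditionally. Your heuristic for the exponent (``$2\sqrt{s}$ boundary edges don't cancel'') is also not how the calculation runs; the $d+1-\ell$ in the exponent comes from the leading-order behaviour of the partition function ratios in the frozen regime, extracted via the ratio-trick analysis of Lemma~\ref{lem:clique_with_hold_bounding_Si}, not from a boundary-edge count.
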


%We highlight that setting $s = 1$ in the above yields a tight - up to a $O( e^{4\beta} d^)$ factor - lower bound for exact structure learning in $\ii_d$, improving upon all existing lower bounds for the same.

\noindent \textbf{Structure of the Bounds} Each of the bounds above can be viewed as of the form $(\SNR)^{-1} \log ( 1 + f(p,s,d) )$, where we call the premultiplying terms $\SNR$ since they naturally capture how much signal about the network structure of a law relative to its fluctuations is present in the samples. This $\SNR$ term in Thms.~\ref{thm:lb_small_s} and~\ref{thm:very_small_s} is developed as a max of two terms. The first of these is effective in the high temperature regime (where $\beta d$ is small), while the second takes over in the low temperature regime of large $\beta d$. Similarly, the first and second parts of Thm.~\ref{thm:lb_large_s} are high and low temperature settings, respectively, and have different $\SNR$ terms. The $\SNR$ in all of the above is within a $\mathrm{poly}(d)$ factor of the corresponding term in the upper bound for $n_{\sll}$.

The term $f(p,d,s)$ thus captures the hardness of testing/error localisation. For $\eofbb$, as long as $s$ is small, this term takes the form $p^c$ for some $c$. Thus, generically, localising sparse changes is nearly as hard as approximate recovery. This is to be expected from the form of the $\eofbb$ problem itself. More interestingly, for $\gofbb$, these take the form $ p d^{c}/s^2$. When $s \ll \sqrt{p d^c},$ this continues to look polynomial in $p$, and thus $\gofbb$ is as hard as recovery. On the other hand, for $s$ much larger than this, $f$ becomes $o(1)$ as $p$ grows, and so $\log(1 + f) \approx f$ itself and the resulting bounds look like $(\SNR)^{-1} pd^c/s^2$. In the setting of low temperatures with non-trivially large degree, these can still be super-polynomial in $p$, but relative to $n_{\sl}$ they are essentially vanishing.

Notice that in high temperatures ($\beta d \le 1$), the bounds of Thms.~\ref{thm:lb_small_s} and \ref{thm:very_small_s} are only $O(d)$ away from $n_{\sll}$ for small $s$, fortifying our claim that $\gofbb$ and $\eofbb$ are not separated from $\SL$ in this setting.

\noindent \textbf{Counterpoint to Sparse DCE efforts} The above bounds, especially Thm.~\ref{thm:very_small_s}, show that for small $s$ $\gofbb$ and $\eofbb$ are as hard as recovery of $G(Q)$ itself. A possible critique of these bounds when considering DCE is that the DCE schemes demand that the changes are smaller than $s,$ while our formulations only require the changes to have size at least $s$. To counter this, we point out that the constructions for Thms.~\ref{thm:lb_small_s}, \ref{thm:lb_large_s}, and \ref{thm:very_small_s} make at most $2s$ changes when computing bounds for any $s$ (in fact, smaller edits lead to stronger bounds). Thus, the above results catergorically contradict the claim that a generic $O(\mathrm{poly}(s) \log p)$ bound that is $d$ independent and much smaller than $n_{\sll}$ can hold for DCE methods on $\ii_d$. Since $\alpha, \beta, d$ are only parameters, and are not restricted in any way, this shows that the assumptions made for DCE cannot be reduced to some conditions on only $\alpha, \beta, d$, and further topological conditions must be implicit. In particular, these are stronger than typical incoherence conditions such as Dobrushin/high-temperature ($\beta d < 1$;e.g.,\cite{daskalakis2017concentration, gheissari2018concentration}).
 
\subsection{Proof Technique}\label{sec:pf_tech}
The above bounds are shown via Le Cam's method with control on the $\chi^2$-divergence of a mixture of alternatives for $\gofbb$, and via a Fano-type inequality for the $\chi^2$-divergence, due to Guntuboyina \cite{guntuboyina2011lower} for $\eofbb$. These methods allow us to argue the bounds above by explicit construction of distributions that are hard to distinguish. We briefly describe the technique used for $\gofbb$ below.
\begin{defi}
A $s$-change ensemble in $\mathcal{I}$ is a distribution $P$ and a set of distributions $\mathcal{Q}$, denoted $(P,\mathcal{Q}),$ such that $P \in \mathcal{I}, Q \subseteq \mathcal{I},$ and for every $Q \in \mathcal{Q},$ it holds that $|G(P) \triangle G(Q)| \ge s.$
\end{defi} 
Each of the testing bounds we show will involve a mixture of $n$-fold distributions over a class of distributions. For succinctness, we define the following symbol for a set of distibutions $\mathcal{Q}$ \[ \langle \mathcal{Q}^{\otimes n} \rangle := \frac{1}{|\mathcal{Q}|} \sum_{Q \in \mathcal{Q}} Q^{\otimes n}.\]
Le Cam's method (see e.g.~\cite{Yu1997, ingster_suslina}) shows that if $(P, \mathcal{Q})$ is a $s$-change ensemble in $\ii$, then \[R^{\gof}(n , s, \mathcal{I}) \ge 1 -\sqrt{\frac{1}{2} \log( 1 + \chi^2( \langle \mathcal{Q}^{ \otimes n} \rangle \| P^{\otimes n}) )}. \] As a consequence, if we find a change ensemble and an $n$ such that $1 + \chi^2(\qmix\|\mathcal{P}^{\otimes n}) \le 3,$ then we would have established that $n_{\gof}(s,\ii) \ge n.$ So, our task is set up as constructing appropriate change ensembles for which the $\chi^2$-divergence is controllable. 

Directly constructing such ensembles is difficult, essentially due to the combinatorial athletics involved in controlling the divergence. We instead proceed by constructing a pair of separated distributions $(P_0,Q_0)$ on a small number of nodes, and then `lifting' the resulting bounds to the $p$ nodes via tensorisation - $P$ is contructed by collecting disconnected copies of $P_0$, while $\mathcal{Q}$ is constructed by changing some of the $P_0$ copies to $Q_0$. The process is summarised as follows.

\begin{mylem}{(Lifting)}\label{lem:lifting} Let $P_0$ and $Q_0$ be Ising models with degree $\le d$ on $\nu \le p/2$ nodes such that $|G(P_0) \triangle G(Q_0)| = \sigma,$ and $\chi^2(Q_0^{\otimes n}\|P_0^{\otimes n}) \le a_n.$ Let $m := \lfloor p/\nu \rfloor.$ For $t < m/16e,$ there exists a $t\sigma$-change ensemble $(P,\mathcal{Q})$ over $p$ nodes such that $|\mathcal{Q}| = \binom{m}{t}$ and \[1 + \chi^2(\langle \mathcal{Q}^{ \otimes n} \rangle \| P^{\otimes n} ) \le \exp{ \frac{t^2}{m} a_n}. \] 
\end{mylem}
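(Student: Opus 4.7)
The plan is to exploit the product structure of Ising models on disjoint node sets. I would partition $[p]$ into $m := \lfloor p/\nu \rfloor$ disjoint blocks of $\nu$ coordinates each; any at most $\nu-1$ leftover nodes are assigned a fixed distribution, say independent fair coins, that is shared across all constructed laws and so factors trivially out of every subsequent computation. Let $P$ place an independent copy of $P_0$ on each block. For every $S \subseteq [m]$ with $|S|=t$, let $Q_S$ be obtained from $P$ by replacing the $P_0$-copy on block $i$ with a $Q_0$-copy for each $i \in S$. Because the blocks are disjoint and $|G(P_0) \triangle G(Q_0)|=\sigma$, one has $|G(P) \triangle G(Q_S)|=t\sigma$, and the degree bound is preserved blockwise; hence $\mathcal{Q}:=\{Q_S:|S|=t\}$ is a valid $t\sigma$-change ensemble in $\mathcal{I}$ of cardinality $\binom{m}{t}$.

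Next, the $\chi^2$ calculation reduces to a moment of a hypergeometric variable. Starting from the mixture second-moment identity
\[
1 + \chi^2\!\left(\langle \mathcal{Q}^{\otimes n}\rangle\,\|\,P^{\otimes n}\right) \;=\; \binom{m}{t}^{-2}\sum_{S,S'}\mathbb{E}_{P^{\otimes n}}\!\left[\frac{Q_S^{\otimes n}\,Q_{S'}^{\otimes n}}{(P^{\otimes n})^{2}}\right],
\]
the key observation is that each summand factorizes across the $m$ blocks: on block $i$, the factor is $1$ whenever $i \notin S \cap S'$ (either because the block contributes trivially or by the normalisation $\int Q_0^{\otimes n}=1$), and is at most $1+\chi^2(Q_0^{\otimes n}\|P_0^{\otimes n}) \le 1+a_n$ when $i \in S \cap S'$. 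Therefore each summand is bounded by $(1+a_n)^{|S \cap S'|}$. Grouping pairs $(S,S')$ by $k := |S \cap S'|$ then yields
\[
1+\chi^2\!\left(\langle \mathcal{Q}^{\otimes n}\rangle\,\|\,P^{\otimes n}\right) \;\le\; \binom{m}{t}^{-1}\sum_{k=0}^{t}\binom{t}{k}\binom{m-t}{t-k}(1+a_n)^{k} \;=\; \mathbb{E}\!\left[(1+a_n)^{K}\right],
\]
where $K \sim \mathrm{Hypergeometric}(m,t,t)$.

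To close, I would invoke Hoeffding's classical MGF domination of the hypergeometric by $\mathrm{Bin}(t,t/m)$, which delivers $\mathbb{E}[(1+a_n)^{K}] \le (1+(t/m)a_n)^{t} \le \exp{t^{2}a_n/m}$, the stated bound. The work is essentially bookkeeping once the ensemble is in place; the one delicate input is the hypergeometric MGF estimate, and if one prefers to avoid citing Hoeffding, the same bound can be derived directly via the identity $\binom{m}{t}\binom{t}{k}=\binom{m}{k}\binom{m-k}{t-k}$ together with elementary estimates of the form $\binom{t}{k}^{2}/\binom{m}{k} \lesssim (et^{2}/m)^{k}/k!$. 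In either route, the hypothesis $t < m/(16e)$ is comfortably sufficient to absorb lower-order constants appearing in the resulting series and reduce it to the exponential in the target bound.
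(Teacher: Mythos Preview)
Your proof is correct and essentially identical to the paper's own argument: the same block construction, the same factorisation of the cross second moment into $(1+a_n)^{|S\cap S'|}$, the same hypergeometric identification, and the same domination by the $\mathrm{Bin}(t,t/m)$ moment generating function to reach $\exp{t^2 a_n/m}$. The only cosmetic difference is that the paper phrases the last step as stochastic domination of the hypergeometric by the binomial rather than citing Hoeffding by name, and the condition $t<m/16e$ is in fact not needed for this part of the bound (it is used elsewhere for the packing ensemble), so your remark that it ``absorbs lower-order constants'' is harmless but unnecessary here.
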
\vspace{-6pt}
A similar argument is used for the $\eofbb$ bounds, along with a similar lifting trick, discussed in \S\ref{appx:lowbs}. Due to the tensorisation of the $\chi^2$-divergence, we obtain results of the form $a_n \le (1 + \kappa)^n -1,$ where $\kappa$ depends on $(P_0, Q_0)$ but not $n$. Plugging this into the above with $t = \lceil s/\sigma\rceil$ yields \[   n_{\gof}(s, \ii) \ge  \frac{1}{\log(1 + \kappa)} \log \left(1 + \frac{p\sigma^2}{8\nu s^2}\right).\] Notice that this $\kappa$ is an SNR term, while $\log(1 + p\sigma^2/8\nu s^2)$ captures combinatorial effects.

\begin{wrapfigure}[14]{r}{0.44\textwidth}
    \centering \vspace{-20pt}
    \includegraphics[width = .4\textwidth]{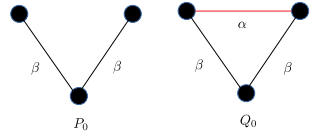} \vspace{-45pt}
    \caption{Graphs used to construct high-temperature obstructions. Labels indicate edge-weight, and the red edge is added in $Q_0$.}
    \label{fig:triangle_main}
\end{wrapfigure}
The procedure thus calls for strong $\chi^2$ bounds for various choices of small graphs, or `widgets'. We use two varieties of these - the first, `star-type' widgets, are variations on a star graph. These allow direct calculations in general, and provide bounds that extend to the high-temperature regime. The second variety is the `clique-type' widgets, that are variations on a clique, and provide low-temperature obstructions. Classical Curie-Weiss analysis shows that cliques tend to `freeze' - for Ising models on a $k$-clique with uniform weight $\lambda,$ the probability mass concentrates on the set $\{ (1)^{\otimes k}, (-1)^{\otimes k}\}$ w.p.~roughly $1 - e^{- \Theta(\lambda k)}.$ The clique-type obstructions implicitly argue that this effect is very robust.

\begin{wrapfigure}[11]{r}{0.44\textwidth}
    \centering \vspace{-65pt}
    \includegraphics[width = .2\textwidth]{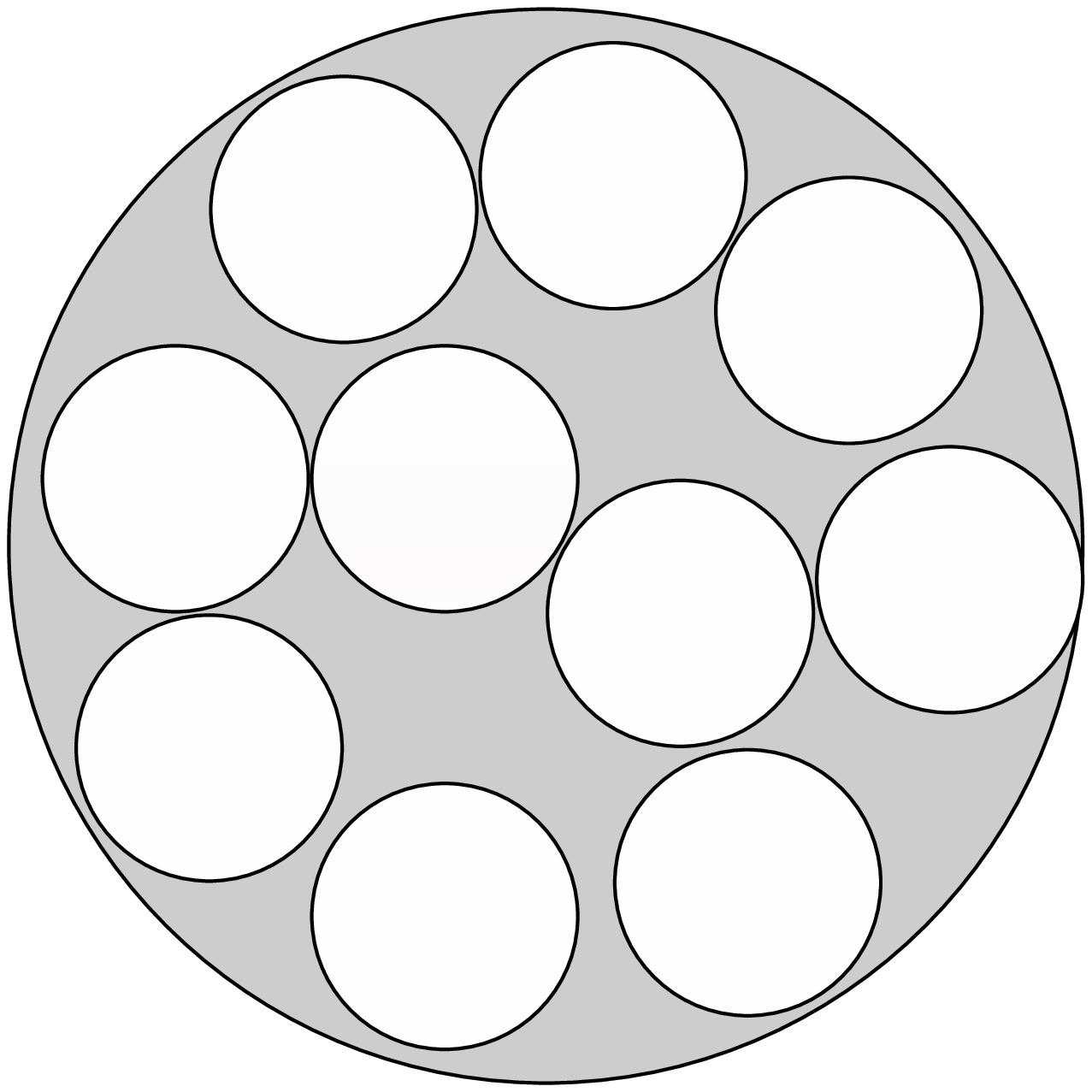}
    ~
    \includegraphics[width = .2\textwidth]{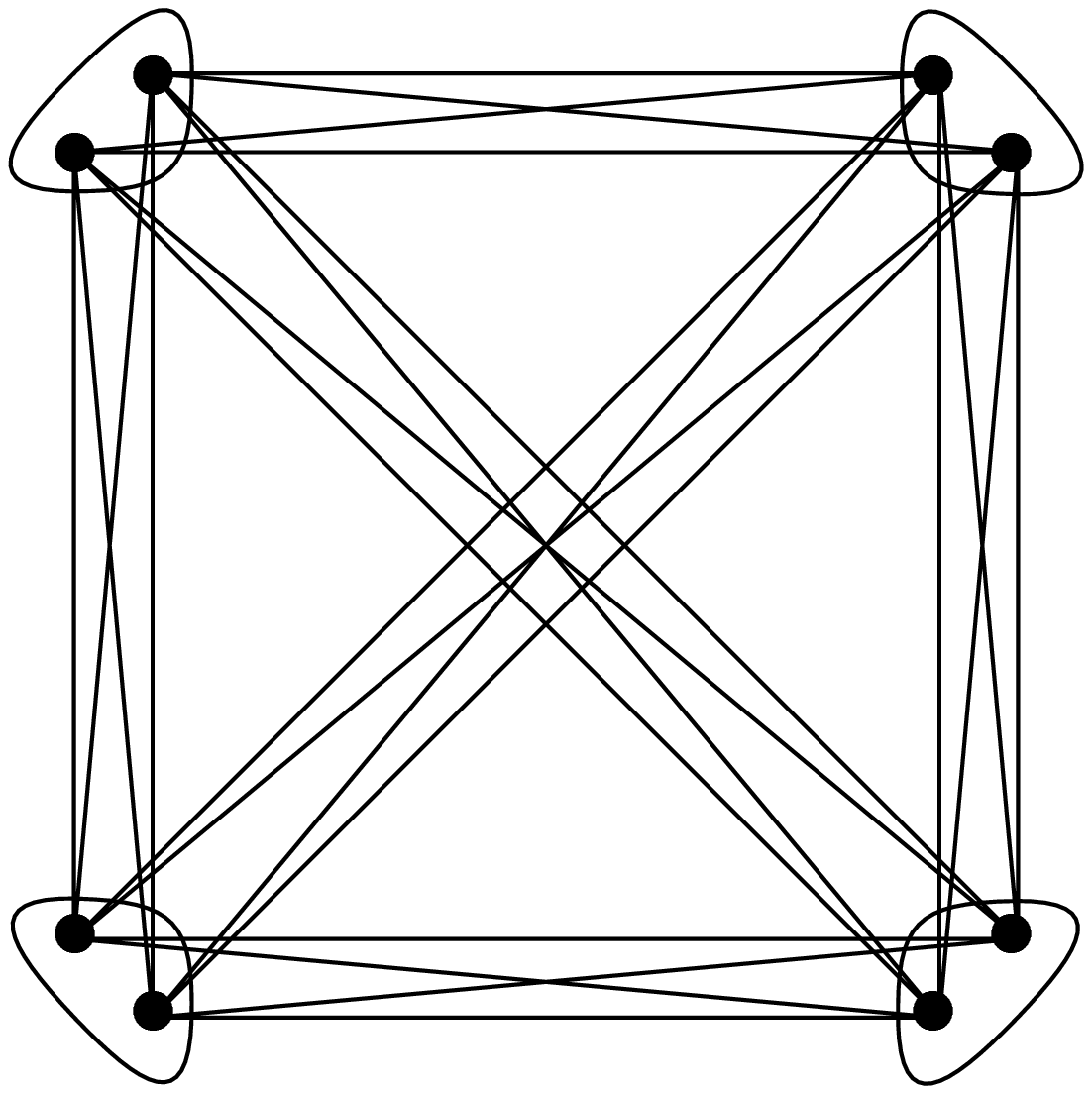} \vspace{-4pt}
    \caption{Two views of Emmentaler cliques. Left: the base clique is the large grey circle, uncoloured circles represent the groups with no edges within (this is $d,\ell \gg 1, \nicefrac{d+1}{\ell+1} = 10$); Right: Emmentaler as the graph $K_{\ell + 1, \ell + 1, \dots, \ell + 1}$ ($d = 7, \ell = 1$).}
    \label{fig:emmen_main}
\end{wrapfigure}
The particular graphs used to argue the high temperature bounds in Thms.~\ref{thm:lb_small_s},\ref{thm:very_small_s} are a `V' versus a triangle as seen in Fig.~\ref{fig:triangle_main}, while in Thm.~\ref{thm:lb_large_s} the empty graph is compared to a $d^{1- \zeta}$-clique. The low temperature obstructions of Thms.~\ref{thm:lb_small_s},\ref{thm:lb_large_s} compare a full $d+1$-clique as $P_0$ to an `Emmentaler' clique (Fig.~\ref{fig:emmen_main}). These are constructed by dividing the $d+1$ nodes into groups of size $\ell+1$, and removing the $\ell+1$-subclique within each group. The graph can thus be seen either as a clique with many large `holes' - corresponding to the deleted subcliques - which inspires the name, or as the complete $\nicefrac{d+1}{\ell+1}$-partite graph on $d+1$ nodes. Notice that in the Emmentaler clique we have deleted $ \approx \nicefrac{d\ell}{2}$ edges. We will show in \S\ref{appx:widgets} that this is still hard to distinguish from the full clique for $\ell \sim d/10$ - a deletion of $\Omega(d^2)$ edges! \vspace{2pt}\\
\noindent \textbf{On Tightness} Prima facie the above bounds suggest that one may find sample efficient schemes in, say, $\gofbb$ for $s \gg \sqrt{pd}.$ However, it is our opinion that these bounds are actually loose. Particularly, while the $\SNR$ terms are relatively tight, the behaviour of $f(p,d,s)$ is not. To justify this opinion, consider the setting of forest-structured graphs. By the same techniques, we show a similar bound with $f = p/s^2$ for $\gofbb$ in forests in \S\ref{sec:forests} - this is the best possible by the methods employed. For $s \gg \sqrt{p},$ the resulting overall lower bound is the trivial $n \ge 1$ unless $\alpha \le (p/s^2)^{1/2}$. On the other hand, \cite[Thm.~14]{daskalakis2019testing} can be adapted to show a lower bound for forests of $\Omega(\alpha^{-2}\wedge \alpha^{-4}/p)$ for the particular case of $s = p/2$, which is non-trivial for all $\alpha \lesssim p^{-1/4}$. Our results trivialise for $\alpha \gtrsim p^{-1/2}$ for this case, demonstrating looseness.\vspace{2pt}\\
The reason for this gap lies in the lifting trick used to show these bounds. The tensorisation step involved in this constricts the set of `alternates' one can consider, thus diminishing $f$. More concretely - there are about $\nicefrac{p^2-pd}{2}$ potential ways to add an edge (and $O(pd)$ to delete an edge), while the lifting process as implemented here restricts these to at most $O(pd)$. It is important to recognize this lossiness, particularly since \emph{most} lower bounds, for both testing and recovery, proceed via a similar trick, e.g.~\cite{SanWai, tandon2014information, scarlett2016difficulty, gangrade2017lower, neykov2019property, cao2018high}. \cite[Thm.~14]{daskalakis2019testing} is the only exception we know of. We conjecture that for $\gofbb$ in $\ii_d,$ $f$ should behave like $p^2/s^2,$ while for $\eofbb$, it should behave like $p^2/s$. Note that for $\gofbb$, since $s$ can be as big as $pd,$ this indicates that one should look for sample-efficient achievability schema in the setting of $s > pd^{c}$.\vspace{2pt}\\ 
However, for simpler settings this technique \emph{can} recover tight bounds. For instance, \S\ref{sec:forests} presents a matching upper bound for testing of edge-\emph{deletion} in a forest. Notice that in this case there are only $O(p)$ possible ways to edit. This raises the further question of if the same effect extends to $\ii_d$, i.e., can deletion of edges in $\ii_d$ be tested with $O(1 \vee e^{2\beta d}\alpha^{-2} (pd/s^2) )$ samples when $s \gg \sqrt{pd}$?  {\S\ref{sec:delete_high_temp} offers initial results in this direction in the high temperature regime.}

\section{Testing Edge Deletions}\label{sec:testing_deletions}

Continuing on the theme that concluded our discussion of the tightness of our lower bounds, we study the testing of edge deletions in two classes of Ising models - forests, and high-temperature ferromagnets - with the aim demonstrating natural settings in which the sample complexity of $\gofbb$ testing of Ising models is provably separated from that of the corresponding recovery problem.

In the deletion setting, we consider the same problems as in \S\ref{sec:def}, but with the additional constraint that if $Q \neq P,$ then $G(Q) \subset G(P)$, that is, the network structures of alternates can be obtained by dropping some edges in that of the null. For a class of Ising models $\mathcal{J},$ we thus define \[ R^{\gof,\delete}(n,s,\mathcal{J}) = \adjustlimits \inf_{\Psi} \sup_{P \in \mathcal{J}}  P^{\otimes n}(\Psi(P,X^n) = 1) + \sup_{\substack{ Q \in A_s(P) \cap \mathcal{J} \\ G(Q) \subset G(P)}} Q^{\otimes n}(\Psi(P,X^n) = 1), \] and, analogously define $R_{\eof, \delete}$, and the sample complexities $n_{\gof, \delete}(s, \mathcal{J})$ and $n_{\eof, \delete}(s, \mathcal{J})$.

We will  look at testing deletions for two choices of $\mathcal{J}$ which both have uniform edge weights\begin{itemize}[wide, nosep]
    \item \textbf{Forest-Structured Models} ($\mathcal{F}(\alpha)$) are Ising models with uniform weight $\alpha$ such that their network structure is a forest (i.e., has no cycles).
    \item \textbf{High-Temperature Ferromagnets} ($\mathcal{H}_d^\eta(\alpha)$) are models with max degree at most $d$, uniform \emph{positive} edge weights $\alpha,$ and further such that there is an $\eta < 1$ such that $\alpha d \le \eta$.
\end{itemize}

We note that while our motivation for the study of the above is technical, both of these subclasses of models have been utilised in practice, and indeed are the subclasses of $\mathcal{I}_d$ that are best understood.

\subsection{Testing Deletions in Forests}\label{sec:forests}

Forest-structured Ising models are known to be tractable, and have thus long served as the first setting to explore when trying to establish achievability statements. We show a tight characterisation of the sample complexity of testing deletions in forests for large changes, and also demonstrate the separation from the corresponding $\eofbb$ (and thus also $\SL$) problem. In addition, we also show that for the restricted subclass of trees, essentially the same characterisation follows for \emph{arbitrary} changes (i.e., not just deletions), and that the methods support some amount of tolerance directly. We begin with the main result for testing deletions in forests (all proofs are in \S\ref{appx:forests}).

It is worth noting that degrees are not assumed to be explicitly bounded in this section - i.e.~the results hold even if the max degree is $p-1$ (a star graph).

\begin{myth}\label{thm:forest_testing}
    There exists a constant $C$ independent of $(s,p,\alpha)$ such that the sample complexity of $\gofbb$ testing of forest-structured Ising models against deletions is bounded as \[  n_{\gof, \delete}(s, \mathcal{F}(\alpha)) \le C \max\left\{ 1, \frac{1}{\sinh^2(\alpha)} \frac{p}{s^2}\right\}.\]
    Conversely, for $s \le \nicefrac{p}{32e},$ there exists a constant $C'$ independent of $(s,p,\alpha),$ such that \begin{align*}  n_{\gof, \delete}(s, \mathcal{F}(\alpha)) &\ge \max\left\{1 , \frac{1}{C'} \frac{1}{\sinh^2 \alpha} \log \Big(1 + \frac{p}{C's^2}\Big)\right\}, \\ n_{\eof, \delete}(s, \mathcal{F}(\alpha) ) &\ge \frac{1}{C'\sinh^2\alpha} \log \left( \frac{p}{C's} \right).\end{align*}
\end{myth}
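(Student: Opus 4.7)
The statement has three parts, which I would handle separately: an upper bound via an explicit test, and lower bounds for $\gofbb$ and $\eofbb$ via Le Cam and Fano respectively.

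\textbf{Upper bound.} My plan is to use the statistic $T = \sum_{e \in E(G(P))} \hat\mu_e$, where $\hat\mu_e = n^{-1}\sum_{t=1}^n X_{i_e}^{(t)} X_{j_e}^{(t)}$. Since $G(P)$ is a known forest and any alternative $Q$ has $G(Q) \subseteq G(P)$, any edge $e \in G(P)\setminus G(Q)$ disconnects its endpoints in $G(Q)$ (removing an edge from a forest always separates its components), forcing $E_Q[\hat\mu_e]=0$. Hence $E_P[T] = m\tanh\alpha$ with $m = |E(G(P))| \le p$, while $E_Q[T] \le (m-s)\tanh\alpha$ under any $s$-deletion alternative, so the threshold test at $\tau = (m-s/2)\tanh\alpha$ has gap $s\tanh\alpha/2$ on each side. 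The central technical step is to show $\operatorname{Var}_P(T) = m/(n\cosh^2\alpha)$. Using the bond-variable representation of zero-field forest Ising models (the products $Y_e = X_{i_e} X_{j_e}$ for $e$ in the forest are jointly independent Rademacher-like variables within each tree component, with $E[Y_e] = \tanh\alpha$, and distinct tree components are independent), the single-edge variance is $1-\tanh^2\alpha=1/\cosh^2\alpha$, and I would check by case analysis that $\operatorname{Cov}_Q(Y_e, Y_{e'})=0$ for every pair of distinct edges $e,e'\in G(P)$ and every $Q$. The only configuration that could yield a nonzero covariance is when both $e$ and $e'$ are deleted in $G(Q)$ and the cross-pairs $\{i_e,i_{e'}\}$ and $\{j_e,j_{e'}\}$ are each internally connected in $G(Q)$, which cannot occur in a forest since it would close a cycle in $G(P)$ together with $e,e'$. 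Chebyshev then gives type-I error $\le 4p/(ns^2 \sinh^2\alpha)$; the type-II error obeys the same main term plus a secondary $O(1/(ns\tanh^2\alpha))$ contribution arising because deleted edges contribute variance $1/(4n)$ (not $1/(4n\cosh^2\alpha)$) to $\operatorname{Var}_Q(T)$, absorbed either into the main term or, in the ``$\max\{1,\cdot\}$'' regime of large $\alpha$ and moderate $s$, handled by switching to a Bernstein-type bound that delivers an $O(1)$-sample test.

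\textbf{Lower bound for $\gofbb$.} This is a direct invocation of Lemma~\ref{lem:lifting} with the two-vertex widget $\nu=2$, $\sigma=1$, $P_0(x_1,x_2)\propto \exp(\alpha x_1 x_2)$ and $Q_0$ the uniform law on $\{\pm 1\}^2$, so that the lifted alternatives are forest-structured and obtained from the null by $s$ deletions exactly. A direct computation using $P_0(1,1)=e^\alpha/(4\cosh\alpha)$ etc.\ gives $\chi^2(Q_0\|P_0)=\cosh^2\alpha-1=\sinh^2\alpha$, whence $a_n = \cosh^{2n}\alpha - 1$ by tensorisation. Taking $t=s\le p/(32e)$ in the lemma,
\[ 1+\chi^2(\langle \mathcal{Q}^{\otimes n}\rangle \| P^{\otimes n}) \le \exp\!\left(\tfrac{2s^2}{p}(\cosh^{2n}\alpha-1)\right),\]
and forcing this to be $\le 3$ yields $n \gtrsim \log(1+cp/s^2)/\log\cosh\alpha$. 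Using $\log\cosh\alpha = \tfrac12\log(1+\sinh^2\alpha) \le \tfrac12\sinh^2\alpha$ converts this to the claimed $\log(1+cp/s^2)/\sinh^2\alpha$ form.

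\textbf{Lower bound for $\eofbb$.} I plan to combine a Gilbert--Varshamov packing argument with a Fano-type inequality. Fixing a matching of $\lfloor p/2\rfloor$ disjoint edges in a base forest $P$, the $\binom{\lfloor p/2\rfloor}{s}$ subsets of size $s$ admit an $s$-separated subpacking $\mathcal{M}$ with $|\mathcal{M}|\ge\exp(cs\log(p/s))$, and any $\eofbb$ estimator must distinguish these alternatives. Each $Q_M$ differs from $P$ in $s$ deletions of matching edges, so by independence $\chi^2(Q_M^{\otimes n}\|P^{\otimes n}) = (1+\sinh^2\alpha)^{ns} - 1$, and the KL divergence is $D(Q_M\|P) = ns\log\cosh\alpha$. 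Either the $\chi^2$-Fano bound of Guntuboyina or the standard KL form then forces the minimax risk to exceed $1/8$ unless $n\log\cosh\alpha \gtrsim \log(p/s)$, and the inequality $\log\cosh\alpha\le\tfrac12\sinh^2\alpha$ again yields the claimed $n \gtrsim \log(p/s)/\sinh^2\alpha$.

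The principal obstacle is the variance computation in the upper bound: establishing that \emph{every} pairwise covariance $\operatorname{Cov}_Q(Y_e, Y_{e'})$ vanishes uniformly across alternatives $Q$ and distinct edge pairs $e, e' \in G(P)$. This reduces to a small case-analysis on how the endpoints of $e, e'$ distribute across components of $G(Q)$, and its success hinges delicately on the cycle-free structure of $G(P)$ ruling out the one potentially bad configuration. It is precisely this vanishing that converts the naive $1/\tanh^2\alpha$ signal-to-noise into the sharper $1/\sinh^2\alpha$ needed to match the lower bound.
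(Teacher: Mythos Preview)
Your proposal is correct and follows essentially the same route as the paper on all three parts: the same edge-sum statistic with the same threshold for the upper bound, the same two-vertex single-edge widget lifted via Lemma~\ref{lem:lifting} for the $\gofbb$ lower bound, and a packing/Fano argument on disjoint edges for the $\eofbb$ lower bound.

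One remark on what you flag as the ``principal obstacle'': the covariance vanishing $\operatorname{Cov}_Q(Y_e,Y_{e'})=0$ does not require a case analysis. It is a standard property of \emph{any} zero-field forest Ising model that $\mathbb{E}[Z_uZ_v]=\mathbb{E}[Z_u]\mathbb{E}[Z_v]$ for \emph{all} distinct pairs $u,v$ of vertex-pairs, not merely for pairs of edges of the forest (see e.g.\ \cite{breslerkarzand}). Since every alternative $Q$ here is itself forest-structured (a subgraph of a forest is a forest), the uncorrelation applies directly to each $Y_e=X_{i_e}X_{j_e}$ with $e\in G(P)$, regardless of whether $e$ survives in $G(Q)$. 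This collapses your case analysis to a single citation and removes the need for any cycle-closure argument. Relatedly, the secondary $O(1/(ns))$ variance contribution from deleted edges is handled by Chebyshev alone---no Bernstein refinement is needed to get the $\max\{1,\cdot\}$ form, since the resulting condition is simply $n\gtrsim 1+p/(s^2\sinh^2\alpha)$.
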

The upper bound is constructed by using the simple global statistic $\mathscr{T}_P = \sum_{(i,j) \in G(P)} X_iX_j$, averaged across the samples. Again, the behaviour of the lower bound shifts as $s$ crosses $\sqrt{p}$ - for larger $s$, it scales as $1 \vee \sinh^{-2}(\alpha) \nicefrac{p}{s^{2}},$ while for much smaller $s$ it is $1 \vee \sinh^{-2}(\alpha)\log p$. Further, for large changes, the lower bound is matched, up to constants, by the achievability statement above. For the smaller case, the same holds in the restricted setting of $\alpha < 1$, since exact recovery in $\mathcal{F}(\alpha)$ only needs $\tanh^{-2}(\alpha) \log p$ samples (Chow-Liu algorithm, as analysed in \cite{breslerkarzand}).\footnote{While the $\alpha<1$ regime is certainly more relevant in practice, it is an open question whether for larger $\alpha,$ and for small $s$, the correct SNR behaviour is $\sinh^{-2}$ or $\tanh^{-2}$ in testing.} Finally, the $\eofbb$ lower bound (which is also tight for $\alpha < 1$, show that the sample complexity of $\gofbb$ is separated from error of fit (and thus $\SL$) for large changes.

Fig.~\ref{fig:forest_plot} illustrates Thm.~\ref{thm:forest_testing} via a simulation for testing deletions in a binary tree (for $p = 127, \alpha = 0.1$), showing excellent agreement. In particular, observe the sharp drop in samples needed at $s = 21 \approx 2\sqrt{p}$ versus at $s < \sqrt{p} \approx 11.$ We note that $\SL$-based testing fails for all $s \le 60$ for this setting even with $1500$ samples (Fig.~\ref{fig:chow-liu-error} in \S\ref{appx:exp}), which is far beyond the scale of Fig.~\ref{fig:forest_plot}. See \S\ref{appx:exp} for details.

\begin{figure}[h]%{0.35\textwidth}
    \centering% \vspace{-5pt}
    \includegraphics[width = .5\textwidth]{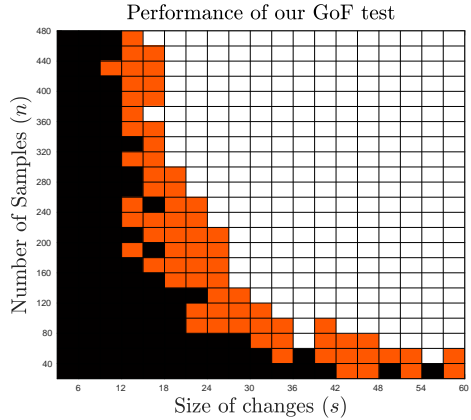}\vspace{-10pt}
    \caption{\small Testing deletions in binary trees for $p = 127, \alpha = 0.1$. Entries are coloured black if risk is $>0.35,$ white if $< 0.15,$ and orange otherwise.}%\vspace{-20pt}
    \label{fig:forest_plot}
\end{figure}

\noindent \textbf{Testing arbitrary changes in trees} The statistic $\mathscr{T}$ is good at detecting deletions in edges, but is insensitive to edge additions, which prevents it from being effective in general for forests. However, if the forest-models $P$ and $Q$ are restricted to have the same \emph{number of edges}, then $\mathscr{T}$ should retain power, since any change of $s$ edges must delete $s/2$ edges. This, of course, naturally occurs for trees! Let $\mathcal{T}(\alpha) \subset \mathcal{F}(\alpha)$ denote tree-structured Ising models. \begin{myth}\label{thm:tree_testing}
    There exists a $C$ independent of $(p,s,\alpha)$ s.t. \[ n_{\gof}(s, \mathcal{T}(\alpha)) \le C \max\left( 1, \frac{1}{(1-\tanh(\alpha))^2 \sinh^2(\alpha)} \frac{p}{s^2} \right).\] 
    Conversely, there exists a $c$ independent of $(p,s,\alpha)$ such that \[ n_{\gof}(s,\mathcal{T}(\alpha)) \ge c\frac{1}{\tanh^2(\alpha)} \log \left( 1 + \frac{c p}{s^2}\right). \]
\end{myth}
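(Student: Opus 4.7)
My plan is to reuse the global statistic $\bar{\mathscr{T}}_P := n^{-1}\sum_{k=1}^n \sum_{(i,j)\in G(P)}X^{(k)}_iX^{(k)}_j$ from the proof of Thm.~\ref{thm:forest_testing}. The key structural fact peculiar to trees is that any $Q \in \mathcal{T}(\alpha)$ has $|G(Q)| = p-1 = |G(P)|$, so every $s$-separated alternate $Q$ must delete at least $s/2$ edges of $G(P)$. Using the standard tree identity $\mathbb{E}_R[X_iX_j] = \tanh(\alpha)^{d_R(i,j)}$, retained edges contribute equally to $\mathbb{E}_P[\mathscr{T}_P]$ and $\mathbb{E}_Q[\mathscr{T}_P]$, while each deleted edge contributes at least $\tanh\alpha - \tanh^2\alpha = \tanh\alpha(1-\tanh\alpha)$ to the difference (since its endpoints are still connected in $Q$ through a path of length $\ge 2$), yielding a mean gap of at least $(s/2)\tanh\alpha(1-\tanh\alpha)$. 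Under $P$, the tree moment formula $\mathbb{E}_P[\prod_{v\in S}X_v] = \prod_{e'}\tanh^{c_{e'}(S)}\alpha$---with $c_{e'}(S)=1$ iff removing $e'$ splits $S$ into two odd halves---makes all covariances between distinct edges of $G(P)$ vanish (adjacent pairs collapse via $X_j^2=1$, disjoint pairs have $c_{e'}=0$ on every edge other than $e$ and $e'$), so $\mathrm{Var}_P(\mathscr{T}_P) = (p-1)/\cosh^2\alpha$ exactly. Chebyshev at the threshold $(p-1)\tanh\alpha - (s/4)\tanh\alpha(1-\tanh\alpha)$, combined with the identity $\tanh^2\alpha\cdot\cosh^2\alpha = \sinh^2\alpha$, then controls the Type~I error at the announced sample rate.

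\textbf{Lower bound.} For the converse, I plan to apply the lifting recipe of Lem.~\ref{lem:lifting} to a three-node widget: $P_0$ is the path $a\text{-}b\text{-}c$, and $Q_0$ is the rewired path $b\text{-}a\text{-}c$ on the same vertices, so that $|G(P_0)\triangle G(Q_0)| = 2$. A direct summation of $\chi^2(Q_0\|P_0)$ over the eight spin configurations, using the spin-flip symmetry $x\mapsto -x$ to collapse cancelling terms, gives $\chi^2(Q_0\|P_0) = 4\sinh^2\alpha/(e^{2\alpha}+1) = \Theta(\tanh^2\alpha)$. Feeding this into Lem.~\ref{lem:lifting} with $\sigma=2$, $\nu=3$, $m = \lfloor p/3\rfloor$, and $t = \lceil s/2\rceil$, and then imposing Le Cam's threshold $\exp\!\bigl(t^2((1+\chi^2)^n - 1)/m\bigr)\le 3$, yields the announced lower bound $n_{\gof}(s,\mathcal{T}(\alpha))\gtrsim \log(1 + cp/s^2)/\tanh^2\alpha$.

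\textbf{Main obstacle.} I expect the subtlest step to be the Type~II analysis, which requires $\mathrm{Var}_Q(\mathscr{T}_P)$ to be controlled jointly with the $Q$-specific mean gap rather than by any single uniform bound. The double sum $\sum_{e,e'\in G(P)}\mathrm{Cov}_Q$ pairs edges of $G(P)$ against tree distances measured in $G(Q)$, and adversarial choices---such as $Q$ being a long path when $P$ is a star---can inflate $\mathrm{Var}_Q(\mathscr{T}_P)$ to as much as $\Theta(pe^{2\alpha})$. However, the same $Q$ also forces many deleted edges to have $d_Q(e)\gg 2$, which enlarges the mean gap proportionally. The key technical task is therefore to show that the ratio $\mathrm{Var}_Q(\mathscr{T}_P)/\mathrm{gap}_Q^2$ is uniformly bounded by a multiple of $\tfrac{p}{s^2(1-\tanh\alpha)^2\sinh^2\alpha}$ across every tree alternate $Q$; I would tackle this by expanding each $4$-point covariance through the tree moment formula applied to $Q$ and then exploiting geometric-series bounds on $\sum_{v'}\tanh(\alpha)^{d_Q(v,v')}$ along the branches of $Q$ to absorb its topology into the final sample-complexity constant.
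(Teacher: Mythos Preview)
Your upper-bound strategy matches the paper's: the same statistic $\mathscr{T}_P$, the same observation that any tree alternate must delete at least $s/2$ edges of $G(P)$, and the same mean-gap estimate of at least $(s/2)\tau(1-\tau)$. Where you anticipate a delicate joint variance/gap analysis under $Q$, the paper is much briefer: it simply asserts that since $G(Q)$ is a tree the uncorrelation property again gives $\mathrm{Var}_Q[\mathscr{T}_P]=\sum_{u\in G(P)}(1-\mathbb{E}_Q[Z_u]^2)\le (p-1)(1-\tau^2)+\Delta\tau^2$ and reruns the Chebyshev argument from Thm.~\ref{thm:forest_testing}. Your caution here is well-placed---pairwise uncorrelation of the $Z_u$'s holds for edges of the tree under which the expectation is taken, not for arbitrary pairs---but the paper does not carry out the careful analysis you sketch.

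Your lower bound, however, has a genuine gap. The lifting construction of Lem.~\ref{lem:lifting} builds $P$ and every $Q\in\mathcal{Q}$ as a disjoint union of $\lfloor p/3\rfloor$ three-vertex paths; these are forests with $\Theta(p)$ components, not trees, so the ensemble does not lie in $\mathcal{T}(\alpha)$ and the bound you obtain is for $\mathcal{F}(\alpha)$, not $\mathcal{T}(\alpha)$. The paper flags exactly this obstruction (``the lifting trick is not applicable---this fundamentally constructs disconnected graphs, while trees need to be connected'') and instead works directly on a connected tree. It takes $P$ to be a two-layer star: a central node $p$ with inner leaves $1,\dots,m=(p-1)/2$ attached to it, and outer leaves $m+i$ attached to $i$. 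Each alternate $Q_S$ rewires $|S|=\lceil s/2\rceil$ inner/outer pairs by detaching $i$ from $p$ and attaching $m+i$ to $p$, which keeps the graph a tree while changing exactly $2|S|$ edges. After summing over the central spin the $m$ pairs $(X_i,X_{m+i})$ decouple, and an explicit computation (Lemma~\ref{lem:subsidiary_tree_lower_bound}) gives $\mathbb{E}_P[Q_SQ_{\tilde S}/P^2]\le (1+2\tanh^2\alpha)^{|S\cap\tilde S|}$; the hypergeometric-to-binomial domination step from the proof of Lem.~\ref{lem:lifting_back} then finishes exactly as in the standard lifting argument.
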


\noindent \textbf{Tolerant Testing} The achievability results of Thm.s \ref{thm:forest_testing},\ref{thm:tree_testing} can be made `tolerant' without much effort (see \S\ref{appx:tolerant_testing}). `Tolerance' here refers to updating the task to separate models that are $\varepsilon s$-close to $P$ from those that are $s$-far from it. 

Concretely, let $\mathcal{J}$ be a class of Ising models, $s,p,n$ as before, and let $\varepsilon \in (0,1)$ be a tolerance parameter. We set up the following risks of tolerant testing of $s$ changes at tolerance $\varepsilon,$ and of tolerant testing of deletion at the same levels, as 

\begin{align*}
    R^{\gof}_{\mathrm{tol}}(n,s, \varepsilon, \mathcal{J}) &= \adjustlimits \inf_{\Psi} \sup_{P \in \mathcal{J}} \left\{ \sup_{\widetilde{P} \in A_{\varepsilon s}(P)^c \cap \mathcal{J}} \widetilde{P}^{\otimes n}(\Psi = 1) + \sup_{Q \in A_s(P) \cap \mathcal{J}} Q^{\otimes n}(\Psi = 0)\right\},\\
    R^{\gof, \delete}_{\mathrm{tol}}(n,s, \varepsilon, \mathcal{J}) &= \adjustlimits \inf_{\Psi} \sup_{P \in \mathcal{J}} \left\{ \sup_{ \substack{\widetilde{P} \in A_{\varepsilon s}(P)^c \cap \mathcal{J}\\ G(\widetilde{P})\subset G(P)}} \widetilde{P}^{\otimes n}(\Psi = 1) + \sup_{\substack{Q \in A_s(P) \cap \mathcal{J}\\ G(Q)\subset G(P)} } Q^{\otimes n}(\Psi = 0)\right\}.
\end{align*}

Analogously to \S\ref{sec:def}, the sample complexities $n_{\gof}^{\mathrm{tol}}(s, \varepsilon, \mathcal{J})$ and $n_{\gof,\delete}^{\mathrm{tol}}(s, \varepsilon, \mathcal{J})$ are the smallest $n$ required to drive the above risks below $1/4.$ Our claim in the above may be summarised as follows. 
\begin{myth}\label{thm:tolerant_tree}
    There exists a constant $C$ independent of $(s,p,\alpha, \varepsilon)$ such that \[ n_{\gof, \delete}^{\mathrm{tol}}(s, \varepsilon, \mathcal{F}(\alpha)) \le C\max \left\{ 1, \frac{1}{\sinh^2(\alpha)} \frac{p}{(1-\varepsilon)^2 s^2} , \frac{1}{(1-\varepsilon)^2 s}\right\}.\] Further, if $\varepsilon < \nicefrac{1 - \tanh(\alpha)}{2},$ then \[ n_{\gof}^{\mathrm{tol}}(s, \varepsilon, \mathcal{T}(\alpha)) \le C \max \left\{ 1, \frac{1}{\sinh^2(\alpha)} \frac{p}{(1-2\varepsilon - \tanh(\alpha))^2 s^2}, \frac{1}{(1-2\varepsilon - \tanh(\alpha))^2 s}\right\} .\]
\end{myth}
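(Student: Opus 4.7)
The plan is to generalise the statistic $\widehat{\mathscr{T}}_P = \frac{1}{n}\sum_{m=1}^n \mathscr{T}_P(X^{(m)})$, with $\mathscr{T}_P(x) = \sum_{(i,j) \in E(P)} x_i x_j$, which underlies the non-tolerant upper bounds of Theorems~\ref{thm:forest_testing} and~\ref{thm:tree_testing}, and to carefully track how the gap between the null and alternative means of $\mathscr{T}_P$ degrades as the null is inflated from $P$ to an $\varepsilon s$-ball around $P$. The test then thresholds $\widehat{\mathscr{T}}_P$ at the midpoint of these extremal means.

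For the forest-deletion bound, first observe that for any forest $R$ with $G(R) \subseteq G(P)$, each deleted edge $(i,j) \in E(P)\setminus E(R)$ must have $i,j$ in distinct tree components of $R$---otherwise $P$ would contain a cycle---so $E_R[X_iX_j] = 0$ exactly. Hence $E_R[\mathscr{T}_P] = (|E(P)| - k_R)\tanh(\alpha)$ where $k_R$ is the deletion count, and with $k_{\widetilde{P}} \le \varepsilon s$ for any tolerant null and $k_Q \ge s$ for any alternative, the mean gap is at least $(1-\varepsilon)s\tanh(\alpha)$. I would then reuse the variance control from Theorem~\ref{thm:forest_testing}: edge-spins $\tau_e = X_iX_j$ for $e \in E(R)$ are mutually independent, contributing $(|E(P)|-k_R)/\cosh^2(\alpha)$, while the remaining deleted-edge terms, being products of independent component root-spins, are uniform $\pm 1$ and mostly pairwise uncorrelated, contributing $O(k_R) \le O(s)$. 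Altogether $\mathrm{Var}(\mathscr{T}_P) \lesssim p/\cosh^2(\alpha) + s$; dividing by the squared gap and applying Chebyshev yields the two nontrivial summands via the identity $1/(\tanh^2\alpha\,\cosh^2\alpha) = 1/\sinh^2(\alpha)$, with the $s$-term becoming $1/[(1-\varepsilon)^2 s]$ up to constants in the only regime where it is active (large $\alpha$, where $\tanh\alpha \asymp 1$).

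For the tree bound, additions to $\widetilde{P}$ do not feed into $\mathscr{T}_P$ directly, but may shorten the alternative paths between endpoints of edges deleted from $P$. Since $|E(P)| = |E(\widetilde{P})|$ for trees, the edit distance is exactly $2k$ where $k$ is the number of deletions, so the tolerant null has $k_{\widetilde{P}} \le \varepsilon s/2$ and the alternative has $k_Q \ge s/2$. Every deleted edge has $\widetilde{P}$-path length at least $2$, so $E_{\widetilde{P}}[X_iX_j] \in [0, \tanh^2(\alpha)]$; taking the adversarial worst cases (null with $k = \varepsilon s/2$ deletions and each contributing $0$; alternative with $k = s/2$ and each contributing $\tanh^2(\alpha)$) produces mean-gap at least $(s/2)\tanh(\alpha)(1-\varepsilon-\tanh(\alpha))$, which under $\varepsilon < (1-\tanh\alpha)/2$ is of order $s\tanh(\alpha)(1-2\varepsilon-\tanh(\alpha))$ after mildly pessimistic constant-juggling. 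The variance argument from the forest case carries over verbatim since edge-spin independence holds for any tree, and Chebyshev delivers the stated tree sample-complexity bound.

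The main obstacle will be the variance accounting for deleted-edge terms, since for $e,e' \in E(P)\setminus E(R)$ the product $X_e X_{e'}$ couples root-spins of up to four distinct tree components of $R$. Uniformity of the root-spins forces $E[\prod_c R_c] = 0$ unless the four component-labels pair up, which is a combinatorially restricted event; a careful but routine case-analysis should then confirm the required $O(k) \le O(s)$ bound on the total deleted-edge variance, which is exactly the input Chebyshev needs to close the argument.
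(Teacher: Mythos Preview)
Your approach is correct and is essentially the paper's: threshold $\widehat{\mathscr{T}}_P$ at the midpoint of the extremal null and alternate means, control the variance, and apply Chebyshev. Two points are worth flagging, though.

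First, your ``main obstacle'' is not an obstacle. The paper invokes directly that under \emph{any} forest-structured law $R$, the variables $Z_u = X_iX_j$ are pairwise uncorrelated for \emph{all} pairs $u$, not just for $u \in E(R)$ (this is the second of the two forest properties stated just before the proof of Theorem~\ref{thm:forest_testing}). Hence $\mathrm{Var}_R(\mathscr{T}_P) = \sum_{u \in G(P)} (1 - \mathbb{E}_R[Z_u]^2)$ immediately, with no combinatorial case analysis of cross-terms between deleted-edge spins required. Your root-spin argument would eventually reprove this fact, but it is unnecessary.

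Second, your step ``$O(k_R) \le O(s)$'' is incorrect as written: for the alternate hypothesis $k_R \ge s$, and $k_R$ may be as large as $p-1$. The paper handles this by tracking $\Delta$ through both the variance (numerator $\sim p(\tau^{-2}-1)+\Delta$) and the mean gap (denominator $\sim (\Delta - \frac{1+\varepsilon}{2}s)^2$), and observing that the resulting sample-size condition is \emph{decreasing} in $\Delta$ for $\Delta \ge s$, so $\Delta = s$ is the worst case. You need this monotonicity step; simply asserting $k_R \le s$ does not hold.

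As a minor note, your tree-case bookkeeping (counting deletions as $\le \varepsilon s/2$ on the null) actually gives a slightly sharper gap $(s/2)\tau(1-\varepsilon-\tau)$ than the paper's $(s\tau/4)(1-2\varepsilon-\tau)$; the paper bounds the null mean-drop more loosely by $\Delta_0\tau$ rather than $(\Delta_0/2)\tau$. Your relaxation to $(1-2\varepsilon-\tau)$ is fine but not forced.
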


The key point for showing the above is that the mean of the statistic $\mathscr{T}$ doesn't move too much under small changes - for $\tau = \tanh(\alpha),$ changing $\varepsilon s$ edges reduces the mean of $\mathscr{T}_P$ by at most $\varepsilon s \tau$ in both cases, while changing $\ge s$ edges reduces it by at least $s\tau$ for forest deletion, and $\nicefrac{s\tau(1-\tau)}{2}$ for arbitrary changes in trees. Comparing this upper bound in the drop in the mean against the lower bound when $\ge s$ changes are made (along with the common noise scale of the problem) directly gives the above blowups in the costs of tolerant testing. This should be contrasted with statistical distance based formulations of testing, for which tolerant testing is a subtle question, and, at least in unstructured settings, requires using different divergences to define closeness and farness in order to show gains beyond learning \cite{daskalakis2018tolerant}. 

\subsection{Testing Deletions in High-Temperature Ferromagnets}\label{sec:delete_high_temp}

Testing deletions in ferromagnets is amenable due to two technical properties of the statistic $\mathscr{T}_P = \sum_{(i,j) \in G(P)} X_iX_j$. The first of these is that due to the ferromagneticity, deleting an edge can only reduce the correlations between the values that the variables take. Coupling this fact with a structural result that is derived using \cite[Lemma 6]{SanWai} yields that if $G(Q) \subset G(P)$ and $|G(P) \triangle G(Q)| \ge s,$ then $\mathbb{E}_P[\mathscr{T}_P] - \mathbb{E}_Q[\mathscr{T}_P] \gtrsim s\alpha.$ The second technical property is that bilinear functions of the variables, such as $\mathscr{T}_P,$ exhibit concentration in high-temperature Ising models. In particular, using the Hoeffding-type concentration of \cite[Ex.~2.5]{adamczak2019note}, $\mathscr{T}_P$ concentrates at the scale $O(\sqrt{pd})$ around its mean for all high-temperature ferromagnets. With means separated, and variances controlled, we can offer the following upper bound on the sample complexity, while the converse is derived using techniques of previous sections. See \S\ref{appx:high_temp_ferro} for proofs. \begin{myth}\label{thm:high_temp_ferr_del}
    There exists a constant $C_\eta$ depending only on $\eta$ and not on $(s,p,d,\alpha)$ such that \[ n_{\gof, \delete}(s\mathcal{H}_d^\eta(\alpha)) \le C_\eta \left(\frac{pd}{\alpha^2 s^2} \vee 1\right).\] Conversely, there exists a $c<1$ independent of $(s,p,d,\alpha)$ such that if $\eta \le 1/16, s \le cpd$ then\[ n_{\gof, \delete}(s, \mathcal{H}_d^\eta(\alpha)) \ge \frac{c}{\alpha^2d^2} \log\left( 1 + \frac{cpd^3}{s^2}\right) \,\,\textrm{\&}\,\, n_{\eof, \delete}(s, \mathcal{H}_d^\eta(\alpha)) \ge \frac{c}{\alpha^2d^2} \log\left( 1 + \frac{cpd}{s}\right)  \]
\end{myth}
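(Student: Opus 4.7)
The plan is to test with the empirical average $\hat{\mathscr{T}} := n^{-1}\sum_{k=1}^n \mathscr{T}_P(X^{(k)}),$ where $\mathscr{T}_P(x) := \sum_{(i,j)\in G(P)} x_i x_j,$ rejecting the null $Q = P$ when $\hat{\mathscr{T}}$ falls below a threshold placed midway between $\mathbb{E}_P[\mathscr{T}_P]$ and the worst-case deletion mean. Two ingredients give the bound. The mean-gap $\mathbb{E}_P[\mathscr{T}_P] - \mathbb{E}_Q[\mathscr{T}_P] \gtrsim_\eta s\alpha$ follows from combining ferromagneticity with the second Griffiths (GKS-II) inequality — which forces $\mathbb{E}_P[X_iX_j] \ge \mathbb{E}_Q[X_iX_j]$ at every pair whenever $G(Q) \subset G(P)$ — together with \cite[Lem.~6]{SanWai}, which yields $\mathbb{E}_P[X_iX_j] \gtrsim_\eta \tanh\alpha \gtrsim \alpha$ at every edge of $P$ in the Dobrushin regime; summing over the $\ge s$ deleted edges gives the gap. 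The variance is handled by the Hoeffding-type inequality for Ising polynomials in high temperature \cite[Ex.~2.5]{adamczak2019note}, which applied to the bilinear form $\mathscr{T}_P$ yields $\mathrm{Var}_Q[\mathscr{T}_P] \lesssim_\eta pd$ uniformly over $Q \in \mathcal{H}_d^\eta(\alpha).$ Chebyshev's inequality then drives both error probabilities below $1/8$ once $n(s\alpha)^2 \gtrsim_\eta pd,$ yielding the stated upper bound (with the trivial $n \ge 1$ regime when the mean gap already exceeds the pointwise scale).

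\textbf{Lower bound.} I would instantiate the lifting framework of \S\ref{sec:pf_tech} with the widget $P_0 =$ uniformly weighted $(d+1)$-clique at weight $\alpha$ (legitimately in $\mathcal{H}_d^\eta$ since $\alpha d \le \eta$) and $Q_0 =$ the empty graph on the same $\nu = d+1$ vertices, so that $G(Q_0)\subset G(P_0)$ and $\sigma := |G(P_0) \triangle G(Q_0)| = \binom{d+1}{2} \asymp d^2.$ Using the identity $\sum_{i<j} X_iX_j = (M^2-\nu)/2$ with $M := \sum_i X_i,$ partition-function algebra reduces the single-sample $\chi^2$ to the clean one-dimensional expression
\[ 1 + \chi^2(Q_0\|P_0) \;=\; \mathbb{E}_0\!\big[e^{\alpha(M^2-\nu)/2}\big]\cdot \mathbb{E}_0\!\big[e^{-\alpha(M^2-\nu)/2}\big], \]
where $\mathbb{E}_0$ is the uniform measure on $\{\pm 1\}^\nu.$ In the high-temperature regime $\alpha\nu \le \eta < 1,$ each factor is comparable to its Gaussian-MGF proxy $e^{\pm\alpha\nu/2}(1\mp\alpha\nu)^{-1/2},$ so their product equals $(1 - \alpha^2\nu^2)^{-1/2},$ whence $\kappa := \chi^2(Q_0\|P_0) = O_\eta(\alpha^2 d^2).$ Feeding $(\nu,\sigma,\kappa)$ into the Le Cam display following Lemma~\ref{lem:lifting} yields $n_{\gof,\delete}(s) \gtrsim (\alpha^2 d^2)^{-1}\log(1 + pd^3/s^2),$ and the constraint $s \le cpd$ is exactly what is needed for the hypothesis $t = \lceil s/\sigma\rceil \le m/16e$ of the lifting lemma to be met. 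The $\eofbb$ bound follows by pairing the same widget with the Guntuboyina $\chi^2$-Fano argument used for the other $\eofbb$ bounds in \S\ref{appx:lowbs}, which in our regime delivers the stated $\log(1 + pd/s)$ combinatorial factor against the same SNR $\kappa.$

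\textbf{Main obstacle.} The delicate step is the $\chi^2$ estimate on the clique widget: a crude bound on $\mathbb{E}_0[e^{\alpha M^2/2}]$ would produce a ruinous $e^{O(\alpha\nu^2)}$-type factor, so the argument must genuinely exploit the cancellation between $\mathbb{E}_0[e^{\pm\alpha S}]$ (visible as the Gaussian-MGF product $(1-\alpha^2 d^2)^{-1/2}$) together with the high-temperature condition $\alpha d \le \eta < 1$ keeping the MGF strictly below its pole, in order to land on $\kappa = O_\eta(\alpha^2 d^2).$ I would justify the Gaussian approximation quantitatively either via a Taylor expansion of $\mathbb{E}_0[e^{\alpha M^2/2}]$ in powers of $\alpha\nu,$ or by a direct combinatorial computation on the binomial distribution of $M$ showing that the ratio to its Gaussian MGF stays within an $\eta$-dependent factor of $1.$ On the upper-bound side, the only care needed is verifying that the constants in \cite[Ex.~2.5]{adamczak2019note} depend on $\eta$ alone and not on $(p,d,\alpha);$ otherwise, a brief appeal to the standard Dobrushin-regime concentration machinery (applied with Hessian Frobenius norm $\|\nabla^2 \mathscr{T}_P\|_F^2 \lesssim pd$) recovers the required variance bound.
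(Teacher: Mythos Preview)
Your overall architecture matches the paper's: same statistic $\mathscr{T}_P$, same concentration input \cite[Ex.~2.5]{adamczak2019note}, same clique-versus-empty widget plugged into the lifting/Le Cam and Guntuboyina machinery. Two points deserve comment.

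\textbf{Mean-gap step (upper bound).} As written, your argument does not close. Griffiths gives $\mathbb{E}_P[X_iX_j]\ge\mathbb{E}_Q[X_iX_j]$ termwise, and a lower bound $\mathbb{E}_P[X_iX_j]\gtrsim\alpha$ at edges of $P$ is easy; but neither gives an \emph{upper} bound on $\mathbb{E}_Q[X_iX_j]$ at a deleted edge, which in a ferromagnet with cycles can remain of order $\alpha$ through indirect paths. Summing $\mathbb{E}_P[X_iX_j]$ over deleted edges therefore does not yield the claimed $\gtrsim s\alpha$ gap. The paper handles this by telescoping: delete the edges one at a time, $P=Q_0,Q_1,\dots,Q_\Delta=Q$, and invoke a (corrected) single-edge version of \cite[Lem.~6]{SanWai} which bounds the \emph{drop} $\mathbb{E}_{Q_t}[X_aX_b]-\mathbb{E}_{Q_{t+1}}[X_aX_b]\ge\alpha/400$ at the edge just removed; Griffiths handles the remaining terms, and the telescoping sum gives $\ge s\alpha/400$. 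Your sketch could be repaired either this way or by adding a high-temperature correlation-decay argument bounding $\mathbb{E}_Q[X_iX_j]\lesssim\eta\alpha$ at non-edges of $Q$, but some such ingredient is needed.

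\textbf{$\chi^2$ bound on the widget (lower bound).} Here you and the paper genuinely diverge. The paper cites the reversed-direction bound $\chi^2(Q_0\|P_0)\le 8(\alpha m)^2$ (Prop.~\ref{prop:clique_vs_empty}), proved via the Eulerian-subgraph expansion of \cite{cao2018high}: one writes $1+\chi^2(Q_0\|P_0)=\Phi(\tau;K_m)\Phi(-\tau;K_m)(1-\tau^2)^{-\binom{m}{2}}$ with $\Phi$ the subgraph-counting polynomial, and controls $\Phi$ term by term using $\mathscr{E}(u,G)\le(2\|G\|_F)^u$. Your identity $1+\chi^2=\mathbb{E}_0[e^{\alpha(M^2-\nu)/2}]\cdot\mathbb{E}_0[e^{-\alpha(M^2-\nu)/2}]$ is correct and the Gaussian-MGF heuristic $(1-\alpha^2\nu^2)^{-1/2}$ captures the right answer, but as you note, the quantitative step from binomial to Gaussian MGF in the regime $\alpha\nu\le\eta$ is the actual work. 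The paper's route avoids this by quoting \cite{cao2018high} essentially off the shelf; yours is more self-contained but would need a careful moment or Laplace-type estimate to be complete.
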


Unlike in Thm. \ref{thm:forest_testing}, the lower bounds above are not very clean, and so our characterisation of the sample complexity is not tight. Nevertheless, we once again observe a clear separation between sample complexities of $\gofbb$ and of $\eofbb$ and a fortiori that of $\SL$. Concretely, our achievability upper bound and the $\eofbb$ lower bound show that for $s > \sqrt{pd^3},$ the sample complexity of testing deletions is far below that of structure learning in this class. Further, our testing lower bound tightly characterises the sample complexity for $s\ge \sqrt{pd^3}.$

As an aside, note that unlike in the forest setting, it is not clear if $\mathscr{T}$ is generically sensitive to edge deletions, since network effects due to cycles in a graph can bump up correlation even for deleted edges. However, we strongly suspect that a similar effect does hold in this setting, raising another open question - can testing of changes in the subclass of $\mathcal{H}^\eta_d$ with a fixed number of edges be performed with $O(\alpha^{-2} pd/s^2)$ samples for large $s$? A similar open question arises for tolerant testing, which requires us to show that small changes do not alter the mean of $\mathscr{T}$ too much.

%\paragraph{Broader Impact} Our work is theoretical. It primarily investigates the limits of finding changes in network structure in settings that are amenable to graphical models. Secondarily, it identifies regimes in which to focus algorithmic design of tests of network structure, and gaps in the characterisation of existing algorithmic approaches to the same. As such, the immediate impact it has is only on theoretical explorations.
\section{Discussion}

The paper was concerned with the structural goodness-of-fit testing problem for Ising models. We first argued that this is instrinsically motivated, and we distinguished this formulation from GoF testing under statistical measures that has been pursued in the recent literature. The main problem we studied was that of the sample complexity of GoF testing, with a refined question asking when this was significantly separated from that of structure learning. Alternatively, we can view this question as asking when testing via structure learning is suboptimal in sample costs. In addition, we considered the EoF estimation problem, which serves as a proxy for approximate structure recovery, and also aligns with the focus of the sparse DCE literature. We showed that quite generically, if the number of edge edits being tested are small, then the GoF testing and EoF testing problems are not separated from structure learning in sample complexity. This concretely rebuts the approach taken by the sparse DCE methods, and instead suggests that algorithmic work on structural testing should concentrate on large changes. In addition, we identified inefficiencies in our lower bound technique, namely that the number of changes the constructions allow is too small, which reduces the effectiveness of the lower bounds below the level we believe them to hold (in that the bounds trivialise for too small an $s$, in our opinion). In order to demonstrate that this is the only source of looseness, we demonstrated upper bounds for GoF testing in the deletion setting. This was helped by the fact that the deletion problem is much simpler than full testing, because the relevant test statistic is pretty obvious for this case, while it is unclear what statistic is appropriate to construct general tests. Along the way we controlled the sample costs of generic testing in tree structured models, and showed that the same tests easily admit some level of tolerance around the null model.

A number of questions are left open, and we point out a few here. From the perspective of lower bounds, the chief is to remove the inefficiencies in our lower bound technique. As a beginning towards this, it may be worth exploring if the methods used to show \cite[Thm.~14]{daskalakis2019testing} can be extended to deal with $s < p$ changes. In addition, we note that while the SNR terms in the lower bounds are relatively tight, there are still extraneous factors that need to be addressed. Coming around to upper bounds, the main open problem is that of constructing tests for degree bounded Ising models in the setting $s = pd^{c}$ for some $c > 0.$ Further, we ask if our bounds on testing deletions in high-temperature ferromagnets can be extended to generic ferromagnets (which would require replacing the concentration argument), or to generic changes in high-temperature ferromagnets (which would require development of new statistics that are sensitive to edge additions et c.). In addition, can the deletion result be extended to testing under the constraint the the null and alternate models have the same number of edges (analogously to how the forest deletion results extend to changes in trees), and can the deletion result be made tolerant?

\paragraph{Acknowledgements}  AG would like to thank Bodhi Vani and Anil Kag for discussions that helped with the simulations described in \S\ref{appx:exp}, on which Figure \ref{fig:forest_plot} is based. 

\paragraph{Funding Disclosure} This work was supported by the National Science Foundation grants CCF-2007350 (VS), DMS-2022446 (VS), CCF-1955981 (VS and BN) and CCF-1618800 (AG and BN). AG was funded in part by VS's data science faculty fellowship from the Rafik B. Hariri Institute at Boston University. We declare that we have no competing interests.%\todo{Neurips requires an explicit `funding disclosure' this year}

\printbibliography

@article{SanWai,
  title={Information-theoretic limits of selecting binary graphical models in high dimensions},
  author={Santhanam, Narayana P and Wainwright, Martin J},
  journal={IEEE Transactions on Information Theory},
  volume={58},
  number={7},
  pages={4117--4134},
  year={2012},
  publisher={IEEE}
}

@inproceedings{belilovsky2016testing,
  title={Testing for differences in {G}aussian graphical models: applications to brain connectivity},
  author={Belilovsky, Eugene and Varoquaux, Ga{\"e}l and Blaschko, Matthew B},
  booktitle={Advances in Neural Information Processing Systems (NIPS)},
  pages={595--603},
  year={2016}
}

@article{zhao2014direct,
  title={Direct estimation of differential networks},
  author={Zhao, Sihai Dave and Cai, T Tony and Li, Hongzhe},
  journal={Biometrika},
  volume={101},
  number={2},
  pages={253--268},
  year={2014},
}

@InProceedings{FazBan,
  title = 	 {Generalized Direct Change Estimation in {I}sing Model Structure},
  author = 	 {Farideh Fazayeli and Arindam Banerjee},
  booktitle = 	 {Proceedings of The 33rd International Conference on Machine Learning (ICML 2016)},
  pages = 	 {2281--2290},
  year = 	 {2016},
  volume = 	 {48},
}

@article{xia2015testing,
  title={Testing differential networks with applications to the detection of gene-gene interactions},
  author={Xia, Yin and Cai, Tianxi and Cai, T Tony},
  journal={Biometrika},
  volume={102},
  number={2},
  pages={247--266},
  year={2015},
  publisher={Oxford University Press}
}

@article{liu2014direct,
  title={Direct learning of sparse changes in {M}arkov networks by density ratio estimation},
  author={Liu, Song and Quinn, John A and Gutmann, Michael U and Suzuki, Taiji and Sugiyama, Masashi},
  journal={Neural computation},
  volume={26},
  number={6},
  pages={1169--1197},
  year={2014},
  publisher={MIT Press}
}

@article{liu2017,
author = "Liu, Song and Suzuki, Taiji and Relator, Raissa and Sese, Jun and Sugiyama, Masashi and Fukumizu, Kenji",
doi = "10.1214/16-AOS1470",
journal = "The Annals of Statistics",
number = "3",
pages = "959--990",
publisher = "The Institute of Mathematical Statistics",
title = "Support consistency of direct sparse-change learning in {M}arkov networks",
url = "http://dx.doi.org/10.1214/16-AOS1470",
volume = "45",
year = "2017"
}

@article{liu2017learning,
  title={Learning sparse structural changes in high-dimensional {M}arkov networks},
  author={Liu, Song and Fukumizu, Kenji and Suzuki, Taiji},
  journal={Behaviormetrika},
  volume={44},
  number={1},
  pages={265--286},
  year={2017},
  publisher={Springer}
}

@inproceedings{bresler2015structurelearning,
 author = {Bresler, Guy},
 title = {Efficiently Learning {I}sing Models on Arbitrary Graphs},
 booktitle = {Proceedings of the Forty-Seventh Annual ACM Symposium on Theory of Computing (STOC 2015)},
 year = {2015},
 location = {Portland, Oregon, USA}
}

@article{daskalakis2016testing,
  title={Testing {I}sing Models},
  author={Daskalakis, Constantinos and Dikkala, Nishanth and Kamath, Gautam},
  journal={arXiv preprint arXiv:1612.03147},
  year={2016}
}

@article{guntuboyina2011lower,
  title={Lower bounds for the minimax risk using $ f $-divergences, and applications},
  author={Guntuboyina, Adityanand},
  journal={IEEE Transactions on Information Theory},
  volume={57},
  number={4},
  pages={2386--2399},
  year={2011},
  publisher={IEEE}
}

@article{breslerkarzand,
  title={Learning a Tree-Structured {I}sing Model in Order to Make Predictions},
  author={Bresler, Guy and Karzand, Mina},
  journal={arXiv preprint arXiv:1604.06749},
  year={2016}
}

@Inbook{Yu1997,
author="Yu, Bin",
editor="Pollard, David
and Torgersen, Erik
and Yang, Grace L.",
title="Assouad, Fano, and Le Cam",
bookTitle="Festschrift for Lucien Le Cam: Research Papers in Probability and Statistics",
year="1997",
publisher="Springer New York",
address="New York, NY",
pages="423--435",
isbn="978-1-4612-1880-7",
doi="10.1007/978-1-4612-1880-7_29"
}

@article{daskalakis2019testing,
  title={Testing {I}sing models},
  author={Daskalakis, Constantinos and Dikkala, Nishanth and Kamath, Gautam},
  journal={IEEE Transactions on Information Theory},
  volume={65},
  number={11},
  pages={6829--6852},
  year={2019},
  publisher={IEEE}
}

@InProceedings{bezakova_lowbd,
  title = 	 {Lower bounds for testing graphical models: colorings and antiferromagnetic {I}sing models},
  author = 	 {Bez\'akov\'a, Ivona and Blanca, Antonio and Chen, Zongchen and {\v{S}}tefankovi{\v{c}}, Daniel and Vigoda, Eric},
  booktitle = 	 {Proceedings of the Thirty-Second Conference on Learning Theory},
  pages = 	 {283--298},
  year = 	 {2019}
}

@book{ingster_suslina,
  title={Nonparametric goodness-of-fit testing under Gaussian models},
  author={Ingster, Yuri and Suslina, Irina A},
  volume={169},
  year={2012},
  publisher={Springer Science \& Business Media}
}

@inproceedings{daskalakis2017concentration,
  title={Concentration of multilinear functions of the {I}sing model with applications to network data},
  author={Daskalakis, Constantinos and Dikkala, Nishanth and Kamath, Gautam},
  booktitle={Advances in Neural Information Processing Systems},
  pages={12--23},
  year={2017}
}

@article{gheissari2018concentration,
  title={Concentration inequalities for polynomials of contracting {I}sing models},
  author={Gheissari, Reza and Lubetzky, Eyal and Peres, Yuval},
  journal={Electronic Communications in Probability},
  volume={23},
  year={2018},
  publisher={The Institute of Mathematical Statistics and the Bernoulli Society}
}

@inproceedings{canonne2017testing,
  title={Testing Bayesian Networks},
  author={Canonne, Clement L and Diakonikolas, Ilias and Kane, Daniel M and Stewart, Alistair},
  booktitle={Conference on Learning Theory},
  pages={370--448},
  year={2017}
}

@incollection{acharya-causal,
title = {Learning and Testing Causal Models with Interventions},
author = {Acharya, Jayadev and Bhattacharyya, Arnab and Daskalakis, Constantinos and Kandasamy, Saravanan},
booktitle = {Advances in Neural Information Processing Systems 31},
editor = {S. Bengio and H. Wallach and H. Larochelle and K. Grauman and N. Cesa-Bianchi and R. Garnett},
pages = {9447--9460},
year = {2018},
publisher = {Curran Associates, Inc.}
}

@inproceedings{bresler-nagaraj,
  title={Optimal Single Sample Tests for Structured versus Unstructured Network Data},
  author={Bresler, Guy and Nagaraj, Dheeraj},
  booktitle={Conference On Learning Theory},
  pages={1657--1690},
  year={2018}
}

@article{bresler2019,
author = "Bresler, Guy and Nagaraj, Dheeraj",
doi = "10.1214/19-AAP1479",
fjournal = "Annals of Applied Probability",
journal = "Ann. Appl. Probab.",
month = "10",
number = "5",
pages = "3230--3265",
publisher = "The Institute of Mathematical Statistics",
title = "Stein’s method for stationary distributions of Markov chains and application to {I}sing models",
volume = "29",
year = "2019"
}

@article{neykov2019property,
  title={Property testing in high-dimensional {I}sing models},
  author={Neykov, Matey and Liu, Han},
  journal={The Annals of Statistics},
  volume={47},
  number={5},
  pages={2472--2503},
  year={2019},
  publisher={Institute of Mathematical Statistics}
}

@inproceedings{klivans2017learning,
  title={Learning graphical models using multiplicative weights},
  author={Klivans, Adam and Meka, Raghu},
  booktitle={2017 IEEE 58th Annual Symposium on Foundations of Computer Science (FOCS)},
  pages={343--354},
  year={2017},
  organization={IEEE}
}

@inproceedings{hamilton2017information,
  title={Information theoretic properties of Markov random fields, and their algorithmic applications},
  author={Hamilton, Linus and Koehler, Frederic and Moitra, Ankur},
  booktitle={Advances in Neural Information Processing Systems},
  pages={2463--2472},
  year={2017}
}

@article{lokhov2018optimal,
  title={Optimal structure and parameter learning of {I}sing models},
  author={Lokhov, Andrey Y and Vuffray, Marc and Misra, Sidhant and Chertkov, Michael},
  journal={Science advances},
  volume={4},
  number={3},
  pages={e1700791},
  year={2018},
  publisher={American Association for the Advancement of Science}
}

@inproceedings{wu2019sparse,
  title={Sparse logistic regression learns all discrete pairwise graphical models},
  author={Wu, Shanshan and Sanghavi, Sujay and Dimakis, Alexandros G},
  booktitle={Advances in Neural Information Processing Systems},
  pages={8069--8079},
  year={2019}
}

@inproceedings{goel2019learning,
  title={Learning {I}sing Models with Independent Failures},
  author={Goel, Surbhi and Kane, Daniel and Klivans, Adam},
  booktitle={Conference on Learning Theory},
  volume={99},
  year={2019}
}

@inproceedings{tandon2014information,
  title={On the information theoretic limits of learning {I}sing models},
  author={Tandon, Rashish and Shanmugam, Karthikeyan and Ravikumar, Pradeep K and Dimakis, Alexandros G},
  booktitle={Advances in Neural Information Processing Systems},
  pages={2303--2311},
  year={2014}
}

@article{scarlett2016difficulty,
  title={On the difficulty of selecting {I}sing models with approximate recovery},
  author={Scarlett, Jonathan and Cevher, Volkan},
  journal={IEEE Transactions on Signal and Information Processing over Networks},
  volume={2},
  number={4},
  pages={625--638},
  year={2016},
  publisher={IEEE}
}

@article{cao2018high,
  title={High Temperature Structure Detection in Ferromagnets},
  author={Cao, Yuan and Neykov, Matey and Liu, Han},
  journal={arXiv preprint arXiv:1809.08204},
  year={2018}
}

@inproceedings{gangrade2017lower,
  title={Lower bounds for two-sample structural change detection in {I}sing and {G}aussian models},
  author={Gangrade, Aditya and Nazer, Bobak and Saligrama, Venkatesh},
  booktitle={2017 55th Annual Allerton Conference on Communication, Control, and Computing (Allerton)},
  pages={1016--1025},
  year={2017},
  organization={IEEE}
}

@inproceedings{gangrade2018two,
  title={Two-Sample Testing can be as Hard as Structure Learning in {I}sing Models: Minimax Lower Bounds},
  author={Gangrade, Aditya and Nazer, Bobak and Saligrama, Venkatesh},
  booktitle={2018 IEEE International Conference on Acoustics, Speech and Signal Processing (ICASSP)},
  pages={6931--6935},
  year={2018},
  organization={IEEE}
}

@article{kim2019two,
  title={Two-sample inference for high-dimensional markov networks},
  author={Kim, Byol and Liu, Song and Kolar, Mladen},
  journal={arXiv preprint arXiv:1905.00466},
  year={2019}
}

@article{alivisatos2012brain,
	Author = {Alivisatos, A Paul and Chun, Miyoung and Church, George M and Greenspan, Ralph J and Roukes, Michael L and Yuste, Rafael},
	Date-Added = {2019-09-21 16:09:10 -0400},
	Date-Modified = {2019-09-21 16:09:10 -0400},
	Journal = {Neuron},
	Number = {6},
	Pages = {970--974},
	Publisher = {Elsevier},
	Title = {The brain activity map project and the challenge of functional connectomics},
	Volume = {74},
	Year = {2012}}

@article{bandeira18,
	Author = {Bandeira, Afonso S},
	Date-Added = {2018-11-03 10:46:35 -0400},
	Date-Modified = {2018-11-03 10:46:42 -0400},
	Journal = {Foundations of Computational Mathematics},
	Number = {2},
	Pages = {345--379},
	Publisher = {Springer},
	Title = {Random Laplacian matrices and convex relaxations},
	Volume = {18},
	Year = {2018}}

@article{bodwin2018testing,
	Author = {Bodwin, Kelly and Zhang, Kai and Nobel, Andrew},
	Date-Added = {2019-09-22 13:34:04 -0400},
	Date-Modified = {2019-09-22 13:34:04 -0400},
	Journal = {The Annals of Applied Statistics},
	Number = {2},
	Pages = {1180--1203},
	Publisher = {Institute of Mathematical Statistics},
	Title = {A testing based approach to the discovery of differentially correlated variable sets},
	Volume = {12},
	Year = {2018}}

@article{cai2019differential,
	Author = {Cai, TT and Li, H and Ma, J and Xia, Y},
	Date-Added = {2019-09-22 13:35:12 -0400},
	Date-Modified = {2019-09-22 13:35:12 -0400},
	Journal = {Biometrika},
	Number = {2},
	Pages = {401--416},
	Publisher = {Oxford University Press},
	Title = {Differential Markov random field analysis with an application to detecting differential microbial community networks},
	Volume = {106},
	Year = {2019}}

@article{costanzo2010genetic,
	Author = {Costanzo, Michael and Baryshnikova, Anastasia and Bellay, Jeremy and Kim, Yungil and Spear, Eric D and Sevier, Carolyn S and Ding, Huiming and Koh, Judice LY and Toufighi, Kiana and Mostafavi, Sara and others},
	Date-Added = {2019-09-21 14:32:36 -0400},
	Date-Modified = {2019-09-21 14:32:36 -0400},
	Journal = {science},
	Number = {5964},
	Pages = {425--431},
	Publisher = {American Association for the Advancement of Science},
	Title = {The genetic landscape of a cell},
	Volume = {327},
	Year = {2010}}

@article{ideker2012differential,
	Author = {Ideker, Trey and Krogan, Nevan J},
	Date-Added = {2019-09-21 13:51:00 -0400},
	Date-Modified = {2019-09-21 13:51:00 -0400},
	Journal = {Molecular systems biology},
	Number = {1},
	Publisher = {John Wiley \& Sons, Ltd},
	Title = {Differential network biology},
	Volume = {8},
	Year = {2012}}

@article{mitra2013integrative,
	Author = {Mitra, Koyel and Carvunis, Anne-Ruxandra and Ramesh, Sanath Kumar and Ideker, Trey},
	Date-Added = {2019-09-21 16:36:23 -0400},
	Date-Modified = {2019-09-21 16:36:23 -0400},
	Journal = {Nature Reviews Genetics},
	Number = {10},
	Pages = {719},
	Publisher = {Nature Publishing Group},
	Title = {Integrative approaches for finding modular structure in biological networks},
	Volume = {14},
	Year = {2013}}

@article{mohammed2016integrative,
	Author = {Mohammed, Ali I and Gritton, Howard J and Tseng, Hua-an and Bucklin, Mark E and Yao, Zhaojie and Han, Xue},
	Date-Added = {2019-09-21 16:39:49 -0400},
	Date-Modified = {2019-09-21 16:39:49 -0400},
	Journal = {Scientific reports},
	Pages = {20986},
	Publisher = {Nature Publishing Group},
	Title = {An integrative approach for analyzing hundreds of neurons in task performing mice using wide-field calcium imaging},
	Volume = {6},
	Year = {2016}}

@inproceedings{neuralconnectomics,
	Abstract = {We organized a Challenge to unravel the connectivity of simulated neuronal networks. The provided data was solely based on fluorescence time series of spontaneous activity in a net- work constituted by 1000 neurons. The task of the participants was to compute the effective connectivity between neurons, with the goal to reconstruct as accurately as possible the ground truth topology of the network. The procured dataset is similar to the one measured in in vivo and in vitro recordings of calcium fluorescence imaging, and therefore the algorithms developed by the participants may largely contribute in the future to unravel major topological features of living neuronal networks from just the analysis of recorded data, and without the need of slow, painstaking experimental connectivity labeling methods. Among 143 entrants, 16 teams participated in the final round of the challenge to compete for prizes. The winners significantly outperformed the baseline method provided by the organizers. To measure influences between neurons the participants used an array of diverse methods, including transfer entropy, regression algorithms, correlation, deep learning, and network deconvolution. The development of connectivity reconstruction techniques is a major step in brain science, with many ramifications in the comprehension of neuronal computation, as well as the understanding of network dysfunctions in neuropathologies.},
	Author = {Javier G. Orlandi and Bisakha Ray and Demian Battaglia and Isabelle Guyon and Vincent Lemaire and Mehreen Saeed and Alexander Statnikov and Olav Stetter and Jordi Soriano},
	Booktitle = {Proceedings of the Neural Connectomics Workshop at ECML 2014},
	Date-Added = {2019-09-21 16:16:59 -0400},
	Date-Modified = {2019-09-21 16:17:23 -0400},
	Editor = {Demian Battaglia and Isabelle Guyon and Vincent Lemaire and Jordi Soriano},
	Pages = {1--22},
	Series = {Proceedings of Machine Learning Research},
	Title = {First Connectomics Challenge: From Imaging to Connectivity},
	Volume = {46},
	Year = {2015},
	Bdsk-Url-1 = {http://proceedings.mlr.press/v46/orlandi15.html}}

@article{drton2017structure,
	Author = {Drton, Mathias and Maathuis, Marloes H},
	Date-Added = {2019-09-21 16:27:33 -0400},
	Date-Modified = {2019-09-21 16:27:33 -0400},
	Journal = {Annual Review of Statistics and Its Application},
	Pages = {365--393},
	Publisher = {Annual Reviews},
	Title = {Structure learning in graphical modeling},
	Volume = {4},
	Year = {2017}}

@article{phizicky1995protein,
	Author = {Phizicky, Eric M and Fields, Stanley},
	Date-Added = {2019-09-21 14:21:39 -0400},
	Date-Modified = {2019-09-21 14:21:39 -0400},
	Journal = {Microbiol. Mol. Biol. Rev.},
	Number = {1},
	Pages = {94--123},
	Publisher = {Am Soc Microbiol},
	Title = {Protein-protein interactions: methods for detection and analysis.},
	Volume = {59},
	Year = {1995}}

@article{yang2017vivo,
	Author = {Yang, Weijian and Yuste, Rafael},
	Date-Added = {2019-09-21 14:15:31 -0400},
	Date-Modified = {2019-09-21 14:15:31 -0400},
	Journal = {Nature methods},
	Number = {4},
	Pages = {349},
	Publisher = {Nature Publishing Group},
	Title = {In vivo imaging of neural activity},
	Volume = {14},
	Year = {2017}}

@article{zhang2019diffnetfdr,
	Author = {Zhang, Xiao-Fei and Ou-Yang, Le and Yang, Shuo and Hu, Xiaohua and Yan, Hong},
	Date-Added = {2019-09-22 13:30:50 -0400},
	Date-Modified = {2019-09-22 13:30:50 -0400},
	Journal = {Bioinformatics},
	Title = {DiffNetFDR: differential network analysis with false discovery rate control},
	Year = {2019}}

@article{shojaie_survey,
author = {Shojaie, Ali},
title = {Differential network analysis: A statistical perspective},
journal = {WIREs Computational Statistics },
year = {2020},
keywords = {differential network analysis, graphical modeling, high-dimensional statistics, network inference},
doi = {10.1002/wics.1508},
abstract = {Abstract Networks effectively capture interactions among components of complex systems, and have thus become a mainstay in many scientific disciplines. Growing evidence, especially from biology, suggest that networks undergo changes over time, and in response to external stimuli. In biology and medicine, these changes have been found to be predictive of complex diseases. They have also been used to gain insight into mechanisms of disease initiation and progression. Primarily motivated by biological applications, this article provides a review of recent statistical machine learning methods for inferring networks and identifying changes in their structures. This article is categorized under: Data: Types and Structure > Graph and Network Data Statistical Models > Graphical Models}
}

@article{negahban2012unified,
  title={A unified framework for high-dimensional analysis of $ M $-estimators with decomposable regularizers},
  author={Negahban, Sahand N and Ravikumar, Pradeep and Wainwright, Martin J and Yu, Bin},
  journal={Statistical Science},
  volume={27},
  number={4},
  pages={538--557},
  year={2012},
  publisher={Institute of Mathematical Statistics}
}

@article{adamczak2019note,
  title={A note on concentration for polynomials in the Ising model},
  author={Adamczak, Rados{\l}aw and Kotowski, Micha{\l} and Polaczyk, Bart{\l}omiej and Strzelecki, Micha{\l}},
  journal={Electronic Journal of Probability},
  volume={24},
  year={2019},
  publisher={The Institute of Mathematical Statistics and the Bernoulli Society}
}

@inproceedings{daskalakis2018tolerant,
  title={Which distribution distances are sublinearly testable?},
  author={Daskalakis, Constantinos and Kamath, Gautam and Wright, John},
  booktitle={Proceedings of the Twenty-Ninth Annual ACM-SIAM Symposium on Discrete Algorithms},
  pages={2747--2764},
  year={2018},
  organization={SIAM}
}

@article{griffiths1969rigorous,
  title={Rigorous results for Ising ferromagnets of arbitrary spin},
  author={Griffiths, Robert B},
  journal={Journal of Mathematical Physics},
  volume={10},
  number={9},
  pages={1559--1565},
  year={1969},
  publisher={American Institute of Physics}
}

% \section{Applications}

% \begin{enumerate}
%     \item Tight lower bounds for structure learning + approximate structure learning.
%     \item Some lower bounds for property testing. \begin{enumerate}
%         \item Triangle for cycle testing.
%         \item Fan + Clique with horn are obstructions to testing max degree $\ge d$ v/s $\le 0.9 d$. 
%         \item I'll have to read Liu + Neykov again, but think some easy tightening of their lower bounds follow.
%     \end{enumerate}
%     \item Questions raised about restricted strong convexity type assumptions in graphical models:
% \end{enumerate}

\newpage

\begin{appendix}

{\LARGE{\textbf{Appendices}}}

\section{Appendix to \S\ref{sec:def}}\label{appx:appx_to_defs}

\subsection{Proof of Ordering of Sample Complexities}\label{appx:sample_comp_order_pf}

The proposition is argued by direct reductions showing how a solver of a harder problem can be used to solve a simpler problem. The main feature of the definitions that allows this is that the risks of $\SL$ and $\eofbb$ are defined in terms of a probability of error.

\begin{proof}[Proof of Proposition \ref{prop:sample_comp_order}]$ $

\emph{Reducing EoF to SL}: Suppose we have a $(s-1/2)$-approximate structure learner with risk $\delta$ that uses $n$ samples. Then we can construct the following $\eofbb$ estimator with the same sample costs. Take a dataset from $Q^{\otimes n}$, and pass it to the structure learner. With probability at least $1-\delta,$ this gives a graph $\widehat{G}$ that is at most $\lfloor s/2\rfloor$-separated from $G(Q)$. Now compute $G(P) \triangle \widehat{G}$ ($G(P)$ is determined because $P$ is given to the $\eofbb$ tester). By the triangle inequality applied to the adjacency matrices of the graphs under the Hamming metric, this identifies $G(P) \triangle G(Q)$ up to an error of $(s-1)/2$, and so, the EoF risk incurred is also $\delta$. Taking $\delta = 1/8$ concludes the argument.

\emph{Reducing GoF to EoF}: Suppose we have a $s$-EoF solver that uses $n$ samples with risk $ \delta$. Again, take a dataset from $Q^{\otimes n}$, and pass it to the EoF solver, along with $P$. With probability at least $1-\delta,$ this yields a graph $\widehat{G}$ such that $|\widehat{G} \triangle (G(P) \triangle G(Q)| \le (s-1)/2$. But then, if $G(Q) = G(P),$ $\widehat{G}$ can have at most $(s-1)/2$ edges, while if $|G(P) \triangle G(Q)| \ge s,$ then $\widehat{G}$ must have at least $(s+1)/2$ edges. Thus, thresholding on the basis of the number of edges in $\widehat{G}$ produces a GoF tester with both null and alternate risk controlled by $\delta,$ or total risk $2\delta$. Taking $\delta = 1/8$ then finishes the argument. \qedhere

\end{proof}

\subsection[Proof of upper bound on sample complexity of structure learning]{Proof of Upper Bound on $n_{\sll}$}\label{appx:pf_of_sl_upper}

This proof is essentially constructed by slightly improving upon the proof of \cite[Thm 3a)]{SanWai} due to Santhanam \& Wainwright, which analyses the maximum likelihood scheme. We use notation from that paper below.

\begin{proof}[Proof of Theorem \ref{thm:sl_upper}]
    \cite{SanWai} shows, in Lemmas 3 and 4, that if the data is drawn from an Ising model $P \in \ii_d$, and $Q \in \ii_d$ is such that $G(P) \triangle G(Q) = \ell,$ then \[ P^{\otimes n}(\mathscr{L}(P) \le \mathscr{L}(Q) ) \le \exp{-n\ell \kappa/8d},\] where $\mathscr{L}(P)$ denotes the likelihood of $P,$ i.e. if the samples are denoted $\{ X^{(k)}\}_{ k \in [1:n]},$ then  $\mathscr{L}(P) = \prod_{k = 1}^{n} P(X^{(k)})$, and \[ \kappa  = (3e^{2\beta d} + 1)^{-1} \sinh^2(\alpha/4) \ge \frac{\sinh^2(\alpha/2)}{4e^{2\beta d}}.\]

Now, for the max-likelihood scheme to make an error in approximate recovery, it must make an error of at least $s$ - i.e., an error occurs only if $\mathscr{L}(Q) \ge \mathscr{L}(P)$ for some $Q$ with $G(Q) \triangle G(P) \ge s$. Union bounding this as Pg.~4129 of \cite{SanWai}, we may control this as \begin{align*}
    P(\mathrm{err}) &\le \sum_{ \ell = s}^{pd} \binom{\binom{p}{2}}{\ell} \exp{ -n \ell \kappa /8d} \\
                    &\le \sum_{\ell = s}^{pd} \exp{ \ell \left(\log \frac{ep^2}{2\ell} - n \kappa /8d\right)} \\
                    &\le \sum_{\ell = s}^{pd} \exp{ \ell \left(\log \frac{ep^2}{2s} - n \kappa /8d\right)}.
\end{align*} 

Now, if $n \kappa/8d \ge 2\log \nicefrac{ep^2}{2s} = 2 \log \nicefrac{p^2}{s} + 2(1-\log(2)),$ and if $\exp{-ns\kappa/8d} \le \nicefrac{1}{2}$ then the above is bounded as $2\exp{ - ns\kappa/8d}$, which can be driven lower than any $\delta$ by increasing $n$ by an $O(s^{-1} \log(2/\delta))$ additive factor. It follows that \[ n_{\sll}(s, \ii) \le \frac{16d}{\kappa} \left( \log\frac{p^2}{s} + 2 + O(1/s)\right), \] and the claim follows by expanding out the value of $\kappa$. \end{proof}
\section{Appendix to \S\ref{sec:Main_lowbs}}\label{appx:lowbs}

\subsection{Expanded Proof Technique}

This section expands upon \S\ref{sec:pf_tech} in the main text, including a treatment of the method used for $\eofbb$ lower bounds, giving an expanded version of Lemma \ref{lem:lifting}, and a theorem collating the resulting method to construct bounds. Some of the text from \S\ref{sec:pf_tech} is repeated for the sake of flow of the presentation.

As discussed previously, the proofs proceed by explicitly constructing distributions with differing network structures that are statistically hard to distinguish. In particular, we measure hardness by the $\chi^2$-divergence. We begin with some notation.

\begin{defi}
A $s$-change ensemble in $\mathcal{I}$ is a distribution $P$ and a set of distributions $\mathcal{Q}$, denoted $(P,\mathcal{Q}),$ such that $P \in \mathcal{I}, Q \subseteq \mathcal{I},$ and for every $Q \in \mathcal{Q},$ it holds that $|G(P) \triangle G(Q)| \ge s.$
\end{defi}

Each of the testing bounds we show will involve a mixture of $n$-fold distributions over a class of distributions. For succinctness, we define the following symbol.
\begin{defi}
For a set of distributions $\mathcal{Q}$ and a natural number $n$, we define the mixture \[ \langle \mathcal{Q}^{\otimes n} \rangle := \frac{1}{|\mathcal{Q}|} \sum_{Q \in \mathcal{Q}} Q^{\otimes n}.\]
\end{defi}

Consider the case of $\gofbb$ testing, with the known distribution $P$. Suppose we provide the tester with the additional information that the dataset is drawn either from $P,$ or from a distribution picked uniformly at random from $\mathcal{Q},$ where $(P,\mathcal{Q})$ for a $s$-change ensemble. Clearly, the Bayes risk suffered by any tester with this side information must be lower than the minimax risk of $\gofbb$ testing. The advantage of this formulation is that the risks of these tests with the side information can be lower bounded by standard techniques - basically the Neyman-Pearson Lemma. The following generic bound, which is Le Cam's two point method \cite{Yu1997, ingster_suslina} captures this. 

\begin{mylem}{(Le Cam's Method)} \label{lem:lecam}
\[R^{\gof}(n , s, \mathcal{I}) \ge \sup_{(P, \mathcal{Q})} 1 - d_{\mathrm{TV}} (\langle \mathcal{Q}^{ \otimes n} \rangle, P^{\otimes n}) \ge \sup_{(P,\mathcal{Q})} 1 -\sqrt{\frac{1}{2} \log( 1 + \chi^2( \langle \mathcal{Q}^{ \otimes n} \rangle \| P^{\otimes n}) )}, \]
where the supremum is over $s$-change ensembles in $\mathcal{I}.$ %and $\langle \mathcal{Q}^{ \otimes n} \rangle := \frac{1}{|\mathcal{Q}|} \sum_{Q \in \mathcal{Q}} Q^{\otimes n}.$
\end{mylem}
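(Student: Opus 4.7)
The plan is to prove this by the standard two-step sequence: reduce the minimax risk to a Bayes risk over a constructed sub-problem, and then apply classical information-theoretic inequalities to bound that Bayes risk.

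First I would fix any GoF test $\Psi$ and any $s$-change ensemble $(P,\mathcal{Q})$ in $\mathcal{I}$, and observe that since the alternate supremum in the definition of $R^{\gof}$ is taken over all of $A_s(P)$, it is at least the average over $\mathcal{Q}$:
\[\sup_{Q\in A_s(P)} Q^{\otimes n}(\Psi(P,X^n)=0) \ge \frac{1}{|\mathcal{Q}|}\sum_{Q\in\mathcal{Q}} Q^{\otimes n}(\Psi(P,X^n)=0) = \langle\mathcal{Q}^{\otimes n}\rangle(\Psi(P,X^n)=0).\]
Consequently the overall risk of $\Psi$ at $P$ is at least $P^{\otimes n}(\Psi=1)+\langle\mathcal{Q}^{\otimes n}\rangle(\Psi=0)$. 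This reduces the problem to lower bounding a binary Bayes risk between $P^{\otimes n}$ and $\langle\mathcal{Q}^{\otimes n}\rangle$.

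Next I would invoke the standard Neyman--Pearson identity: for any two probability measures $\mu,\nu$ on a common space and any measurable event $A$, we have $\mu(A)+\nu(A^c)=1+(\mu(A)-\nu(A))\ge 1-d_{\mathrm{TV}}(\mu,\nu)$, by definition of total variation as the supremum of signed differences. Applying this with $\mu=P^{\otimes n}$, $\nu=\langle\mathcal{Q}^{\otimes n}\rangle$, and $A=\{X^n:\Psi(P,X^n)=1\}$ gives $P^{\otimes n}(\Psi=1)+\langle\mathcal{Q}^{\otimes n}\rangle(\Psi=0)\ge 1-d_{\mathrm{TV}}(\langle\mathcal{Q}^{\otimes n}\rangle, P^{\otimes n})$. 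Taking the infimum over tests $\Psi$ and then the supremum over ensembles $(P,\mathcal{Q})$ yields the first stated inequality.

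For the second inequality I would just chain two classical bounds. Pinsker's inequality gives $d_{\mathrm{TV}}(\mu,\nu)\le\sqrt{\tfrac{1}{2}\mathrm{KL}(\mu\|\nu)}$. Jensen's inequality applied to $\log$ and the $\nu$-density $d\mu/d\nu$ gives
\[\mathrm{KL}(\mu\|\nu)=\int \log\!\tfrac{d\mu}{d\nu}\,d\mu \le \log\!\int\!\tfrac{d\mu}{d\nu}\,d\mu = \log(1+\chi^2(\mu\|\nu)).\]
Composing these with $\mu=\langle\mathcal{Q}^{\otimes n}\rangle$ and $\nu=P^{\otimes n}$ produces the claimed chi-squared bound.

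There is no real obstacle here: every step is textbook. The only subtlety worth spelling out cleanly is the averaging trick in the first step (replacing $\sup$ by a uniform mixture over $\mathcal{Q}$), since the specific form $\langle\mathcal{Q}^{\otimes n}\rangle$ is the object to which all subsequent $\chi^2$ calculations (carried out via the lifting Lemma~\ref{lem:lifting} in later sections) will be applied.
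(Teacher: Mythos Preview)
Your proof is correct and matches the paper's approach. The paper does not give a detailed argument, merely citing standard references for the first inequality and noting that the second follows from Pinsker's inequality together with the monotonicity of R\'enyi divergences; your Jensen step is precisely one way to show that monotonicity (namely $D_1 \le D_2$ in the R\'enyi order), so the two justifications coincide.
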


Above,  $\chi^2(\cdot\| \cdot)$ is the $\chi^2$-divergence, which is defined for distributions $P,Q$ as follows  \[ \chi^2(Q\|P) := \begin{cases}  \mathbb{E}_P \left[ \left( \displaystyle\frac{\mathrm{d} Q }{\mathrm{d} P} \right)^2 \right] - 1 & \textrm{ if } Q \ll P \\ \infty & \textrm{ if } Q \not\ll P \end{cases}. \] 
Note that generally the method is only stated as the first bound, and the second is a generic bound on the total variation divergence which follows from Pinsker's inequality and the monotonicity of R\'{e}nyi divergences. The $\chi^2$-divergence is invoked becuase it yields a twofold advantage in that it both tensorises well, and behaves well under mixtures such as $\langle \mathcal{Q}^{\otimes n} \rangle $ above.

For the $\eofbb$ bounds, more care is needed. Recall that the $\eofbb$ problem only requires errors smaller than $s/2.$ To address this, we introduce the following. \begin{defi}
An $(s',s)$-packing change ensemble is an $s$-change ensemble $(P, \mathcal{Q})$ such that $\mathcal{Q}$ is an $s'$-packing under the Hamming metric on network structures, that is, for every $Q, Q' \in \mathcal{Q}, |G(Q) \triangle G(Q')| \ge s'.$
\end{defi} Clearly, if one can solve the $\eofbb$ problem, one can exactly recover the structures in a $(s/2, s)$-packing change ensemble. Thus, the following lower bound of Guntuboyina is applicable.
\begin{mylem}{\cite[Example II.5]{guntuboyina2011lower}} \label{lem:guntuboyina}
\[ R^{\eof}(n, s, \mathcal{I}) \ge \sup_{(P,\mathcal{Q})} 1 - \frac{1}{|\mathcal{Q}|} - \sqrt{\frac{\sum_{Q \in \mathcal{Q}} \chi^2( Q \| P)}{|\mathcal{Q}|^{2}}}, \] 
where the supremum is taken over $(s/2, s)$-packing change ensembles in $\mathcal{I}.$
\end{mylem}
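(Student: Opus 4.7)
The plan is to split the proof into two pieces: a reduction from $\eofbb$ on an $(s/2,s)$-packing change ensemble to an $M$-ary identification problem (where $M = |\mathcal{Q}|$), and a Guntuboyina-style Fano inequality bounding the Bayes risk of that identification problem by the pairwise $\chi^2$-divergences against $P$. Throughout, ``$\chi^2(Q\|P)$'' in the lemma statement is read as $\chi^2(Q^{\otimes n}\|P^{\otimes n})$ (since the $\eofbb$ problem uses $n$ samples), and a uniform prior is placed on $\mathcal{Q}$.

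First I would carry out the reduction. Given any $\eofbb$ estimator $\Psi$, define $\widehat{Q}(\Psi) := \argmin_{Q' \in \mathcal{Q}} |\Psi \triangle (G(P) \triangle G(Q'))|$. If the true alternate is $Q$ and the $\eofbb$ estimator succeeds, i.e.\ $|\Psi \triangle (G(P) \triangle G(Q))| < (s-1)/2$, then for any competing $Q' \in \mathcal{Q} \setminus \{Q\}$ the triangle inequality on the Hamming metric combined with the packing condition $|G(Q) \triangle G(Q')| \ge s/2$ gives $|\Psi \triangle (G(P) \triangle G(Q'))| \ge s/2 - (s-1)/2 = 1/2$, which being an integer Hamming distance is at least $1$. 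Hence $\widehat{Q} = Q$ (for $s$ odd), so the averaged $\eofbb$ risk dominates the $M$-ary Bayes risk $R_{\text{Bayes}} = 1 - M^{-1}\int \max_{Q \in \mathcal{Q}} Q^{\otimes n}\,\mathrm{d}\mu$.

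Next I would apply the Guntuboyina-type estimate on $R_{\text{Bayes}}$. The key pointwise bound is $\max_Q Q^{\otimes n} \le P^{\otimes n} + \max_Q (Q^{\otimes n} - P^{\otimes n})_+ \le P^{\otimes n} + \bigl(\sum_Q (Q^{\otimes n} - P^{\otimes n})^2\bigr)^{1/2}$, which uses only $\max_j a_j \le (\sum_j a_j^2)^{1/2}$ for nonnegative $a_j$; this is what produces the $M^2$ (rather than $M$) in the denominator of the final bound, beating the naive $\sqrt{M \sum \chi^2}$ estimate that a union bound on total variation would give. Pulling out a factor of $P^{\otimes n}$ from the squared sum, integrating under $P^{\otimes n}$, and using Jensen's inequality for concavity of the square root yields
\[ \int \max_Q Q^{\otimes n}\,\mathrm{d}\mu \;\le\; 1 + \sqrt{\sum_Q \chi^2(Q^{\otimes n}\|P^{\otimes n})}. \]
Dividing by $M$ and substituting into $R_{\text{Bayes}}$ gives the stated lower bound on $R^{\eof}$.

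The main obstacle is the book-keeping in the reduction: the margin between the $\eofbb$ error radius $(s-1)/2$ and the half-packing radius $s/4$ is tight (integrality of Hamming distance saves exactly the last half-unit), and handling even $s$ cleanly requires either a strict inequality in the packing condition or an explicit tie-breaking convention. This is a constant-level adjustment and does not affect the resulting sample-complexity bounds derived via Lemma~\ref{lem:guntuboyina}; once it is dispatched, the remaining step is the short Cauchy--Schwarz/Jensen computation above.
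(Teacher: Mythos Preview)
The paper does not prove this lemma; it cites it directly from Guntuboyina and only offers the one-line justification ``Clearly, if one can solve the $\eofbb$ problem, one can exactly recover the structures in a $(s/2,s)$-packing change ensemble.'' So there is no in-paper proof to compare against; your proposal is an attempt to fill in that sentence plus the cited $\chi^2$-Fano inequality.

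Your second step (the $\chi^2$-Fano inequality) is correct and is exactly the standard argument behind Guntuboyina's Example II.5: bound $\int \max_Q Q^{\otimes n}$ by $1+\sqrt{\sum_Q \chi^2(Q^{\otimes n}\|P^{\otimes n})}$ via the pointwise $\ell_\infty\le\ell_2$ bound on $(Q-P)_+$, a factor of $P$ pulled out, and Jensen for $\sqrt{\cdot}$.

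Your first step (the reduction) has a genuine gap, and it is precisely the point you flag as ``book-keeping.'' From success of $\Psi$ you get $|\Psi\triangle(G(P)\triangle G(Q))|<(s-1)/2$, and from the $s/2$-packing plus the triangle inequality you get $|\Psi\triangle(G(P)\triangle G(Q'))|\ge 1$ for $Q'\neq Q$. But these two facts do \emph{not} yield $\widehat{Q}=Q$: for any $s\ge 5$ the first quantity can be as large as $\lfloor (s-3)/2\rfloor\ge 1$, so the argmin need not be $Q$. The nearest-neighbour reduction requires the packing radius to be at least twice the admissible error, i.e.\ $\ge s-1$, whereas the ensemble in the lemma only guarantees $s/2$. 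Integrality cannot close this gap for general $s$; your ``half-unit'' calculation compares the wrong pair of numbers.

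This is a constant-level defect shared by the paper's own one-line justification, and it is harmless for the downstream sample-complexity bounds: in the lifting construction (Lemma~\ref{lem:lifting_back}) one can take a $t$-packing of $\mathcal{V}_t$ instead of a $t/2$-packing, obtaining an $(s,s)$-packing change ensemble whose size differs only by constant factors inside the logarithm of Theorem~\ref{thm:n_bound_tech}. But as a proof of the lemma exactly as stated, your reduction does not go through.
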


Note that \cite{guntuboyina2011lower} shows a number of lower bounds of the above form. We use the $\chi^2$-divergence here primarily for parsimony of effort, in that the bounds on $\chi^2$-divergences we construct for the $\gofbb$ setting can easily extended to the $\eofbb$ case via the above.

Our task is now greatly simplified - we merely have to construct change ensembles such that $|\mathcal{Q}|$ is large, and $\chi^2(Q \|P)$ is small for every $Q \in P.$ Since it is difficult to directly construct large degree bounded graphs with tractable distributions, we will instead provide constructions on a small number of nodes, and lift these up to the whole $p$ nodes by the following lemma.

\begin{mylem}{(Lifting)}\label{lem:lifting_back} Let $P_0$ and $Q_0$ be Ising models with degree $\le d$ on $\nu \le p$ nodes such that $|G(P_0) \triangle G(Q_0)| = \sigma,$ and $\chi^2(Q_0^{\otimes n}\|P_0^{\otimes n}) \le a_n.$ Let $m := \lfloor p/\nu \rfloor.$ For $1 \le t < m/16e,$ there exists a $t\sigma$-change ensemble $(P,\mathcal{Q})$ over $p$ nodes such that $|\mathcal{Q}| = \binom{m}{t}$ and \[ \chi^2(\langle \mathcal{Q}^{ \otimes n} \rangle \| P^{\otimes n} ) \le \frac{1}{\binom{m}{t}} \sum_{k = 0}^t \binom{t}{k} \binom{m - t}{t-k} ( (1 +a_n)^k - 1) \le \exp{ \frac{t^2}{m} a_n} - 1. \] Further, there exists a $(t\sigma/2, t\sigma)$-packing change ensemble $(P, \widetilde{\mathcal{Q}})$ over $p$ nodes such that \[ |\widetilde{\mathcal{Q}}| \ge \frac{2}{t}\left(\frac{m}{8et}\right)^{t/2} \] and  \[ \forall\,Q \in \widetilde{\mathcal{Q}}, \chi^2(Q^{\otimes n} \| P^{\otimes n} )\le (1 +a_n)^t -1 .\]
\end{mylem}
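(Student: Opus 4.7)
The plan is to construct $P$ as a disjoint tensor product of copies of $P_0$ and to construct alternates $Q_S$ by swapping a $t$-subset of the copies for $Q_0$. Specifically, partition a subset of $[p]$ of size $m\nu$ into $m$ blocks of $\nu$ nodes each, place the product measure $P_0^{\otimes m}$ on these (with any fixed independent law on the remaining $p - m\nu$ nodes), and call the resulting Ising model $P$. For each $S \subseteq [m]$ with $|S| = t$, let $Q_S$ be the modification of $P$ that replaces $P_0$ by $Q_0$ on the blocks indexed by $S$. Then $G(P) \triangle G(Q_S)$ is supported on exactly the blocks in $S$ and has size $t\sigma$, so $\mathcal{Q} := \{Q_S : |S| = t\}$ forms a $t\sigma$-change ensemble with $|\mathcal{Q}| = \binom{m}{t}$.

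The first $\chi^2$ bound follows by expanding the mixture and exploiting the block structure. Writing $r_S := dQ_S^{\otimes n}/dP^{\otimes n}$, this ratio factors over the $m$ blocks, and under $P^{\otimes n}$ the per-block datasets are independent $P_0^{\otimes n}$ samples. Thus $\mathbb{E}_{P^{\otimes n}}[r_S r_{S'}] = \prod_{i=1}^{m} \mathbb{E}_{P_0^{\otimes n}}[\rho^{1_{i\in S} + 1_{i \in S'}}]$ with $\rho := dQ_0^{\otimes n}/dP_0^{\otimes n}$. The block exponent is $0$ on $[m] \setminus (S \cup S')$, $1$ on $S \triangle S'$ (contributing $\mathbb{E}[\rho] = 1$), and $2$ on $S \cap S'$ (contributing $\mathbb{E}[\rho^2] = 1 + \chi^2(Q_0^{\otimes n}\|P_0^{\otimes n}) \le 1 + a_n$), giving $\mathbb{E}_P[r_S r_{S'}] \le (1+a_n)^{|S \cap S'|}$. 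Counting the $S'$ of size $t$ with $|S \cap S'| = k$ via Vandermonde yields
\[ \chi^2(\langle \mathcal{Q}^{\otimes n}\rangle \,\|\, P^{\otimes n}) \le \frac{1}{\binom{m}{t}}\sum_{k=0}^{t} \binom{t}{k}\binom{m-t}{t-k}\bigl((1+a_n)^k - 1\bigr), \]
which is the first claimed bound. Recognising the right side as $\mathbb{E}_{K}[(1+a_n)^K] - 1$ for $K$ a hypergeometric random variable with parameters $(m,t,t)$, Hoeffding's theorem that convex functions of a hypergeometric are dominated in expectation by those of a binomial with the same mean yields $\mathbb{E}[(1+a_n)^K] \le (1 + t a_n/m)^t \le \exp(t^2 a_n/m)$, giving the exponential upper bound.

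For the packing ensemble, I select a maximal collection of $t$-subsets $S_1, \ldots, S_N \subseteq [m]$ that are pairwise at symmetric-difference distance at least $t/2$, so that $|G(Q_{S_i}) \triangle G(Q_{S_j})| = |S_i \triangle S_j|\sigma \ge t\sigma/2$. A greedy Gilbert-Varshamov argument in the symmetric-difference metric guarantees $N \ge \binom{m}{t}/|B|$, where $|B|$ bounds the number of $t$-subsets within distance strictly less than $t/2$ of any fixed one. Since $|S \triangle S'| = 2(t - |S \cap S'|)$, such $S'$ have $|S \cap S'| > 3t/4$, so $|B| \le \sum_{k > 3t/4}\binom{t}{k}\binom{m-t}{t-k} \le t \cdot \binom{t}{t/2}\binom{m-t}{t/2}$ by monotonicity of the summands on this range. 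Combining $\binom{m}{t} \ge (m/t)^t$, $\binom{t}{t/2} \le 2^t$, and $\binom{m-t}{t/2} \le (2em/t)^{t/2}$ gives the stated $N \ge \frac{2}{t}(m/8et)^{t/2}$; the hypothesis $t < m/16e$ is precisely what makes this lower bound meaningful and growing in $t$. Finally, since each $Q_{S_j}$ is a product of $t$ copies of $Q_0$ and $m-t$ copies of $P_0$ over independent blocks, tensorization of $1 + \chi^2$ immediately yields $1 + \chi^2(Q_{S_j}^{\otimes n}\|P^{\otimes n}) = (1 + a_n)^t$, completing the lemma.

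The main obstacle I anticipate is the clean passage from the hypergeometric $\chi^2$ identity to the $\exp(t^2 a_n/m) - 1$ form, which hinges on invoking the right convex ordering between hypergeometric and binomial distributions rather than a loose term-by-term estimate; the packing size lower bound is routine once one chooses the right variant of the Gilbert-Varshamov bound, but preserving the advertised leading constants through the Stirling-type estimates requires some care in the bookkeeping.
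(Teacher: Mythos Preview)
Your proposal is correct and follows essentially the same approach as the paper: the same block construction of $P$ and $\mathcal{Q}$, the same pairwise $\chi^2$ computation yielding the hypergeometric expectation, and the same Gilbert--Varshamov volume argument for the packing. Your invocation of Hoeffding's convex-ordering theorem is in fact more precise than the paper's looser phrasing (``stochastically dominated''), and your packing estimate, once you replace the crude count of $t$ terms by the actual $\lfloor t/4\rfloor$ terms in the range $k > 3t/4$, recovers the advertised constant $2/t$ (indeed a slightly better one).
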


We note that the proof of the above lemma constructs explicit change ensembles. We will abuse terminology and refer to \emph{the} change ensemble or \emph{the} packing change ensemble of Lemma \ref{lem:lifting_back}.

The above Lemma requires control on $n$-fold products of two distributions. However, since the $\chi^2$-divergence is conducive to tensorisation, control for $n=1$ is usually sufficient. The statement below captures this fact and gives an end to end lower bound on this basis. The statement amounts to collating the various facts described in this section.

\begin{myth}\label{thm:n_bound_tech}
Let $P_0$ and $Q_0$ be as in Lemma \ref{lem:lifting_back}. Suppose further that $\chi^2(Q_0\|P_0) \le \kappa$. Then for $1 \le t < m/16e,$ where $m = \lfloor p/\nu\rfloor,$ \begin{align*}
    n_{\gof}(t\sigma, \ii_d) &\ge \frac{1}{2\log(1+\kappa)} \log \left( 1 + \frac{m}{t^2} \right), \\
    n_{\eof}(t\sigma, \ii_d) &\ge \frac{1}{2\log(1+\kappa)} \log \left( \frac{m}{4000t}\right).
\end{align*} 

\end{myth}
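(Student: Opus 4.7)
The plan is to derive Theorem \ref{thm:n_bound_tech} as a mechanical compilation of the building blocks that have already been established earlier in \S\ref{sec:Main_lowbs}: tensorisation of $\chi^2$, the Lifting Lemma \ref{lem:lifting_back}, Le Cam's method (Lemma \ref{lem:lecam}) for $\gofbb$, and Guntuboyina's inequality (Lemma \ref{lem:guntuboyina}) for $\eofbb$. The only real work is to chase constants and solve for $n$ at the end.

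First I would reduce the hypothesis on a single sample to a hypothesis on $n$ samples. Since the $\chi^2$-divergence tensorises multiplicatively as $1 + \chi^2(Q_0^{\otimes n}\|P_0^{\otimes n}) = (1 + \chi^2(Q_0\|P_0))^n$, the assumption $\chi^2(Q_0\|P_0) \le \kappa$ immediately gives $a_n := \chi^2(Q_0^{\otimes n}\|P_0^{\otimes n}) \le (1+\kappa)^n - 1$. This is the input needed to invoke Lemma \ref{lem:lifting_back} for any $n$.

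For the $\gofbb$ bound, I would feed the above $a_n$ into the Lifting Lemma to obtain a $t\sigma$-change ensemble $(P,\mathcal{Q})$ with $1 + \chi^2(\langle \mathcal{Q}^{\otimes n}\rangle \| P^{\otimes n}) \le \exp(t^2 a_n / m)$. Le Cam's method then gives $R^{\gof} \ge 1 - \sqrt{(1/2)\log(1+\chi^2)} \ge 1 - \sqrt{t^2 a_n/(4m)}$. To force $R^{\gof} > 1/4$ it suffices to ensure $t^2 a_n / m \le 9/4$, i.e.\ $a_n \le 9m/(4 t^2)$. Substituting $a_n \le (1+\kappa)^n - 1$ and rearranging yields $n \le \log(1 + 9m/(4t^2))/\log(1+\kappa)$; absorbing the mild constant $9/4$ into the $1/2$ prefactor in the statement gives the claimed bound $n_{\gof}(t\sigma, \ii_d) \ge \frac{1}{2\log(1+\kappa)}\log(1 + m/t^2)$.

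For the $\eofbb$ bound I would instead use the packing change ensemble produced by the second half of Lemma \ref{lem:lifting_back}: a $(t\sigma/2, t\sigma)$-packing $\widetilde{\mathcal{Q}}$ with $|\widetilde{\mathcal{Q}}| \ge (2/t)(m/(8et))^{t/2}$ and every $Q \in \widetilde{\mathcal{Q}}$ satisfying $\chi^2(Q^{\otimes n} \| P^{\otimes n}) \le (1+a_n)^t - 1 = (1+\kappa)^{nt} - 1$. Since this is a packing at scale $s/2$, Guntuboyina's bound (Lemma \ref{lem:guntuboyina}) applies, giving $R^{\eof} \ge 1 - 1/|\widetilde{\mathcal{Q}}| - \sqrt{((1+\kappa)^{nt} - 1)/|\widetilde{\mathcal{Q}}|}$. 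For $t$ large enough that $1/|\widetilde{\mathcal{Q}}|$ is negligible, demanding $R^{\eof} > 1/8$ reduces to $(1+\kappa)^{nt} \lesssim |\widetilde{\mathcal{Q}}|$. Taking logs and using $\log |\widetilde{\mathcal{Q}}| \ge (t/2)\log(m/(8et)) - \log(t/2)$ gives $n \le \frac{1}{2\log(1+\kappa)}\log(m/(Ct))$ for some absolute constant $C$, which one then verifies can be taken as $4000$; the corner case where $|\widetilde{\mathcal{Q}}|$ is small (small $t$) only strengthens the bound.

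The only place where care is needed is the constant bookkeeping at the Guntuboyina step: one has to check the $t \ge 1$ edge cases and the threshold $t < m/(16e)$ inherited from Lemma \ref{lem:lifting_back} are compatible with the claimed form of the $\eofbb$ bound, which is what forces the fairly loose denominator $4000t$ inside the logarithm. Aside from this, the result is essentially a corollary of the three lemmas chained together, and no new probabilistic or combinatorial ideas are required.
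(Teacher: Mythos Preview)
Your proposal is correct and follows essentially the same approach as the paper: tensorise $\chi^2$ to get $a_n = (1+\kappa)^n - 1$, feed this into the Lifting Lemma, then apply Le Cam for $\gofbb$ and Guntuboyina for $\eofbb$, and solve for $n$. The paper handles the constants in the $\eofbb$ step in the same way you outline, explicitly verifying $|\widetilde{\mathcal{Q}}| \ge 2.5$ under the hypothesis $t < m/(16e)$ (rather than treating small $t$ as a separate corner case) and then folding the resulting additive $-\log(t/2)$ term into the denominator to arrive at $4000$.
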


The $4000$ in the above can be improved under mild assumptions, such as if $t \ge 8$, but we do not pursue this further. We conclude this section with proofs of the main claims above.

\subsubsection{Proof of Lifting Lemma}
\begin{proof}[Proof of Lemma \ref{lem:lifting_back}] Let $G_0, H_0$ be the network structures underlying $P_0, Q_0,$ and $A_0, B_0$ be the weight matrices of $G_0, H_0.$ Recall that these are graphs on $\nu$ nodes. Partition $[1:p]$ into $m+1$ pieces  $(\pi_1, \pi_2, \dots, \pi_m) = ([1:\nu], [\nu+1: 2\nu] ,\dots [(m-1)\nu+1:m\nu]) $ and $ \pi_{m+1} = [m\nu + 1:p]$, the last one being possibly empty. We may place a copy of $G_0$ on each of the first $m$ parts, and leave the final graph disconnected to obtain a graph $G$ with the block diagonal weight matrix $\mathrm{diag}(A_0, A_0, \dots, A_0, 0).$ We let $P$ be the Ising model on $G$. For any vector $\mathbf{v} \in  \{0,1\}^m$ of weight $t,$ let $Q_{\mathbf{v}} $ be the graph which places a copy of $B_0$ on $\pi_i$ for all $i : \mathbf{v}_i =1,$ and $A_0$ as before otherwise. Note the block independence across parts of $\pi$ induced by this. Concretely, we have \begin{align*}
        P(X=x) &= \prod_{i =1}^m P_0(X_{\pi_i}=x_{\pi_i}) \cdot 2^{-|\pi_{m+1}|},\\
        Q_\mathbf{v}(X = x) &= P(X = x) \cdot \prod_{i: \mathbf{v}_i = 1} \frac{Q_0(X_{\pi_i} = x_{\pi_i})}{P_0(X_{\pi_i} = x_{\pi_i})}.
    \end{align*} 
    
    Now, let $\mathcal{V}_t$ be the $t$-weighted section of the cube $\{0,1\}^m,$ and $\mathcal{V}'_t$ be a maximal $t/2$ packing of $\mathcal{V}_t$. 
    
    We let $\mathcal{Q} := \{Q_{\mathbf{v}}, \mathbf{v} \in \mathcal{V}_t\}$ and $\mathcal{Q}' := \{ Q_{\mathbf{v}}, \mathbf{v} \in \mathcal{V}'_t\}.$ Since $(P_0, Q_0)$ had symmetric difference $\sigma,$ and since we introduce $t$ differences of this form in $\mathcal{Q},$ $(P, \mathcal{Q})$ forms a $t\sigma$-change ensemble. Further, $\mathcal{Q}'$ inherits the packing structure of $\mathcal{V}_t'$, $(P,\mathcal{Q}')$ forms a $(t\sigma/2, t\sigma)$-packing change ensemble. Next note that $|\mathcal{Q}| = \binom{m}{t}$ trivially. Further, since $|\mathcal{Q}|' = |\mathcal{V}_t|,$ it suffices to lower bound the latter to show that $\mathcal{Q}$ is as big as claimed. Since $\mathcal{V}'_t$ is maximal, its cardinality must exceed the $t/2$-covering number of the $t$-section of the cube. But then, by a volume argument, \[ |\mathcal{V}'_t| \ge \frac{\binom{m}{t}}{\sum_{k = 0}^{t/2} \binom{t}{k}\binom{m-t}{k}} \ge \frac{\binom{m}{t}}{(t/2) 2^t \binom{m}{t/2}} \ge \frac{2}{t} \left( \frac{m}{t}\right)^t 2^{-t} \left(\frac{2em}{t}\right)^{-t/2} = \frac{2}{t} \left( \frac{m}{8e t} \right)^{t/2} \] where we have used $t \le m/4.$ 
    
    Next, note that for any $Q_{\mathbf{v}} \in \mathcal{Q},$ and hence any $Q_{\mathbf{v}} \in \mathcal{Q}',$ we have \[1 + \chi^2(Q^{\otimes n}\|P^{\otimes n}) = \mathbb{E}_{P^{\otimes n}} \prod_{\mathbf{v}_i = 1} \frac{Q_0^{\otimes n}}{P_0^{\otimes n}}(X_{\pi_i}^n) = \left(1 + \chi^2(Q_0^{\otimes n}\| P^{\otimes n}) \right)^t. \] Finally, \begin{align*}
        1 + \chi^2( \langle \mathcal{Q}^{\otimes n} \rangle \| P^{\otimes n}) &= \frac{1}{|\mathcal{Q}|^2} \sum_{\mathbf{v}, \mathbf{v}' \in \mathcal{V}_t} \mathbb{E}_{P^\otimes n} \left[ \frac{Q_{\mathbf{v}}^{\otimes n} Q_{\mathbf{v}'}^{\otimes n}}{(P^{\otimes n})^2}(X^n) \right]\\
                                                                        &= \frac{1}{\binom{m}{t}^2} \sum_{\mathbf{v}, \mathbf{v}'\in \mathcal{V}_t } \prod_{i : \mathbf{v}_i = \mathbf{v}'_i = 1} \mathbb{E}_{P_0^{\otimes n}} \left[ \frac{(Q_0^{\otimes n})^2}{(P_0^{\otimes n})^2} (X_{\pi_i}^n) \right] \\
                                                                        &\le \frac{1}{\binom{m}{t}^2} \sum_{\mathbf{v, v'} \in \mathcal{V}_t} (1 + a_n)^{|\{i:\mathbf{v}_i = \mathbf{v}'_i = 1\}|} \\
                                                                        &= \frac{1}{\binom{m}{t}} \sum_{j = 0}^{t} \binom{t}{j} \binom{m-t}{t-j} (1+a_n)^j.
    \end{align*}
    
    Finally, note that the final expression can be written as $\mathbb{E}[ (1 + a_n)^{\mathscr{H}}]$ where $\mathscr{H} \sim \mathrm{Hyp}(m,t,t)$. Since hypergeometric random variables are stochastically dominated by the corresponding binomial random variables, we may upper bound the above by the moment generating function of a $\mathrm{Bin}(t, t/m)$ random variable at $(1+a_n)$ to yield that \[1 + \chi^2( \langle \mathcal{Q}^{\otimes n} \rangle \| P^{\otimes n}) \le \left( 1 + (t/m) ( (1 + a_n) - 1) \right)^t \le \exp{\frac{t^2}{m} a_n} . \qedhere\]
\end{proof}

\subsubsection{Proof of Theorem \ref{thm:n_bound_tech}}

\begin{proof}
It is a classical fact that the $\chi^2$-divergence tensorises as \[ \chi^2(Q^{\otimes n}_0 \| P^{\otimes n}_0) = (1 + \chi^2(Q_0\| P_0))^n - 1.\] The reason for this is that due to independence, $1 + \chi^2(Q_0^{\otimes n}\| P_0^{\otimes n})$ amounts to a product of second moments of relative likelihoods ($Q/P$) of iid samples.

Thus, since $\chi^2(Q_0\|P_0) \le \kappa,$ we may set $a_n = (1+\kappa)^n - 1$ in Lemma \ref{lem:lifting_back}. Now, by LeCam's method (Lemma \ref{lem:lecam}), we know that if $R_{\gof}(t\sigma) < 1/4$ for a given $n$, then using ensemble from Lemma \ref{lem:lifting_back}, it must hold that \begin{align*}
    \frac{1}{4} &\ge 1 - \sqrt{\frac{1}{2} \log \left( 1 + \exp{\frac{t^2}{m}a_{n}} -  1\right) }\notag \\
    \iff\, a_{n} &\ge 2 (3/4)^2\frac{m}{t^2} \notag\\
    \implies\, (1+\kappa)^n -1 &\ge \frac{m}{t^2} \\
    \iff\, n &\ge \frac{1}{\log(1 + \kappa)} \log \left( 1 + \frac{m}{t^2}\right)
\end{align*}

Thus, the smallest $n$ for which we can test $t\sigma$-changes in $\ii_d$ must exceed the above lower bound, giving the stated claim.

The $\eofbb$ claim follows similarly. Using the packing change ensemble from Lemma \ref{lem:lifting_back}, and the lower bound Lemma \ref{lem:guntuboyina}, if the risk is at most $1/4$ for some $n$, then we find that 
\begin{align*}
    \frac14 &\ge 1- \frac{1}{|\widetilde{\mathcal{Q}}| } - \sqrt{ \frac{(1 + a_n)^t -1 }{|\widetilde{\mathcal{Q}}|}} \notag \\ 
                \iff\, (1 + a_n)^t &\ge 1 + |\widetilde{\mathcal{Q}}| \left( \frac34 - \frac1{|\widetilde{\mathcal{Q}}|}\right)^2 \\
                \iff\, (1+\kappa)^{nt} &\ge 1 + |\widetilde{\mathcal{Q}}| \left( \frac34 - \frac1{|\widetilde{\mathcal{Q}}|}\right)^2 \\\
                \iff\, n &\ge \frac{1}{t \log(1 + \kappa)} \log \left( |\widetilde{\mathcal{Q}}| \left( \frac34 - \frac1{|\widetilde{\mathcal{Q}}|}\right)^2 \right)
\end{align*}

Now, since $1\le  t \le m/16e,$ we observe that \[ |\widetilde{\mathcal{Q}}| \ge \frac{2}{t}\left( \frac{m}{8et}\right)^{t/2} \ge \frac{2}{t} \cdot 2^{t/2} \ge 2.5.\] Thus, $(\nicefrac{3}{4} - \nicefrac{1}{|\widetilde{\mathcal{Q}}}|)^2 \ge 1/9,$ and the term in the final log above is at least $\log |\widetilde{\mathcal{Q}}|/9$, which in turn is lower bounded by Lemma \ref{lem:lifting_back}. Thus continuing the above chain of inequalities, we observe that \[n \ge \frac{1}{\log(1+\kappa)} \cdot \frac{1}{t}\left( \frac{t}{2} \left( \log\left(\frac{m}{8et}\right) - \frac{(2\log(t/2) + 4\log(3))}{t} \right)\right) \]

Finally, since $\log(x)/(x/2) \le 1/e$, we may $-2(\log(t/2) + 4\log(3))/t \ge -5.$ Folding this $-5$ into the $\log$ gives $8e^6 \le 4000$ in the denominator. Finally, again, this tells us that the infimum of the $n$ for which the $\eofbb$ risk is small is at least the above lower bound, yielding the claim. \end{proof}

\subsection{Expanded Lower Bound Theorem Statements and Proofs}

We give slightly stronger theorem statements than those in the main text, and give the proofs of the claimed bounds. In all cases the proofs involve the use of Lemma \ref{lem:lifting_back} - we describe which widgets are used, and what values of $\sigma, t$ are needed. Then we simply invoke Theorem \ref{thm:n_bound_tech} repeatedly to derive the results.

\subsection[Moderately Small Changes]{The case $d \le s \le c p$}

\begin{proof}[Proof of Theorem \ref{thm:lb_small_s}]$ $\\
\textbf{High Temperature Bound} This is shown by using the Triangle construction of \S\ref{sec:triangle}. This construction amounts to $\sigma = 1$ and $m = \lfloor p/3\rfloor.$ Thus taking $t = s,$ $\mu = \alpha, \lambda = \beta$ and invoking both Proposition \ref{prop:triangle} and Theorem \ref{thm:n_bound_tech}, we find that so long as $ p/6 \ge 16e s,$ the bounds \[ n_{\gof}(s, \ii_d) \ge \frac{1}{C\tanh^2(\alpha) e^{-2\beta}} \log \left( 1 + \frac{p}{Cs^2} \right),\] and similarly  \[ n_{\eof}(s, \ii_d) \ge \frac{1}{C\tanh^2(\alpha) e^{-2\beta}} \log \left(\frac{p}{Cs} \right).\]

\textbf{Low Temperature Bound} Let $\beta d \ge \log d$. We show this for even $d$ - odd $d$ follows by reducing $d$ by one. We use the Emmentaler clique versus the full clique of \S\ref{sec:emmen} with $\ell = 1$. This corresponds to $\sigma = d/2$ and $m = \lfloor p/d+1 \rfloor \ge p/2d$. Now take $t = \lceil 2s/d \rceil \le 4s/d.$ Note that the total number of changes is at least $s$ and at most $d/2 \lceil 2s/d\rceil \le 2s$. Notice that $t \le m$ holds so long as $s \le p/K$ for some $K \ge 400.$ Invoking Proposition \ref{lem:emmen_vs_full} in the case of $\mu = \alpha, \lambda = \beta,$ and then Theorem \ref{thm:n_bound_tech} with the stated $m, \sigma, t,$ gives us the bound \begin{align*} n_{\gof} &\ge \frac{1}{C d^2 \min(1, \mu^2 d^4) e^{-2\beta(d- 3)}} \log \left( 1 + \frac{1}{C} \frac{(p/2d)}{(4s/d)^2}\right) \\ &\ge \frac{e^{2\beta(d- 3)}}{C' d^2 \min(1, \mu^2 d^4) } \log \left( 1 + \frac{1}{C'} \frac{p d}{s^2}\right),\end{align*} where the $(d-3)$ in the exponent arises as $(d-1) -1 - \ell,$ and $d-1$ occurs since we may reduce $d$ by $1$ to make it even. Similarly \[ n_{\eof} \ge \frac{e^{2\beta(d- 3)}}{C' d^2 \min(1, \mu^2 d^4) } \log \left( 1 + \frac{1}{C'} \frac{p }{s}\right). \]

\textbf{Integrating the bounds.} We now note that if $\beta d \le 3\log d,$ then \[ \frac{e^{2\beta(d- 3)}}{d^2 \min(1, \mu^2 d^4) } \le \frac{e^{2\beta}}{\tanh^2(\alpha)}.\] Indeed, in this case, $e^{2\beta (d-3)}$ is bounded as $d^6,$ and so the left hand size is at most $d^4\min(1, \alpha^2 d^4)^{-1} \le \alpha^{-2}$, which is dominated by the right hand side. 

On the other hand even if $\beta d \ge 3\log d,$ we may still use the high temperature bound since this is shown unconditionally. Thus, at least so long as we replace the $pd/s^2$ in the low temperature bound by $p/s^2,$ we may take the maximum of the expressions in the above bounds to get a concise lower bound - the low temperature term itself only becomes active when $\beta d \le 3\log d,$ in which case it is known to be true. The claim thus follows.
\end{proof}

\subsection[Relatively Large Changes]{The case $cp \le s \le c pd^{1-\zeta}$}

We first state the commensurate $\eofbb$ bound - \begin{myth}\label{thm:large_s_eof}
In the setting of Theorem \ref{thm:lb_large_s}, we further have that \begin{enumerate}[leftmargin = .5in, parsep = 3pt, topsep = 2pt] 
    \item If $\alpha d^{1-\zeta} \le 1/32$ then \(\displaystyle n_{\eof} \ge C \frac{1}{d^{2-2\zeta} \alpha^2} \log \Big(1 + C\frac{pd^{1-\zeta}}{s}\Big).\)
    \item If $\beta d \ge 4\log(d-4)$ then \( \displaystyle n_{\eof} \ge C\frac{e^{2\beta d(1 - d^{-\zeta})}}{d^2 \min(1, \alpha^2 d^4)} \log \Big(1 + C\frac{pd^{1-\zeta}}{s}\Big). \)
\end{enumerate}
\end{myth}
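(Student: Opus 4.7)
The proof of Theorem \ref{thm:large_s_eof} will mirror that of Theorem \ref{thm:lb_large_s}, reusing the two widget constructions that already yield its $\gofbb$ counterpart, but invoking the $\eofbb$ conclusion of Theorem \ref{thm:n_bound_tech} rather than the $\gofbb$ one. The only structural difference between the two conclusions is that $\log(1 + m/t^2)$ is replaced by $\log(m/4000t)$; up to absorbable constants, this is equivalent to multiplying the argument of the logarithm by a factor of $t$. Once a widget $(P_0, Q_0)$ and a lifting multiplicity $t$ are fixed, the SNR factor $\log^{-1}(1+\kappa)$ appearing in both bounds is identical, so the entire task reduces to computing $m/t$ for each widget and then applying Theorem \ref{thm:n_bound_tech}.

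For Part 1 (high temperature, $\alpha d^{1-\zeta} \le 1/32$), the plan is to take the same empty-vs-$d^{1-\zeta}$-clique widget used for Part 1 of Theorem \ref{thm:lb_large_s}, from which one reads off $\nu = d^{1-\zeta}$, $\sigma = \binom{d^{1-\zeta}}{2} \asymp d^{2-2\zeta}$, and $m = \lfloor p/d^{1-\zeta}\rfloor$. Choosing $t = \lceil 2s/d^{2-2\zeta}\rceil$ so that $t\sigma \ge s$, the combinatorial ratio becomes
\[
\frac{m}{t} \;\asymp\; \frac{p/d^{1-\zeta}}{s/d^{2-2\zeta}} \;=\; \frac{pd^{1-\zeta}}{s},
\]
which is exactly the term inside the logarithm claimed in Part 1. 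The SNR factor is inherited unchanged from the $\gofbb$ proof.

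For Part 2 (low temperature, $\beta d \ge 4\log(d-4)$), the plan is to use the Emmentaler-versus-full-clique widget of Section \ref{sec:emmen} with hole-size parameter $\ell \asymp d^{1-\zeta}$, chosen so that the widget's SNR exponent $e^{2\beta(d-O(\ell))}$ matches the target $e^{2\beta d(1-d^{-\zeta})}$. This yields $\nu = d+1$, $\sigma \asymp \ell d = d^{2-\zeta}$, and $m \asymp p/d$, and taking $t = \lceil 2s/(\ell d)\rceil \asymp s/d^{2-\zeta}$ gives $t\sigma \ge s$ together with
\[
\frac{m}{t} \;\asymp\; \frac{p/d}{s/d^{2-\zeta}} \;=\; \frac{pd^{1-\zeta}}{s},
\]
matching the claim. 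Substituting into the $\eofbb$ bound of Theorem \ref{thm:n_bound_tech} and combining with the widget $\chi^2$ bound (i.e.\ Proposition \ref{lem:emmen_vs_full} applied with the above $\ell$) completes this case.

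The chief bookkeeping obstacle is checking that for the parameter regime $s \le pd^{1-\zeta}/K$ the ratio $m/t$ stays bounded away from one (so that the $\log(m/4000t)$ term is non-trivial and comparable, up to absolute constants, to $\log(1 + m/t)$), and that the lifting precondition $t < m/16e$ from Lemma \ref{lem:lifting_back} remains satisfied for this range of $s$; both reduce to choosing the absolute constant $K$ (and the suppressed constant $C$) sufficiently large, as in the proof of Theorem \ref{thm:lb_large_s}. Beyond this, no new widget analysis is needed: the $\chi^2$ bounds provided by the high-temperature small-clique computation and by the clique-symmetrisation arguments for the Emmentaler widget carry over verbatim, and the only genuinely different step is reading the $\eofbb$ conclusion rather than the $\gofbb$ conclusion off Theorem \ref{thm:n_bound_tech}.
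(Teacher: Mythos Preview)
Your proposal is correct and follows essentially the same route as the paper: the paper's proof of Theorem \ref{thm:large_s_eof} is presented jointly with that of Theorem \ref{thm:lb_large_s}, using the identical two widgets (empty-vs-$d^{1-\zeta}$-clique for the high-temperature part, Emmentaler-vs-full-clique with $\ell = d^{1-\zeta}$ for the low-temperature part), the same choices of $m,\sigma,t$, and then simply reading off the $m/t$ ratio for the $\eofbb$ conclusion of Theorem \ref{thm:n_bound_tech} in place of the $m/t^2$ ratio used for $\gofbb$. Your bookkeeping remarks about $t < m/16e$ and the size of $K$ are exactly the implicit checks the paper relies on.
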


\begin{proof}[Proofs of Thms.~\ref{thm:lb_large_s} and \ref{thm:large_s_eof}] $ $\\
\textbf{High Temperature Bounds} Suppose $s = pd^{1-\zeta_0}/K$ for any $\zeta_0 \in (0,1]$. We invoke the widget of a full $d^{1-\zeta_0}$-clique as $Q_0$ versus an empty graph as $P_0$, i.e. the construction of \S\ref{sec:clique_high_temp}. This corresponds to taking $\sigma = d^{2-2\zeta_0}/2 + O(d),$ $m \ge p d^{-(1-\zeta_0)}/2$ and $t = \lfloor 2s d^{-(2-2\zeta_0)} \rfloor$, with the total edit made being at most $2s$. Invoking Proposition \ref{prop:high_temp_clique} with $\mu = \alpha,$ and then Theorem \ref{thm:n_bound_tech} gives the bounds on noting that \begin{align*}
    \frac{m}{t^2} &\ge C\frac{p d^{-(1-\zeta_0)} }{(s d^{-(2-2\zeta_0)})^2} = C \frac{p d^{3 - 3\zeta_0}}{s^2}, \\
    \frac{m}{t} &\ge C\frac{p d^{-(1-\zeta_0)} }{(s d^{-(2-2\zeta_0)})} = C \frac{p d^{1 - \zeta_0}}{s^2}
\end{align*} 

and then finally setting $\zeta_0 \ge \zeta$ to derive the claim.

\textbf{Low Temperature Bounds} Again fix a $\zeta_0$. We invoke the Emmentaler clique v/s full clique widget of \ref{sec:emmen}, but this time with $\ell = d^{1- \zeta_0}.$ This gives $\sigma \approx d^{2-\zeta_0}/2$, $m = \lfloor p/d \rfloor$ and $t = \lceil 2s d^{-2 - \zeta_0}\rceil.$ The bound now follows similarly to the above section upon invoking Propositions \ref{lem:emmen_vs_full} with $\lambda = \beta, \mu = \alpha$ and then Theorem \ref{thm:n_bound_tech} with the stated $m,t,\sigma.$ We only track the terms in the log, which are 
\begin{align*}
    \frac{m}{t^2} &\ge C\frac{p d^{-1} }{(s d^{-(2-\zeta_0)})^2} = C \frac{p d^{3 - 2\zeta_0}}{s^2}, \\
    \frac{m}{t} &\ge C\frac{p d^{-1} }{(s d^{-(2-\zeta_0)})} = C \frac{p d^{1 - \zeta_0}}{s^2}. \qedhere
\end{align*} 
\end{proof}

\subsection[Very Small Changed]{Proofs in the setting $s\le d$}

The catch in this section is that the Emmentaler clique construction of the proofs above can no longer be employed, since setting even $\ell = 1$ in these induces $\Omega(d)$ changes. We instead turn to the clique with a large hole construction of \S\ref{sec:cliquehole}.

\begin{proof}[Proof of Theorem \ref{thm:very_small_s}]$ $\\
\textbf{High Temperature Bound} This is the same as the high temperature bound of Thm. \ref{thm:lb_small_s}, and that proof may be repeated.

\textbf{Low Temperature Bound} Suppose $\beta d \ge 3\log d$. We use the clique with a large hole construction of \S\ref{sec:cliquehole} with the choice of $\ell = \lceil \sqrt{2s} \rceil$. This amounts to $s\le \sigma = s + O(\sqrt{s}) \le 2s,$ and $m = \lfloor p/d \rfloor$. We then simply set $t = 1$ in Theorem \ref{thm:n_bound_tech}. Now invoking Proposition \ref{prop:clique_hole_P||Q}, we find that \[ n_{\gof} \ge \frac{1}{C\sqrt{s} \sinh^2(\alpha \sqrt{s}) e^{-2\beta (d - 1-2\sqrt{s})}} \log \left(1 + \frac{p}{Cd} \right) \ge  \frac{e^{2\beta (d - 1-2\sqrt{s})}}{C d^6\sinh^2(\alpha \sqrt{s}) } \log \left(1 + \frac{p}{Cd} \right) \] and the same lower bound for $n_{\eof}$ since in this case $m/t^2 = m/t = 1$ (the $d^6$ is introduced to make the following easy).

\textbf{Integrating the bounds} Similarly to the proof of Thm.~\ref{thm:lb_small_s}, note that for $\beta d \le 3\log d, e^{2\beta d} d^{-6} \le 1,$ allowing us to rewrite the low-temperature bound as the $\max$ expression in the theorem statement. Giving up space in the logarithm to $p/s^2 \wedge p/d$ then yields the stated claim for $\gofbb$. For $\eofbb,$ we follow the same procedure, but note that since $s \le d,$ $(p/s \wedge p/d) = p/d$.
\end{proof}

\section{Appendix to \S\ref{sec:testing_deletions}} 

\subsection{Testing Deletions in Forests, and Changes in Trees}\label{appx:forests}

\subsubsection{Proofs of Lower Bounds}

\begin{proof}[Proof of Lower bounds from Theorem \ref{thm:forest_testing}]

First note that $n \ge 1$ is necessary, since testing/estimation with no samples is impossible. To derive the second term in the converse for $\gofbb$ and the converse for $\eofbb$, we plug in the single-edge widget of \S\ref{sec:single_edge} with $\mu = \alpha$ into Theorem \ref{thm:n_bound_tech}. The widget corresponds to $\nu = 2$ and $\sigma = 1$. Thus, setting $t = s$ and $m = \lfloor p/2 \rfloor \ge p/3,$ we obtain both the claimed bounds. 
\end{proof}

\subsubsection{Proof of Upper Bound of Theorem \ref{thm:forest_testing}, and of Theorem \ref{thm:tree_testing}}

We give the proof for $\alpha > 0.$ The proof for $\alpha < 0$ follows identically.

We use $u$ as a short hand for a pair $(i,j)$ with $i < j,$ and set $Z_u = X_i X_j$. We exploit two key properties of forest structured graphs \begin{enumerate}
    \item For any $u = (i,j)$, if nodes $i$ and $j$ are connected via the graph, then $\mathbb{E}[Z_u] = \prod_{v \in \mathrm{path}(u)} \tanh(\theta_v),$ where for $u = (i,j)$ $\mathrm{path}(u)$ is the unique path connecting $i$ and $j.$ If $i$ and $j$ are not connected, then $\mathbb{E}[Z_u] = 0$.
    \item For any $u \neq v$, $\mathbb{E}[Z_u Z_v] = \mathbb{E}[Z_u] \mathbb{E}[Z_v]$, that is, the $Z_u$s are pairwise uncorrelated.
\end{enumerate}

The above are standard properties, and are shown by exploiting the fact that conditioning on any node in the forest breaks it into two uncorrelated forests. See, e.g.~\cite{breslerkarzand} for proofs.

\begin{proof}[Proof of Upper Bound in Theorem \ref{thm:forest_testing}]\label{forest_upper_pf}

Recall the statistic $\mathscr{T} = \sum_{\ell = 1}^n \sum_{u \in G(P)} Z_u^{\ell}/n,$ where the outer sum is over samples. Suppose $G(P)$ has $k$ edges. Let $\tau := \tanh(\alpha)$. We propose the test \[ \mathscr{T} \overset{\mathrm{Null}}{ \underset{\mathrm{Alt}}{\gtrless}} (k - s/2) \tau.\]

Since the sum is over all edges in $p$, and since all edges have the same weight $\alpha,$ we note that \[ \mathbb{E}_P[\mathscr{T}] = k \tau.\] 
Now consider an alternate $Q_\Delta$ that deletes some $\Delta \ge s$ of these edges. Since a deletion of an edge in the forest disconnects the nodes at the end of the edges (the path connecting two nodes in a forest is unique, if it exists, and we've just removed that unique path by deleting the edge), \[ \mathbb{E}_{Q_\Delta} [\mathscr{T}] = (k- \Delta) \tau.\]

Next, we consider the variance of the statistic. Due to uncorrelation of $Z_u$s, under any forest structured Ising model we have in the case of $n = 1$ \[ \mathrm{Var}[\mathscr{T}] = \sum_{u \in G(P)} (1 - (\mathbb{E}[Z_u])^2,\] where we have used that $Z_u^2 = (\pm 1)^2 = 1$ always. Using the standard behaviour of variances under averaging of independent samples, \begin{align*}
    \mathrm{Var}_{P^{\otimes n}}[\mathscr{T}] &= \sum_{u \in G(P)} \frac{1 - \tau^2}{n} = \frac{k(1 - \tau^2)}{n},\\
    \mathrm{Var}_{Q_\Delta^{\otimes n}}[\mathscr{T}] &= \sum_{u \in G(P)\cap G(Q_{\Delta})} \frac{1 - \tau^2}{n} + \sum_{ u \in G(P)\setminus G(Q_{\Delta})} 1/n  = \frac{ k(1 - \tau^2) + \Delta \tau^2}{n}.
\end{align*}

Using Tchebycheff's inequality, we then observe that for a given constant $C > 1$, the following hold with probability at least $7/8:$ \begin{align*}
    \textrm{ Under $P^{\otimes n}$: }& \qquad \mathscr{T} \ge k \tau - C\sqrt{ \frac{k(1-\tau^2)}{n}},\\
    \textrm{ Under any $Q_\Delta^{\otimes n}$:}& \quad \mathscr{T} \le (k -\Delta) \tau + C\sqrt{\frac{ k(1-\tau^2) + \Delta \tau^2}{n}}.
\end{align*}

Thus, the test has false alarm and size both at most $1/8,$ irrespective of $P$ and $Q_\Delta$, so long as \[ (k -\Delta) \tau + C\sqrt{\frac{ k(1-\tau^2) + \Delta \tau^2}{n}} < (k-s/2) \tau < k \tau - C\sqrt{ \frac{k(1-\tau^2)}{n}}. \]

Solving out the upper bound on $(k-s/2)\tau$ yields \[ n > 4C^2 \frac{k}{s^2} (\tau^{-2} - 1), \] while for the lower bound, since $\Delta \ge s,$ the same must hold if \begin{align*}
    (k - \Delta) \tau + C\sqrt{\frac{ k(1-\tau^2) + \Delta \tau^2}{n}} < (k-\Delta/2) \tau,
\end{align*}
which may be rearranged to \[ n > 4C^2 \left( \frac{1}{\Delta} + \frac{k}{\Delta^2} (\tau^{-2}-1) \right),\] which in turn must hold if \[ n > 4C^2 \left( 1 + \frac{k}{s^2}(\tau^{-2} - 1) \right), \] where the final inequality again utilises $\Delta \ge s$. 

Thus, forests with $k$ edges can be tested with risk at most $1/4$ as long as we have at least \[ 4C^2 \left( 1 + \frac{k}{s^2}(\tau^{-2} - 1) \right) + 1 \le C' \max\left(1 , \frac{k}{s^2} (\tau^{-2} - 1) \right)\] samples, where $C' \le 8C^2 + 1$ is a constant. Since forests on $p$ nodes have at most $p-1$ edges, replacing $k$ by $p$ yields an upper bound on the sample complexity of testing deletions in forests.

Finally, since $\tau = \tanh(\alpha),$ we note that \( \tau^{-2} - 1  = {\sinh^{-2}(\alpha)},\) concluding the proof.
\end{proof}

\subsubsection*{Some Observations}

\begin{itemize}[leftmargin = 0.16in]
    
    \item While the above proof is for uniform edge weights, this can be relaxed with little change. However, the above proof does strongly rely on the edge weights all having the same sign. If this is not the case, then we may encounter edit the same number of positively and negatively weighted edges, and the statistic $\mathscr{T}$ becomes uninformative.    
    
    \item The statistic $\mathscr{T}$ similarly loses power in the general setting of testing both additions and deletions in forests. This is because while the variance remains controlled as $k (1-\tau^2),$ the means under the alternates may not move if the only changes being made are additions.
        
    \item On the other hand, if we consider testing only of full trees, i.e. $P$ such that $G(P)$ has the full $(p-1)$ edges, and further the altered $Q$ are also trees, then something interesting emerges - at least in the setting of uniform weights. Since at least $s$ edges were changed from $G(P)$ to $G(Q)$, and one cannot add an edge to $G(P)$ without introducing a cycle, it must be the case that $G(Q)$ effects at least one edge-deletion for every edge it adds, and so it must make at least $\ge s/2$ deletions. In this case, the statistic discussed above \emph{is} powerful. This, of course, was the point of Theorem \ref{thm:tree_testing} in the main text, which we are now ready to prove
    
\begin{proof}[Proof of Upper Bound from Theorem \ref{thm:tree_testing}] Assume that $\alpha >0$. The proof proceeds similarly for $\alpha < 0$. We use the statistic $\mathscr{T}$ from the proof of the upper bound of Thm.~\ref{thm:forest_testing} above, and also reuse the notation of $\tau, \Delta$ and $Q_{\Delta}$ from the above. The claim relies on the above observation that if $\Delta$ edges are changed, then at least $\Delta/2 \ge s/2$ edges must be deleted.

In this case, the mean and the variance of $\mathscr{T}$ under $P$ remain unchanged. On the other hand, under $Q_{\Delta},$ for any edge $u \in G(P)$ that was deleted in $G(Q_{\Delta})$, we must have $|\mathbb{E}_{Q_{\Delta}}[Z_u]| \le \tau^2, $ since the distance between the end points of these edges is now at least $2$. Further, since $G(Q)$ is a tree, the variance of the statistic under $Q_{\Delta}$ (for $n = 1$) is \begin{align*} \mathrm{Var}_{Q_\Delta}[\mathscr{T}] &= \sum_{u \in G(P)} (1 - \mathbb{E}_{Q_\Delta}[Z_u]^2) \\ &\le (p-1 -\Delta)(1- \tau^2) + \Delta \\ &= (p-1)(1-\tau^2) + \Delta \tau^2.\end{align*}

At this point the argument from the earlier proof of Thm.~\ref{thm:forest_testing} can be used. The test needs to be updated to declaring for the null only when $\mathscr{T} > (p-1)\tau - s\tau(1-\tau)/4$.
\end{proof}

We conclude by showing the lower bound in Theorem \ref{thm:tree_testing}. This requires a mild departure from the previously discussed lower bounds, in that the lifting trick is not applicable - this fundamentally constructs disconnected graphs, while trees need to be connected. However, pretty much the same approach is used.

\begin{proof}[Proof of the Lower Bound from Theorem \ref{thm:tree_testing}]
    We use Le Cam's method, as before. The construction is as follows: Let $p$ be odd, and let $m = (p-1)/2.$ Take $P$ to be the Ising model with uniform weights $\alpha$ on the graph with the edge set \[ G(P) = \{ (p,i): i \in [1:m] \} \cup \{ (i, m+i): i \in [1:m]\}.\] This is a `two-layer star' - one node is singled out as central. Half the remaining nodes are incident on it, and the other half are each incident on one of these `inner' nodes.
    
    Let $t = \lceil s/2\rceil$, assumed smaller than $m$. For each $S \subset [1:m]$ such that $|S| = t,$ we define $Q_S$ to be the Ising model with uniform weights $\alpha$ on the following graph \[G(Q_S) = \{(p,i): i \in [1:m]\setminus S\} \cup \{(p,m+1): i \in S\} \cup \{(i,m+i): i \in [1:m]\}. \] In words, $Q_S$ detaches node $i$ from node $p$ and attaches node $(m+i)$ to node $p$ for $i \in S$, thus switching some of the inner nodes to being outer and vice versa. Notice that in total, $2|S| = 2t \in \{s, s+1\}$ edges have been changed.
    
    We directly argue that for $P$ as defined above, and $\mathcal{Q} = \{Q_S : S \subset [1:m], |S| = t\},$ it holds that \[ \chi^2\left( \langle \mathcal{Q}^{\otimes n} \rangle, P^{\otimes n}\right) \le \exp{ \frac{t^2}{m} \left(  (1 + 2\tanh^2\alpha)^n -1 \right)} -1. \] This, along with Le Cam's method implies the claim upon noting that $m/t^2 \ge 2(p-1)/(s+1)^2$ which is in turn larger than $p/s^2$ for $s \ge 4, p \ge 9.$
    
    Let us proceed to show the above claim. By direct computation, \[1 + \chi^2\left( \langle \mathcal{Q}^{\otimes n} \rangle, P^{\otimes n}\right) = \frac{1}{|\cal Q|^2}\sum_{S, \tilde{S}} \left(\mathbb{E}_P\left[ \frac{Q_S(X) Q_{\tilde{S}}(X)}{P(X)^2} \right]\right)^n.\]
        We invoke the following calculation \begin{mylem}\label{lem:subsidiary_tree_lower_bound}
            \[ \mathbb{E}_P\left[ \frac{Q_S(X) Q_{\tilde{S}}(X)}{P(X)^2} \right]  \le (1 + 2\tanh^2\alpha)^{|S \cap \tilde{S}|}.\] 
        \end{mylem}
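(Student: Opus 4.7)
The plan is to exploit two structural features of the construction: (i) $G(P)$, $G(Q_S)$, and $G(Q_{\tilde S})$ are pairwise isomorphic two-layer star trees with uniform weight $\alpha$, so their partition functions coincide; and (ii) conditioning on $X_p$ decomposes $P$ into independent two-node chains indexed by $i \in [1:m]$. Since $G(Q_S)$ differs from $G(P)$ only by replacing the edge $(p,i)$ with $(p, m+i)$ for each $i \in S$ while the edge $(i, m+i)$ is left intact, the partition-function cancellation yields the clean form
\begin{align*}
    \frac{Q_S(x) Q_{\tilde S}(x)}{P(x)^2} = \prod_{i \in S \cap \tilde S} e^{2\alpha x_p(x_{m+i} - x_i)} \prod_{i \in S \triangle \tilde S} e^{\alpha x_p(x_{m+i} - x_i)},
\end{align*}
with indices outside $S \cup \tilde S$ contributing no factor.

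Next, I would condition on $X_p$ and use that under $P$ the pairs $\{(X_i, X_{m+i})\}_{i=1}^m$ are conditionally independent (as $p$ is a cut-vertex of the tree), so the expectation factorises across $i$. To remove the lingering $X_p$-dependence inside each factor, apply the spin-flip substitution $Y_i := X_p X_i,$ $Z_i := X_p X_{m+i}$. Two observations make this substitution work: the conditional density $\propto \exp(\alpha X_p X_i + \alpha X_i X_{m+i})$ transforms to $\propto \exp(\alpha Y_i + \alpha Y_i Z_i)$, which is independent of $X_p$; and the exponent rewrites as $X_p(X_{m+i} - X_i) = Z_i - Y_i$. Setting $\phi_c := \mathbb{E}[e^{c\alpha(Z - Y)}]$ for this fixed two-variable law, one obtains
\begin{align*}
    \mathbb{E}_P\!\left[\frac{Q_S Q_{\tilde S}}{P^2}\right] = \phi_2^{|S \cap \tilde S|}\, \phi_1^{|S \triangle \tilde S|}.
\end{align*}

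The identity $\phi_1 = 1$ is for free: since $Q_{\{i\}}$ is a probability measure, $\mathbb{E}_P[Q_{\{i\}}/P] = 1$, and the same factorisation applied with $S = \{i\},\, \tilde S = \emptyset$ collapses this to $\phi_1$. Taking $\alpha \ge 0$ without loss of generality (the class $\mathcal{T}(\alpha)$ is isomorphic to $\mathcal{T}(-\alpha)$ via a bipartition sign-flip that preserves the testing problem), I would enumerate the four configurations of $(Y, Z)$ to obtain $\phi_2 = (2 e^{2\alpha} + 1 + e^{-4\alpha})/(e^{\alpha} + e^{-\alpha})^2$, and then verify the target bound $\phi_2 \le 1 + 2\tanh^2\alpha$. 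After clearing denominators and substituting $u := e^{-2\alpha} \in (0,1]$, this collapses to the one-line inequality $(1-u)^3 \ge 0$. The only real obstacle is careful bookkeeping through the spin-flip substitution so that all $X_p$-dependence disappears before taking the final expectation over $X_p$; once that factorisation is in hand, the algebraic verification of the elementary inequality is immediate.
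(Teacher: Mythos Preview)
Your proof is correct and shares the paper's core structure: condition on $X_p$, factor over the independent pairs $(X_i,X_{m+i})$, and bound the per-pair contribution. The paper carries out the three cases $i \in (S\cup\tilde S)^c$, $i \in S\triangle\tilde S$, $i \in S\cap\tilde S$ by direct summation over $(x_i,x_{m+i})$, obtaining $4\cosh^2\alpha$, $4\cosh^2\alpha$, and $2(e^{\alpha}\cosh\alpha + e^{-\alpha}\cosh 3\alpha)$ respectively, and then bounds the last via the identity $e^{x}\cosh x + e^{-x}\cosh 3x - 2\cosh^2 x = 4e^{-x}\cosh x\,\sinh^2 x \le 4\sinh^2 x$ for $x \ge 0$. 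Your spin-flip substitution $Y_i = X_pX_i,\ Z_i = X_pX_{m+i}$ removes the $X_p$-dependence before any computation, and the observation $\phi_1 = \mathbb{E}_P[Q_{\{i\}}/P] = 1$ dispatches the $S\triangle\tilde S$ factors without summing at all; your factorisation $(1-u)^3 \ge 0$ with $u = e^{-2\alpha}$ is the same final inequality in more transparent form. Both arguments require $\alpha \ge 0$ at that last step, which matches how the lemma is actually used.
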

        
        Let $\varphi := (1 + 2\tanh^2(\alpha))^n$. Plugging the above result into the expression for the $\chi^2$-divergence, we find that \begin{align*}
            1 + \chi^2\left( \langle \mathcal{Q}^{\otimes n} \rangle, P^{\otimes n}\right) &\le \frac{1}{|\cal Q|^2}\sum_{S, \tilde{S}} \varphi^{|S \cap \tilde{S}|} \\
                                                                                           &= \sum_{k = 0}^t \frac{\binom{t}{k} \binom{m-t}{t-k}}{\binom{m}{t}} \varphi^k \\
                                                                                           &= \mathbb{E}[ \varphi^{\mathscr{H}}]\\
                                                                                           &\le \mathbb{E}[\varphi^{\mathscr{B}}]\\
                                                                                           &= \left( 1 + \frac{t}{m}(\varphi - 1) \right)^t \le \exp{\frac{t^2}{m}(\varphi - 1)},
        \end{align*}
        
        where we have used the fact that $\cal Q$ is parametrised by all subsets of size $t$ of a set of size $m,$ and then proceeded similarly to the proof of the first part in Lemma \ref{lem:lifting_back}, with $\mathscr{H}$ being a $(m,t,t)$-hypergeometric random variable, and $\mathscr{B}$ being a $(t,t/m)$-binomial random variable. It remains to show the above Lemma, which is argued below.\qedhere        
\end{proof}

\begin{proof}[Proof of Lemma \ref{lem:subsidiary_tree_lower_bound}]

    Notice that \begin{align*}
        P(x) &= \frac{1}{2^p \cosh^{p-1}(\alpha)} \exp{\alpha \left( x_p\sum_{i = 1}^m x_i + \sum_{i = 1}^m x_i x_{m+i} \right)}\\
        Q_S(x) &= \frac{1}{2^p \cosh^{p-1}(\alpha)} \exp{\alpha \left( x_p\sum_{i \in S^c} x_i + x_p \sum_{i \in S} x_{m+i} + \sum_{i= 1}^m x_ix_{m+i}\right)}
    \end{align*}
Where the partition functions are directly calculated. As a consequence, \begin{align*}
   2^p \cosh^{p-1}(\alpha) \mathbb{E}_P\left[ \frac{Q_S(X) Q_{\tilde{S}}(X)}{P(X)^2}\right] &= 2^p \cosh^{p-1}(\alpha) \sum_x \frac{Q_S(x) Q_{\tilde{S}}(x)}{P(x)} \\ &= \sum_x \exp{\alpha\left(  x_p \sum_{i \in (S \cup \tilde{S})^c} x_i + \sum_{i \in (S \cup \tilde{S})^c} x_i x_{m+i} \right)} \\ &\quad \times \exp{\alpha\left( x_p \sum_{i \in S \triangle \tilde{S}}x_{m+i} + \sum_{i \in S \triangle \tilde{S}} x_i x_{m+i} \right)} \\ &\qquad \times  \exp{ \alpha \left( x_p \sum_{i \in S \cap \tilde{S}} (2x_{m+i} - x_i) + \sum_{i \in S\cap \tilde{S}} x_i x_{m+i}\right)}.
\end{align*}

Observe that upon fixing a value of $x_p$, the product above completely decouples into $m$ groups over $(x_i, x_{m+i}),$ which can then be summed separately. Indeed, \begin{align*}
    2^p \cosh^{p-1}(\alpha) \mathbb{E}_P\left[ \frac{Q_S(X) Q_{\tilde{S}}(X)}{P(X)^2}\right] &= \sum_{x_p} \prod_{i \in (S \cup \tilde{S})^c} \left( \sum_{x_i, x_{m+i}} \exp{\alpha (x_p x_i + x_i x_{m+i}) }\right) \\&\quad \times \prod_{i \in S \triangle \tilde{S}} \left( \sum_{x_i, x_{m+i}} \exp{\alpha( x_p x_{m+i} + x_i x_{m+i} }\right) \\ &\qquad \times \prod_{i \in S \cap \tilde{S}} \left( \sum_{x_i, x_{m+i}} \exp{\alpha( x_p(2x_{m+i} - x_i) + x_i x_{m+i}}\right).
\end{align*} 

There are three types of $i$ - those that lie in neither of $S, \tilde{S},$ those that lie in only one of these, and those that lie in both, which is how the above has been separated. We will explicitly compute the sum over $(x_i, x_{m+i})$ for each type separately.

\begin{enumerate}
    \item $i \in (S \cup \tilde{S})^c$: \begin{align*}
        \sum_{x_i, x_{m+i}} \exp{\alpha (x_p x_i + x_i x_{m+i}) } &= e^{\alpha(x_p + 1)} + e^{\alpha (x_p - 1)} + e^{\alpha(-x_p - 1)} + e^{\alpha(-x_p + 1)} \\
                                                                  &= 2e^{\alpha}\cosh(\alpha x_p) + 2e^{-\alpha} \cosh(\alpha x_p)\\
                                                                  &= 4\cosh^2(\alpha),
    \end{align*}
    where we have utilised the fact that $x_p \in \pm 1,$ and that $\cosh$ is an even function.
    \item $i \in S \triangle \tilde{S}$: This case is very similar to the above:\begin{align*}  \sum_{x_i, x_{m+i}} \exp{\alpha (x_p x_{m+i} + x_i x_{m+i}) } &= e^{\alpha(x_p + 1)} + e^{\alpha (-x_p - 1)} + e^{\alpha(x_p - 1)} + e^{\alpha(-x_p + 1)} \\
                                                                  &= 4\cosh^2(\alpha)\end{align*} 
    \item Finally, for $i \in S \cap \tilde{S}$,\begin{align*}
        \sum_{x_i, x_{m+i}} \exp{\alpha ( x_p(2x_{m+i} - x_i) + x_i x_{m+i} } &= e^{\alpha(x_p + 1)} + e^{\alpha (-3x_p - 1)} + e^{\alpha(3x_p - 1)} + e^{\alpha(-x_p + 1)} \\
                                                                              &= 2(e^{\alpha} \cosh(\alpha) + e^{-\alpha}\cosh(3\alpha)),
    \end{align*}                                                          
\end{enumerate}

Plugging the above calculations in, we find that \begin{align*} \mathbb{E}_P\left[ \frac{Q_S(X) Q_{\tilde{S}}(X)}{P(X)^2}\right] &= \sum_{x_p} \frac{(4\cosh^2\alpha)^{|(S \cup \tilde{S}^c| + |S \triangle \tilde{S}|} (2(e^{\alpha} \cosh(\alpha) + e^{-\alpha}\cosh(3\alpha)))^{|S\cap \tilde{S}|}}{2^p \cosh^{p-1}(\alpha)} \\ 
&= 2 \cdot \frac{(2\cosh(\alpha))^{2(m - |S \cap \tilde{S}|} (2(e^{\alpha} \cosh(\alpha) + e^{-\alpha}\cosh(3\alpha)))^{|S\cap \tilde{S}|}}{2^p \cosh^{p-1}(\alpha)} \\
&= \left( \frac{e^{\alpha} \cosh(\alpha) + e^{-\alpha}\cosh(3\alpha)}{2\cosh^2(\alpha)} \right)^{|S \cap \tilde{S}|},\end{align*} where we have used the fact that $(S \cup \tilde{S})^c, S\triangle \tilde{S}, S\cap \tilde{S}$ form a partition of $[1:m],$ and that $2m = p-1.$

To finish, we observe that \begin{align*}
    e^x \cosh(x) + e^{-x}\cosh(3x) - 2\cosh^2(x) &= \frac{e^{2x} + e^{-4x} -e^{-2x} - 1}{2}\\
                                                 &= e^{-x} \frac{ e^(3x) + e^{-3x} - (e^x + e^{-x})}{2}\\
                                                 &= e^{-x} (\cosh(3x) - \cosh(x))\\
                                                 &= e^{-x}(4\cosh^3(x) - 3\cosh(x) -\cosh(x))\\
                                                 &= 4e^{-x}\cosh(x) (\cosh^2(x) - 1) \\
                                                 &\le 4\sinh^2(x),
\end{align*} where the final relation uses $x \ge 0$.
\end{proof}
\end{itemize}

\subsubsection{Tolerant Testing of Forest Deletions, and of Trees}\label{appx:tolerant_testing}

\begin{proof}[Proof of Theorem \ref{thm:tolerant_tree}]

We repeatedly reuse the notation from the proof of Theorem \ref{thm:forest_testing} above. 

For the forest deletion setting, suppose $|G(P)| = k,$ and let $\widetilde{P}_{\Delta_0}$ be such that it's network structure is a deletion of most $\Delta_0 \le \varepsilon s$ edges from $G(P).$ It follows from the mean and variance calculations before, that, for any $\Delta \ge s,$ \begin{align*} 
\mathbb{E}_{\widetilde{P}^{\otimes n}_{\Delta_0}}[\mathscr{T}] &= (k - \Delta_0)\tau \ge (k - \varepsilon s )\tau, \\
\mathrm{Var}_{\widetilde{P}^{\otimes n}_{\Delta_0}}[\mathscr{T}] &= \frac{k(1-\tau^2) + \Delta_0 \tau^2}{n} \le \frac{k(1-\tau^2) + \Delta \tau^2}{n}.
\end{align*}

Consider the test which rejects the null hypothesis when $\mathscr{T} < (k - \frac{1+\varepsilon}{2}s)\tau$. Comparing the above to a $Q_{\Delta}$ as in the proof of Theorem \ref{thm:forest_testing}, and proceeding as in it, we find that the risk is appropriately controlled so long as the following relations hold for every $\Delta_0 \le \varepsilon s,$ and $\Delta \ge s$, where $C$ is an absolute constant: \begin{align*}
    n & \ge C\frac{k(\tau^{-2} -1) + \Delta_0 }{ \left(\frac{1 + \varepsilon}{2}s - \Delta_0\right)^2} \\
    n & \ge C\frac{k(\tau^{-2} -1) + \Delta }{ \left(\Delta -\frac{1 + \varepsilon}{2}s\right)^2} \\
\end{align*}

The right hand sides of the first and second equations above respectively increase and decrease with $\Delta_0$ and $\Delta$. Thus, setting $\Delta_0 = \varepsilon s$ and $\Delta = s,$ and taking the maximum possible $k = p$ tells us that the conditions will be met so long as \begin{align*}
    n \ge 4C \frac{ (p-1) \sinh^{-2}(\alpha) + s }{(1-\varepsilon)^2 s^2} 
\end{align*}

For the tree case, the same argument follows but with a small change - in the null case, a change of $\Delta_0$ edges can reduce the mean of $\mathscr{T}$ by $\Delta_0 \tau,$ but in the alternate, there may exist changes of $\Delta$ edges which only drop the mean of $\mathscr{T}$ by $\Delta/2 (\tau - \tau^2)$. Thus, we use the test \[ \mathscr{T} \overset{\mathrm{Null}}{\underset{\mathrm{Alt.}}{\gtrless}} (p-1)\tau - \frac{1 + 2\varepsilon}{4}s\tau + \frac{s}{4}\tau^2. \]

Continuing similarly, and keeping in mind that the variance of $\mathcal{T}$ after $\Delta$ changes is at most $(p-1)(1-\tau^2) + \Delta \tau^2,$ we find that risk of the above test is controlled so long as for every $\Delta_0 \le \varepsilon s,$ and for every $\Delta \ge s,$ the following relations hold \begin{align*}
    n &\ge \frac{C}{s^2} \frac{p(\tau^{-2} - 1) + \Delta_0}{ \left(  1 + 2\varepsilon - \tau - 4\Delta_0/s) \right)^2} \\
    n &\ge \frac{C}{s^2} \frac{p(\tau^{-2} - 1) + \Delta}{ \left( 2\Delta/s (1-\tau)  - (1 + 2\varepsilon-\tau)\right)^2}
\end{align*}

It is a matter of straightforward computation that if $\varepsilon \le \frac{1-\tau}{2},$ then the right hand sides of the first and second inequality above respectively increase and decrease with $\Delta_0$ and $\Delta$. Thus, setting $\Delta_0 = \varepsilon s$ and $\Delta =s,$ the above holds if \begin{align*}
    n &\ge \frac{C}{(1 - 2\varepsilon - \tau)^2}\left( \frac{p(\tau^{-2} -1)}{s^2} + \frac{1}{s} \right).\qedhere
\end{align*}

\end{proof}

\subsection{Testing Deletions in High-Temperature Ferromagnets}\label{appx:high_temp_ferro}

\subsubsection{Proof of achievability}
\begin{proof}[Proof of the upper bound of Theorem \ref{thm:high_temp_ferr_del}]

We follow the strategy laid out in the main text. The proposed test statistic is $\mathscr{T}(\{X^{(i)}\};P) := \widehat{\mathbb{E}}[ \sum_{(i,j) \in G(P)} X_iX_j],$ where the $\{X^{(i)}\}$ are the samples, and $\widehat{E}$ indicates the empirical mean over this data. Concretely, the test is to threshold $\mathscr{T}$ as \[ \mathscr{T} \overset{\mathrm{Null}}{\underset{\mathrm{Alt.}}{\gtrless}} \mathbb{E}_P[\mathscr{T}] - Cs \alpha,\] where $C$ the constant left implicit in Lemma \ref{lem:high_temp_ferr_sep}.

The analysis relies on two facts:
\begin{mylem}\label{lem:high_temp_ferr_sep}
    Let $P,Q \in \mathcal{H}_d^\eta(\alpha),$ and $G(Q) \subset G(P),$ with $|G(P) \triangle G(Q)| \ge s$. For every $\eta < 1,$ there exists a constant $C>0$ that does not depend on $(p,s,\alpha)$ such that \[ \mathbb{E}_P[\mathscr{T}] - \mathbb{E}_Q[\mathscr{T}] \ge 2C s \alpha.\]
\end{mylem}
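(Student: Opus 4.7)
The plan is to exploit monotonicity of ferromagnetic Ising models via the Griffiths--Kelly--Sherman (GKS) inequalities. First I would write
\[ \mathbb{E}_P[\mathscr{T}] - \mathbb{E}_Q[\mathscr{T}] = \sum_{(i,j) \in G(P)} \bigl( \mathbb{E}_P[X_iX_j] - \mathbb{E}_Q[X_iX_j] \bigr). \]
Since $G(Q) \subset G(P)$ and all edge weights equal $\alpha > 0$, $P$ is obtained from $Q$ by switching on additional non-negative couplings, so GKS-II implies every summand is non-negative, and it suffices to lower-bound the contribution of the $|G(P)\setminus G(Q)| \ge s$ edges present in $P$ but absent in $Q$.

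For a fixed edge $e = (i,j) \in G(P) \setminus G(Q)$, I would interpolate: let $\theta_t$ be the weight matrix that agrees with $Q$ off $(i,j)$ and places weight $t \in [0,\alpha]$ on $(i,j)$. The standard log-partition identity gives $\partial_t \mathbb{E}_{\theta_t}[X_iX_j] = 1 - \mathbb{E}_{\theta_t}[X_iX_j]^2$, so integration reduces the per-edge bound to a uniform upper bound on $|\mathbb{E}_{\theta_t}[X_iX_j]|$. For this I would use the high-temperature hypothesis: because $\theta_t$ uses a subset of the edges of $P$ (with weights bounded by $\alpha$), the total weight at node $i$ under $\theta_t$ is at most $\alpha |\partial_P i| \le \alpha d \le \eta$. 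Hence the conditional sign satisfies
\[ \bigl| \mathbb{E}_{\theta_t}[X_i \mid X_{-i}] \bigr| = \Bigl| \tanh\Bigl(\sum_{k \in \partial i} \theta_{t,ik} X_k\Bigr) \Bigr| \le \tanh(\eta), \]
and conditioning on $X_{-i}$ together with $|X_j| = 1$ yields $|\mathbb{E}_{\theta_t}[X_iX_j]| \le \tanh\eta$. Integrating in $t$ gives $\mathbb{E}_{Q \cup \{e\}}[X_iX_j] - \mathbb{E}_Q[X_iX_j] \ge \alpha(1-\tanh^2\eta)$, and a second invocation of GKS (turning on the remaining edges of $G(P)\setminus (G(Q)\cup\{e\})$ can only further increase $\mathbb{E}[X_iX_j]$) upgrades the bound to $\mathbb{E}_P[X_iX_j] - \mathbb{E}_Q[X_iX_j] \ge \alpha(1-\tanh^2\eta)$. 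Summing over the $\ge s$ deleted edges yields the claim with $2C := 1-\tanh^2\eta > 0$, which depends only on $\eta$ as required.

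The main obstacle is not analytic but bookkeeping: one has to certify that the interpolating models $\theta_t$ genuinely stay in a high-temperature regime along the path so that the correlation bound applies uniformly in $t$. This follows cleanly from $G(\theta_t) \subset G(P)$ and $P \in \mathcal{H}_d^\eta(\alpha)$, but needs to be stated carefully because one is perturbing a particular coupling while the GKS-based comparison is across different graphs. Once this is in place, the rest is a clean combination of two classical ingredients --- GKS monotonicity and the partition-function derivative identity --- with the high-temperature hypothesis providing the single quantitative input $1-\tanh^2\eta$.
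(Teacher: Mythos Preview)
Your argument is correct and takes a genuinely different route from the paper's. Both proofs invoke GKS monotonicity to reduce to a per-edge lower bound, but they obtain that per-edge bound by different means. The paper deletes the edges of $G(P)\setminus G(Q)$ one at a time, telescopes $\mathbb{E}_{Q_t}[\mathscr{T}]-\mathbb{E}_{Q_{t+1}}[\mathscr{T}]$, and for each single-edge deletion invokes (a corrected form of) Lemma~6 of Santhanam--Wainwright to get $\mathbb{E}_{Q_t}[X_aX_b]-\mathbb{E}_{Q_{t+1}}[X_aX_b]\ge \alpha/400$; this black-box is where the high-temperature assumption enters. Your approach instead interpolates along a single coupling, uses the variance identity $\partial_t \mathbb{E}_{\theta_t}[X_iX_j] = 1-\mathbb{E}_{\theta_t}[X_iX_j]^2$, and bounds the correlation uniformly via the elementary conditional estimate $|\mathbb{E}[X_i\mid X_{-i}]|\le\tanh\eta$; a second application of GKS then pushes the bound from $Q\cup\{e\}$ up to $P$.

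What each buys: your argument is fully self-contained, avoids the rather involved Bhattacharyya-coefficient machinery behind the Santhanam--Wainwright lemma (which the paper in fact has to patch), and produces the explicit and sharper constant $2C=1-\tanh^2\eta$. The paper's telescoping route is more modular --- it defers the analytic work to an external structural result --- but at the cost of opacity and a worse universal constant. Your ``bookkeeping'' caveat about $\theta_t$ remaining high-temperature is exactly right, and the observation $G(\theta_t)\subset G(P)$ handles it cleanly.
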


\begin{mylem}\label{lem:high_temp_ferr_variance}
    For any $P,Q \in \mathcal{H}_d^\eta(\alpha),$ which may be equal, \[ \mathrm{Var}_Q\left[ \sum_{(i,j) \in G(P)} X_i X_j\right] \le C_\eta pd,\] where $C_\eta$ may depend on $\eta,$ but not otherwise on $(p,d,s,\alpha)$. 
\end{mylem}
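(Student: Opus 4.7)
The plan is to view the statistic as a centered bilinear form and apply a Hoeffding-type concentration inequality for such forms under high-temperature Ising measures. Writing $S := \sum_{(i,j) \in G(P)} X_i X_j = \tfrac12 X^{\top} A X$, where $A \in \{0,1\}^{p\times p}$ is the symmetric adjacency matrix of $G(P)$ with zero diagonal, the max-degree constraint on $G(P)$ immediately yields the spectral bounds $\|A\|_F^2 = 2|E(G(P))| \le pd$ and $\|A\|_{\mathrm{op}} \le d$ (the latter from the fact that row sums are at most $d$).

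The core step is then to invoke the concentration result of \cite[Ex.~2.5]{adamczak2019note} for bilinear functions of configurations drawn from an Ising measure in the Dobrushin regime. Every $Q \in \mathcal{H}_d^\eta(\alpha)$ lies in this regime: its Dobrushin coefficient is bounded by $\tanh(\alpha) d \le \alpha d \le \eta < 1$, which is exactly what the cited inequality needs. Applying it yields, for some $c_\eta > 0$ depending only on $\eta$, a sub-Gaussian bound of the form
\[ \Parg{|S - \mathbb{E}_Q[S]| \ge t} \,\le\, 2\exp{-c_\eta \frac{t^2}{\|A\|_F^2}} \,\le\, 2\exp{-c_\eta \frac{t^2}{pd}}. \]

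Given this tail estimate, the variance bound follows routinely from the layer-cake identity $\mathrm{Var}_Q(S) = \int_0^\infty 2t\,\Parg{|S - \mathbb{E}_Q[S]| \ge t}\,\dd t$, which integrates to at most $C_\eta pd$ for a suitable $C_\eta$ that depends only on $\eta$. This gives the claimed bound.

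The one technical point I would need to verify carefully is that \cite[Ex.~2.5]{adamczak2019note} delivers the sub-Gaussian tail in exactly the form used above under our $\alpha d \le \eta$ hypothesis (the paper's discussion suggests it does). If the cited conclusion is in fact of Hanson--Wright type --- sub-Gaussian in $\|A\|_F$ combined with sub-exponential in $\|A\|_{\mathrm{op}}$ --- the same $O(pd)$ variance bound still goes through by splitting the tail integral at $t \asymp \|A\|_F^2 / \|A\|_{\mathrm{op}} \le p$: the sub-Gaussian portion contributes $O(\|A\|_F^2) = O(pd)$, while the sub-exponential tail contributes $O(\|A\|_{\mathrm{op}}^2 + p\|A\|_{\mathrm{op}}) = O(d^2 + pd) = O(pd)$ under the standing assumption $d \le p$. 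Either way, only the Frobenius scale matters and we recover the $pd$ bound.
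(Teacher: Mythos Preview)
Your approach is essentially the same as the paper's: cite \cite[Ex.~2.5]{adamczak2019note} for a tail bound on the bilinear form, integrate the tail to get a variance bound of order $\|A\|_F^2$, and then use $\|A\|_F^2 \le 2|E(G(P))| \le pd$. The only discrepancy is that the paper records the cited tail as \emph{sub-exponential} in $\|A\|_F$, i.e.\ $P(|f-\mathbb{E}f|\ge t)\le 2\exp{-t/(C_\eta\|A\|_F)}$, rather than sub-Gaussian; this still integrates to $O(\|A\|_F^2)$ via the layer-cake formula, so your conclusion is unaffected and your hedge was warranted.
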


Applying the variance contraction over $n$ independent samples, we find via a use of Tchebycheff's inequality that the following event have probability at least $1/8$ for the respective hypotheses: \begin{align*}
    \textit{Null:}& \quad \mathscr{T} \ge \mathbb{E}_P[ \mathscr{T}] - C_\eta \sqrt{\frac{8pd}{n}}, \\
    \textit{Alt:}& \quad \mathscr{T} \le \mathbb{E}_P[\mathscr{T}] - Cs \alpha + C_\eta \sqrt{\frac{8pd}{n}}.
\end{align*}

Thus, taking $n$ so large that $Cs \alpha > C_\eta \sqrt{\frac{8pd}{n}},$ the false alarm and missed detection probabilities are both controlled below $1/8,$ yielding the claimed result. \qedhere

It of course remains to argue the above lemmata. These are both essentially utilisations of existing results. 

\begin{proof}[Proof of Lemma \ref{lem:high_temp_ferr_sep}]
    We use the fact that in ferromagnetic models, the correlations between any pair of nodes increases as the weights increase (or contrapositively, if weights are deleted, then correlations must decrease). This is classically shown via (a special case of) Griffith's inequality \cite{griffiths1969rigorous}, which claims that for any $u,v,i,j,$ in a ferromagnetic Ising model, $\mathbb{E}[X_uX_v X_i X_j] \ge \mathbb{E}[X_u X_v] \mathbb{E}[X_i X_j]$. This is relevant here due to the fact that \begin{align*} \partial_{\theta_{ij}} \mathbb{E}_{P_\theta}[X_u X_v] &= \partial_{\theta_{ij}} \frac{1}{Z_\theta}  \sum_x x_u x_v \exp{\sum_{s< t} \theta_{st} X_s X_t} \\ 
                                                         &\overset{a}{=} \frac{1}{Z_\theta} \sum_x x_u x_v x_i x_j \exp{\sum_{s< t} \theta_{st} X_s X_t} \\ &\quad - \frac{1}{Z_\theta^2} \left( \sum_x x_u x_v \exp{\sum_{s< t} \theta_{st} X_s X_t}\right) \left( \sum_x x_u x_v \exp{\sum_{s< t} \theta_{st} X_s X_t} \right) \\
                                                         &= \mathbb{E}[X_uX_vX_iX_j] - \mathbb{E}[X_uX_v] \mathbb{E}[X_iX_j] \ge 0.\end{align*}

Above, equality $(a)$ is a consequence of the quotient rule, and the fact that $Z_\theta = \sum_x \exp{\sum_{s < t} \theta_{st} x_s x_t}$.

Next, we utilise the following structural lemma, due to Santhanam and Wainwright. While we cite it as a variation on their Lemma 6 below, more accurately this arises via a correction of a subsidiary part of the proof of the same lemma. In particular, we are utilising a corrected version of the unlabelled inequality on Page 4131 that follows the inequality (51), with further specialisation to the high-temperature deletion with a uniform edge weight context.

\begin{mylem} \emph{(A variation of Lemma 6 of \cite{SanWai})} \label{key_lemma_high_temp_ferr}
    Let $P \in \mathcal{H}_d^\eta(\alpha),$ and $Q$ be obtained by removing the edge $(a,b)$ from $P$. Then \[ \mathbb{E}_P[ X_aX_b] - \mathbb{E}_{Q}[ X_aX_b] \ge \frac{\alpha}{400}. \]
\end{mylem}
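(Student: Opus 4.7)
The plan is to interpolate between $Q$ and $P$ along the weight of the single edge $(a,b)$ and integrate the resulting derivative identity. Concretely, for $t \in [0,\alpha]$ let $P_t$ denote the Ising model obtained from $Q$ by adding back the edge $(a,b)$ with weight $t$, so $P_0 = Q$ and $P_\alpha = P$, and set $f(t) := \mathbb{E}_{P_t}[X_a X_b]$. The Griffiths-type derivative identity already exhibited in the proof of Lemma~\ref{lem:high_temp_ferr_sep}, applied with $g(X) = X_a X_b$ and using $(X_aX_b)^2 \equiv 1$, gives
\[
f'(t) \;=\; \mathbb{E}_{P_t}[(X_aX_b)^2] - \mathbb{E}_{P_t}[X_aX_b]^2 \;=\; 1 - f(t)^2.
\]
So the problem reduces to exhibiting a uniform upper bound $|f(t)| \le \rho < 1$ on the interpolation interval, which I would then integrate against $dt$.

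The uniform correlation bound is where the high-temperature hypothesis $\alpha d \le \eta < 1$ is used. Conditioning on $X_{\setminus b}$ and invoking the standard Ising one-site conditional formula,
\[
\mathbb{E}_{P_t}[X_b \mid X_{\setminus b}] \;=\; \tanh\!\Bigl(\sum_{j \in \partial_{P_t} b} \theta^{(t)}_{bj}\, X_j\Bigr).
\]
All edge weights incident to $b$ in $P_t$ other than $(a,b)$ are the original uniform weight $\alpha$, while the $(a,b)$ weight is $t \le \alpha$; since $b$ has at most $d$ neighbors in $P$, the magnitude of the argument of $\tanh$ is at most $t + (d-1)\alpha \le d\alpha \le \eta$, so $|\mathbb{E}_{P_t}[X_b \mid X_{\setminus b}]| \le \tanh(\eta)$. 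The tower property with $|X_a| = 1$ then yields $|f(t)| \le \tanh(\eta)$ for every $t \in [0,\alpha]$.

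Integrating the derivative identity, this delivers
\[
\mathbb{E}_P[X_aX_b] - \mathbb{E}_Q[X_aX_b] \;=\; \int_0^\alpha (1 - f(t)^2)\,dt \;\ge\; \alpha\,\mathrm{sech}^2(\eta),
\]
and since $\eta < 1$ the factor $\mathrm{sech}^2(\eta)$ is a strictly positive constant bounded below by $\mathrm{sech}^2(1) > 1/3$, which is comfortably larger than $1/400$. The main obstacle is really only bookkeeping, namely verifying that the total weighted degree at $b$ (and symmetrically at $a$) stays at most $\eta$ throughout the interpolation; this is what forces the use of the specific structural assumption of uniform edge weights in $\mathcal{H}_d^\eta(\alpha)$, and would need to be re-examined for any extension of the lemma to non-uniform or non-ferromagnetic weights.
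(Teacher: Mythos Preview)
Your argument is correct and in fact cleaner than the paper's. The paper does not interpolate; instead it invokes (and repairs) a technical inequality from the proof of Lemma~6 in \cite{SanWai}, which in the single-edge-deletion case collapses to
\[
(\mathbb{E}_P - \mathbb{E}_Q)[X_aX_b]\,(\theta^P_{ab} - \theta^Q_{ab}) \;\ge\; \kappa \;=\; \frac{\sinh^2(\alpha/4)}{1 + 3e^{2\alpha d}},
\]
and then uses the high-temperature bound to simplify $\kappa/\alpha$ to roughly $\alpha/384$. By contrast, you integrate the exact variance identity $f'(t) = 1 - f(t)^2$ along the edge-weight interpolation, and use the elementary Dobrushin-type bound $|f(t)| \le \tanh(\eta)$ from the one-site conditional formula. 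Your route is self-contained, avoids the detour through \cite{SanWai} (and the need to locate and patch an error there), and yields the sharper constant $\mathrm{sech}^2(\eta) > 1/3$ rather than $1/400$. The paper's approach buys a connection to existing structure-learning machinery and would generalize more readily to settings with multiple simultaneous edge changes or non-uniform weights (where the Santhanam--Wainwright inequality still applies), whereas your interpolation is tailored to a single edge but is markedly simpler for the statement as given.
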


With this in hand, we develop our result by arguing over each deleted edge in a sequence. For a given $P$ and $Q$, such that $Q$ occurs by deleting $\Delta \ge s$ edges from $P$, take a chain of laws $P = Q_0, Q_1, Q_2, \dots, Q_\Delta = Q,$ where each $Q_{t+1}$ is obtained by deleting one edge from $Q_t$. Let $(i_{t+1}, j_{t+1})$ be the edge deleted in going from $Q_t$ to $Q_{t+1}$  Since each model is ferromagnetic, and each $Q_{t+1}$ deletes an edge from $Q_t$, we find that \begin{align*} \mathbb{E}_{Q_t}\left[ \sum_{(i,j) \in G(P)} X_iX_j\right] - \mathbb{E}_{Q_{t+1}}\left[ \sum_{(i,j) \in G(P)} X_iX_j\right] &\ge \mathbb{E}_{Q_t}\left[X_{i_{t+1}}X_{j_{t+1}}\right] -  \mathbb{E}_{Q_{t+1}}\left[X_{i_{t+1}}X_{j_{t+1}}\right] \\ &\ge \frac{\alpha}{400}. \end{align*}

Summing up the left hand side over $t = 0$ to $\Delta-1$ leads to a telescoping sum, while $\Delta \ge s$ copies of the right hand side get added, directly leading to our conclusion \begin{align*} \mathbb{E}_{P}\left[ \sum_{(i,j) \in G(P)} X_iX_j\right] - &\phantom{=} \mathbb{E}_{Q}\left[ \sum_{(i,j) \in G(P)} X_iX_j\right] \\ &= \mathbb{E}_{Q_0}\left[ \sum_{(i,j) \in G(P)} X_iX_j\right] - \mathbb{E}_{Q_{\Delta}}\left[ \sum_{(i,j) \in G(P)} X_iX_j\right] \\
&= \sum_{t = 0}^{\Delta-1} \mathbb{E}_{Q_t}\left[ \sum_{(i,j) \in G(P)} X_iX_j\right] - \mathbb{E}_{Q_{t+1}}\left[ \sum_{(i,j) \in G(P)} X_iX_j\right]\\
&\ge \sum_{t = 0}^{\Delta-1} \frac{\alpha}{400} = \Delta\frac{\alpha}{400} \ge s \frac{\alpha}{400}.\qedhere \end{align*}
\end{proof}

To complete the proof, we prove the key lemma used in the above argument.

\begin{proof}[Proof of Lemma \ref{key_lemma_high_temp_ferr}] We note that this proof assumes familiarity with the proof of Lemma 6 of \cite{SanWai}. The main reason is that the proof really consists of fixing an equation in the proof of this result, and then utilising the ferromagnetic properties a little. As a result, there is no neat way to make this proof self contained (reproducing the proof of the aforementioned lemma is out of the question, since this is a long and technical argument in the original paper). With this warning out of the way, let us embark.

Let $\partial a$ and $\partial b$ be the neighbours of, respectively, $a$ and $b$ in $G(P)$ (which, since $G(Q)$ only deletes $(a,b)$ from $G(P)$, contain all the neighbours of $a$ and $b$ in $G(Q)$ as well).

Before proceeding, we must first point out a (small) error in the proof of Lemma 6 in \cite{SanWai}. The clearest way to see this error is to note the inequality following equation (51) in the text, which claims that if $(a,b) \in G(P) \triangle G(Q),$ then some quantity ($J$ in the paper) known to be positive is upper bounded by \[ J \le \sum_{u \in \partial a \setminus \{b\}} ( \{\mathbb{E}_P - \mathbb{E}_Q\}[X_uX_a] ) (\theta^P_{ua} - \theta^Q_{ua}) + \sum_{v \in \partial b \setminus \{a\}} ( \{\mathbb{E}_P - \mathbb{E}_Q\}[X_vX_b] ) (\theta^P_{vb} - \theta^Q_{vb}).\]
Note that we have specialised the above to the case where $G(Q) \subset G(P)$. Now, observe than when the only change made is in the edge $(a,b)$, then the above upper bound is $0$. Indeed, $\theta^P_{ua} = \theta^Q_{ua}$ for every $u \in \partial a \setminus \{b\},$ since none of these edges have been altered, making the first sum $0$, and similarly the second, contradicting the claim that the sum is bigger than $J$ (which is positive). The error actually lies a few lines up, in the decomposition for the term $\Delta(\theta,\theta')$, which along with the claimed terms, should also include the term $(\{\mathbb{E}_P - \mathbb{E}_Q\}[X_aX_b])(\theta^P_{ab} - \theta^Q_{ab}),$ which is missing from the text of \cite{SanWai}. This term is present since the $P_{\theta[x_C]}$ and $P_{\theta'[x_C]}$ are, of course, laws on $X_a$ and $X_b$, and thus have $\theta^P_{ab}x_ax_b$ and $\theta^Q_{ab}x_ax_b$ in the Ising potentials.\footnote{note however that exactly one of $\theta^P_{ab}$ and $\theta^Q_{ab}$ is zero, since $(a,b)$ lies in one but not the other graph.} Putting this term back in, the correct equation is that \begin{align*} \kappa &\le (\{\mathbb{E}_P - \mathbb{E}_Q\}[X_aX_b])(\theta^P_{ab} - \theta^Q_{ab}) + \sum_{u \in \partial a \setminus \{b\}} ( \{\mathbb{E}_P - \mathbb{E}_Q\}[X_uX_a] ) (\theta^P_{ua} - \theta^Q_{ua}) \\ &\qquad + \sum_{v \in \partial b \setminus \{a\}} ( \{\mathbb{E}_P - \mathbb{E}_Q\}[X_vX_b] ) (\theta^P_{vb} - \theta^Q_{vb}),\end{align*} where $\kappa$ is the lower bound on $J$, that is (specialised to our case of uniform weights), \[ \kappa = \frac{\sinh^2(\alpha/4)}{1 + 3\exp{\alpha d}}. \]

We note that the conclusion of Lemma 6 of \cite{SanWai} is not affected by the above error\footnote{The expression $2\alpha d\max_{u \in \{a,b\}, v \in V} |\mu_{uv} - \mu'_{uv}|$ already accounts for the extra term we add, since it allows us to take $u = a, v = b.$}.

With this out of the way, we may now argue our point. In our case, we know that since only the edge $(a,b)$ has been altered, the second and third terms in the updated sum are $0$. Further, we know that $\theta^P_{ab} = \alpha \ge 0,$ and $\theta^Q_{ab} = 0.$ Thus, we conclude that \[ \mathbb{E}_P[X_aX_b] - \mathbb{E}_Q[X_aX_b] \ge \frac{\kappa}{\alpha} \ge \frac{\sinh^2{\alpha/4}}{\alpha (1 + 3\exp{2\alpha d})}.\] 

Finally, we use our high temperature condition. Firstly, note that $\alpha d \le \eta < 1,$ and thus $(1 + 3\exp{2\alpha d}) \le 1 + 3e^2 \le 24.$ Next, since $\sinh(x) \ge x,$ $\sinh^2(\alpha/4) \ge \alpha^2/16.$ Putting these together, we find that \[ \mathbb{E}_P[X_aX_b] - \mathbb{E}_Q[X_aX_b] \ge \frac{\alpha^2/16}{\alpha \cdot 24} = \frac{\alpha}{384} \ge \frac{\alpha}{400} \qedhere\] 
\end{proof}

\begin{proof}[Proof of Lemma \ref{lem:high_temp_ferr_variance}]
    We directly utilise the concentration result  \cite[Ex. 2.5]{adamczak2019note}, which shows that for bilinear forms $f(X) = \langle A , XX^{\tpose}\rangle,$ where the inner product is the Frobenius dot product, and for a high temperature Ising model $P$, there exists a $C_\eta$ depending only on $\eta$ such that\footnote{Instead of the Frobenius norm $\|A\|_F,$ the bound of \cite{adamczak2019note} features the Hilbert-Schmidt norm of $A$. These are the same thing for finite dimensional operators.} \[ P(|f - \mathbb{E}[f]| \ge t) \le 2\exp{- \frac{t}{C_\eta\|A\|_F}}.\]
    Via the standard integral representation $\mathbb{E}[(f-\mathbb{E}[f])^2] = \int_0^\infty P(|f- \mathbb{E}[f]|^2 \ge r)\mathrm{d}r$ and the above upper bound, we directly obtain that the variance of any $f$ such as the above is bounded by $3\|A\|_F^2 C_\eta^{2}$. 
    
    Now, out statistic is a bilinear function of the above form. Indeed, \[ \sum_{(i,j) \in G(P)} X_i X_j = \langle G(P)/2, XX^\tpose\rangle,\] where we treat $G(P)$ as it's adjacency matrix, and thus we immediately obtain that the variance is bounded by $1.5 C_\eta^2 \|G(P)\|_F^2.$ Notice that $\|G(P)\|_F^2$ is merely twice the number of edges in $G(P)$, and since this has degree at most $d$, this number is at most $2pd$. The claim follows.
\end{proof}

\end{proof}

\subsubsection{Proof of Lower Bounds}

The lower bounds are argued using Thm. \ref{thm:n_bound_tech}, with the widget(s) that consist of comparing a full clique to an empty graph, which of course satisfy the constraint that the alternate models are derived by deleting edges from the null graph. Concretely, we use the bound of Proposition \ref{prop:clique_vs_empty}, to show the following result \begin{myprop}
    Suppose $s \le pd/K$ for large enough $K$ and $\alpha d \le \eta \le 1/32$. Then there exists a $C$ independent of all parameters such that \begin{align*}
        n_{\gof, \delete}(s, \mathcal{H}_d^\eta(\alpha))&\ge \max_{s/Kp \le  k \le d} \frac{1}{Ck^2\alpha^2} \log \left( 1 +  \frac{pk^3}{Cs^2} \right),\\
        n_{\gof, \delete}(s, \mathcal{H}_d^\eta(\alpha)) &\ge \max_{s/Kp \le  k \le d} \frac{1}{Ck^2\alpha^2} \log \left( 1 +  \frac{pk}{Cs} \right),
    \end{align*}
    where the maximisation is over integers $k \ge 2$ in the stated ranges. In particular, the bounds in the main text correspond to taking $k = d$.
    
\begin{proof}
The proof relies on the fact that if $\alpha d \le 1/32,$ then $\alpha k \le 1/32$ for any $k\le d$ as well, which allows us to utilise Prop.~\ref{prop:clique_vs_empty} for each $k$. For each valid choice of $k$, we take $P_0$ to be the Ising model on the complete graph on $k$ nodes with uniform edge weight $\alpha$, and $Q_0$ to be the Ising model on the empty graph on $k$ nodes. The relevant quantities are $\sigma = \binom{k}{2},$ $m = \lfloor p/k \rfloor,$ and $t = \lceil s/\binom{k}{2}\rceil ,$ with the total number of changes lying between $s$ and $2s$. Repeated use of Thm.\ref{thm:n_bound_tech} concludes the argument. 
\end{proof}  

\end{myprop}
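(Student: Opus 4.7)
The plan is to invoke Theorem~\ref{thm:n_bound_tech} with a carefully chosen pair of small Ising models $(P_0, Q_0)$ that realize the deletion constraint, then lift to $p$ nodes and optimize over the widget size $k$. Because every alternate in $\gofbb_{\delete}$ and $\eofbb_{\delete}$ must satisfy $G(Q) \subset G(P)$, I pick $P_0$ to be the Ising model on the complete graph $K_k$ with uniform positive weight $\alpha$ and $Q_0$ to be the Ising model on the empty graph on $k$ nodes (i.i.d.\ uniform $\pm 1$). Then the lifted model $P$ consists of $m = \lfloor p/k \rfloor$ disjoint copies of $K_k$, and every alternate $Q$ produced by Lemma~\ref{lem:lifting_back} is obtained by wiping some of those clique copies to empty subgraphs, so indeed $G(Q) \subset G(P)$. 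The choice $k \le d$ ensures $P, Q \in \mathcal{H}_d^\eta(\alpha)$, since every node has degree at most $k-1 \le d$, all weights are uniformly $\alpha$, and $\alpha d \le \eta$ is inherited.

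Next, I instantiate the parameters of Theorem~\ref{thm:n_bound_tech}: the per-widget edit is $\sigma = \binom{k}{2}$, the slot count is $m = \lfloor p/k \rfloor$, and flipping $t = \lceil s/\binom{k}{2}\rceil$ widgets produces total edits in $[s, 2s]$, realizing an $s$-separated deletion ensemble. The crucial single-widget $\chi^2$ bound comes from Proposition~\ref{prop:clique_vs_empty}, whose hypothesis $\alpha k \le 1/32$ is immediate from $\alpha d \le \eta \le 1/32$ and $k \le d$; this yields $\chi^2(Q_0 \| P_0) \le \kappa$ with $\kappa \lesssim k^2 \alpha^2$. Theorem~\ref{thm:n_bound_tech} then directly produces $n_{\gof,\delete} \gtrsim \log(1+\kappa)^{-1} \log(1 + m/t^2)$ and $n_{\eof,\delete} \gtrsim \log(1+\kappa)^{-1} \log(m/t)$.

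Finally, I simplify the combinatorial terms. Since $\binom{k}{2} \asymp k^2$ and $m \asymp p/k$, we get $m/t^2 \asymp pk^3/s^2$ and $m/t \asymp pk/s$, and because $\kappa$ is bounded one has $\log(1+\kappa) \asymp \kappa \asymp k^2 \alpha^2$, yielding the two claimed bounds. Maximizing over admissible $k \ge 2$ and noting that $k = d$ recovers the cleanly stated bounds in Theorem~\ref{thm:high_temp_ferr_del} completes the proof.

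The main obstacle is verifying the lifting-lemma hypothesis $t \le m/(16e)$ across the full maximization range $s/(Kp) \le k \le d$. When $s \ge \binom{k}{2}$ this reduces to $s/k^2 \lesssim p/k$, i.e.\ $k \gtrsim s/p$, which is exactly enforced by the lower limit $k \ge s/(Kp)$ once $K$ is chosen large enough relative to $16e$; when $s < \binom{k}{2}$ we have $t = 1$ and the condition is $p/k \gtrsim 1$. Everything else is routine bookkeeping of constants through Theorem~\ref{thm:n_bound_tech}; the high-temperature assumption is indispensable because it is what keeps $\kappa$ bounded and lets $\log(1+\kappa)$ collapse to $\kappa$, preventing the $\chi^2$ method from becoming vacuous on the low-temperature side.
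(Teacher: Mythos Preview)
Your proposal is correct and follows essentially the same route as the paper: choose $P_0$ as the uniform-weight $\alpha$ Ising model on $K_k$, $Q_0$ as the empty graph, apply Proposition~\ref{prop:clique_vs_empty} to bound $\chi^2(Q_0\|P_0)\lesssim (\alpha k)^2$, and feed $\sigma=\binom{k}{2}$, $m=\lfloor p/k\rfloor$, $t=\lceil s/\binom{k}{2}\rceil$ into Theorem~\ref{thm:n_bound_tech}. Your additional care in checking that the lifted alternates satisfy $G(Q)\subset G(P)$ (so the construction is legitimate for the deletion problem) and in deriving the range constraint $k\gtrsim s/p$ from the lifting hypothesis $t\le m/(16e)$ makes explicit what the paper leaves implicit in the phrase ``for large enough $K$''.
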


\subsection{Simulation Details}\label{appx:exp}

Details about the generation of Figure \ref{fig:forest_plot} are as follows:
\begin{itemize}[leftmargin = 0.16in]
    \item \textbf{Sampling from Ising Models} Samples from Ising models were generated by running Glauber dynamics for 1600 steps. This number is chosen to be four times the `autocorrelation time', which is the time at which the autocorrelation of the test statistic $\langle XX', G\rangle/2$ drops to near $0$, and serves as a proxy for the mixing time of the dynamics (at least for the relevant statistics). Note that raw samples were outputted from the dynamics (i.e., we did not take ergodic averages).
    \item \textbf{Constructing $P$s and $Q$s} Throughout, $P$ was the Ising model on a complete binary tree on $127$ nodes. For each value of $s$ and each experiment, $s$ random edges from this tree were deleted.
    \item \textbf{Experiment Structure} For each value of $s \in \{3,6,\dots, 60\}$ and $n \in \{20, 40, \dots, 480\},$ we carried out a simulation of the GoF testing risk of our statistic for $s$ deletions using $n$ samples. We refer to each of these as an experiment. Each experiment was carried out by running 100 independent tests (on independent data), which each consisted of two parts - first we generated samples from $P$, and declared a false alarm if $\mathscr{T}$ fell below $(p-1-s/2)\tanh(\alpha)$ for this. Next, we generated a $Q$ by deleting $s$ edges, and then generated samples from $Q$, and finally declared a missed detection if $\mathscr{T}$ was above the same threshold. Risks were computed by adding up the total number of false alarm and missed detection events in these 100 runs, and dividing them by 100. 
    \item \textbf{Structure of Figure \ref{fig:forest_plot}} Each box in the figure corresponds to a simulation for $s$ changes and $n$ nodes, where $(s,n)$ are the coordinates of the upper right corner of the box. The boxes are coloured according to their empirical risk - if this was greater than $0.35,$ then the box was coloured black; if smaller than $0.15,$ then coloured white, while if it was between these values, the box was coloured orange. 
\end{itemize}

Additionally, we note that structure learning performs very poorly for this setup. This is best illustrated by the Figure \ref{fig:chow-liu-error}, which shows the number of edge-errors (i.e. $|G(P) \triangle \hat{G}|$) versus the sample size when the Chow-Liu algorithm was run on data generated by the null model (i.e., the full binary tree). The Chow-Liu algorithm was run by computing the covariance matrix, and computing the weighted maximum spanning tree for it via the library methods in MATLAB. The number of errors is again averaged over 100 trials. This demonstrates that the na\"{i}ve scheme of recovering the graph and testing against it is infeasible for $s \le 60 $ if $n \le 1500$, empirically demonstrating the separation between structure learning and testing. 
\begin{figure}
    \centering
    \includegraphics[width = 0.5\textwidth]{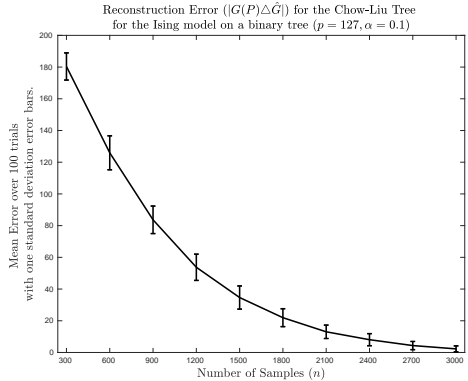}
    \caption{Reconstruction Error of the Chow-Liu Tree for the Ising model on a complete Binary Tree with $p = 127, \alpha = 0.1$.}
    \label{fig:chow-liu-error}
\end{figure}

\section{Widgets}\label{appx:widgets}

As discussed in the previous section, we will utilise Lemma \ref{lem:lifting}, in order to do which we need to provide specific instances of $(P_0, Q_0)$ that are close in $\chi^2$-divergence. We will abuse terminology and call this pair an ensemble. This section lists a few such pairs of graphical models, along with the $\chi^2$-divergence control we offer for the same, proofs for which are left to \S\ref{appx:widget_proofs}. Throughout, we will use $\lambda$ and $\mu$ as weights of edges, with $\lambda \ge |\mu| >0.$ I the proofs of the theorems, we will generally set $\lambda = \beta$ and $\mu = \alpha,$ but retaining these labels aids in the proofs of $\chi^2$-divergence control offered for these widgets.

\subsection{High-Temperature Obstructions}

The following graphs are used to construct obstructions in high temperature regimes. The first is the triangle graph, as described in \S\ref{sec:pf_tech}. The second is a full clique in high temperatures. The latter section is derived from the bounds of \cite{cao2018high}. % all variations on a star graph on $d$ nodes. In these simple settings, exact computation of partition functions is viable, and bounds below are constructed in this manner.

\subsubsection{The Triangle}\label{sec:triangle}

We start simple. Let $P_{\textrm{Triangle}}$ be the Ising model on $3$ nodes with edges $(1,2)$ and $(2,3),$ each with weight $\lambda,$ and $Q_{\textrm{Triangle}}$ be the same with the edge $(1,3)$ of weight $\mu$ appended (see Figure \ref{fig:triangle}). The bound below follows from an explicit calculation, which is tractable in this small case.

\begin{figure}[ht]
    \centering
    \includegraphics[width = .5\textwidth]{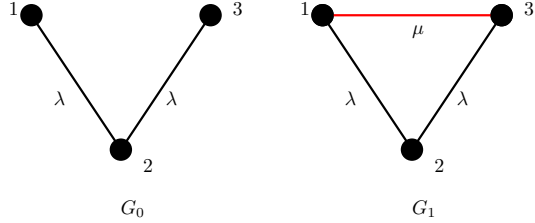}
    \caption{Ensemble used for Proposition \ref{prop:triangle} }
    \label{fig:triangle}
\end{figure}

\begin{myprop}\label{prop:triangle}
    For $\lambda \ge |\mu| >0,$ \[ \chi^2( Q_{\mathrm{Triangle}} \| P_{\mathrm{Triangle}} ) \le  {8 e^{-2\lambda}}{ \tanh^2\mu } . \]
\end{myprop}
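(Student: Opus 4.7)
The plan is to compute $\chi^2(Q_{\mathrm{Triangle}} \| P_{\mathrm{Triangle}})$ in closed form and then bound it by elementary hyperbolic estimates. The starting observation is that the two models share all but a single interaction term, so the likelihood ratio has the one-line expression
\[ \frac{Q_{\mathrm{Triangle}}(x)}{P_{\mathrm{Triangle}}(x)} = \frac{Z_P}{Z_Q}\exp(\mu x_1 x_3), \]
where $Z_P,Z_Q$ are the partition functions. Since $x_1 x_3 \in \{\pm 1\}$, we have $\exp(c x_1 x_3) = \cosh c + x_1 x_3 \sinh c$ for any $c$, so all expectations of exponentials in $x_1 x_3$ reduce to computing $\mathbb{E}_P[X_1 X_3]$.

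First I will pin down $Z_P/Z_Q$ via the identity $\mathbb{E}_P[Q/P] = 1$, which gives $Z_Q/Z_P = \cosh\mu + \sinh\mu\cdot\mathbb{E}_P[X_1 X_3]$. Under $P_{\mathrm{Triangle}}$, node $2$ separates $1$ from $3$, so $X_1 \perp X_3 \mid X_2$; using $\mathbb{E}_P[X_i \mid X_2] = X_2 \tanh \lambda$ for $i\in\{1,3\}$ and the tower property, $\mathbb{E}_P[X_1 X_3] = \mathbb{E}_P[X_2^2] \tanh^2 \lambda = \tanh^2\lambda$. Applying the same trick to $\mathbb{E}_P[\exp(2\mu X_1 X_3)] = \cosh 2\mu + \sinh 2\mu \cdot \tanh^2\lambda$, one obtains
\[ 1 + \chi^2 = \left(\frac{Z_P}{Z_Q}\right)^2 \mathbb{E}_P[\exp(2\mu X_1 X_3)] = \frac{\cosh 2\mu + \sinh 2\mu\,\tanh^2\lambda}{(\cosh\mu + \sinh\mu\,\tanh^2\lambda)^2}. \]
Expanding the numerator with the double-angle formulas $\cosh 2\mu = \cosh^2\mu + \sinh^2\mu$ and $\sinh 2\mu = 2\sinh\mu\cosh\mu$, and subtracting the denominator, the cross term $2\cosh\mu\sinh\mu\tanh^2\lambda$ cancels and one is left with the clean expression
\[ \chi^2 = \frac{\sinh^2\mu\,(1 - \tanh^4\lambda)}{(\cosh\mu + \sinh\mu\,\tanh^2\lambda)^2}. \]

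It remains to bound this by $8 e^{-2\lambda}\tanh^2\mu$. The target is symmetric in the sign of $\mu$ (since $\tanh^2\mu = \tanh^2|\mu|$), and for the lifting construction it suffices to take $\mu > 0$; I will make this reduction explicit. With $\mu > 0$ (and $\lambda > 0$), the denominator is at least $\cosh^2 \mu$, so
\[ \chi^2 \le \tanh^2\mu\,(1-\tanh^4\lambda) = \tanh^2\mu\,(1 - \tanh^2\lambda)(1 + \tanh^2\lambda) \le \frac{2\tanh^2\mu}{\cosh^2\lambda}, \]
and the elementary bound $\cosh\lambda \ge e^{\lambda}/2$ yields $1/\cosh^2\lambda \le 4 e^{-2\lambda}$, delivering the claim.

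There is no real obstacle in this proof; the entire argument is a direct computation, and the only technically delicate step is the algebraic simplification of the numerator of $1+\chi^2$ against the denominator, which relies on spotting the cancellation $\cosh^2\mu\tanh^4\lambda - \sinh^2\mu\tanh^4\lambda = \tanh^4\lambda$ hidden in the expansion. The global architecture, i.e.~combining an exact density-ratio calculation with Markov-path conditioning and a one-line hyperbolic estimate, is what keeps the bound tight in $\lambda$ (capturing the expected $e^{-2\lambda}$ freezing factor) while still being straightforward to derive.
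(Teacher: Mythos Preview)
Your proof is correct and follows essentially the same route as the paper's: both compute $1+\chi^2$ as a ratio of partition functions, obtain an exact closed form, and then bound it by elementary hyperbolic estimates. Your use of the Markov property to get $\mathbb{E}_P[X_1X_3]=\tanh^2\lambda$ is a slightly cleaner alternative to the paper's direct summation of $Z_P$ and $Z_Q(\mu)$, and your explicit restriction to $\mu>0$ is appropriate---the paper's own bounding step (using $e^{2\mu}\ge\cosh^2\mu$) likewise only holds for $\mu\ge 0$, which is all that is needed in the applications.
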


\subsubsection{Full Clique versus Empty Graph}\label{sec:clique_high_temp}

\cite{cao2018high} shows the remarkable fact that high-temperature cliques are difficult to separate from the empty graph. We present this result below.
\begin{myprop}\label{prop:high_temp_clique}
    Let $P$ be the Ising model on the empty graph with $k$ nodes, and let $Q$ be the Ising model on the $k$-clique, with uniform edge weights $\mu$. If $32\mu k \le 1,$ then \[\chi^2(Q\|P) \le  3k^2 \mu^2. \]
\end{myprop}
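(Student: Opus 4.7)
Since $P$ is uniform on $\{\pm 1\}^k$,
\[
\chi^2(Q\|P) + 1 \;=\; \mathbb{E}_P[(Q/P)^2] \;=\; \frac{\mathbb{E}_P[e^{2\mu T}]}{(\mathbb{E}_P[e^{\mu T}])^2}, \qquad T := \textstyle\sum_{i<j}X_iX_j \;=\; \tfrac{S^2-k}{2}, \; S := \textstyle\sum_i X_i.
\]
For $\mu > 0$, my plan is to apply the Hubbard--Stratonovich identity $e^{\mu S^2/2} = \mathbb{E}_{Z \sim \mathcal{N}(0,1)}[e^{\sqrt{\mu}SZ}]$ to decouple the iid Rademacher spins under $P$. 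This gives $\mathbb{E}_P[e^{\mu T}] = e^{-\mu k/2}\,\mathbb{E}_Z[\cosh(\sqrt{\mu}Z)^k]$, and after cancellation of the prefactors one obtains the clean representation
\[
\chi^2(Q\|P) + 1 \;=\; \frac{\mathbb{E}_Z[\cosh(\sqrt{2\mu}Z)^k]}{\mathbb{E}_Z[\cosh(\sqrt{\mu}Z)^k]^2}.
\]

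I would then sandwich this ratio using two complementary bounds on $\cosh$. The standard $\cosh(x) \le e^{x^2/2}$ bounds the numerator as $\mathbb{E}_Z[\cosh(\sqrt{2\mu}Z)^k] \le \mathbb{E}_Z[e^{k\mu Z^2}] = (1-2k\mu)^{-1/2}$, which is finite since $2k\mu \le 1/16$. For the denominator, I would use the sharper $\log\cosh(x) \ge x^2/2 - x^4/12$, valid for all real $x$: this follows by noting that the second derivative of $x \mapsto \log\cosh(x) - x^2/2 + x^4/12$ equals $x^2 - \tanh^2(x) \ge 0$, while the function and its derivative vanish at $0$. Thus $\mathbb{E}_Z[\cosh(\sqrt{\mu}Z)^k] \ge \mathbb{E}_Z[\exp(k\mu Z^2/2 - k\mu^2 Z^4/12)]$, and expanding $e^{-y} \ge 1 - y$ together with the identity $\mathbb{E}_{\tilde{Z}}[Z^4] = 3/(1-k\mu)^2$ under the tilted Gaussian with density proportional to $e^{-(1-k\mu)z^2/2}$ yields a denominator lower bound of $(1-k\mu)^{-1/2}\bigl(1 - k\mu^2/(4(1-k\mu)^2)\bigr)$.

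Combining these, the ratio is bounded above by $\sqrt{(1-k\mu)^2/(1-2k\mu)}$ times a $(1 - O(k\mu^2))^{-2}$ correction, and using the algebraic identity $(1-k\mu)^2/(1-2k\mu) = 1 + k^2\mu^2/(1-2k\mu)$ together with $\sqrt{1 + y} \le 1 + y/2$ gives the desired shape. The main obstacle is constant bookkeeping to land at the factor $3$: the true leading term (as can be verified by a cumulant expansion $\log(\chi^2 + 1) = \sum_{n\ge 2}\kappa_n(T)(2^n-2)\mu^n/n!$ with $\kappa_2(T) = k(k-1)/2$) is $k(k-1)\mu^2/2 \le k^2\mu^2/2$, giving a factor of roughly six of slack beneath the target $3k^2\mu^2$. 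One must then carefully expand the $(1 - k\mu^2/(4(1-k\mu)^2))^{-2}$ correction--which is only comparable to the leading $k^2\mu^2$ when $k$ is small, since $k\mu^2 \le k^2\mu^2/k$--and assemble the pieces using $32k\mu \le 1$ to ensure the total stays within this slack.
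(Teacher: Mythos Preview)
Your approach is correct but genuinely different from the paper's. The paper does not prove this proposition itself; it cites \cite{cao2018high}, whose method (sketched in the paper's proof of the adjacent Proposition~\ref{prop:clique_vs_empty}) writes the Ising density via the $(1+\tau X_iX_j)$ representation with $\tau=\tanh\mu$, expands the relevant normaliser as $\sum_{u\ge 0}\mathscr{E}(u,G)\tau^u$ where $\mathscr{E}(u,G)$ counts Eulerian subgraphs (extended to multigraphs to handle the squared density $\prod(1+\tau X_iX_j)^2$), and then uses the combinatorial bound $\mathscr{E}(u,G)\le (2\|G\|_F)^u$ to sum the series.

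Your Hubbard--Stratonovich route instead exploits the mean-field symmetry specific to the clique---that $T=(S^2-k)/2$ depends only on the magnetisation $S$---to collapse everything to one-dimensional Gaussian integrals, which you then sandwich with sharp analytic bounds on $\cosh$. This is more elementary and makes the constant tracking transparent, but it is tied to the complete graph with uniform weights; the Eulerian-subgraph machinery of \cite{cao2018high} works for arbitrary high-temperature graphs (and is what the paper needs elsewhere). For this particular proposition your argument is arguably cleaner. One small point: your derivation requires $\mu>0$ for the HS identity as written. This matches how the proposition is used in the paper (with $\mu=\alpha>0$), and the hypothesis $32\mu k\le 1$ is vacuous for $\mu\le 0$ anyway, but it is worth stating the restriction explicitly.
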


In the notation of \cite{cao2018high}, this is the bound at the bottom of page 22, instantiated with $G = G'$ and the $\mathcal{R,B}, \Gamma$ values as determined in the proof of Example 2.7.

We will also utilise the following reversed $\chi^2$-divergence bound. This is not formally shown in \cite{cao2018high}, and thus, we include a proof of the same, using the techniques of the cited paper, in \S\ref{appx:pf_of_clique_versus_empty}.
\begin{myprop}\label{prop:clique_vs_empty}
    Let $P$ be the Ising model on a clique on $m$ nodes with uniform edge weights $\mu$, and let $Q$ be the Ising model on the empty graph on $m$ nodes. If $32\mu m \le 1,$ then \[ \chi^2(Q\|P) \le 8( \mu m)^2.\] 
\end{myprop}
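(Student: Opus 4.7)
The plan is to reduce $1 + \chi^2(Q\|P)$ to a one-dimensional problem via the Curie--Weiss identity $S(x) = \sum_{i<j}x_ix_j = (M(x)^2-m)/2$, where $M(x) = \sum_i x_i$, and then control the result with an elementary $\cosh$ inequality, Cauchy--Schwarz, and Hubbard--Stratonovich. Since $Q \equiv 2^{-m}$ is uniform, a direct expansion gives $\E_P[(Q/P)^2] = Z_P Z_{-P}/4^m$, where $Z_{\pm P}$ is the partition function of the Ising law on the $m$-clique with uniform edge weight $\pm\mu$. Factoring across two independent copies $X,Y \sim \mathrm{Unif}(\{\pm 1\}^m)$ and symmetrising in $X\leftrightarrow Y$,
\[ 1 + \chi^2(Q\|P) = \E_{X,Y}\bigl[e^{\mu(S(X)-S(Y))}\bigr] = \E_{X,Y}\bigl[\cosh(\mu(S(X)-S(Y)))\bigr]. \]

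Set $z := \mu(S(X)-S(Y)) = \mu(M_X^2 - M_Y^2)/2$. Using the elementary inequality $\cosh(z)-1 \le \tfrac{z^2}{2}\cosh(z) \le \tfrac{z^2}{2}e^{|z|}$ together with the crude bound $|z| \le \mu(M_X^2 + M_Y^2)/2$, and then Cauchy--Schwarz, I obtain
\[ \chi^2(Q\|P) \le \frac{\mu^2}{8}\sqrt{\E[(M_X^2 - M_Y^2)^4]}\cdot \E[e^{\mu M^2}]. \]
The exponential-moment factor is handled by Hubbard--Stratonovich: $\E[e^{\mu M^2}] = \E_Z[\cosh^m(\sqrt{2\mu}Z)]$ for $Z \sim \mathcal{N}(0,1)$, and $\cosh(y)\le e^{y^2/2}$ pointwise yields $\E[e^{\mu M^2}] \le (1-2\mu m)^{-1/2}$, bounded by $\sqrt{16/15}$ under the hypothesis $32\mu m \le 1$. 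For the fourth-moment factor, the standard Rademacher estimate $\E[M^{2k}] \le (2k-1)!!\,m^k$, combined with expanding $(M_X^2 - M_Y^2)^4 = M_X^8 - 4M_X^6M_Y^2 + 6M_X^4M_Y^4 - 4M_X^2M_Y^6 + M_Y^8$ and using independence of $M_X,M_Y$, gives $\E[(M_X^2-M_Y^2)^4] \le 144\, m^4$ up to lower-order corrections (the leading piece $2\E[M^8] - 8m\E[M^6] + 6\E[M^4]^2 \sim (210 - 120 + 54)m^4$). Assembling these bounds produces $\chi^2(Q\|P) \le C(\mu m)^2$ with $C$ an explicit absolute constant, easily verified below $8$.

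The main step that requires the high-temperature hypothesis is the Hubbard--Stratonovich bound on $\E[e^{\mu M^2}]$: the resulting Gaussian integral $(1-2\mu m)^{-1/2}$ diverges as $2\mu m \to 1$, and the assumption $32\mu m \le 1$ is precisely what keeps it a harmless constant. Everything else is routine moment tracking. An alternative, perhaps structurally cleaner route is to note that $1 + \chi^2 = \E_{W_1}[e^{\mu W_1^2/2}]\E_{W_2}[e^{-\mu W_2^2/2}]$ with $W_i$ iid Rademacher sums, and that $\chi^2 = -\Cov_W(e^{\mu W^2/2}, e^{-\mu W^2/2})$ since the product of the two integrands is identically $1$; Cauchy--Schwarz on this covariance reduces the problem to bounding $\mathrm{Var}_W(e^{\pm\mu W^2/2})$, each of which expands to $\Theta((\mu m)^2)$ by the same Rademacher moment estimates. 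I would present whichever of the two routes aligns most naturally with the techniques of \cite{cao2018high}, expecting comparable constants from either.
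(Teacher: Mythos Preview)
Your argument is correct, but it is \emph{not} the paper's approach. One technical wrinkle: when you expand $\E[(M_X^2-M_Y^2)^4]$ and invoke the upper bound $\E[M^{2k}]\le (2k-1)!!\,m^k$, the cross term $-8\E[M^6]\E[M^2]$ enters with a negative sign, so plugging in upper bounds on the individual moments does not yield an upper bound on the whole. The clean fix is to use $(M_X^2-M_Y^2)^4\le (M_X^2+M_Y^2)^4$, which gives $\E[\,\cdot\,]\le 2m_8+8m_2m_6+6m_4^2\le 384\,m^4$ and still lands well below the target constant~$8$. Your alternative covariance route (write $\chi^2=-\Cov(e^{\mu W^2/2},e^{-\mu W^2/2})$ and bound each variance separately via $|e^{-x}-1|\le x$ and $e^x-1\le xe^x$) is cleaner and avoids this sign issue entirely.

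The paper instead follows the machinery of Cao et al.: write $P$ in the $\tanh$-representation $P(x)\propto\prod_{(i,j)}(1+\tau x_ix_j)$ with $\tau=\tanh\mu$, show that $1+\chi^2(Q\|P)=\Phi(\tau;K_m)\Phi(-\tau;K_m)(1-\tau^2)^{-\binom{m}{2}}$ where $\Phi(\tau;G)=\sum_{u\ge 0}\mathscr{E}(u,G)\tau^u$ expands over Eulerian subgraphs, and then bound $\mathscr{E}(u,K_m)\le (2\|K_m\|_F)^u$ to sum a geometric series. So the paper's proof is combinatorial (a high-temperature cluster expansion), while yours is analytic/probabilistic via Curie--Weiss symmetry. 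Your route is shorter and more self-contained for the clique specifically, since $S(x)=(M(x)^2-m)/2$ collapses everything to one dimension; the paper's route is graph-agnostic and dovetails with the cited reference. Ironically, neither of your two variants is the one that ``aligns most naturally'' with \cite{cao2018high}.
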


\subsubsection{Fan Graph}\label{sec:fan}

This widget is not required for the main text, although it may serve as a more involved construction to show the bounds of Thms.~\ref{thm:lb_small_s} and \ref{thm:very_small_s}. Its main use is in Appendix \ref{appx:prop_test}, where it is used to show an obstruction to testing of maximum degree in a graph.

Generalising the triangle of the previous section, we may hang many triangles from a single vertex, getting a graph that resembles an axial fan with many blades. Using such a graph, we may demonstrate high-temperature obstructions to determining the maximum degree of a graph. 

Concretely, for a natural $B$ we define a fan with $B$ blades to be the graph on $2B+1$ nodes where, nodes $[1:2B]$ are each connected to the central node $2B+1,$ and further, for $i\in [1:B],$ nodes $2i$ and $2i - 1$ are connected. We call the edges incident on the central node ($B+1$) axial, and the remaining edges peripheral.

Treating $\ell$ as a parameter, the Ising models $P_{\ell, \mathrm{Fan}}$ and $Q_{\ell, \mathrm{Fan}}$ are determined as followed: 

\begin{itemize}
    \item $Q_{\ell, \mathrm{Fan}}$ places a weight $\lambda$ on each peripheral edge, and a weight of $\mu$ on each axial edge.
    \item $P_{\ell, \mathrm{Fan}}$ `breaks in half' $\ell$ of the blades in the graph - concretely, for $i \in [1:\ell],$ we delete the edges $\{2i-1, 2B+1\}.$ 
\end{itemize}

\begin{figure}[ht]
    \centering
    \includegraphics[width = .4\textwidth]{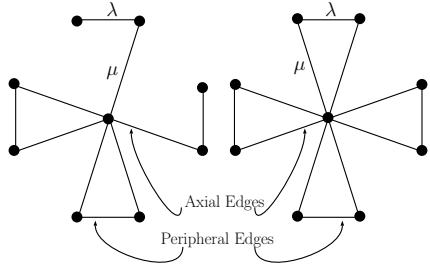}
    \caption{The Fan graphs for $P_{\ell, \mathrm{Fan}}$ (left) and $Q_{\ell, \mathrm{Fan}}$ (right) in the setting $B = 4, \ell = 2$.}
    \label{fig:fan}
\end{figure}

Viewing $P$ as the null graph, note that in $Q$ we have added an excess of $\ell$ edges, and increased the degree of the central node from $2B - \ell$ to $2B$. The fan graph serves as a high-temperature obstruction to determining the maximum degree of the graph underlying an Ising model via the following claim.

\begin{myprop}\label{prop:fan}
    For $\ell \le B,$  if $\lambda\mu \ge 0,$ then \[ \chi^2( Q_{\ell, \mathrm{Fan}} \| P_{\ell, \mathrm{Fan}} ) \le \left( 1 + {16 e^{-2\lambda}}\tanh^2\mu \right)^{\ell} - 1.\]
\end{myprop}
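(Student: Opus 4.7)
The plan is to exploit the conditional independence of the blades given the center vertex $X_{2B+1}$, and to reduce the per-blade contribution to a computation structurally similar to Proposition \ref{prop:triangle}. By the $\mathbb{Z}_2$-symmetry $X \mapsto -X$ of both $P := P_{\ell, \mathrm{Fan}}$ and $Q := Q_{\ell, \mathrm{Fan}}$, the marginal of $X_{2B+1}$ is uniform on $\{\pm 1\}$ under both. Moreover, conditional on $X_{2B+1} = a$, the blades $(X_{2i-1}, X_{2i})_{i=1}^B$ factor as a product under both laws. For the unbroken blades $i \in [\ell+1:B]$, the conditional laws under $P$ and $Q$ are \emph{identical}, while for the broken blades $i \in [1:\ell]$, they differ exactly by the presence/absence of the axial edge $\{2i-1, 2B+1\}$ of weight $\mu$.

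Writing $1 + \chi^2(Q\|P) = \mathbb{E}_P[(Q/P)^2]$, I would condition on $a$ and use the factorisation to obtain
\[ 1 + \chi^2(Q\|P) = \frac{1}{2}\sum_{a \in \{\pm 1\}} \prod_{i=1}^{\ell} \rho(a), \]
where $\rho(a)$ is the per-blade second-moment ratio and only the broken-blade factors contribute non-trivially (the unbroken blades each contribute $1$). The central step is then a per-blade bound of the form
\[ \rho(a) \le 1 + 16 e^{-2\lambda} \tanh^2\mu \qquad \text{uniformly in } a \in \{\pm 1\}. \]
This is essentially the triangle calculation of Proposition \ref{prop:triangle} played out under a boundary condition: conditioned on $a$, the blade is an Ising model on two nodes $(2i-1, 2i)$ with peripheral weight $\lambda$ and external fields induced by the clamped center — an external field $\mu a$ on $x_{2i}$ alone under $P^{br}$, versus fields $\mu a$ on \emph{both} $x_{2i-1}$ and $x_{2i}$ under $Q^{wh}$. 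The $2^2 = 4$-term sum is completely explicit and produces, after simplification via $\cosh\mu \cdot \cosh 3\mu - \cosh^2 2\mu = \sinh^2\mu$ and $\cosh^2\mu - 1 = \sinh^2\mu$, a bound of the shape $\tfrac{\sinh^2\mu (e^{2\lambda} + 2\cosh(2\mu) + e^{-2\lambda})}{(e^\lambda \cosh(2\mu) + e^{-\lambda})^2}$ which, under the sign condition $\lambda \mu \ge 0$ (used to rule out cancellations that would otherwise plague the $a = -\mathrm{sgn}(\mu)$ branch), is controlled by $16 e^{-2\lambda}\tanh^2\mu$. Tensorising over the $\ell$ independent broken blades and absorbing the average over $a$ (which is a convex combination of identical bounds) then yields $1 + \chi^2(Q\|P) \le (1 + 16 e^{-2\lambda}\tanh^2\mu)^\ell$, rearranging to the claim.

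The main obstacle is the per-blade bound, specifically matching the shape $e^{-2\lambda}\tanh^2\mu$ from the triangle widget. The $\tanh^2 \mu$ factor arises naturally from the hyperbolic identities above, which give $\sinh^2\mu$ in the numerator and $\cosh^2\mu$ in the effective denominator after the $e^{-2\lambda}$ is extracted. The $e^{-2\lambda}$ decay requires exhibiting that the strongly-bound peripheral edge of weight $\lambda$ ``rigidifies'' the two blade nodes into an effective single spin (so that adding the second axial edge of weight $\mu$ contributes only a nearly-constant energy shift), which is precisely the same mechanism that drives Proposition \ref{prop:triangle}. The constant $16$ (vs.\ the $8$ of the triangle) can be absorbed as a mild slack from the uniform-in-$a$ bound, and the sign condition $\lambda\mu \ge 0$ ensures this slack suffices.
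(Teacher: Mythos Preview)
Your approach is correct and is essentially the same as the paper's, just organised more conceptually: you condition on the central node and use conditional independence of the blades to write $1+\chi^2(Q\|P)$ as a product of per-blade factors, whereas the paper computes the full partition function $Z(\ell,\eta,\mu,\lambda)$ directly (summing over $x_{2B+1}$ last), observes that it factors as a power of a single-blade term because the blade contributions are independent of the sign of $x_{2B+1}$, and then forms the ratio $Z(0)Z(2\mu)/Z(\mu)^2=U^\ell$. Your $\rho(a)$ is exactly the paper's $U$, and your observation that $\rho(+1)=\rho(-1)$ (by the $x\mapsto -x$ symmetry) is what makes the average over $a$ trivial---the paper sees this same fact as ``$\cosh$ is even'' when summing over $x_{2B+1}$.

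One caution on the per-blade details: your description of the conditional blade model places an external field of strength~$\mu$ on $x_{2i}$, following the prose definition of the Fan, but the paper's proof actually takes the axial edge to $x_{2i}$ to carry weight~$\lambda$ (so the conditional field on $x_{2i}$ is $\lambda a$, not $\mu a$). This is what produces the $e^{\lambda}\cosh(\lambda+\eta)+e^{-\lambda}\cosh(\lambda-\eta)$ form of $U$ and hence the clean $e^{-2\lambda}$ decay; with fields of size~$\mu$ on both blade nodes the explicit computation and the identities you list would come out differently. Since this only affects which four-term sum you evaluate and not the structure of the argument, your plan goes through once the weights are aligned with the proof's parametrisation.
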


\subsubsection{Single Edge}\label{sec:single_edge}

This construction is possibly the simplest, and is used to show the lower bound in Thm.~\ref{thm:forest_testing}. We consider the two possible Ising models on two nodes - $P$ is the one with an edge, of weight $\mu$, while $Q$ has no edges. The characterisation is trivial, and we omit the proof. \begin{myprop}\label{prop:edge_vs_empty}
    \(\chi^2(Q\|P) = \sinh^2(\mu) \).
\end{myprop}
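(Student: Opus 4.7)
The plan is to prove this by direct calculation, since both distributions admit closed-form expressions on only four atoms. First I would write out $P$ explicitly: the partition function is $Z = \sum_{x_1, x_2 \in \{\pm 1\}} \exp(\mu x_1 x_2) = 2 e^{\mu} + 2e^{-\mu} = 4\cosh \mu$, so that $P(x_1, x_2) = \exp(\mu x_1 x_2)/(4\cosh \mu)$. Meanwhile $Q$ is uniform on $\{\pm 1\}^2$, assigning mass $1/4$ to each configuration.

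Next I would unpack the definition $\chi^2(Q\|P) = \sum_x Q(x)^2/P(x) - 1$. By the sign of $x_1 x_2$, there are exactly two ``agreement'' configurations (each with $P$-mass $e^\mu/(4\cosh\mu)$) and two ``disagreement'' configurations (each with $P$-mass $e^{-\mu}/(4\cosh\mu)$), so
\[ \sum_x \frac{Q(x)^2}{P(x)} = 2 \cdot \frac{(1/4)^2 \cdot 4 \cosh \mu}{e^\mu} + 2 \cdot \frac{(1/4)^2 \cdot 4 \cosh \mu}{e^{-\mu}} = \frac{\cosh \mu}{2}\bigl( e^{-\mu} + e^{\mu}\bigr) = \cosh^2 \mu. \]

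Subtracting $1$ and invoking the Pythagorean identity $\cosh^2 \mu - \sinh^2 \mu = 1$ yields $\chi^2(Q\|P) = \sinh^2 \mu$, as claimed. There is no real obstacle here; the whole argument is a two-line calculation once the partition function is written down.
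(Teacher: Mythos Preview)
Your proof is correct and is precisely the direct computation the paper has in mind; the paper itself omits the argument, stating only that ``the characterisation is trivial, and we omit the proof.''
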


\subsection{Low-Temperature Obstructions via Clique-Based Graphs}

The computations in this and the subsequent cases are rather more complicated that in the previous case, and will intimately rely on a `low temprature' assumption. The basic unit is that of a clique on some $d+1 \gg 1$ nodes, in the setting of temperature $\lambda d \ge \log d$. 

The intuition behind these is rather simple - Ising models on cliques tend to `freeze' in low temprature regimes, i.e.~ the distribution concentrates to the states $\pm (1, 1, \dots, 1)$ with probability $1 - \exp{ - \Omega(\beta d))}$ for $\beta d \gg 1.$ This effect is fairly robust, and dropping or adding even a large number of edges does not alter it significantly. Thus, one has to collect an exponential in $\beta d$ number of samples merely to obtain some diversity in the samples, which will be necessary to distinguish any of these variations of a clique from the full thing. 

While generic arguments can be offered for each of the settings below on the basis of the above intuition, these tend to be lossy in how they handle the effect of very low edge weights. To counteract this, we individually analyse each setting, and while the arguments have structural similarities, the particulars vary.

It is worth noting that our bounds rely on  below diverge from the classical literature in the low temperature condition we impose. We generally demand conditions like $\beta d \ge \log d,$ while most other lower bounds demand that $\beta d \ge 1.$ This extra room allows us to tighten the exponents in the sample complexity bounds as opposed to previous work, but has the obvious disadvantage of reduced applicability. We note, however, that in most settings, this only yields a lost factor of $d$ in the resulting bounds, and frequently not even that. Functionally, thus, there is little to no loss in the use of this stronger low-temperature condition.\footnote{This effect is linked to the concentration of the Ising model on the clique we mentioned before. Notice that the probability of a uniform state is as $1 - \exp{-\Omega(\beta d)}$. For this to be appreciable, i.e., at least polynomially close to $1$, a condition like $\beta d = \Omega(\log d)$ is in fact necessary.} A similar notion of low temperature has appeared in e.g.~\cite{bezakova_lowbd}.

\subsubsection{Clique with a deleted edge}\label{sec:1edgeclique}

This calculation is the simplest demonstration of our bounding technique, and all following settings are analysed in a similar way. While it is superseded by the section immediately following it, the bound is thus important for the reasons of comprehension if nothing else.

We consider graphs on $d+1$ nodes, and let $P_{\mathrm{Clique}}$ be the Ising model on the complete graph on $d+1$ nodes, with edge $(1,2)$ of weight $\mu,$ and every other edge of weight $\lambda.$ $Q_{\mathrm{Clique}}$ is formed by deleting the edge $(1,2)$ in $P_{\mathrm{Clique}}$\begin{figure}[ht]
    \centering
    \includegraphics[width = .75\textwidth]{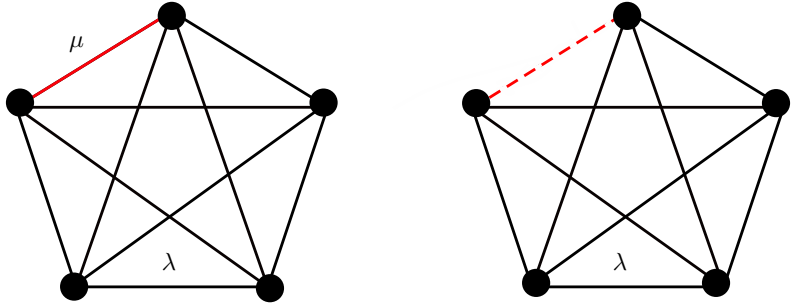}
    \caption{The clique with uniform weight $\lambda$ barring one edge, and the same edge deleted. Here $d = 4.$}
    \label{fig:clique_one_edge_gone}
\end{figure} Note that such underlying constructions feature in nearly every lower bound on structural inference on degree bounded Ising models.

With the exposition out of the way, we state the bound below. \begin{myprop}\label{prop:1edgeclique}
    Suppose $\lambda d > \log d.$ Then \[ \chi^2( Q_{\mathrm{Clique}} \|P_{\mathrm{Clique}}) \le  16e^{-2\lambda (d-1)} \sinh^2\mu . \]  
\end{myprop}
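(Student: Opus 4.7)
The strategy is to compute $\chi^2(Q\|P)$ exactly in terms of the single correlation $\rho := \E_Q[X_1 X_2]$, and then to show that $\rho$ is exponentially close to $1$ by exploiting the symmetric structure of $Q$ and the low-temperature condition.

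Since $P$ and $Q$ differ only on the edge $(1,2)$, the likelihood ratio is $Q(x)/P(x) = (Z_P/Z_Q)\, e^{-\mu x_1 x_2}$. Using $X_1 X_2 \in \{\pm 1\}$, a short computation gives $Z_P/Z_Q = \E_Q[e^{\mu X_1 X_2}] = \cosh\mu + \rho \sinh\mu$ and $\E_Q[e^{-\mu X_1 X_2}] = \cosh\mu - \rho \sinh\mu$. Substituting into $1 + \chi^2(Q\|P) = \E_P[(Q/P)^2] = (Z_P/Z_Q)\,\E_Q[e^{-\mu X_1 X_2}]$ yields
\[
\chi^2(Q\|P) \;=\; \sinh^2\mu \cdot (1 - \rho^2),
\]
so it suffices to prove $1 - \rho^2 \le 16\, e^{-2\lambda(d-1)}$.

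Because $Q$ is ferromagnetic, Griffiths' inequality gives $\rho \in [0,1]$, so writing $\rho_- := Q(X_1 \ne X_2) = (1-\rho)/2$ one has $1 - \rho^2 = 2\rho_-(1 + \rho) \le 4\rho_-$. To bound $\rho_-$, I exploit a compact form of the potential: with $M := \sum_{i=1}^{d+1} x_i$ and $T := \sum_{i=3}^{d+1} x_i$,
\[
Q(x) \;\propto\; \exp{\tfrac{\lambda}{2} M^2 - \lambda x_1 x_2}.
\]
Summing over the four choices of $(x_1,x_2)$ and using that $M = \pm 2 + T$ when $x_1 = x_2$ and $M = T$ otherwise yields, after simplification,
\[
\frac{\rho_-}{\rho_+} \;=\; \frac{\sum_t N(t)\, e^{\lambda t^2/2}}{\sum_t N(t)\, e^{\lambda t^2/2}\, \cosh(2\lambda t)},
\]
where $N(t) := \binom{d-1}{(d-1+t)/2}$ counts the configurations of $X_3, \ldots, X_{d+1}$ with sum $t$.

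The remaining and central step is to bound this ratio. Set $n := d-1$. The denominator is at least $e^{\lambda n^2/2 + 2\lambda n}$ on retaining only $t = \pm n$ and using $\cosh(2\lambda n) \ge e^{2\lambda n}/2$, while the numerator is at most $e^{\lambda n^2/2}\cdot S$, where $S := \sum_{k=0}^{n} \binom{n}{k}\, e^{-2\lambda k(n-k)}$ after the substitution $t = n - 2k$. The main obstacle is to show $S \le 4$ under only the hypothesis $\lambda d > \log d$. The plan is to isolate the endpoint contributions ($k = 0$ and $k = n$, each equal to $1$, giving $2$ in total), and for $1 \le k \le n-1$ apply $\binom{n}{k} \le n^k/k!$ together with $k(n-k) \ge n\min(k,n-k)/2$; the low-temperature hypothesis then forces these remaining terms into a rapidly decaying geometric-type sum with total at most $2$. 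This yields $\rho_-/\rho_+ \le 4\, e^{-2\lambda(d-1)}$, whence $\rho_- \le 4\,e^{-2\lambda(d-1)}$ and $1 - \rho^2 \le 4\rho_- \le 16\, e^{-2\lambda(d-1)}$, completing the argument.
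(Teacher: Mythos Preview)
Your proof is correct and takes a genuinely different, more probabilistic route than the paper's. The paper writes $1+\chi^2$ as a ratio of three partition functions $Z(\lambda,\lambda-\mu)Z(\lambda,\lambda+\mu)/Z(\lambda,\lambda)^2$, expands each by binning configurations, and algebraically massages this into $1+4\sinh^2\mu\cdot e^{-2\lambda(d-1)}S_2/S_1$. Your exact identity $\chi^2(Q\|P)=(1-\rho^2)\sinh^2\mu$ with $\rho=\E_Q[X_1X_2]$ bypasses all of that, reducing the problem directly to a correlation estimate under the ferromagnetic $Q$; this is cleaner and more transparent. Both approaches then converge on the same combinatorial task: your sum $S=\sum_k\binom{d-1}{k}e^{-2\lambda k(d-1-k)}$ is exactly the paper's $S_2$, and both need it bounded by a small constant.

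One caution on your final step: the specific bounds you sketch, $\binom{n}{k}\le n^k/k!$ together with $k(n-k)\ge n\min(k,n-k)/2$, are too lossy near the threshold. They yield terms $(ne^{-\lambda n})^k/k!$, and under merely $\lambda d>\log d$ the quantity $ne^{-\lambda n}$ need not be small (for $\lambda\approx\log d/d$ it behaves like $d^{1/3}$), so the geometric-sum argument as written does not close. A sharper term-by-term control is needed---for instance the paper's ``ratio trick'' (showing $T_k$ is unimodal and bounding via $T_2/T_1$), which in fact uses the mildly stronger condition $\lambda(d-4)\ge\log d$; so this looseness between the stated hypothesis and what the combinatorics actually require is present in the paper's proof as well.
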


\subsubsection{The clique with a large hole}\label{sec:cliquehole}

To allow for a greater number of changes, we modify the previous construction by removing a large subclique from the $K_{d+1}$ used above, instead of just one edge. More formally, for some $ \ell < d/8,$ let $K_\ell$ be the complete graph on nodes $[1:\ell].$ We set   $P_{\ell, \mathrm{Clique}}$ to the the Ising model on $K_{d+1}$ such that the edges in $K_\ell$ have weight $\mu,$ and all other edges have weight $\lambda,$ while $Q_{\ell, \mathrm{Clique}}$ instead deletes the edges in $K_\ell$. Note that as a conseuquence, we have effected a deletion of $\sim \ell^2/2$ edges from the original model. 

\begin{myprop}\label{prop:clique_hole_Q||P}
    If $\ell +1 \le d/8,$ $\lambda \ge |\mu|$ and $\lambda d > 3\log d,$ then \[\chi^2(Q_{\ell,\mathrm{Clique}} \|P_{\ell,\mathrm{Clique}}) \le  32\ell e^{-2\beta (d + 1- \ell)}  \sinh^2 (\mu(\ell-1) ). \]
\end{myprop}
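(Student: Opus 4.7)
The plan is to reduce the $\chi^2$-divergence to a one-dimensional problem in the hole-sum statistic $s(x) := x_1 + \cdots + x_\ell$, and then to exploit the low-temperature concentration of $P$ on the two all-aligned ground states. Since the two models differ only on edges of $K_\ell$, one has $H_Q(x) - H_P(x) = -\mu(s^2-\ell)/2$, so the likelihood ratio $Q(x)/P(x)$ is a function of $s(x)$ alone. Setting $f(s) := e^{-\mu s^2/2}$ and $F := \mathbb{E}_P[f(s)]$, a direct calculation yields $Q/P = f(s)/F$, and hence the clean identity
\[
\chi^2(Q_{\ell,\mathrm{Clique}} \,\|\, P_{\ell,\mathrm{Clique}}) \;=\; \frac{\mathrm{Var}_P(f(s))}{F^2}.
\]
Thus the task is to show $f(s)$ is essentially constant under $P$, which is natural since $P$ should concentrate on $|s|=\ell$.

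Next, I would parametrise configurations by $(r,j)$, where $r \in \{0,\ldots,\ell\}$ counts $-1$'s among the hole spins and $j \in \{0,\ldots,d+1-\ell\}$ those among the complement. Measuring energy relative to the all-$+1$ ground state gives
\[
P(r,j) \;\propto\; \binom{\ell}{r}\binom{d+1-\ell}{j}\, e^{-2\mu r(\ell-r) - 2\lambda M'(r,j)},
\]
where $M'(r,j)$ counts disagreeing edges outside the hole; a direct cross-edge count shows $M'(r,j) \ge \min(r,\ell-r)(d+1-\ell)$ for $r \in [1,\ell-1]$. Now apply $\mathrm{Var}_P(f) \le \mathbb{E}_P[(f(s)-f(\ell))^2]$, split by $r$, and use the algebraic identity $e^{-a}(e^{a}-1)^2 = 4\sinh^2(a/2)$ with $a = 2\mu r(\ell-r)$ to collapse the $\mu$-dependence---crucially, this identity absorbs the sign of $\mu$ cleanly, so no separate treatment is required. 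Combined with the lower bound $F \ge 2 f(\ell) \tilde\Sigma_0/\bar Z_P$ obtained from the two ground states (where $\tilde\Sigma_0$ and $\bar Z_P$ are defined below), this yields
\[
\chi^2 \;\le\; \frac{\bar Z_P}{\tilde\Sigma_0^2}\sum_{r=1}^{\ell-1} 4\binom{\ell}{r}\,\tilde\Sigma_r \sinh^2(\mu r(\ell-r)),
\]
where $\tilde\Sigma_r := \sum_j\binom{d+1-\ell}{j}e^{-2\lambda M'(r,j)}$ and $\bar Z_P := \sum_r \binom{\ell}{r} e^{-2\mu r(\ell-r)}\tilde\Sigma_r$.

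Finally, deploy the low-temperature hypothesis $\lambda d > 3\log d$ together with $\ell+1 \le d/8$ (so that $d+1-\ell \ge 7d/8$). The $M'$-lower-bound paired with a geometric-series estimate showing $\sum_j\binom{m}{j}e^{-2\lambda j(m-j)} = 1 + o(1)$ in this regime yields $\tilde\Sigma_r \le C e^{-2\lambda\min(r,\ell-r)(d+1-\ell)}$ and $\bar Z_P \le 3\tilde\Sigma_0$, reflecting that essentially all $P$-mass sits on the two aligned configurations. With these in hand, the ratio of the $(r{+}1)$-st to the $r$-th summand is, using $\sinh(a+b)/\sinh(a) \lesssim e^b$ for $a \gtrsim 1$ (and direct comparison for small $a$), bounded by $\ell \cdot e^{2|\mu|(\ell-2r-1) - 2\lambda(d+1-\ell)}$, which under $|\mu|\le \lambda$ and $\ell \le d/8$ is at most $\ell\, e^{-3\lambda d/4} \le 1/2$. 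The series therefore decays geometrically, and the total is within a constant multiple of the $r=1$ term $\ell\sinh^2(\mu(\ell-1))\,e^{-2\lambda(d+1-\ell)}$, giving the stated bound.

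The main obstacle I anticipate is the bookkeeping at the final step: to recover the precise $\sinh^2(\mu(\ell-1))$ factor---rather than the cruder $e^{2|\mu|(\ell-1)}$ that drops out of the naive estimate $\sinh(x)\le e^{|x|}/2$---\emph{uniformly} across all magnitudes of $\mu$, one must treat the regimes $|\mu|(\ell-1) \lesssim 1$ (where $\sinh(x)\asymp x$) and $|\mu|(\ell-1) \gtrsim 1$ (where $\sinh(x)\asymp e^{x}/2$) separately, since the $\sinh$-ratios comparing $\mu r(\ell-r)$ to $\mu(\ell-1)$ behave very differently in each. Further, pinning down the constant $32$ in the statement (rather than a larger absolute constant) requires slightly sharper accounting of $\bar Z_P/\tilde\Sigma_0$ via the leading low-temperature correction and careful comparison of these $\sinh$ ratios, rather than the rough estimates sketched above.
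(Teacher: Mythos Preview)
Your proposal is correct and follows essentially the same route as the paper. The variance identity $\chi^2 = \mathrm{Var}_P(f)/F^2$ together with $\mathrm{Var}_P(f)\le \mathbb{E}_P[(f-f(\ell))^2]$ is exactly equivalent to the paper's device: it writes $1+\chi^2$ as the partition-function ratio $Z_\ell(\lambda,\lambda-\mu)Z_\ell(\lambda,\lambda+\mu)/Z_\ell(\lambda,\lambda)^2$ and then uses $(1+A)(1+B)\le 1+A+B$ for $A<0<B$. Both lead to the same intermediate bound, a sum over $r\in[1,\ell-1]$ with summands proportional to $\binom{\ell}{r}\sinh^2(\mu r(\ell-r))$ times an exponentially decaying factor; your $\tilde\Sigma_r$ is $e^{-2\lambda r(d+1-\ell)}S_r(\lambda)$ in the paper's notation, and your geometric-series control on $\tilde\Sigma_r$ matches the paper's ``ratio trick'' lemma for $S_i$.

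The one place the paper is a bit slicker than your sketch is the final step. Rather than comparing consecutive summands and splitting into cases on the size of $|\mu|(\ell-1)$, the paper isolates the $r=1$ term with its exact factor $\sinh^2(\mu(\ell-1))e^{-2\lambda(d+1-\ell)}$, and for $r\ge 2$ crudely bounds $\sinh^2(\mu r(\ell-r))\le e^{2|\mu| r\ell}$; combining with $|\mu|\le\lambda$ collapses each such term to $\binom{\ell}{r}e^{-2\lambda r(d+1-2\ell)}$, and a direct comparison (using $V_1\ge e^{-2\lambda d}$) shows every $r\ge 2$ term is at most $V_1/\ell$. This avoids entirely the small-/large-argument dichotomy for $\sinh$ that you flag as the obstacle, and delivers the constant without further case analysis.
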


Note that the bound of the previous subsection (up to some factors) can be recovered by setting $\ell = 2$ in the above.

Control on the $\chi^2$-divergence with $P$ and $Q$ exchanged is also useful. 
\begin{myprop}\label{prop:clique_hole_P||Q}
    If $\ell +1 \le d/12,$ $\lambda \ge |\mu|$ and $\lambda d > 3\log d,$ then \[\chi^2(P_{\ell,\mathrm{Clique}} \|Q_{\ell,\mathrm{Clique}}) \le  64\ell e^{-2\beta (d + 1- \ell)} \sinh^2(2\mu(\ell -1)) . \]
\end{myprop}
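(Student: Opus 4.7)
The plan is to mirror the approach used for Proposition \ref{prop:clique_hole_Q||P}, with $P$ and $Q$ swapped, which is why the $2\mu$ appears inside the $\sinh^2$. The starting point is the identity
\[
1 + \chi^2(P\|Q) \;=\; \mathbb{E}_Q\!\left[\left(\tfrac{P(X)}{Q(X)}\right)^2\right] \;=\; \frac{Z_Q}{Z_P^2}\sum_x \exp\!\Big(2\mu\!\!\sum_{(i,j)\in K_\ell}\!\! x_ix_j + \lambda\!\!\sum_{(i,j)\notin K_\ell}\!\! x_ix_j\Big) \;=\; \frac{Z_Q \, Z_{Q'}}{Z_P^2},
\]
where $Q'$ is the Ising model on $K_{d+1}$ that places weight $2\mu$ on the edges of $K_\ell$ and weight $\lambda$ elsewhere. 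This reduction is attractive because the total edge weight satisfies $\Phi(Q)+\Phi(Q') = 2\Phi(P)$ exactly, so the ``trivial" parts of all three partition functions cancel.

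Next I would exploit the low-temperature freezing at $\lambda d \ge 3\log d$: expand each of $Z_Q, Z_{Q'}, Z_P$ around the two ground states $\pm\mathbf{1}$ by summing the contributions from configurations of increasing Hamming distance from $\mathbf{1}$ (and its negation). Writing $Z_R = 2e^{\Phi(R)}(1 + \varepsilon_R)$ for $R\in\{P,Q,Q'\}$, the leading correction $\varepsilon_R$ is a sum over single node flips, with higher-order flips contributing at least an additional $e^{-2\lambda(d-\ell)}$ factor and thus being negligible under the hypothesis $\ell+1\le d/12$. Plugging in yields
\[
1 + \chi^2(P\|Q) \;=\; \frac{(1+\varepsilon_Q)(1+\varepsilon_{Q'})}{(1+\varepsilon_P)^2},
\]
so the claim is equivalent to showing that the right-hand side is at most $1 + 64\ell\, e^{-2\lambda(d+1-\ell)}\sinh^2(2\mu(\ell-1))$.

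The main combinatorial work, and the main obstacle, is in tracking the single-flip contributions precisely enough to expose the cancellation. A flip at a node $v\notin [1:\ell]$ contributes a term of order $e^{-2\lambda d}$ identically to each of $\varepsilon_P,\varepsilon_Q,\varepsilon_{Q'}$ (since the edges touching such a $v$ are the same in all three models), so these contributions cancel between numerator and denominator. A flip at $v\in[1:\ell]$, on the other hand, contributes $e^{-2\lambda(d+1-\ell)}e^{-2\mu(\ell-1)}$ to $\varepsilon_P$, $e^{-2\lambda(d+1-\ell)}$ to $\varepsilon_Q$, and $e^{-2\lambda(d+1-\ell)}e^{-4\mu(\ell-1)}$ to $\varepsilon_{Q'}$, where the exponent $(d+1-\ell)$ comes from the $d+1-\ell$ outside-$K_\ell$ neighbours of $v$ and the $\mu$-terms come from its $\ell-1$ neighbours inside $K_\ell$. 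Summing over $v\in[1:\ell]$ and combining as above, the per-vertex residual behaves like $e^{-2\lambda(d+1-\ell)}(e^{-2\mu(\ell-1)} + e^{2\mu(\ell-1)} - 2) = 4e^{-2\lambda(d+1-\ell)}\sinh^2(\mu(\ell-1))$, and after the $\log$/Taylor expansion of the ratio of $(1+\varepsilon)$-factors and inclusion of the factor $2$ from having $\pm\mathbf{1}$ ground states, this becomes the stated $\sinh^2(2\mu(\ell-1))$. The hypothesis $\ell+1 \le d/12$ (slightly stronger than in Proposition \ref{prop:clique_hole_Q||P}) provides the slack needed to absorb all higher-order $\varepsilon$ terms into the constant $64$.

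I expect the bookkeeping of the two-flip contributions to be the most delicate point, since these involve products $e^{-2\lambda(2d-\ell+\mathrm{stuff})}$ which must be compared carefully against $e^{-4\lambda(d+1-\ell)}\sinh^4(\mu(\ell-1))$; under $\lambda d\ge 3\log d$ and $\ell+1\le d/12$ these are dominated by the claimed bound (this is precisely where the factor of $12$ rather than $8$ in the hypothesis appears to be necessary), but writing this out cleanly is where most of the work lies.
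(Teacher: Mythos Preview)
Your plan is essentially the paper's own argument, just phrased in ground-state/excitation language rather than via the $(i,j)$-enumeration of the companion Proposition~\ref{prop:clique_hole_Q||P}. The paper literally reuses that machinery: it writes $W'=\widetilde Z_\ell(\lambda,\lambda)\,\widetilde Z_\ell(\lambda,\lambda\mp 2\mu)/\widetilde Z_\ell(\lambda,\lambda\mp\mu)^2$, and your ``single flip in the hole'' is exactly its $i=1$ term, while your ``higher-order flips'' are its $i\ge 2$ terms, controlled by the same $S_i$-bound and ratio trick. Your observation that the doubled $\mu$ is what forces $\ell+1\le d/12$ rather than $d/8$ is exactly right.

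Two places where your write-up goes wrong, though. First, your single-flip residual is $e^{-2\lambda(d+1-\ell)}\bigl[1+e^{-4\mu(\ell-1)}-2e^{-2\mu(\ell-1)}\bigr]=e^{-2\lambda(d+1-\ell)}(1-e^{-2\mu(\ell-1)})^2$, not $4\sinh^2(\mu(\ell-1))\,e^{-2\lambda(d+1-\ell)}$; you have silently dropped a factor of $e^{-2\mu(\ell-1)}$, which is \emph{not} $\le 1$ when $\mu<0$. The passage to $\sinh^2(2\mu(\ell-1))$ is then a one-line inequality (e.g.\ $(1-e^{-y})^2\le 4\sinh^2 y$), and has nothing to do with the two ground states or a log expansion. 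Second, and more substantively, ``log/Taylor expansion of the ratio of $(1+\varepsilon)$-factors'' is the wrong mechanism. The paper instead observes that
\[
A:=\frac{\widetilde Z_\ell(\lambda,\lambda)}{\widetilde Z_\ell(\lambda,\lambda-\mu)}-1 \qquad\text{and}\qquad B:=\frac{\widetilde Z_\ell(\lambda,\lambda-2\mu)}{\widetilde Z_\ell(\lambda,\lambda-\mu)}-1
\]
have opposite signs term-by-term in $i$, whence $(1+A)(1+B)\le 1+A+B$ \emph{exactly}. This is what kills the cross-term $\delta_Q\delta_{Q'}$ you are worried about without any perturbative bookkeeping, and it is the step you should replace your Taylor expansion with; once you do, the $i\ge 2$ tail is bounded by $\sum_i\binom{\ell}{i}e^{4|\mu|i\ell-2\lambda i(d+1-\ell)}\le \sum_i\binom{\ell}{i}e^{-2\lambda i(d+1-3\ell)}$, which is precisely where the $d/12$ appears.
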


\subsubsection{Emmentaler Clique}\label{sec:emmen}

As a development of the Clique with a large hole, we may in fact put in many large holes, leading to a pockmarked clique reminiscent of a Swiss cheese. Concretely, let $\ell$ be a number such that $B:= d/(\ell + 1)$ is an integer. We define a graph on $d$ nodes in the following way: Divide the nodes into $B$ groups of equal size, $V_1, \dots, V_B$. Form the complete graph on $d$ nodes, and then delete the $\ell+1$-sublique on $V_i$ for each $i$. Note that equivalently, the graph above is the complete symmetric $B$-partite graph on $d$ nodes. The graph effects a deletion of $\sim d \ell/2$ edges from a clique.

\begin{figure}[h]
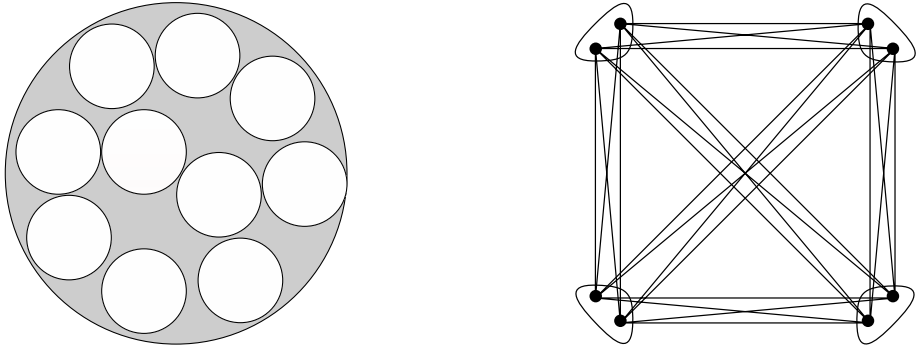

    \centering
    \includegraphics[height = .2\textheight]{Emmen_2.eps}
    ~\hspace{1in}~
    \includegraphics[height = .2\textheight]{Emmen_as_l-partite.eps}
    \caption{Two views of the Emmentaler cliques. The left represents the base clique as the large grey circle, while the uncoloured circles within represent the groups $V_i$ with no edges within (this should be viewed as $\ell \gg 1, B = 10$). This view is inspiration for the name. On the right, we represent the Emmentaler as the graph $K_{\ell + 1, \ell + 1, \dots, \ell + 1}$ - here $d = 8$ and $\ell = 1$ is shown.}
    \label{fig:emmen}
\end{figure}

The key property of the Emmentaler is that it still freezes at a exponential rate, and it has sufficient `space' in it to accommodate significantly more edges. In particular, the graph is regular and the degrees of each node are uniformly $d- \ell - 1.$ We use this in two ways:

\paragraph{Emmentaler with one extra node} We show that determining the degree of a node connected to many of the nodes of an Emmentaler is hard. Concretely, we construct the following two graphs on $d+1$ nodes:

Construct an Emmentaler Clique on the first $d$ nodes. Next, connect the node $d+1$ to each node in $\bigcup_{i = 1}^{B-1} V_i.$ Notice that node $d+1$ is not connected to one of the parts of the Emmentaler. We choose $P_{\ell}$ to be the Ising model with uniform weight $\lambda$ on the this graph. For $Q_{\ell}$, we additionally add edges between node $d+1$ and each node in $V_B$ with weight $\mu.$ The following result holds.

\begin{myprop}\label{prop:emmen_with_extra_node}
    If $2\le \ell+1 \le d/4$ and $\lambda (d-4) \ge 3\log d,$ and $|\mu| \le \lambda,$ then \[ \chi^2(Q_{\ell} \| P_{\ell} )  \le  32de^{-2\lambda (d-1 -\ell)} .\]
\end{myprop}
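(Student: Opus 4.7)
The plan is to reduce the $\chi^2$-divergence to a one-dimensional computation by exploiting a conditional independence structure. Writing $Q_\ell(x)/P_\ell(x) = e^{\mu f(x)}/\mathbb{E}_{P_\ell}[e^{\mu f}]$ with $f(x) := x_{d+1}\sum_{j \in V_B} x_j$, one has
\[
1 + \chi^2(Q_\ell \| P_\ell) = \frac{\mathbb{E}_{P_\ell}[e^{2\mu f}]}{\mathbb{E}_{P_\ell}[e^{\mu f}]^2}.
\]
The key structural observation is that in $P_\ell$, node $d+1$ and the nodes of $V_B$ share no edge---all of them are adjacent only to $W := V_1 \cup \cdots \cup V_{B-1}$. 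By the Markov property, conditioning on $X_W$ makes $X_{d+1}$ and $(X_j)_{j \in V_B}$ mutually independent, each a $\pm 1$ variable with mean $\tau := \tanh(\lambda T)$, where $T := \sum_{w \in W} X_w$.

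Using $X_{d+1}^2 = 1$ to write $e^{tf} = \cosh(tU) + X_{d+1}\sinh(tU)$ for $U := \sum_{j \in V_B} X_j$, a conditional expectation then yields the explicit formula
\[
\phi(t, \tau) := \mathbb{E}_{P_\ell}[e^{tf} \mid X_W] = \tfrac{1+\tau}{2}(\cosh t + \tau \sinh t)^{\ell+1} + \tfrac{1-\tau}{2}(\cosh t - \tau \sinh t)^{\ell+1}.
\]
A direct check shows that $\phi$ involves only even powers of $\tau$ and satisfies $\phi(t, \pm 1) = e^{t(\ell+1)}$, so the conditional ratio $\phi(2\mu, \tau)/\phi(\mu, \tau)^2$ collapses to $1$ at full magnetisation $|\tau|=1$. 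A Taylor expansion around $\tau^2 = 1$ yields
\[
\frac{\phi(2\mu,\tau)}{\phi(\mu,\tau)^2} - 1 = (1-\tau^2)\cdot \tfrac{1}{4}\bigl[(1-e^{-2\mu(\ell+1)})^2 + (\ell+1)(1-e^{-2\mu})^2\bigr] + O((1-\tau^2)^2),
\]
and $1-\tanh^2(\lambda T) \le 4 e^{-2\lambda|T|}$ shows that at the frozen magnetisation $|T|=d-\ell-1$ this produces a correction of order $e^{-2\lambda(d-\ell-1)}$, with a coefficient that is at most $\ell+2$ for $\mu \ge 0$.

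It then remains to control the marginal distribution of $T$ under $P_\ell$. For this I would use a Peierls-style accounting: flipping $k$ spins in $W$ away from the dominant direction costs a Boltzmann factor of at least $e^{-2\lambda k(d-\ell-1-k)}$ relative to the aligned state, since each flipped $w \in W$ interacts with $d-\ell-1-k$ still-aligned neighbours through Emmentaler and $(d+1)$-edges of weight $\lambda$. Summing with $\binom{|W|}{k}$ and invoking $\lambda(d-4) \ge 3\log d$ yields $\mathbb{P}_{P_\ell}(|T| \le d-\ell-3) \lesssim d e^{-2\lambda(d-\ell-1)}$, which combined with the conditional Taylor bound and the constraint $\ell+1 \le d/4$ delivers the stated $32 d e^{-2\lambda(d-\ell-1)}$. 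The main obstacle is the case $\mu<0$: the Taylor coefficient above then contains $(e^{2|\mu|}-1)^2$, which can be exponentially large in $\lambda(\ell+1)$, so one must go beyond the leading-order expansion and carefully track how the tail of the distribution of $T$ (where $\phi(2\mu,\tau)$ itself grows) contributes to $\mathbb{E}_T[\phi(2\mu,\tau)]$; the hypothesis $\ell+1 \le d/4$ is precisely what keeps these contributions dominated by the target bound, through an inequality of the form $4\lambda(\ell+1) + \log d \le 2\lambda(d-\ell-1)$.
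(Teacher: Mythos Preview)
Your conditional-independence reduction is correct and is a genuinely different route from the paper: in $P_\ell$, node $d{+}1$ and each node of $V_B$ have neighbourhood exactly $W$, so given $X_W$ they are i.i.d.\ with mean $\tau=\tanh(\lambda T)$, and your closed form for $\phi(t,\tau)$ is right. The paper instead marginalises only $x_{d+1}$ and then sandwiches the remaining $d$-node sum between an upper bound $2\cosh\!\big(\lambda(d{-}\ell{-}1)+\eta(\ell{+}1)\big)\,Y_\ell$ (taking the sup of the $\cosh$; $Y_\ell$ is the full Emmentaler partition function on $d$ nodes, controlled by a separate Peierls-type lemma) and a lower bound from the two frozen configurations. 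The ratio $Z(0)Z(2\mu)/Z(\mu)^2$ then factors into a $\cosh$-ratio, handled exactly via $\tfrac{\cosh a\,\cosh(a+2b)}{\cosh^2(a+b)}=1+\tfrac{\sinh^2 b}{\cosh^2(a+b)}$, times a $\mu$-free factor $(Y_\ell/\text{frozen})^2\le(1+2de^{-2\lambda(d-\ell-1)})^2$. Because that route never conditions, it works directly with the partition functions and avoids the issue below; what your route would buy, if completed, is an explicit $\mu$-dependent coefficient (the paper itself remarks that its bound loses this).

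The gap in your sketch is that you bound the \emph{conditional} ratio $\phi(2\mu,\tau)/\phi(\mu,\tau)^2$ and then invoke concentration of $T$, but the quantity you must control is the ratio of \emph{expectations} $\mathbb{E}_T[\phi(2\mu,\tau)]\big/\mathbb{E}_T[\phi(\mu,\tau)]^2$, and nothing in the proposal bridges these. For $\mu\ge0$ a repair is easy: the pointwise bound $\phi(t,\tau)\le e^{t(\ell+1)}$ caps the numerator at the frozen value, while restricting to the frozen event lower-bounds the denominator, and both corrections are $O(de^{-2\lambda(d-\ell-1)})$. For $\mu<0$ this fails already at the frozen $\tau$: expanding $\phi(2\mu,\tau)$ near $\tau=1$ gives $\phi(2\mu,\tau)\approx e^{-2|\mu|(\ell+1)}+\tfrac12 e^{2|\mu|(\ell+1)}(1-\tau)$, so the leading correction in $1-\tau^2\le 4e^{-2\lambda(d-\ell-1)}$ carries a \emph{relative} factor $e^{4|\mu|(\ell+1)}$. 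Your inequality $4\lambda(\ell{+}1)+\log d\le 2\lambda(d{-}\ell{-}1)$ keeps $\chi^2$ bounded but yields only $O\!\big(de^{4|\mu|(\ell+1)-2\lambda(d-\ell-1)}\big)$, which does not recover the claimed $O\!\big(de^{-2\lambda(d-\ell-1)}\big)$. Closing this along your line would require level-by-level control---pairing $\mathbb{P}(|T|=d{-}\ell{-}1{-}2k)$ with a pointwise estimate of $\phi(2\mu,\tanh(\lambda T))$ at each $k$---a substantive second argument that the proposal acknowledges but does not carry out.
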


Notice that the above proposition does not show a $\mu$ dependence. This is due to inefficiencies in our proof technique. We strongly conjecture that a bound of the form $(1 + C d\tanh^2(\mu(\ell + 1)) e^{-2\lambda(d - \ell - 1)})^n$ holds.

\paragraph{Emmentaler v/s Full Clique} We show that it is difficult to distinguish between an Emmentaler and a full clique. Concretely, we let $P_\ell$ be an Emmentaler as above, and in $Q_\ell$, we add back the deleted subcliques  to each $V_i,$ but with weight $\mu.$ \begin{myprop}\label{prop:emmen_vs_full}
    If $\ell + 1 \le d/4$ and $\lambda(d-4) \ge 3 \log d,$ then \[ \chi^2(Q_{\ell}\| P_{\ell} )  \le  d^2\min(1, \mu^2d^4)e^{-2\lambda (d-1 -\ell)}.\]
\end{myprop}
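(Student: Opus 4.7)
\textbf{Proof proposal for Proposition \ref{prop:emmen_vs_full}.} The plan is to compute the $\chi^2$ via the exponential form of the likelihood ratio and then control the resulting quantity using the freezing of $P_\ell$. Since $H_{Q_\ell} - H_{P_\ell} = \mu W$ with $W(x) := \sum_{i=1}^B \sum_{\{a,b\} \subset V_i} x_a x_b$, a direct computation gives $Q_\ell(x)/P_\ell(x) = e^{\mu W(x)}/\mathbb{E}_{P_\ell}[e^{\mu W}]$, and hence
\[ \chi^2(Q_\ell \| P_\ell) + 1 = \frac{\mathbb{E}_{P_\ell}[e^{2\mu W}]}{\mathbb{E}_{P_\ell}[e^{\mu W}]^2}. \]
The combinatorial identity $W = \tfrac{1}{2}\sum_{i=1}^B (S_i^2 - (\ell+1))$ with $S_i = \sum_{a \in V_i} x_a$ shows that $W \le d\ell/2$ always, with equality exactly on the frozen set $A = \{\pm \mathbf{1}\}$. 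Writing $\Delta(x) := W(x) - d\ell/2 \le 0$ and $f := e^{\mu \Delta}$, we arrive at $\chi^2 = \mathrm{Var}_{P_\ell}(f)/\mathbb{E}_{P_\ell}[f]^2$, with $f \equiv 1$ on $A$, so that all variance comes from the non-frozen states. Moreover, for a single-flip state (one node in some $V_i$ flipped from $\mathbf{1}$) a direct computation gives $\Delta = -2\ell$ exactly.

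The main technical step is a freezing bound for the Emmentaler. Since the Emmentaler is $(d-1-\ell)$-regular, $P_\ell(x)/P_\ell(\mathbf{1}) = e^{-2\lambda |E(S,S^c)|}$ for $x = \mathbf{1}_S - \mathbf{1}_{S^c}$, and the $(\ell+1)$-partite structure lets us bound the mincut of $S$ in terms of $|S|$. Carrying out the sum
\[ P_\ell(A^c) \le \sum_{k=1}^{d-1} \binom{d}{k} e^{-2\lambda\,\mathrm{mincut}(k)}, \]
the hypothesis $\lambda(d-4) \ge 3\log d$ (together with $\ell + 1 \le d/4$) makes each term decay by a factor of at least $d^{-5}$, so the sum is dominated by single-flip states. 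This yields $P_\ell(A^c) \le C d e^{-2\lambda(d-1-\ell)}$, which in particular is at most $1/2$ under the hypotheses, so $\mathbb{E}_{P_\ell}[f] \ge P_\ell(A) \ge 1/2$. The same reasoning bounds $\mathbb{E}_{P_\ell}[\Delta^2]$, which is again dominated by the single-flip contribution $\Delta^2 = 4\ell^2$, giving $\mathbb{E}_{P_\ell}[\Delta^2] \le C d \ell^2 e^{-2\lambda(d-1-\ell)}$.

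With $\phi := \mathbb{E}_{P_\ell}[f] \ge 1/2$, it suffices to bound $\mathrm{Var}_{P_\ell}(f) \le \mathbb{E}_{P_\ell}[(f-1)^2]$ and multiply by $4$. The result is obtained by splitting on the size of $|\mu\Delta|$. The elementary inequality $|e^{y}-1| \le \min(1, |y|)$ (valid for $y \le 0$) gives the pointwise bound $(f-1)^2 \le \min(1, \mu^2 \Delta^2)$, whence
\[ \mathbb{E}_{P_\ell}[(f-1)^2] \le \min\bigl(\, \mu^2 \,\mathbb{E}_{P_\ell}[\Delta^2],\; P_\ell(A^c)\,\bigr) \le C \min\bigl(\mu^2 d^6,\; d^2\bigr) e^{-2\lambda(d-1-\ell)}, \]
where we have paid the crude factors $\ell^2 \le d^2$ and $d \le d^2$ to get matching $d$-powers. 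Rearranging the $\min$ as $d^2 \min(1, \mu^2 d^4)$ and absorbing the overall factor of $4$ into the constant delivers the stated bound.

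The hardest step will be executing the freezing estimate cleanly, because the mincut of the Emmentaler as a function of $|S|$ has a piecewise character: for $|S| \le \ell + 1$ the minimum is attained by taking $S$ inside a single part (giving $|S|(d-1-\ell)$), whereas for larger $|S|$ the mincut instead scales like $\min(|S|,d-|S|)\cdot(d-\ell)$ after accounting for internal adjacencies. Showing that the geometric decay in $k$ is fast enough to swamp the $\binom{d}{k}$ multiplicities at every $k$ requires the $(\ell+1)$-partite structure and the low-temperature condition to be used in tandem; once that book-keeping is in place, the rest of the argument is mechanical.
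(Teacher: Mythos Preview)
Your approach is essentially the paper's, just framed probabilistically: both write $1+\chi^2 = \mathbb{E}_{P_\ell}[e^{2\mu W}]/\mathbb{E}_{P_\ell}[e^{\mu W}]^2$ and reduce to bounding $\mathbb{E}_{P_\ell}[(e^{\mu\Delta}-1)^2]$ via a freezing estimate on the Emmentaler; your mincut-by-level computation is exactly the content of the paper's Lemma~\ref{lem:emmen_vs_full} (proved there by the same convexity/ratio trick), and your two-way bound $(e^{\mu\Delta}-1)^2\le\min(1,\mu^2\Delta^2)$ is precisely how the paper extracts the factor $\min(d/2,\mu^2 d^5)$ from $\sum_k(1-e^{-2\mu\ell k})^2$. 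The only organisational difference is that the paper uses the partition-function inequality $(1+\rho_1)(1+\rho_2)\le 1+\rho_1+\rho_2$ where you use $\mathrm{Var}(f)\le\mathbb{E}[(f-1)^2]$; unwinding, these are the same bound up to one power of $\mathbb{E}[f]\ge 1/2$ in the denominator.

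One small caveat: your pointwise inequality $|e^y-1|\le\min(1,|y|)$ requires $y\le 0$, so as written you only treat $\mu\ge 0$; the proposition nominally allows either sign (and the paper claims both), though every downstream application in the paper takes $\mu=\alpha>0$.
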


\section{Miscellaneous}

\subsection{Using statistical formulations to test structural changes}\label{appx:stat_div}

The main text makes the case that statistical formulations of $\gofbb$ do not give us the whole story when one is interested in structural changes. Concretely, though, this only directly affects the lower bounds. On the other hand, when we restrict alternate hypotheses in the $\gofbb$ problem to make a lot of changes, then one may expect that tests under statistical formulations are powerful.

Intuitively, this expectation is rendered plausible by the fact that the notion of being close to a given model is similar under the statistical and the structural formulations - equality under one is also equality under the second, at least in the setting of unique network structures, and mere continuity suggests that, at least locally, setting some value of $s(P, \varepsilon)$ or $\varepsilon(s, P)$ should allow one to translate tests from the statistical to the structural notions of changes and vice versa.\footnote{It should be noted that this analogy is flawed - while the notions of being close are indeed similar, the notion of being far from a model is significantly different under the two formulations. The main text mentions an example illustrating this - if a small group of disconnected nodes is bunched into a clique, a large statistical change is induced due to the marked difference in the marginal law of this group, but the structural change is tiny. Of course, being close and far are ultimately related concepts, and some shadow of this effect must be cast on the closeness argument we have just presented.} However, this strategy does not work too well, at least with our current understanding of Ising models. More concretely - utilising statistical tests for structural testing in a sample efficient way requires a \emph{local} understanding of the distortion of the edge-Hamming distance of the graph under the map $(\theta, \theta') \mapsto \mathrm{SKL}{(\theta\|\theta')},$ which is not available as of now. Global constraints on the same are available, and are unhappily both rather pessimistic, and essentially tight. This means that using the methods developed for testing for statistical divergences in the setting of structural identity testing is problematic.

Some details - the best available results that translate edge-differences to symmetrised KL divergence is via Lemma 4 of \cite{SanWai}. The Bhattacharya coefficient of two distributions is $\mathrm{BC}(P,Q) := \sum_x \sqrt{P(x) Q(x)}.$ The cited lemma argues that under $s$ changes, \[\mathrm{BC} \le \exp{-C s \sinh^2(\alpha) e^{-2\beta d}/d }.\] Let $-\varphi$ denote the exponent in the above, for conciseness. Since $ -2\log \mathrm{BC} \le \mathrm{KL},$ this induces $D_{\mathrm{SKL}} \gtrsim \varphi,$ and similarly, since $1 - \mathrm{BC} \le \mathrm{TV},$ this tells us also that $\mathrm{TV} \ge 1 - \exp{-\varphi}.$ Since $1 - e^{-z} \le z,$ this means that the best lower bound we can possibly derive this way is $\mathrm{TV} \ge \varphi.$  

Now, the best known upper bounds for statistical testing under SKL is $(\beta pd /\varepsilon)^2$ up to log factors \cite{daskalakis2016testing}, and under TV for ferromagnets this may be improved to $(pd/\varepsilon)^2$ \cite{bezakova_lowbd}. Plugging in the values of $\varepsilon$ implicit in the above, the first of these then requires about \[ \left( \frac{\beta pd}{\varphi} \right)^2 \sim \frac{e^{4\beta d}}{\alpha^4} \left( \frac{\beta pd^2}{s}\right)^2,\] which is worse than the testing by first recovering the underlying network. Similarly, under TV, a similar number is required, but without an extra $\beta$-factor, which has little effect in light of terms like $e^{\beta d}$ showing up. So, na\"{i}vely using this structural characterisation does not give promising results.

Further, unfortunately, the characterisation of $\mathrm{BC}$, and indeed of KL and TV divergences offered through this is essentially tight. This essentially follows from our results providing control on the $\chi^2$-divergences in various construction, and the control this imposes on $\mathrm{KL, TV}$ via the monotonicity of R\'{e}nyi divergences and Pinsker's inequality. It may be the case that in some special cases, tight bounds for structural testing may be derived via the statistical testing approach above. We have not explored this possibility in detail.

\subsection{Lower Bounds on Property Testing}\label{appx:prop_test}

In passing, we mention that our constructions improve upon lower bounds for some of the property tests studied in \cite{neykov2019property}. For instance, the triangle construction provides an obstruction to cycle testing that does not require explicit control on $\alpha$ as in \cite{neykov2019property}. Similarly, the Clique with a hole, and the Emmentaler clique with an extra node constructions may serve as obstructions to testing the size of the largest clique, and to testing the value of the maximum degree of the network structures in low temperatures. In high temperatures, the Fan graph construction shows that testing maximum degree is hard. In each case this either improves upon the lower bounds of \cite{neykov2019property} by either improving the exponent from $\beta d/4$ to $2\beta d(1 - o_d(1)),$ or by removing an explicit high-temperature condition that is enforced in the lower bound.

%\subsection{A Near-Tight Lower Bound for Exact Structure Learning}

\section{Proofs of Widget Bounds}\label{appx:widget_proofs}

\textbf{An Observation} For Ising models $P,Q,$ \[1 + \chi^2(Q\|P) = \sum_x \frac{Q(x)^2}{P(x)} = \sum_x \frac{Z_P}{Z_Q^2} \exp{x^T 2\theta_Q x - x^T \theta_P x} = \frac{Z_P Z_{2Q-P}}{Z_Q^2}, \] where $Z_{2Q-P} := \sum_x \exp{x^T (2\theta_Q - \theta_P) x}$ is yet another partition function. We will repeatedly use this form of the $\chi^2$-divergence, without further comment, in the following.

\subsection{Star-Based Widgets}

\subsubsection{Triangle} 

\begin{proof}[Proof of Proposition \ref{prop:triangle}]

Let $P = P_{\mathrm{Triangle}}, Q = Q_{\mathrm{Triangle}}.$ Note that 

\begin{align*} P(x) &= \frac{1}{Z_P}e^{\lambda x_2(x_1 + x_3)} \\ Q(x) &= \frac{1}{Z_Q(\mu)}e^{\lambda x_2(x_1 + x_3)} e^{\mu x_1x_3} \end{align*}

Where the partition functions may simply be computed to obtain the expressions below: \begin{align*} Z_P &= 2^3 \cosh^2\lambda = 4(\cosh2 \lambda + 1) \\ Z_Q(\mu) &= 4(e^\mu \cosh 2\lambda + e^{-\mu}).\end{align*}

Further, we have that \[ W:= \mathbb{E}_P[(Q/P)^2] = \left(\frac{Z_P}{Z_Q(\mu)}\right)^2 \cdot \frac{1}{Z_P} \cdot \sum e^{\lambda x_2 (x_1 + x_3)} e^{2\mu x_1x_3} = \frac{Z_P Z_Q(2\mu)}{Z_Q(\mu)^2}. \]

Inserting the previous computed values of these partition functions, we have \begin{align*}
    W &= \frac{(\cosh 2\lambda + 1) (e^{2\mu} \cosh 2\lambda + e^{-2\mu})}{(e^{\mu} \cosh 2\lambda + e^{-\mu})^2} \\
      &= \frac{ e^{2\mu} \cosh^2 2\lambda + e^{-2\mu} + \cosh 2\lambda (e^{2\mu} + e^{-2\mu})}{(e^{\mu} \cosh 2\lambda + e^{-\mu})^2} \\
      &= 1 + \frac{\cosh 2\lambda (e^{\mu}  - e^{-\mu})^2}{(e^{\mu} \cosh 2\lambda + e^{-\mu})^2}\\
      &\le 1 + \frac{(e^{\mu} - e^{-\mu})^2}{e^{2\mu} \cosh 2\lambda} \\
      &\le1 +  \frac{4\sinh^2 \mu}{\cosh^2\mu \cosh 2\lambda} \\
      &\le 1 + 8e^{-2\lambda}  \tanh^2 \mu
\end{align*}

where the second and third inequalities both use that $e^x \ge \cosh x \ge e^x/2,$ for $x \ge 0$.
\end{proof}

\subsubsection{Fan with deletions}

In keeping with the rest of the text, these proofs will set $2B = d.$ Note that the value of $B$ does not enter the resulting bounds.

\begin{proof}[Proof of Proposition \ref{prop:fan}]

Let 
\begin{align*}
 P_{\ell, \eta, \mu, \lambda}(x) := \frac{1}{Z(\ell, \eta, \mu, \lambda)} &\exp{\lambda x_{d+1} (\sum_{i = 1}^{d/2} x_{2i})  + \mu x_{d+1} (\sum_{i = \ell + 1}^{d/2} x_{2i-1})} \\ & \cdot \exp{\eta x_{d+1} (\sum_{i = 1}^{\ell} x_{2i-1}) +  \lambda (\sum_{i = 1}^{d/2} x_{2i} x_{2i-1})}.
\end{align*}

Then $P_{\ell, \mathrm{Fan}} = P_{\ell, 0, \mu, \lambda},$ $Q_{\ell , \mathrm{Fan}} = P_{\ell, \mu, \mu, \lambda}.$ Further, $Z_{2Q - P} = Z(\ell, 2\mu, \mu, \lambda).$

Here again the partition function is simple to compute. In essence, the groups $(x_{2i- 1} , x_{2i})$ across $i$ are independent given $x_{d+1}$, and the expressions, unsurprisingly, are invariant to value of $x_{d+1}$. 

Unfortunately the calculations get a little messy. If one is not interested in the results on property testing in \S\ref{appx:prop_test}, then the following may be safely skipped. We do note that the steps below are elementary, it is just the form of the expressions that is long.

\begin{align*} &\quad Z(\ell,\eta, \mu, \lambda) \\ &= \sum_{x_{d+1}} \sum \exp{\lambda x_{d+1} (\sum_{i = 1}^{d/2} x_{2i})  + \mu x_{d+1} (\sum_{i = \ell + 1}^{d/2} x_{2i-1}) + \eta x_{d+1} (\sum_{i = 1}^{\ell} x_{2i-1}) +  \lambda (\sum_{i = 1}^{d/2} x_{2i} x_{2i-1})} \\
&=  \sum_{x_{d+1}} \prod_{i =1}^{\ell} \sum_{x_{2i-1}, x_{2i}}  e^{ x_{d+1}( \eta x_{2i-1} + \lambda x_{2i}) + \lambda x_{2i}x_{2i-1}  }  \cdot \prod_{i =\ell +1}^{d/2} \sum_{x_{2i-1}, x_{2i}}  e^{ x_{d+1}( \mu x_{2i-1} + \lambda x_{2i}) + \lambda x_{2i}x_{2i-1}}\\
&= \sum_{x_{d+1}} \left( 2e^{\lambda} \cosh( (\lambda + \eta)x_{d+1}) + 2e^{-\lambda} \cosh( (\lambda - \eta) x_{d+1})  \right)^{\ell} \\ &\qquad \qquad \qquad \cdot  \left( 2e^{\lambda} \cosh( (\lambda + \mu)x_{d+1}) + 2e^{-\lambda} \cosh( (\lambda - \mu) x_{d+1})  \right)^{d/2-\ell} \\
&= 2^{d+1} \left( e^{\lambda} \cosh( \lambda + \eta) + e^{-\lambda} \cosh( \lambda - \eta)  \right)^{\ell} \left( e^{\lambda} \cosh( \lambda + \mu) + e^{-\lambda} \cosh( \lambda - \mu)   \right)^{d/2 - \ell}\end{align*}

Thus, \begin{align*}
    1 + \chi^2(Q\|P) &= \frac{Z(\ell, 0, \mu, \lambda) Z(\ell, 2\mu, \mu, \lambda)}{ Z(\ell, \mu, \mu, \lambda)^2}\\
                         &= \left( \frac{ \left( e^{\lambda} \cosh( \lambda) + e^{-\lambda} \cosh( \lambda)  \right) \left( e^{\lambda} \cosh( \lambda +2\mu) + e^{-\lambda} \cosh( \lambda - 2\mu)  \right)}{ \left( e^{\lambda} \cosh( \lambda+\mu) + e^{-\lambda} \cosh( \lambda-\mu)  \right)^2} \right)^{\ell} \\
                         &=: U^{\ell}. 
\end{align*}

We proceed to estimate $U$. \begin{align*}
    U &= \frac{ \left( e^{\lambda} \cosh( \lambda) + e^{-\lambda} \cosh( \lambda)  \right) \left( e^{\lambda} \cosh( \lambda +2\mu) + e^{-\lambda} \cosh( \lambda - 2\mu)  \right)}{ \left( e^{\lambda} \cosh( \lambda+\mu) + e^{-\lambda} \cosh( \lambda-\mu)  \right)^2}\\
      &= \frac{e^{2\lambda} \cosh \lambda \cosh(\lambda + 2\mu) + e^{-2\lambda} \cosh \lambda \cosh(\lambda - 2\mu) + \cosh(\lambda) \cosh( \lambda + 2\mu) +  \cosh(\lambda) \cosh( \lambda - 2\mu) }{e^{2\lambda} \cosh^2(\lambda + \mu) + e^{-2\lambda} \cosh^2(\lambda - \mu) + 2\cosh( \lambda + \mu)\cosh( \lambda - \mu) }
\end{align*}

By eliminating one factor of the denominator from the numerator above, we obtain the sequence of relations that follows below.
\begin{align*}
    U &\overset{(a)}{=} 1 + \frac{ (e^{2\lambda} + e^{-2\lambda}) \sinh^2 \mu + \sinh(\mu) \left(\sinh(2\lambda + \mu) - \sinh(2\lambda - \mu) \right)}{e^{2\lambda} \cosh^2(\lambda + \mu) + e^{-2\lambda} \cosh^2(\lambda - \mu) + 2\cosh( \lambda + \mu)\cosh( \lambda - \mu)} \\
      &\overset{(b)}= 1 + \frac{2\cosh(2\lambda) \sinh^2 \mu + 2 \cosh(2\lambda)\sinh^2\mu}{ \left( e^{\lambda} \cosh( \lambda+\mu) + e^{-\lambda} \cosh( \lambda-\mu)  \right)^2} \\
      &= 1 + \frac{ 4\sinh^2(\mu) \cosh(2\lambda)}{e^{2\lambda} \cosh^2(\lambda + \mu) + e^{-2\lambda} \cosh^2(\lambda - \mu) + 2\cosh( \lambda + \mu)\cosh( \lambda - \mu)} \\
      &\overset{(c)}{\le} 1 + 4\frac{\sinh^2\mu}{ \cosh^2(\lambda + \mu)} \le 1 + 4\frac{\sinh^2 \mu}{\cosh^2 \lambda \cosh^2 \mu}\\
      &\le 1 + 16 e^{-2\lambda} \tanh^2\mu,
\end{align*}

where $(a)$ follows by the identities \begin{align*}
    \cosh(u) \cosh(u + 2v)  - \cosh^2(u + v) &= \sinh^2 v \\
    \cosh(u) \cosh(u + 2v) - \cosh(u+v) \cosh(u-v) &=  \sinh(v) \sinh(2u + v),
\end{align*} 

$(b)$ uses \[ \sinh(2u + v) - \sinh(2u - v) = 2\cosh (2u) \sinh u,\] and $(c)$ follows by dropping all terms but the first in the denominator, and observing that $e^{2\lambda} \ge \cosh (2\lambda).$ Finally, the inequality $\cosh(\lambda + \mu) \ge \cosh \lambda \cosh \mu$ holds because $\lambda , \mu \ge 0.$

\end{proof}

\subsection{Clique-based Widgets}

The method for showing the bounds is developed in the case of the Clique with a single edge deleted. While there are variations in the proofs of the following two cases, the basic recipe remains the same.

We begin with a technical lemma that is repeatedly used in the following. \begin{mylem}\label{lemma:concave_fn} Let $\tau:[a,b] \to \mathbb{R}$ be a function differentiable on $(a,b)$ such that $\tau'$ is strictly concave. If $\tau(a) < 0$ and $\tau(b) > 0,$ then $\tau$ has exactly one root in $(a,b)$ \end{mylem}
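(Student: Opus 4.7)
Existence follows immediately from the intermediate value theorem: $\tau$ is continuous on $(a,b)$ (as it is differentiable there), and the sign change $\tau(a)<0<\tau(b)$ forces at least one root in $(a,b)$.

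For uniqueness I would exploit the structural information on $\tau'$. The elementary key fact is that a strictly concave function on an interval has at most two zeros, because such a function is either strictly monotone or strictly increasing then strictly decreasing, and a strictly monotone function has at most one zero. I would then case-split on the number of zeros of $\tau'$ in $(a,b)$, which is thus $0$, $1$, or $2$.

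If $\tau'$ has no zero in $(a,b)$ it is of constant sign, and since $\tau(b)>\tau(a)$ rules out $\tau'<0$, we get $\tau'>0$ throughout, so $\tau$ is strictly increasing and has exactly one root. If $\tau'$ has one zero $y$, strict concavity admits two subcases: either $\tau'<0$ on $(a,y)$ and $\tau'>0$ on $(y,b)$, making $y$ a unique global minimum of $\tau$ with $\tau(y)<\tau(a)<0$, so the unique sign change occurs in the rising phase $(y,b)$; or $\tau'>0$ on $(a,y)$ and $\tau'<0$ on $(y,b)$, making $y$ a unique global maximum of $\tau$ with $\tau(y)>\tau(b)>0$, so the unique sign change occurs in the rising phase $(a,y)$. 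If $\tau'$ has two zeros $y_1<y_2$, a secant comparison using the strict concavity of $\tau'$ gives $\tau'>0$ on $(y_1,y_2)$ and $\tau'<0$ on $(a,y_1)\cup(y_2,b)$; so $\tau$ first decreases, then increases, then decreases. Chaining the boundary conditions $\tau(a)<0$ and $\tau(b)>0$ across these phases forces $\tau(y_1)<\tau(a)<0$ and $\tau(y_2)>\tau(b)>0$, so the unique sign change occurs in the middle increasing phase $(y_1,y_2)$.

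The only subtlety is checking that the terminal decreasing piece in the two-zero case (and its analog in one subcase of the one-zero case) does not dip back through zero. This is immediate, since $\tau$ is monotone on that subinterval with both endpoint values positive, hence stays positive throughout. The argument is thus reduced to the elementary concavity fact combined with standard monotonicity bookkeeping, and no real obstacle remains beyond careful case management.
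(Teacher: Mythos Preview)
Your proof is correct and follows essentially the same approach as the paper. The paper compresses your case analysis by noting that $\tau'$ has at most two roots and is positive between them (by concavity), then decomposes $[a,b]$ into a decrease--increase--decrease pattern with possibly trivial pieces, ruling out roots in the outer (decreasing) pieces via $\tau(a)<0$ and $\tau(b)>0$; your explicit split on $0$, $1$, or $2$ zeros of $\tau'$ unpacks exactly this structure.
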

\begin{proof}\label{lem:concave}
Since $\tau'$ is concave, it can have at most two roots in $(a,b).$ Indeed, if there were three roots $a < x_1 < x_2 < x_3 < b,$ then $\exists t \in (0,1): x_2 = t x_1 + (1-t) x_3,$ and $0 = f(x_2) = t f(x_1) + (1-t)f(x_3)$ violates strict concavity. Further, between its roots, $\tau'$ must be positive, again by concavity.

Thus, we can break $[a,b]$ into three intervals $(I_1, I_2, I_3)$, some of them possibly trivial\footnote{i.e.~of cardinality $0$ or $1$. More precise characterisation can be obtained by casework on the number of roots of $\tau'$.}, of the from $([a, x_1), [x_1, x_2], (x_2, b])$, such that $\tau$ is monotone decreasing on the interiors of $I_1, I_3$ and monotone increasing on the interior of $I_2.$

Note that $\tau$ has at least one root by the intermediate value theorem. We now argue that it cannot have more than one. Since $\tau$ is falling on $I_1,$ it follows that $\sup_{x \in I_1} \tau(x) = \tau(a) < 0,$ and there is no root in $I_1.$ Similarly, since $\tau$ is falling on $I_3$, $\tau(b) = \inf_{x \in I_3} \tau(x) > 0,$ and there is no root in $I_3.$ This leaves $I_2$, and since $\tau$ is monotone on $I_2$, it has at most one root on the same.\qedhere
\end{proof}

\subsubsection{Clique with a single edge deleted}\label{sec:clique_single_pf}

\begin{proof}[Proof of Proposition \ref{prop:1edgeclique}]

Let $P = P_{\mathrm{Clique}}$ and $Q = Q_{\mathrm{Clique}}$ as defined in the main text. For given $\lambda,\eta,$ let \begin{align*}
    P_{\lambda, \eta} (x) := \frac{1}{Z(\lambda, \eta)} e^{\frac{\lambda}{2} \left( (\sum x_i)^2 - (d+1) \right)} e^{-\eta x_1x_2}
\end{align*}

Note that $P = P_{\lambda, \lambda -\mu},$ and $Q = P_{\lambda, \lambda}.$ Further, \[ W:= \mathbb{E}_P[(Q/P)^2] = \frac{Z(\lambda, \lambda - \mu)Z(\lambda, \lambda + \mu)}{Z(\lambda, \lambda)^2}.\] We begin by writing $Z$ in a convenient form, derived by breaking the configurations into bins depending on the number of $x_i$s that take the value $-1$: 

\begin{align*}
    Z(\lambda, \eta) &= \sum_{j = 0}^{d-1} \binom{d-1}{j} \left\{ e^{-\eta} \left( e^{\frac{\lambda}{2} (d + 1 - 2j)^2 - (d+1)} + e^{\frac{\lambda}{2} (d -3 - 2j)^2 - (d+1)}\right) + 2e^{\eta} e^{\frac{\lambda}{2} ( (d - 1 - 2j)^2 - (d+1))} \right\} .
\end{align*}

Notice above that since $(d - 3 - 2(d-1 - j))^2 = ( d + 1 - 2j)^2,$ and $\binom{d-1}{j} = \binom{d- 1}{d - 1 - j},$ it follows that the sums over the first two terms above are identical. Thus,

\begin{align*}
    Z(\lambda, \eta) &= 2 \sum \binom{d-1}{j} e^{-\eta}  e^{\frac{\lambda}{2} (d + 1 - 2j)^2 - (d+1)} +  2\sum e^{\eta} e^{\frac{\lambda}{2} ( (d - 1 - 2j)^2 - (d+1))} \\
    \iff \underbrace{\frac{Z(\lambda, \eta)}{2 e^{\lambda/2 (d^2 - d)}}}_{=: \widetilde{Z}(\lambda, \eta)} &= e^{\lambda d -\eta} 
    \underbrace{\sum \binom{d-1}{j} e^{-2\lambda j ( d + 1 - j)}}_{=: S_1(\lambda)} + e^{-(\lambda d - \eta)} 
    \underbrace{\sum \binom{d-1}{j} e^{-2\lambda j (d - 1 -j)}}_{=: S_2(\lambda)} \\
    \iff \widetilde{Z}(\lambda, \eta) &= e^{\lambda d - \eta} S_1(\lambda) + e^{-\lambda d + \eta} S_2(\lambda).
\end{align*}

Since the term appears often, we set $d' = d-1.$ As a consequence of the above, we have \begin{align*}
    W &= \frac{Z(\lambda, \lambda - \mu)Z(\lambda, \lambda + \mu)}{Z(\lambda, \lambda)^2}  = \frac{\widetilde{Z}(\lambda, \lambda - \mu)\widetilde{Z}(\lambda, \lambda + \mu)}{\widetilde{Z}(\lambda, \lambda)^2}\\
     &= \frac{(e^{\lambda d' + \mu} S_1(\lambda) + e^{-\lambda d' -\mu} S_2(\lambda) ) (e^{\lambda d' - \mu} S_1(\lambda) + e^{-\lambda d' + \mu} S_2(\lambda))}{(e^{\lambda d'} S_1(\lambda) + e^{-\lambda d'} S_2(\lambda))^2} \\
     &= 1 + 4\sinh^2 \mu \frac{S_1 S_2}{(e^{\lambda d'} S_1 + e^{-\lambda d'} S_2)^2}  \\
     &\le 1 + 4\sinh^2 \mu \frac{ e^{-2\lambda d'} S_2 (\lambda) }{S_1 (\lambda)}.
\end{align*}

The bounds are now forthcoming by controlling $S_1, S_2$ as in the following \begin{mylem}\label{lem:1edgeclique} If $d \ge 5$ and $\lambda (d-4) \ge \log(d)$, then \begin{align*}
    S_1(\lambda) &\ge 1 \\
    S_2(\lambda) &\le 2 + 3d e^{-2\lambda(d-2)} \le 2 + 3/d.
    \end{align*}

\end{mylem}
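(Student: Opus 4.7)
For $S_1$, the bound is immediate: isolating the $j = 0$ summand gives $\binom{d-1}{0} e^{0} = 1$, and every other summand $\binom{d-1}{j} e^{-2\lambda j(d+1-j)}$ is nonnegative, so $S_1(\lambda) \ge 1$. No hypothesis on $\lambda$ or $d$ is actually needed for this half.

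For $S_2$, I would peel off the four endpoint contributions and control the bulk. The terms $j \in \{0, d-1\}$ each contribute $1$, and the terms $j \in \{1, d-2\}$ each contribute $(d-1) e^{-2\lambda(d-2)}$, so the target $S_2 \le 2 + 3 d e^{-2\lambda(d-2)}$ is equivalent to showing
\begin{equation*}
R := \sum_{j=2}^{d-3} \binom{d-1}{j} e^{-2\lambda j(d-1-j)} \le (d+2) e^{-2\lambda(d-2)}.
\end{equation*}
The key algebraic observation is the factorisation $j(d-1-j) - (d-2) = (j-1)(d-j-2)$, which pulls out the reference exponential and turns the task into bounding $\widetilde R := \sum_{j=2}^{d-3} \binom{d-1}{j} e^{-2\lambda(j-1)(d-j-2)}$ by $d+2$.

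To control $\widetilde R$ I would exploit the symmetry $j \mapsto d-1-j$ (which preserves both factors) to restrict to $j \in [2, \lfloor (d-1)/2 \rfloor]$ and double, handling the leftmost index separately. At $j = 2$, $\binom{d-1}{2} e^{-2\lambda(d-4)} \le \frac{(d-1)(d-2)}{2d^2} \le \frac{1}{2}$, using directly the hypothesis $e^{-2\lambda(d-4)} \le 1/d^2$. For $j \in [3, \lfloor (d-1)/2\rfloor]$, I would lower-bound $d-j-2 \ge (d-3)/2$ to get $(j-1)(d-j-2) \ge (j-1)(d-3)/2$, and then combine $\binom{d-1}{j} \le d^j/j!$ with $e^{-\lambda(d-3)} \le 1/d$ (immediate from $\lambda(d-3) \ge \lambda(d-4) \ge \log d$ for $\lambda \ge 0$) to conclude $\binom{d-1}{j} e^{-\lambda(j-1)(d-3)} \le d/j!$. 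Summing and doubling, $\widetilde R \le 2\bigl[1/2 + d \sum_{j \ge 3} 1/j!\bigr] + \varepsilon_d = 1 + 2d(e - 5/2) + \varepsilon_d$, where $\varepsilon_d$ is the single middle term present only when $d$ is odd. Since $e - 5/2 < 0.22$, the first pieces add to at most $1 + 0.44 d$, comfortably below $d + 2$.

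The only mild technicality is the parity-dependent middle correction $\varepsilon_d$, which I expect to dispose of easily rather than being a real obstacle: at $j = (d-1)/2$ the exponent $(j-1)(d-j-2) = (d-3)^2/4$ is quadratic in $d$, so $e^{-\lambda(d-3)^2/2}$ picks up a $d^{-\Omega(d)}$ damping that beats the $\binom{d-1}{(d-1)/2} \le 2^{d-1}$ growth by a huge margin, giving $\varepsilon_d \le 1$ uniformly for odd $d \ge 5$. Combining yields $\widetilde R \le 0.44 d + 2 \le d + 2$, hence the first $S_2$ bound; the terminal $3 d e^{-2\lambda(d-2)} \le 3/d$ is then immediate from $e^{-2\lambda(d-2)} \le e^{-2\lambda(d-4)} \le 1/d^2$. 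If anything, the most delicate bookkeeping step is keeping track of the symmetry so that the $j=2$ and $j=d-3$ contributions are not double-counted when doubling the half-range sum.
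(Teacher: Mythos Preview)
Your proof is correct, but it takes a genuinely different route from the paper's. The paper argues via a \emph{ratio trick}: setting $T_j = \binom{d-1}{j}e^{-2\lambda j(d-1-j)}$, it studies $\tau(j) := \log(T_{j+1}/T_j)$, observes that $\tau'$ is strictly concave, and invokes an auxiliary lemma (Lemma~\ref{lemma:concave_fn}) to conclude that $\tau$ has a unique root. This gives that $\{T_j\}$ is unimodal (first decreasing, then increasing), so every interior term is at most $\max(T_2,T_{d-3}) = T_2$, and a single ratio estimate $T_2/T_1 \le 1/d$ finishes the job. By contrast, you avoid the unimodality machinery entirely: the factorisation $j(d-1-j)-(d-2)=(j-1)(d-j-2)$ lets you pull out the reference exponential and bound each interior term directly by $d/j!$, after which a convergent series does the rest. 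Your argument is more self-contained for this particular sum; the paper's ratio trick, on the other hand, is the template it reuses verbatim in the later clique-with-hole and Emmentaler computations, so it pays for the auxiliary lemma many times over.

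One small remark on your bookkeeping: when $d$ is odd, doubling the half-range $[2,\lfloor (d-1)/2\rfloor]$ already \emph{over}-counts the fixed point $j=(d-1)/2$, so the additive $\varepsilon_d$ is unnecessary (and your crude bound $2^{d-1}d^{-(d-3)/2}$ for it is actually a hair larger than $1$ at $d=5,7$, though the true term is well below $1$ there). Simply dropping $\varepsilon_d$ gives $\widetilde R \le 1 + 2d(e-5/2) \le d+2$ cleanly, with no parity casework needed.
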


The bound follows directly from the control offered above.  \end{proof}

This proof describes closely the structure of the forthcoming proofs \begin{itemize}
    \item Begin by introducing one free parameter, $\eta$ varying which yields Ising models that interpolate between $P$ and $Q.$
    \item Express the $\chi^2$ divergence as a ratio of partition functions.
    \item Exploit the symetries of the mean field Ising model to more conveniently write these partition functions.
    \item Control the terms arising via a `ratio trick' as in the proof of Lemma \ref{lem:1edgeclique}. At time this is used more than once, or a more direct form of this trick is used instead.
\end{itemize} 

We conclude by showing Lemma \ref{lem:1edgeclique}.

\begin{proof}[Proof of Lemma \ref{lem:1edgeclique}]
$ S_1 \ge 1$ follows trivially, since all terms in the sum are non-negative and the first term is $\binom{d-1}{0} e^{0} = 1.$

Concentrating on $S_2,$ let $T_j := \binom{d-1}{j} e^{-2\lambda j (d - 1 -j)}$. Note that $S_2 = \sum T_j,$ and that $T_j = T_{d - 1 - j}$ for every $j$. Further, for $j \in [0:d-2],$ \[ \frac{T_{j+1}}{T_j} =  \frac{d-1-j}{j+1} e^{-2\lambda (d - 2 - 2j)}.\]

Treating $j$ as a real number in $[0,d-2],$ define \[\tau(j) = \log(d - 1 - j) - \log(j + 1) - 2\lambda(d - 2- 2j). \]

We have \begin{align*}
    \tau'(j) &= -\frac{1}{d - 1 - j} - \frac{1}{j + 1} + 4\lambda \\
    \tau''(j) &= -\frac{1}{(d - 1 - j)^2} + \frac{1}{(j+1)^2} \\
    \tau'''(j) &= - \frac{2}{(d -1 - j)^3} - \frac{2}{(j+1)^3} <0.
\end{align*}

We may now note that $\tau'$ is a strictly concave function on the relevant domain. Further, note that since  $\log(d- 1) \le 2\lambda(d-2)$ follows from our conditions, $\tau(0)< 0,$ and similarly, $\tau(d-2) > 0.$ By Lemma \ref{lem:concave}, $\tau$ has exactly one root in $[0, d-2]$ - in particular, this lies at $j = d/2 - 1$. But since $T_{j+1}/T_j = e^{\tau(j)},$ we obtain that for $j \le d/2-1, T_{j+1} \le T_j,$ and for $j\ge d/2 -1, T_{j+1} \ge T_j.$

Since $T$s are decreasing until $d/2-1$ and increasing after $d/2,$ it follows that for all $j \in [2:d-3],$ $T_j \le \max(T_2, T_{d-3}) = T_2.$ Now, under the conditions of the theorem, \begin{align*}
    \frac{T_2}{T_1} &= \exp{\tau(1)} = \exp{\log(d-2) - \log 2 - 2\lambda(d-4)} \\
                    &\le \exp{ \log(d-2) - \log 2 - 2\log(d) } \le 1/d,
\end{align*} where we have used the assumption $\lambda(d-4) \ge \log d$. Thus, \begin{align*} S_2 &= T_0 + T_1 + \sum_{j = 2}^{d-3} T_j + T_{d-2} + T_{d-1} \\ &\le 1 + T_1 + \frac{d-4}{d} T_1  + T_1 + 1 \\ &\le  2 + 3d \exp{-2\lambda (d-2)} \le 2 + 3/d. 
\qedhere \end{align*}

We call this method of estimating sums such as $S_2$ the \emph{ratio trick}, since they control the values of the sums by controlling the ratios of subsequent terms.
\end{proof}

\subsubsection{Clique with Large Hole}

The computations of this section are in essence a deepening of the previous section, and we will frequently make references to the same.

\begin{proof}[Proof of Proposition \ref{prop:clique_hole_Q||P}]
Once again condensing notation, let $P := P_{\ell, \mathrm{Clique}}, Q := Q_{\ell, \mathrm{Clique}}.$

Further, let \[P_{\ell, \lambda, \eta}(x) := \frac{1}{Z_\ell(\lambda, \eta)} e^{ \frac{\lambda}{2} \left( \sum_{1 \le i \le d+1} x_i \right)^2 - (d+1) } e^{ -\frac{\eta}{2} \left( \sum_{1 \le i \le \ell} x_i \right)^2 - \ell } \]

Again, $P = P_{\ell, \lambda, \lambda - \mu}, Q = P_{\ell, \lambda, \lambda}$ holds. $Z_\ell$ is the central object for this section, and has the following expression. This is derived by tracking the number of negative $x_i$s in both the bulk of the clique and the single `hole' separately.
\begin{align*} Z_\ell(\lambda, \eta) &:=  \sum_{ \{\pm 1\}^{d+1}} e^{ \frac{\lambda}{2} \left( \sum_{1 \le i \le d+1} x_i \right)^2 - (d+1) } e^{ -\frac{\eta}{2} \left( \sum_{1 \le i \le \ell} x_i \right)^2 - \ell }  \\
&= \sum_{i,j} \binom{\ell}{i} \binom{d+1 - \ell}{j}   e^{\frac{\lambda}{2} (d + 1 - 2i -2j)^2 - (d+1)} e^{\frac{-\eta}{2} (\ell - 2i)^2 - \ell}\end{align*} 

We normalise $Z_\ell$ by $e^{ \lambda/2 ( (d+1)^2 - (d+1))} e^{-\eta/2 (\ell^2 - \ell)},$ and put a $\thicksim$ over the normalised version\footnote{Unlike in \S\ref{sec:clique_single_pf}, we include the factor due to $\eta$ in the normalisation. This does not affect the further calculations since these factors cancel in the expression for $W$ below. More  importantly, the normalisation includes a factor of $e^{\lambda/2( (d+1)^2 - (d+1) )}$ instead of $e^{\lambda/2 (d^2 - d)}$. While the latter lent the formulae in the $\ell = 2$ case of the previous section a pleasant symmetry, the former yields more convenient expressions when dealing with $\ell$ abstractly. Due to this, the terms are further reduced by a common factor of $e^{\lambda d}.$ We highlight this here because of the cosmetic differences arising from these changes\textemdash for instance, the leading term in $\widetilde{Z}_{\ell}$ is just $S_1$ instead of $e^{\lambda d - \eta} S_1$ as in the \S\ref{sec:clique_single_pf}\textemdash which may irk the careful reader at first glance. } to get 
\begin{align*}
    \widetilde{Z}_\ell(\lambda, \eta) &:= \sum_{i,j} \binom{\ell}{i} \binom{d+1 - \ell}{j}   e^{-2\lambda j (d+1 - 2i -j)} e^{ 2\eta i (\ell-i)} e^{-2\lambda i (d+1 - i)} \\
                                   &=: \sum_{i = 0}^\ell \binom{\ell}{i} e^{ 2\eta i (\ell-i)} e^{-2\lambda i (d+1 - i)} S_i(\lambda)
\end{align*}

where \[ S_i(\lambda) := \sum_{j} \binom{d+1 - \ell}{j}  e^{-2\lambda j (d +1 - 2i -j) }.\] Notice that $S_i \ge 0$ for every $i$.

As before, we are interested in controlling \[W:= \frac{{Z}_\ell(\lambda, \lambda - \mu) {Z}_\ell(\lambda, \lambda + \mu)}{{Z}_\ell(\lambda, \lambda)^2}  =\frac{\widetilde{Z}_\ell(\lambda, \lambda - \mu) \widetilde{Z}_\ell(\lambda, \lambda + \mu)}{\widetilde{Z}_\ell(\lambda, \lambda)^2}. \]

To this end, note first that $ 2\lambda i(\ell-i) - 2\lambda i(d+1 - i) = -2\lambda (d+1 - \ell),$ and so, for instance, \[ \widetilde{Z}_\ell(\lambda, \lambda + \mu) = \sum_{i} \binom{\ell}{i} e^{ 2\mu i (\ell-i)} e^{-2\lambda i (d+1 - \ell)} S_i(\lambda). \] 

Collecting like terms in expressions of the above form, we obtain that \[ \frac{\widetilde{Z}_\ell(\lambda, \lambda - \mu)}{\widetilde{Z}_\ell(\lambda, \lambda)} = 1 + \frac{\sum_{i=1}^{\ell-1} \binom{\ell}{i} \left(e^{-2\mu i (\ell-i)} -1 \right) e^{-2\lambda i (d+1 - \ell)} S_i(\lambda)}{\widetilde{Z}_\ell(\lambda, \lambda)} \] and 

\[ \frac{\widetilde{Z}_\ell(\lambda, \lambda + \mu)}{\widetilde{Z}_\ell(\lambda, \lambda)} = 1 + \frac{\sum_{i=1}^{\ell-1} \binom{\ell}{i} \left(e^{ 2\mu i (\ell-i)} -1 \right) e^{-2\lambda i (d+1 - \ell)} S_i(\lambda)}{\widetilde{Z}_\ell(\lambda, \lambda)}, \] where the terms involving $i =0$ and $i = \ell$ in the numerator drop out because $e^{2\mu i(\ell - i)} = 1$ in these cases.

Now, if $\mu \ge 0$ the second terms in the above two expressions are respectively negative and positive, while if $\mu < 0,$ they are respectively positive and negative. It is a triviality that for $A<0<B, (1+A)(1+B) \le 1 + A + B.$ We thus have the upper bound \begin{align}\label{ineq:w:clique_hole}
    W &\le 1 + \frac{\sum_{i=1}^{\ell-1} \binom{\ell}{i} 2\left( \cosh 2\mu i(\ell-i) - 1 \right) e^{-2\lambda i (d+1 - \ell)} S_i(\lambda)}{\widetilde{Z}_\ell(\lambda,\lambda)} \notag \\
      &= 1 + 4\frac{\sum_{i=1}^{\ell-1} \binom{\ell}{i}  \sinh^2 ( \mu i(\ell-i)) e^{-2\lambda i (d+1 - \ell)} S_i(\lambda)}{\widetilde{Z}_\ell(\lambda,\lambda)}
\end{align}

%\paragraph{High-Temperature Bound} The high-temperature bound follows from noting that since $\sinh^2 (\mu i (\ell - i) ) \le \sinh^2(\mu \ell^2/4),$ and hence the ratio above is uniformly upperbounded by $4\sinh^2\mu \ell^2/4.$ 
%\paragraph{Low-Temperature Bound} 
While we will provide full proofs in the sequel, it may help to see where we are going first. Roughly, we argue via the ratio trick in the proof of Lemma \ref{lem:1edgeclique} in the previous section, that $S_i$ is bounded by $2 (1 + e^{-2\lambda (\ell-2i)(d+1 - \ell)})$, under conditions such as $\lambda(d+1 - 2\ell) \ge \log d + 1 - 2\ell.$ Plugging in this upper bound, and noting that after multiplication with $e^{-2\lambda i(d+1 - \ell)}$ we have a sum that is completely symmetric under $i \mapsto \ell-i,$ we can bound $W$ as \[W \le 1 + 16   \frac{\sum_{i=1}^{\ell-1} \binom{\ell}{i}  \sinh^2 ( \mu i(\ell-i)) e^{-2\lambda i (d+1 - \ell)} }{\widetilde{Z}_\ell(\lambda,\lambda)} . \] We then show that under the conditions of the proposition, the first term in the above sum dominates all the remaining terms, in the process utilising the condition $|\mu| \le \lambda$. Finally, using the trivial bound $\widetilde{Z}_\ell(\lambda, \lambda) \ge 1,$ we get the claied upper bound.

Let us then proceed. The control on the $S_i$s is offered below. \begin{mylem}\label{lem:clique_with_hold_bounding_Si}
If $\lambda(d+ 1 - 2\ell) \ge \log(d + 1 - 2\ell)$ and $d \ge 4\ell,$ then for every $i \in [1:\ell - 1],$ \[ S_i(\lambda) \le 2 + 2e^{-2\lambda(\ell - 2i) (d+1 - \ell)}. \]
\end{mylem}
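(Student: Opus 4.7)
The plan is to use the ratio trick from the proof of Lemma~\ref{lem:1edgeclique}. Set $T_j := \binom{d+1-\ell}{j} e^{-2\lambda j(d+1-2i-j)}$, so that $S_i = \sum_{j=0}^{d+1-\ell} T_j$, with endpoint values $T_0 = 1$ and $T_{d+1-\ell} = e^{-2\lambda(d+1-\ell)(\ell-2i)}$. These are exactly the two summands on the right hand side of the claimed bound, so it would suffice to show that the remaining interior terms $\sum_{j=1}^{d-\ell} T_j$ are bounded by $T_0 + T_{d+1-\ell}$ up to absolute constants.

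A convenient preliminary observation is that there is a duality $S_i = T_{d+1-\ell} \cdot S_{\ell-i}$, obtained by reindexing the sum via $k = d+1-\ell-j$ and recognizing the resulting summand as $T_k$ for parameter $\ell-i$ in place of $i$. This reduces the analysis to the case $i \le \ell/2$: there $\ell - 2i \ge 0$, so the claimed bound simplifies to $S_i \le 4$; the case $i > \ell/2$ is then recovered by multiplying through by the factor $e^{-2\lambda(d+1-\ell)(\ell-2i)}$, which inflates the bound to the correct form.

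For $i \le \ell/2$, I would use the ratio $T_{j+1}/T_j = e^{\tau(j)}$ with $\tau(j) := \log\bigl((d+1-\ell-j)/(j+1)\bigr) - 2\lambda(d-2i-2j)$. Differentiating thrice shows $\tau'''(j) < 0$, so $\tau'$ is strictly concave on $[0,d-\ell]$. Endpoint checks $\tau(0) < 0 < \tau(d-\ell)$ follow from the hypothesis $\lambda(d+1-2\ell) \ge \log(d+1-2\ell)$ after verifying that both $d - 2i$ and $d - 2\ell + 2i$ dominate $d+1-2\ell$ uniformly in $i$, and Lemma~\ref{lemma:concave_fn} then yields a unique root of $\tau$ in the interior. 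Hence $(T_j)$ is U-shaped: strictly decreasing down to a minimum, then strictly increasing. The middle sum can then be split at this minimum and each half bounded in terms of $T_1$ or $T_{d-\ell}$ respectively, which in turn are estimated directly using the hypothesis $d \ge 4\ell$ to turn exponential factors of the form $e^{-2\lambda(d+1-2\ell)}$ into $(d+1-2\ell)^{-2}$ that beat the linear combinatorial prefactors.

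The main obstacle will be making the middle-sum estimate quantitatively tight, especially when $i$ is close to $\ell/2$: then $T_{d+1-\ell}$ is not appreciably smaller than $T_0$, and the interior sum has to be controlled by an amount that is only a small multiple of both endpoints. The tension is that the U-shape only gives $T_j \le \max(T_1, T_{d-\ell})$, so the multiplicative loss $(d-\ell)$ coming from the number of interior terms must be absorbed by the exponential decay in $T_1$ and $T_{d-\ell}$. Threading this needle will require carefully pitting $d \ge 4\ell$ against $\lambda(d+1-2\ell) \ge \log(d+1-2\ell)$; if $d \ge 4\ell$ turns out to be insufficient, it can be strengthened without loss since the downstream Proposition~\ref{prop:clique_hole_Q||P} already enforces $\ell + 1 \le d/8$.
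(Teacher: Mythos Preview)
Your approach is essentially the paper's: define $T_j$, compute the ratio $T_{j+1}/T_j = e^{\tau(j)}$, note that $\tau'$ is strictly concave, check the endpoint signs via the hypothesis, invoke Lemma~\ref{lemma:concave_fn} to get the U-shape, and then bound the interior sum in terms of the boundary terms $T_0 = 1$ and $T_{d+1-\ell} = e^{-2\lambda(\ell-2i)(d+1-\ell)}$. Your duality $S_i = T_{d+1-\ell}\cdot S_{\ell-i}$ is correct and a pleasant simplification the paper does not use; the paper just treats all $i\in[1:\ell-1]$ uniformly.

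The one place where you hesitate is exactly where the paper has a small extra idea. You propose to bound every interior $T_j$ by $\max(T_1,T_{d-\ell})$ and then worry that the factor $(d-\ell)$ from the number of terms may not be absorbed by $T_1/T_0$ alone under the stated hypotheses. The paper sidesteps this cleanly by going one step further inward: for $j\in[2:d-1-\ell]$ the U-shape gives $T_j \le \max(T_2,T_{d-1-\ell})$, and the \emph{second} ratios
\[
\frac{T_2}{T_1} = \frac{d-\ell}{2}\,e^{-2\lambda(d-2-2i)} \le \frac{1}{d-\ell},\qquad
\frac{T_{d-\ell-1}}{T_{d-\ell}} = \frac{d-\ell}{2}\,e^{-2\lambda(d-2(\ell-i))} \le \frac{1}{d-\ell}
\]
already carry the needed $1/(d-\ell)$, so the sum over $j\in[2:d-1-\ell]$ is at most $\max(T_1,T_{d-\ell})$. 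One then only needs the cruder $\max(T_1/T_0,\,T_{d-\ell}/T_{d+1-\ell}) \le (d+1-\ell)e^{-2\lambda(d-2\ell)} \le 1/2$ to finish with $S_i \le 2(T_0 + T_{d+1-\ell})$. So your ``main obstacle'' dissolves without strengthening $d\ge 4\ell$; you just need to peel off two layers of the U rather than one.
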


Incorporating the above lemma into (\ref{ineq:w:clique_hole}), we have \begin{align}\label{ineq:w_clique_hole_postS}
    W &\le 1 + 8\frac{\sum_{i=1}^{\ell-1} \binom{\ell}{i}  \sinh^2 ( \mu i(\ell-i)) e^{-2\lambda i (d+1 - \ell)} \left(1 + e^{-2\lambda(\ell - 2i )(d+1-\ell)}\right)}{\widetilde{Z}_\ell(\lambda,\lambda)} \notag \\
      &\le 1 + 8 \frac{\sum_{i=1}^{\ell-1} \binom{\ell}{i}  \sinh^2 ( \mu i(\ell-i))  \left(e^{-2\lambda i (d+1 - \ell)} + e^{-2\lambda(\ell - i)(d+1-\ell)}\right)}{\widetilde{Z}_\ell(\lambda,\lambda)} \notag \\
      &\overset{(a)}{=} 1 + 16\frac{\sum_{i=1}^{\ell-1} \binom{\ell}{i}  \sinh^2 ( \mu i(\ell-i))  e^{-2\lambda i (d+1 - \ell)}}{\widetilde{Z}_\ell(\lambda,\lambda)} \notag \\
      &= 1 + \frac{16}{{\widetilde{Z}_\ell(\lambda,\lambda)}} \left( \sinh^2(\mu(\ell - 1)) e^{-2\lambda(d + 1 - \ell)} +  \sum_{i=2}^{\ell-1} \binom{\ell}{i}  \sinh^2 ( \mu i(\ell-i))  e^{-2\lambda i (d+1 - \ell)} \right) \notag \\
      &\overset{(b)}{\le} 1 + \frac{16}{{\widetilde{Z}_\ell(\lambda,\lambda)}} \left( \sinh^2(\mu(\ell - 1)) e^{-2\lambda(d + 1 - \ell)} + \sum_{i=2}^{\ell-1} \binom{\ell}{i}    e^{2|\mu| i \ell -2\lambda i (d+1 - \ell)} \right) \notag \\
      &\overset{(c)}{\le} 1 + \frac{16}{{\widetilde{Z}_\ell(\lambda,\lambda)}} \left( \sinh^2(\mu(\ell - 1)) e^{-2\lambda(d + 1 - \ell)} + \sum_{i=2}^{\ell-1} \binom{\ell}{i}    e^{ -2\lambda i (d+1 - 2\ell)} \right)
\end{align}

where the equality $(a)$ follows since each term in the sum is invariant under the map $i \mapsto \ell - i$, $(b)$ follows since $\sinh x \le e^{x},$ and $(c)$ used $\lambda \ge |\mu|.$ .

For $i\in[2:\ell],$ let $V_i$ denote the term corresponding to $i$ in the summation above, and let $V_1 =  \sinh^2(\mu(\ell - 1) e^{-2\lambda(d + 1 - \ell)}$. We will argue that $V_1$ dominates $V_i$ for every $i$ by using a weakened ratio trick. 

Note that \[ V_1 \ge e^{-2\lambda(d + 1 - \ell) - 2|\mu|(\ell - 1)} \ge e^{-2\lambda d}.\] Further,  \[\frac{V_i}{V_1} \le  \exp{ i\log \ell + 2\lambda d - 2\lambda i (d + 1 - 2\ell)}.\] This is smaller than $1/\ell$ so long as for every $i$, \[ i(2\lambda(d + 1 - 2\ell) - \log \ell ) > 2\lambda d + \log(\ell),\] which hold if the following conditions are true: \begin{align*}
    2\lambda(d + 1 - 2\ell) &> \log \ell \\
    4\lambda(d + 1 - 2\ell) &> 3\log \ell + 2\lambda d.
\end{align*} 

The above hold if $\lambda(d + 2 - 4\ell) \ge 3/2\log\ell$, which is true under the conditions of the proposition since $\ell < d/8,$ and since $\lambda (d+ 2  - 4\ell) \ge  \lambda d/2 \ge 3/2\log d.$

Finally, it remains to show that $\widetilde{Z}_\ell(\lambda,\lambda)$ is non-trivially large. But note that \( \widetilde{Z}_\ell(\lambda,\lambda) \ge S_0(\lambda) \ge 1.\)

Thus, we have shown that \[ W \le 1 + 32\ell \sinh^2(\mu(\ell - 1)) e^{-2\lambda (d + 1-\ell)}.\]

\end{proof}

\begin{proof}[Proof of Lemma \ref{lem:clique_with_hold_bounding_Si}]
For $j \in [0: d + 1 - \ell],$ let \[T_j :=  \binom{d+1 - \ell}{j}  e^{-2\lambda j (d + 1 - 2i -j) }.\] Recall that $S_i = \sum T_j$.
We will use the ratio trick again. To this end, observe that \[ \frac{T_{j+1}}{T_j} = \frac{d +1  - \ell - j}{j+1} \exp{ - 2\lambda (d - 2i - 2j}.\] Again treating $j$ as a real number in $[0:d - \ell],$ let \[ \tau(j) := \log(d + 1 -\ell - j) - \log(1+j) - 2\lambda(d - 2i - 2j).\]
By considerations similar to the previous section, $\tau$ is strictly concave, and by Lemma \ref{lem:concave}, $\tau$ has exactly one root so long as $\tau(0) < 0$ and $\tau(d - \ell) > 0.$ In this setting these conditions translate to \begin{align*}
    \log(d + 1 - \ell) &< 2\lambda(d - 2i) \\
    \log(d + 1 - \ell) &< -2 \lambda( d  - 2i -2(d -\ell)) = 2\lambda( d - 2(\ell - i) ).
\end{align*}

The above hold for every $i$ so long as $\log(d + 1 - \ell) < 2\lambda(d +2 - 2\ell).$

Since $\tau$ has a single root and is initially negative, we again find that for all $j \in [2:d-1-\ell],$ $T_j \le \max(T_2, T_{d -1 - \ell}).$ Further, \begin{align*}
    \frac{T_2}{T_1} &= \frac{d-\ell}{2} \exp{-2\lambda(d - 2 - 2i)} \le \frac{d - \ell}{2}\exp{-2\lambda(d  - 2\ell)} \le \frac{1}{d - \ell}\\
    \frac{T_{d-\ell-1}}{T_{d-\ell}} &= \frac{d-\ell}{2} \exp{-2\lambda(  d - 2(\ell - i)} \le \frac{1}{d-\ell}.
\end{align*}

Further, \begin{align*}
    \max\left(\frac{T_1}{T_0}, \frac{T_{d-\ell}}{T_{d+1 - \ell}}\right) \le (d + 1 - \ell)e^{-2\lambda (d - 2\ell)} \le 1/2.
\end{align*}

Thus, \begin{align*} S_1 &\le T_0 + T_{d+1-\ell} + (1 + (d-\ell-2)/(d-\ell)) \max(T_1,T_{d-\ell}) \\ &\le T_0 + T_{d+1 - \ell} + 2\max(T_1, T_{d-\ell}) \\ &\le 2(T_0 + T_{d + 1 - \ell}) .\end{align*} 

Now notice that \begin{align*}
    T_0 &= 1\\
    T_{d-\ell + 1} &= \exp{-2\lambda (d + 1 - \ell)(d + 1 - 2i - d - 1 + \ell)} = \exp{-2\lambda (\ell - 2i )(d+1-\ell)},
\end{align*}
and thus the claim follows.
\end{proof}

We now prove the reverse direction, i.e. control on $\chi^2(P\|Q)$. This is essentially a small variation on the previous setting.

\begin{proof}[Proof of Proposition \ref{prop:clique_hole_P||Q}]
    Referring to the previous proof, we instead need to control \[ W' = \frac{\widetilde{Z}_\ell(\lambda, \lambda) \widetilde{Z}_\ell(\lambda, \lambda + 2\mu)}{\widetilde{Z}_\ell(\lambda, \lambda + \mu)^2}. \] Proceeding in the same way, we may conntrol \[ W' \le 1 + \frac{\sum_{i=1}^{\ell-1} \binom{\ell}{i} \left( \cosh(4\mu i(\ell-i)) - 2\cosh(2\mu(i(\ell - i)) +  1 \right) e^{-2\lambda i (d+1 - \ell)} S_i(\lambda)}{\widetilde{Z}_\ell(\lambda, \lambda + \mu) }\]
    
    For succinctness, let $f(x) := \cosh(4\mu x) - 2\cosh(2\mu x) + 1.$ Note that $ 1 \le f(x) \le e^{4|\mu| x}.$ Since the $S_i$ are identical to the previous case, Lemma \ref{lem:clique_with_hold_bounding_Si} applies, and \begin{align}
    W' &\le 1 + 8\frac{\sum_{i=1}^{\ell-1} \binom{\ell}{i}  f(i(\ell - i)) e^{-2\lambda i (d+1 - \ell)} \left(1 + e^{-2\lambda(\ell - 2i )(d+1-\ell)}\right)}{\widetilde{Z}_\ell(\lambda,\lambda + \mu)} \notag \\
      &\le  1 + 16\frac{\sum_{i=1}^{\ell-1} \binom{\ell}{i}  f( i(\ell-i))  e^{-2\lambda i (d+1 - \ell)}}{\widetilde{Z}_\ell(\lambda,\lambda + \mu)} \notag \\
     &\le 1 + \frac{16}{{\widetilde{Z}_\ell(\lambda,\lambda+ \mu)}} \left( f(\ell - 1)  e^{-2\lambda(d + 1 - \ell)} + \sum_{i=2}^{\ell-1} \binom{\ell}{i}    e^{4|\mu| i \ell -2\lambda i (d+1 - \ell)} \right) \notag \\
      &{\le} 1 + \frac{16}{{\widetilde{Z}_\ell(\lambda,\lambda + \mu)}} \left( f(\ell - 1) e^{-2\lambda(d + 1 - \ell)} + \sum_{i=2}^{\ell-1} \binom{\ell}{i}    e^{ -2\lambda i (d+1 - 3\ell)} \right) \notag
\end{align}

Notice the distinction that the exponent in the second sum contains a $-3\ell$ instead of a $-2\ell$. Using $f(x) \ge 1,$ the same control on the relative values of $S_i$ and the summation holds as long as \[ 4\lambda(d + 1 - 3\ell) > 3\log \ell + 2\lambda d.\] This translates to demanding that $2\lambda(d - 6\ell ) > 3/2\lambda d,$ which holds for $\ell \le d/12$. Finally, $\widetilde{Z}_\ell(\lambda, \lambda + \mu) \ge 1$ as well, and thus, \[ W' \le 1 + 32\ell e^{-2\lambda(d + 1 - \ell)} \left( \cosh(4\mu(\ell-1)) - 2\cosh( 2\mu(\ell - 1) ) +  1 \right). \]

Finally, we note that for any $x,$ \begin{align*}
    \cosh(4x) - 2\cosh(2x) + 1 &= \sinh^2(2x) + (\cosh(2x) - 1)^2 \\ &= 4\sinh^2x \cosh^2 x + 4\sinh^4x = 4\sinh^2x \cosh^2 x (1 + \tanh^2 x) \\ &\le 2\sinh^2(2x). \qedhere
\end{align*} 
\end{proof}

\subsubsection{Emmentaler Cliques} 

\begin{proof}[Proof of Proposition \ref{prop:emmen_with_extra_node}]

Recall the setup - $d+1$ nodes are divided into $B = d/(\ell + 1)$ groups of $\ell + 1$ nodes each, denoted $V_1, \dots, V_B$, and the final node $d+1$ is kept separate. Recall that for a set $S,$ $x_S := \sum_{u \in S} x_u$. Define 

\[ P_{\ell, \lambda, \eta} = \frac{1}{Z_\ell(\lambda, \eta)} \exp{ \lambda/2 \left(\sum_{i = 1}^{B} x_{V_i} \right)^2 - \lambda/2\sum_{i=1}^B (x_{V_i}^2) + \lambda x_{v} \sum_{i = 2}^{B} x_{V_i} +  \eta x_v x_{V_1}}.  \]

Then $P = P_{\ell, \mathrm{Emmentaler}} = P_{\ell, \lambda, 0}, Q = Q_{\ell, \mathrm{Emmentaler}} = P_{\ell, \lambda, \mu}$ and $Z_{2Q - P} = Z_\ell(\lambda, 2\mu)$
Marginalising over $x_v,$ we get
\begin{align*}
    Z_\ell(\lambda, \eta) &= 2\sum_x \exp{ \lambda/2 \left(\sum_{i = 1}^{B} x_{V_i} \right)^2 - \lambda/2\sum_{i=1}^B (x_{V_i}^2)} \cosh\left(\lambda \sum_{i = 2}^{B} x_{V_i} +  \eta x_{V_1}\right) \\
             &\le 2\cosh(\lambda (d - \ell - 1) + \eta (\ell + 1) )  \sum_x \exp{ \lambda/2 \left(\sum_{i = 1}^{B} x_{V_i} \right)^2 - \lambda/2\sum_{i=1}^B (x_{V_i}^2)},
             \end{align*}
    
    while dropping all terms for which $|\sum_i x_{V_i} | < d,$ we get \begin{align*}             
       Z_\ell(\lambda, \eta) &\ge 4\cosh(\lambda (d - \ell - 1) + \eta (\ell + 1) ) e^{\lambda/2  (B^2 - B)(\ell + 1)^2} \\ &= 4\cosh(\lambda (d' - \ell - 1) + \mu (\ell + 1) ) e^{\lambda/2  ( d^2  - d(\ell + 1))} .
\end{align*}

To control $Z_\ell$ from above, it is necessary to control the partition function of the Emmentaler graph on $d$ nodes (i.e., with only the groups $V_1,\dots V_{B},$ and without the extra node from above. We set this equal to $Y_{\ell}(\lambda)$. Then, similarly tracking configurations by the number of negative $x_i$s in each part, \begin{align*}
    Y_\ell :=&  \sum_x \exp{ \lambda/2 \left(\sum_{i = 1}^{B} x_{V_i} \right)^2 - \lambda/2\sum_{i=1}^B (x_{V_i}^2)}. \\
            =& \sum_{j_1, \dots, j_B} \prod \binom{\ell + 1}{j_i} \cdot \exp{\lambda/2 \left( (d - 2\sum j_i)^2 - \sum (\ell + 1 - 2j_i)^2\right)}\\ 
            =& e^{\lambda/2(d^2 -d(\ell + 1))} \sum_{j_1, \dots, j_B} \prod \binom{\ell + 1}{j_i} \cdot \exp{-2\lambda \left(   (d- \ell - 1)(\sum j_i) + \sum j_i^2 - (\sum j_i)^2 \right)} 
\end{align*} 

For succinctness, let $d' := d - \ell - 1.$ We establish the following lemma after concluding this argument \begin{mylem} \label{lem:emmen_control_y} If  $\ell \le d/4$ and $\lambda (d-4) \ge 3\log (d),$ then \[ Y_\ell \le {2 e^{\lambda/2(d^2 -d(\ell + 1))}} \left(1 + 2d e^{-2\lambda d'}\right) \] \end{mylem}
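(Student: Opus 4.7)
The plan is to extract the prefactor $e^{\lambda/2(d^2-d(\ell+1))}$ from $Y_\ell$ and reduce the lemma to showing
\[
\Sigma := \sum_{j_1,\dots,j_B} \prod_{i=1}^{B}\binom{\ell+1}{j_i} \exp\!\bigl(-2\lambda(d'k + \textstyle\sum_i j_i^2 - k^2)\bigr) \le 2\bigl(1 + 2d e^{-2\lambda d'}\bigr),
\]
where $k:=\sum_i j_i$ and $B = d/(\ell+1)$. A direct calculation shows the spin-flip $j_i \mapsto (\ell+1)-j_i$ leaves the summand invariant while sending $k \mapsto d-k$, so the $k$-slices satisfy $\Sigma_k = \Sigma_{d-k}$. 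The ``ground states'' $\vec j = \vec 0$ ($k=0$) and $\vec j = (\ell+1,\dots,\ell+1)$ ($k=d$) each have exponent $0$ and product $1$, contributing the leading $2$. At $k=1$, a single $j_i = 1$ with the rest zero gives $\sum_i j_i^2 = k = 1$ and exponent $-2\lambda d'$, with $B\binom{\ell+1}{1} = d$ such terms; by symmetry $k=d-1$ contributes equally, producing the $2de^{-2\lambda d'}$ correction on the nose. The task reduces to showing $\sum_{k=2}^{d-2}\Sigma_k \le 2de^{-2\lambda d'}$.

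The central technical step is a slice bound. Since every $j_i$ is a non-negative integer, $j_i^2 \ge j_i$, so $\sum_i j_i^2 \ge k$ and $k^2 - \sum_i j_i^2 \le k(k-1)$. Combined with Vandermonde's identity $\sum_{\sum j_i = k}\prod_i \binom{\ell+1}{j_i} = \binom{d}{k}$, this yields
\[
\Sigma_k \le U_k := \binom{d}{k}\, e^{-2\lambda k(d'+1-k)}, \qquad \frac{U_k}{U_1} = \frac{\binom{d}{k}}{d}\, e^{-2\lambda(k-1)(d'-k)}.
\]
By the $k \mapsto d-k$ symmetry of $\Sigma$, it suffices to bound $\sum_{k=2}^{d/2} U_k$ and double. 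I would then split $[2, d/2]$ into two regimes. For $k \in [2, d/4]$, the assumption $\ell \le d/4$ gives $d'-k \ge d/2$, so $U_k/U_1 \le (d^{k-1}/k!)\, e^{-\lambda d(k-1)}$; under $\lambda d \ge 3\log d$ this is at most $d^{-2(k-1)}/k!$, whose sum over $k \ge 2$ is $O(1/d^2)$. For $k \in [d/4, d/2]$, both $k-1$ and $d'-k$ are of order $d$, so $(k-1)(d'-k) \ge d^2/32$ and $U_k \le 2^d e^{-\lambda d^2/16}$; the hypothesis forces $\lambda d^2/16 \gtrsim d\log d$, so each such $U_k$ is exponentially (in $d$) smaller than $U_1$ and the $O(d)$ summands together contribute at most a small fraction of $U_1$. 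Collecting yields $\sum_{k=2}^{d/2} U_k \le U_1 = de^{-2\lambda d'}$, which after symmetrisation produces $\sum_{k=2}^{d-2}\Sigma_k \le 2de^{-2\lambda d'}$, closing the proof.

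The principal obstacle is the intermediate regime $k \in [d/4, d/2]$, where the entropy factor $\binom{d}{k}$ is as large as $2^d$ and must be controlled purely by the low-temperature energy decay $e^{-\Theta(\lambda d^2)}$. Both hypotheses enter essentially: $\ell \le d/4$ keeps $d'-k$ of order $d$ all the way up to $k = d/2$, while $\lambda(d-4) \ge 3\log d$ makes $\lambda d \gtrsim \log d$ large enough to beat the $2^d$ entropy. A clean alternative would be a ratio-trick analogous to Lemma \ref{lem:clique_with_hold_bounding_Si}: since $\log(U_{k+1}/U_k) = \log((d-k)/(k+1)) - 2\lambda(d'-2k)$ has a strictly concave derivative and so at most one sign change on $[1, d/2]$, $U_k$ is unimodal, and the sum reduces to the endpoints $U_1$ and $U_{d/2}$, the latter verified to be $\ll U_1$ using $U_{d/2} \le \binom{d}{d/2} e^{-\lambda d(d/2 - \ell)}$.
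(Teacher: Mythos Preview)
Your approach is correct and genuinely different from the paper's. The paper does not first sum the multinomial factor over a $k$-slice; instead it bounds each individual summand separately. Concretely, for a fixed $(j_1,\dots,j_B)$ with $\sum j_i = k \in [2,d-2]$, the paper uses Cauchy--Schwarz, $\sum_i j_i^2 \ge k^2/B$, together with the crude product bound $\prod_i \binom{\ell+1}{j_i} \le (\ell+1)^{\min(k,d-k)}$, to show that the single term is at most $d^{-\min(k,d-k)} e^{-2\lambda d'}$; it then multiplies by the stars-and-bars count $N_k \le d^{\min(k,d-k)}$, so that every slice contributes at most $e^{-2\lambda d'}$ and the $d-3$ slices sum to at most $de^{-2\lambda d'}$. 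Your route instead exploits Vandermonde to collapse the slice exactly to $\binom{d}{k}$, at the price of using the weaker pointwise bound $\sum_i j_i^2 \ge k$. This trade is a wash: Vandermonde gives a tighter combinatorial factor than stars-and-bars, compensating for the looser exponent.

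What each buys: the paper's Cauchy--Schwarz route gives a clean quadratic energy $-(2\lambda d'/d)\,k(d-k)$ that is handled by a single convexity check on $[2,d/2]$, avoiding any regime split. Your Vandermonde route is sharper on the entropy side and makes the $k=1$ contribution transparent, but leaves you with $U_k = \binom{d}{k} e^{-2\lambda k(d'+1-k)}$, whose control near $k \sim d/2$ requires either the two-regime split you sketch or the ratio trick you mention at the end. The ratio-trick alternative is in fact the cleaner of your two options and matches the spirit of the paper's other clique arguments; your computation $\tau(k) = \log((d-k)/(k+1)) - 2\lambda(d'-2k)$ has $\tau''' < 0$, so Lemma~\ref{lemma:concave_fn} applies directly. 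A minor caveat: several of your constants are slightly off (e.g.\ $d'-k \ge d/2 - 1$, not $d/2$, and $(k-1)(d'-k) \ge (d/4-1)^2$ rather than $d^2/32$ for moderate $d$), but these are cosmetic and easily absorbed.
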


Invoking the above lemma and the previously argued control on $Z_\ell,$ we get that \begin{align*}
    W := \mathbb{E}_P[(Q/P)^2] &= \frac{Z_{\ell}(\lambda, 0)Z_{\ell}(\lambda, 2\mu)}{Z_{\ell}(\lambda, \mu)^2} \\
                               &\le \frac{\cosh(\lambda d' )\cosh(\lambda d' + 2\mu(\ell + 1) )}{\cosh^2(\lambda d' + \mu(\ell + 1) )} \left(\frac{2Y_\ell}{4e^{\lambda/2  ( d^2  - d(\ell + 1))}}\right)^2 \\
                               &\le \left(1 + \frac{\sinh^2(\mu(\ell + 1))}{\cosh^2(\lambda d' + \mu(\ell + 1) )} \right) \left(1 + 2d e^{-2\lambda d'}\right)^2 \\
                               &\le \left(1 + 4\tanh^2(\mu(\ell + 1))e^{-2\lambda d'}\right)\left(1 + 2d e^{-2\lambda d'}\right)^2
\end{align*} 

Under the conditions of the theorem, both $4\tanh^2(\mu(\ell + 1))e^{-2\lambda d'}$ and $2d e^{-2\lambda d'}$ are smaller than $1/4.$ But for $x,y  ,$ it holds that $(1 + x)^2 < 1 + 3x$ and $(1 + 3x)(1 + y) < 1 + 4(x+y) \le 1 + 8\max(x,y).$ Lastly, $4\tanh^2 x \le 4 \le d,$ and thus, we have shown the bound \[ W \le 1 + 32 de^{-2\lambda (d - \ell - 1)}. \qedhere\] \end{proof}

\begin{proof}[Proof of Lemma \ref{lem:emmen_control_y}]

Fix a vector $(j_1, \dots, j_B)$ and let $k:= \sum j_i$. We will argue the claim by controlling the terms in $Y_\ell$ with a given value of $k$. \begin{mylem}\label{lem:this_got_really_bloody_nested_didnt_it}
If $\sum j_i = k \in [2:d-2],$ $\ell + 1 \le d/4$ and $\lambda (d-4) \ge 3\log (d),$ then \[ \prod \binom{\ell + 1}{j_i} \cdot \exp{-2\lambda \left(   d'(\sum j_i) + \sum j_i^2 - (\sum j_i)^2 \right)} \le \frac{1}{d^{\min(k, d-k)}} e^{-2\lambda d'}. \]
\end{mylem}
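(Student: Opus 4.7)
The plan is a direct computation exploiting the combinatorial structure of the Emmentaler graph. The starting point is the algebraic identity
\[
d' k + \sum_i j_i^2 - k^2 \;=\; |E_{+-}(j)|,
\]
where $|E_{+-}(j)|$ denotes the number of mixed-sign edges in the complete multipartite graph $K_{\ell+1, \ldots, \ell+1}$ when the $k$ vertices carrying $-1$-spins are distributed as $(j_1, \dots, j_B)$ among the parts. This identity follows from summing $(\ell+1-j_a)j_b + j_a(\ell+1-j_b)$ over $a<b$ and simplifying via $\sum j_a = k$. Consequently, the target inequality, after taking logarithms, is equivalent to
\[
\sum_i \log \binom{\ell+1}{j_i} + \min(k,d-k)\log d + 2\lambda d' \;\le\; 2\lambda |E_{+-}(j)|.
\]

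Next I would invoke the symmetry of the expression in the lemma under $(j_1, \dots, j_B) \mapsto (\ell+1-j_1, \dots, \ell+1-j_B)$, which corresponds to flipping all spins and preserves both the binomial prefactor and $|E_{+-}|$, in order to reduce to the case $k \le d/2$, so $\min(k,d-k) = k$. For the first summand on the left-hand side, I would take the minimum of two standard estimates: $k(\log(\ell+1) + 1)$ via $\binom{n}{k}\le(en/k)^k$, sharp when the $j_i$ are concentrated, and the entropy bound $d\log 2$ via $\binom{\ell+1}{j_i}\le 2^{\ell+1}$, sharp when they are spread. For the right-hand side, applying $\sum j_i^2 \ge k^2/B$ (Cauchy--Schwarz) yields the isoperimetric-type lower bound $|E_{+-}(j)| \ge k(d'-k) + k^2(\ell+1)/d$.

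I would then close the argument by a case split on $k$. In the small-$k$ regime, combining the concentrated binomial bound with the strengthening $|E_{+-}| \ge k(d'-k+1)$, which follows from $\sum_{a<b}j_aj_b \le k(k-1)/2$, reduces the target to $2\log d + O(1) \le \lambda(d'-k)$; this holds since $d' - k \ge d/2$ in this regime and $\lambda(d-4) \ge 3\log d$. In the large-$k$ regime, the entropic bound combined with the Cauchy--Schwarz lower bound on $|E_{+-}|$ reduces the target, at the extremal point $k \approx d/2$, $\ell+1 \approx d/4$, to an inequality of the shape $(\log d)/2 + \log 2 \le 3\lambda(d-4)/8$, again an easy consequence of the hypothesis.

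The main obstacle will be ensuring that the two regime-specific estimates match cleanly at the crossover value of $k$, and carefully tracking the role of the condition $\ell+1 \le d/4$, which guarantees $d' \ge 3d/4$ and hence $d' - k \ge d/4$ uniformly across $k \le d/2$. Morally, the argument says that any configuration with $k$ negative spins cuts $\Omega(k(d'-k))$ Emmentaler edges, so its weight picks up a suppression of order $e^{-2\lambda k(d'-k)}$, which comfortably dominates both the $d^k$ factor and the entropic binomial prefactor once $\lambda(d-4)$ is at least a constant multiple of $\log d$.
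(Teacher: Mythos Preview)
Your approach is sound in spirit and the combinatorial interpretation of $d'k+\sum j_i^2-k^2$ as the number of mixed-sign Emmentaler edges is a nice touch absent from the paper, but the route you take is more elaborate than needed, and one of your numerical steps does not close as written.

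The paper's argument avoids the case split entirely. It uses the single binomial estimate $\prod_i\binom{\ell+1}{j_i}\le(\ell+1)^{\min(k,d-k)}$ (which follows from $\binom{n}{m}\le n^{\min(m,n-m)}$ together with $\sum_i\min(j_i,\ell+1-j_i)\le\min(k,d-k)$), combines it with the same Cauchy--Schwarz bound $\sum j_i^2\ge k^2/B$ you invoke, and obtains
\[
\log\tfrac{\mathrm{LHS}}{\mathrm{RHS}}\le \min(k,d-k)\log(d(\ell+1))-2\lambda\tfrac{d'}{d}\bigl(k(d-k)-d\bigr)=:f(k).
\]
On $[2,d/2]$ this $f$ is convex in $k$, so it is maximised at the endpoints; one then checks $f(2)\le 4\log d-\tfrac32\lambda(d-4)<0$ directly from the hypothesis, and the identity $f(d/2)=\tfrac{d}{4}f(2)$ handles the other endpoint. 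No crossover analysis is required.

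Your proposal, by contrast, splits into small and large $k$ with different binomial bounds. The specific claim in your small-$k$ regime, that ``$2\log d+O(1)\le\lambda(d'-k)$ holds since $d'-k\ge d/2$ and $\lambda(d-4)\ge3\log d$'', does not go through: the hypothesis only delivers $\lambda\cdot\tfrac{d}{2}\sim\tfrac32\log d$, which falls short of $2\log d$. The culprit is the extra factor of $e$ in your binomial estimate $\binom{n}{k}\le(en/k)^k$; replacing it by the sharper $(\ell+1)^{j_i}$ (as the paper does) removes the additive $+1$ per unit of $k$, and then the arithmetic closes at $k=2$. Even so, stitching your two regimes together at an intermediate $k$ would still require care you have not supplied, whereas the convexity observation renders all of this unnecessary.
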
 

Thus, we have the bound \[ \frac{Y_\ell}{ e^{\lambda/2(d^2 -d(\ell + 1))}} \le 2\left(1 + B(\ell  + 1) e^{-2\lambda d'}\right) + \sum_{k =2}^{d-2} \frac{N_k}{d^{\min(k, d-k)}} e^{-2\lambda d'}, \] where \[ N_k = \left|\left\{ j \in [0:\ell + 1]^B : \sum j_i = k \right\}.\right| \] Notice that $N_k = N_{d - k}.$
Further, for $k \le d/2,$ by stars and bars, \[ N_k \le \binom{k + B - 1}{k} \le (1 + (B-1)/k)^{k-1} \le B^k \le d^k \]

Consequently, $N_k \le d^{\min(k, d-k)},$ and we have established the upper bound \[ \frac{Y_\ell}{2 e^{\lambda/2(d^2 -d(\ell + 1))}} \le 1 + 2d e^{-2\lambda d'}.\qedhere  \]
\end{proof}

\begin{proof}[Proof of Lemma \ref{lem:this_got_really_bloody_nested_didnt_it}]
Note that $\binom{n}{m} \le n^{\min(m, n-m)}.$ Therefore, \[ \prod \binom{\ell + 1}{j_i} \le \exp{\min(k, d-k) \log(\ell + 1)}.\] Next, by Cauchy-Schwarz, \[ \sum j_i^2 \ge \frac{(\sum j_i)^2}{B} = k^2 \left( 1 - \frac{d'}{d} \right).\]

Let $\mathrm{LHS}, \mathrm{RHS}$ be the left and right hand sides of the inequality claimed in the Lemma. Using the above, \begin{align*} \log \frac{\mathrm{LHS}}{\mathrm{RHS}} &\le  \min(k, d-k) \log(d(\ell + 1)) - 2\lambda  \left( d'k + k^2 d'/d  -d'  \right) \\ &=  \min(k, d-k) \log(d(\ell + 1)) - 2\lambda  \frac{d'}{d}  \left( k(d -k) - d\right). \end{align*}

Let $f(k)$ be the upper bound above. Notice that $f(k) = f(d-k)$. Thus, it suffices to show that $f(u) \le 0$ for every real number $u \in [2, d/2].$

For a real number $u \in [2,d/2),$ it holds that $  f''(u) = 4\lambda > 0.$ It follows that $f$ attains its maxima on $\{2, d/2\}.$ Since $\ell + 1 < d/4,$ we have $d'/d \ge 3/4,$ and thus \begin{align*}
    f(2)& = 2\log(d(\ell + 1)) - 2\lambda \frac{d'}{d} (d-4) \le 4\log(d) - \frac{3}{2} \lambda (d-4) < 0\\
    f(d/2)& = \frac{d}{2} \left( \log(d(\ell + 1) - 2\lambda\frac{d'}{d} \cdot \frac{ (d - 4)}{2}\right) = \frac{d}{4} f(2) < 0.\qedhere
\end{align*}
\end{proof}

\subsubsection{Emmentaler v/s Full Clique} 
\begin{proof}[Proof of Proposition \ref{prop:emmen_vs_full}]Let 

\[ P_{\ell, \lambda, \eta}(x) := \frac{1}{Z_\ell(\lambda, \eta)} \exp{ \lambda/2 \left(\left(\sum_{i = 1}^{B} x_{V_i} \right)^2  - d\right)- (\lambda - \eta)/2\sum_{i=1}^B (x_{V_i}^2 - (\ell + 1))}.\]

Then $P_\ell = P_{\ell,\lambda, 0}, Q_\ell = P_{\ell, \lambda, \mu}$. Let $d' = d- 1 - \ell$.
Developing this a little, one can write \[ Z_{\ell}(\lambda, \eta) = C_{\ell, \lambda, \eta}\sum_{j_1, \dots, j_B} \prod \binom{\ell + 1}{j_i} \cdot e^{ -2\lambda \left(d'\sum j_i + \sum j_i^2 - (\sum j_i)^2 \right) -2\eta \left( (\ell +1) \sum j_i - \sum j_i^2 \right)}, \]

where \[ C_{\ell, \lambda, \eta} = \exp{\lambda/2 (d^2 - d(\ell + 1)) + \eta d(\ell + 1)/2}.\] Notice that \[ \frac{C_{\ell, \lambda, 0}C_{\ell, \lambda, 2\mu}}{C^2_{\ell, \lambda, \mu}} = 1,\] and thus \[ W := \mathbb{E}_P[(Q/P)^2] = \frac{{Z}_\ell(\lambda, 0) {Z}_\ell(\lambda, 2\mu)}{{Z}_\ell(\lambda, \mu)^2}  = \frac{\widetilde{Z}_\ell(\lambda, 0) \widetilde{Z}_\ell(\lambda, 2\mu)}{\widetilde{Z}_\ell(\lambda, \mu)^2},\] where \[ \widetilde{Z}_{\ell}(\lambda, \eta) := \frac{Z_\ell(\lambda, \eta)}{C_{\ell \lambda, \eta}} = \sum_{k= 0}^{d} e^{ -2\lambda \left(d' k - k^2 \right) - 2\eta (\ell + 1) k } \sum_{\substack{j_1, \dots, j_B\\ \sum j_i = k}} \prod \binom{\ell + 1}{j_i} \cdot e^{ -2(\lambda - \eta) \sum j_i^2}. \]

Let $T_k$ be the $k$\textsuperscript{th} term in the above. It holds that $T_{k} = T_{d- k}.$ Indeed, the original terms are invariant under the map $x \mapsto -x,$ and for $j = (j_1, \dots, j_B)$, this maps to $(\ell + 1) \mathbf{1} - j$ which has the sum $d- k.$

Further, since \[ \sum j_i^2 \le \max_i (j_i) \sum j_i \le (\ell + 1) \sum j_i,\] it holds that each term, which depends on $\eta$ as $e^{-2\eta ((\ell + 1) \sum j_i -\sum j_i^2}$ decreases as $\eta$ increases (or equivalently, $\frac{\partial}{\partial \eta}  \widetilde{Z}_{\ell}(\lambda, \eta) \le 0$)

Due to the above, for $\mu >0,$  \begin{align*}
    \rho_1 := \frac{\widetilde{Z}_\ell(\lambda, 0) - \widetilde{Z}_\ell(\lambda, \mu)}{\widetilde{Z}_\ell(\lambda, \mu)} &\ge 0 \\
    \rho_2 := \frac{\widetilde{Z}_\ell(\lambda, 2\mu) - \widetilde{Z}_\ell(\lambda, \mu)}{\widetilde{Z}_\ell(\lambda, \mu)} &\le 0,
\end{align*} 

yielding, \[ W = \frac{\widetilde{Z}_\ell(\lambda, 0) \widetilde{Z}_\ell(\lambda, 2\mu)}{\widetilde{Z}_\ell(\lambda, \mu)^2} \le 1 + \rho_1 + \rho_2. \] (For $\mu < 0,$ the signs of both $\rho_1$ and $\rho_2$ are flipped, giving the same bound.)

We now offer control on $\rho_1 + \rho_2,$ to complete the argument. To this end, note that \[ 1 - 2e^{-2\mu \left((\ell + 1)k - \sum j_i^2 \right)} + e^{-4\mu \left((\ell + 1)k - \sum j_i^2 \right)} = \left( 1 - e^{-2\mu \left((\ell + 1)k - \sum j_i^2 \right)} \right)^2, \] and thus \begin{align*} \widetilde{Z}_\ell(\lambda, \mu) (\rho_1 + \rho_2) &= \sum_{k = 1}^{d-1} \sum_{j: \sum j_i = k} \prod\binom{\ell+1}{j_i} e^{-2\lambda ( d'k - k^2 + \sum j_i^2)} \left( 1 - e^{-2\mu \left((\ell + 1)k - \sum j_i^2 \right)} \right)^2 \\
&\le 2\sum_{k = 1}^{\lfloor d/2\rfloor} \sum_{j: \sum j_i = k} \prod\binom{\ell+1}{j_i} e^{-2\lambda ( d'k - k^2 + \sum j_i^2)} \left( 1 - e^{-2\mu \left((\ell + 1)k - \sum j_i^2 \right)} \right)^2,\end{align*}

where we have used the symmetry of the $T_ks$ above.

We argue below that the first term in the above strongly dominates all subsequent terms. \begin{mylem}\label{lem:emmen_vs_full}
If $\sum j_i = k \in [2:\lfloor d/2\rfloor],$ $\ell + 1 \le d/4$ and $\lambda (d-4) \ge 3\log (d),$ then \[   \prod\binom{\ell+1}{j_i} e^{-2\lambda ( d'k - k^2 + \sum j_i^2)}  \le \frac{1}{d^k} e^{-2\lambda d'}. \]
\end{mylem}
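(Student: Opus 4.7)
The plan is to argue this as essentially a direct adaptation of Lemma \ref{lem:this_got_really_bloody_nested_didnt_it}, since the exponent in the hypothesis ($-2\lambda(d'k - k^2 + \sum j_i^2)$) coincides with the corresponding quantity in that earlier lemma ($-2\lambda(d'k + \sum j_i^2 - (\sum j_i)^2)$), and the restriction $k \le \lfloor d/2\rfloor$ makes $\min(k, d-k) = k$, which is precisely what the target $\frac{1}{d^k}e^{-2\lambda d'}$ requires. So the same two ingredients as before should drive the bound: a combinatorial estimate on the product of binomials and a Cauchy-Schwarz lower bound on $\sum j_i^2$.

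Concretely, I would first bound $\prod \binom{\ell+1}{j_i} \le (\ell+1)^{\sum \min(j_i,\,\ell+1-j_i)} \le (\ell+1)^{\min(k,d-k)} = (\ell+1)^k$, using $\binom{n}{m}\le n^{\min(m,n-m)}$ coordinatewise and the trivial fact that $\sum\min(\cdot,\cdot) \le \min(\sum\cdot,\sum\cdot)$. Next, Cauchy--Schwarz on the $B = d/(\ell+1)$ groups gives $\sum j_i^2 \ge k^2/B = k^2(\ell+1)/d$, hence $k^2 - \sum j_i^2 \le k^2 d'/d$. Plugging both bounds into the LHS and taking logs, it suffices to show
\[ g(k) := k\log\bigl(d(\ell+1)\bigr) - 2\lambda\,\frac{d'}{d}\bigl(k(d-k) - d\bigr) \le 0 \qquad \text{for all } k \in [2,\,d/2]. \]

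Since $g''(k) = 4\lambda d'/d > 0$, $g$ is convex on $[2, d/2]$, so its maximum is attained at an endpoint. Under the assumption $\ell+1 \le d/4$ we have $d' = d - \ell - 1 \ge 3d/4$, and $\log(d(\ell+1)) \le 2\log d$; combining these with $\lambda(d-4) \ge 3\log d$ yields
\[ g(2) \le 4\log d - \tfrac{3}{2}\lambda(d-4) \le 4\log d - \tfrac{9}{2}\log d < 0. \]
A direct computation shows $g(d/2) = (d/4)\,g(2)$, which is likewise negative. Thus $g \le 0$ throughout $[2,d/2]$, which is exactly the claimed inequality. There is no substantive obstacle here — the only subtle step is recognizing that the Cauchy-Schwarz bound with $B$ parts yields the factor $d'/d$ which exactly cancels against $d'$ in the exponent to leave a clean $k(d-k)-d$ dependence; this is what enables the convexity reduction and the reuse of the endpoint estimate from the analogous lemma.
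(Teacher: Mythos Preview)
Your proposal is correct and matches the paper's approach: the paper simply notes that this lemma is essentially the same as the earlier Lemma~\ref{lem:this_got_really_bloody_nested_didnt_it} (used inside Lemma~\ref{lem:emmen_control_y}) and may be proved similarly, and what you have written out is exactly that proof specialised to the range $k\le \lfloor d/2\rfloor$ where $\min(k,d-k)=k$.
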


Using the above, along with $\sum j_i^2 \ge \sum j_i$ and the fact that the number of $B$-tuples of whole numbers that sum up to $k$ is at most $\binom{k+B-1}{k} \le (eB)^k \le d^k,$ we immediately have \[ \widetilde{Z}_\ell(\lambda, \mu) (\rho_1 + \rho_2) \le 2d e^{-2\lambda d'} \sum_{k = 1}^{d/2} \left( 1- e^{-2\mu\ell  k}\right)^2 . \] 

We bound the sum above in two ways - firstly, each term is $\le 1,$ and so the sum is at most $d/2$. Further, using $1 - e^{-x} \le x,$ the sum is at most $4\sum \mu^2 \ell^2 k^2 \le \mu^2 d^5.$ This gives , \begin{align*}
    \widetilde{Z}_\ell(\lambda, \mu) (\rho_1 + \rho_2) \le 2d^2 \min(1, \mu^2 d^4) e^{-2\lambda (d - 1 - \ell)}
\end{align*}
The bound on $W$ now follows since $\widetilde{Z}_\ell(\lambda, \mu) \ge 2$ trivially. 
\end{proof}

\begin{proof}[Proof of Lemma \ref{lem:emmen_vs_full}] This is essentially the same as Lemma \ref{lem:emmen_control_y}, and may be proved similarly. \end{proof}

\subsubsection{The Clique versus the Empty Graph in High Temperatures}\label{appx:pf_of_clique_versus_empty}

\begin{proof}[Proof of Proposition \ref{prop:clique_vs_empty}] 

This proof heavily relies on techniques we encountered in \cite{cao2018high}. The principal idea is via the following representation of the law of an Ising model with uniform edge weights, and the subsequent expression (and upper bound) for its partition function, both of which we encountered in the cited paper.

Let $\tau = \tanh(\mu)$. Then the law of the Ising model on a $m$-vertex graph $G$ with uniform weights $\alpha$ is \[P(X = x) = \frac{\prod_{(i,j) \in G} (1 + \tau X_iX_j)}{2^m \mathbb{E}_0[ \prod_{(i,j) \in G} (1 + \tau X_iX_j)},\] where $\mathbb{E}_0$ denotes expectation with respect to the uniform law on $\{-1, 1\}^m.$ This is shown by noticing that $\exp{x} = \cosh(x) (1 + \tanh(x))$, and then observing that for $x = \mu X_i X_j,$ since $X_iX_j = \pm 1,$ the same is equal to $\cosh(\mu) (1 + \tanh(\mu) X_iX_j)$. The $\cosh(\mu)$ term is fixed for all entries, and thus vanishes under the normalisation. The denominator is simply a restatement of $\sum_{\{-1,1\}^m} \prod_{(i,j) \in G} (1 + \tau X_iX_j).$

Let the denominator of the above be denoted $2^m \Phi(\tau ; G)$. We further have the expansion \[ \Phi(\tau ;G) = \sum_{u \ge 0} \mathscr{E}(u,G) \tau^u,\] where $\mathscr{E}(j,G)$ denotes the number of `Eulerian subgraphs of $G$', where we call a graph Eulerian if each of its connected components is Eulerian (and recall that a connected graph is Eulerian if and only if each of its nodes has even degree). This arises by expanding the above product out to get \[ \Phi(\tau;G) = \sum_{u \ge 0} \tau^u \cdot \sum_{ \textrm{choices of $u$ edges $(i_1, j_1), (i_2, j_2), \dots (i_u, j_u)$}} \mathbb{E}_0[ X_{i_1}X_{j_1} \dots X_{i_u}X_{j_u}]. \]

Now, due to the independence, if any node of the $X_i$s or the $X_j$s appears an odd number of times in the product, the expectation of that term under $\mathbb{E}_0$ is zero. If they all appear an even number of times, the value is of course $1$. Thus the inner sum, after expectation, amounts to the number of groups of $u$ edges such that each node occurs an even number of times in this set of edges, which corresponds to the number of Eulerian subgraphs of $G$, defined in the above way. 

A further subsidiary lemma controls the size of $\mathscr{E}(u,G)$ as follows, where we abuse notation and use $G$ to denote the adjacency matrix of the graph $G$.\[ \mathscr{E}(u,G) \le (2\|G\|_F)^u.\] The idea behind this is to first control the number of length-$v$ closed walks in a graph, by noticing that the total number of length $v$ walks from $i$ to $i$ is $(G^v)_{i,i}$, summing which up gives an upper bound on the number of closed length $v$ walks of $\mathrm{Tr}(G^v) \le \|G\|_F^v.$ Next, we note that to get an Eulerian subgraph of $G$ with $u$ edges, we can either take a closed walk of length $u$ in $G$, or we can add a closed walk of length $v \le u-2$ to an Eulerian subgraph with $u - v$ edges. This yields a Gr\"{o}nwall-style inequality that the authors solve inductively. Please see \cite[Lemma A.1]{cao2018high}.

Now, let $P$ be the Ising model $K_m$ with uniform weight $\alpha,$ and let $Q$ be the Ising model on the empty graph on $m$ nodes.  Using the above expression for the law of an Ising model, we have \[ 1 + \chi^2(Q\|P) = \mathbb{E}_Q[Q/P] = \mathbb{E}_0[ \prod_{i < j}(1 +  \tau X_iX_j)] \mathbb{E}_0[  \prod_{i<j}(1 + \tau X_iX_j)^{-1}] ,\] which, by multiplying and dividing each term in the second expression by $1 - \tau X_iX_j,$ and noting that $X_i^2X_j^2 = 1,$ may further be written as \begin{align*} 1 + \chi^2(Q\|P) &= \mathbb{E}[\prod_{i < j}(1 + \tau X_iX_j)] \mathbb{E}\left[\frac{\prod_{i < j}(1 - \tau X_iX_j)}{(1- \tau^2)^{-\binom{m}{2}}}\right] \\ &= \Phi(\tau;K_m)\Phi(-\tau;K_m) (1- \tau^2)^{-\binom{m}{2}}.\end{align*}

Since the above expression is invariant under a sign flip of $\tau,$ we may assume, without loss of generality, that $\tau \ge 0.$ Next, notice, due to the expansion in terms of $\mathscr{E}$ of $\Phi,$ that $\Phi(-\tau;K_m) \le \Phi(\tau;K_m)$ for $\tau \ge 0.$ Further, for $\tau \ge 0,$ using the bound on $\mathscr{E}(u,G)$, \[ \Phi(\tau;K_m) \le \mathscr{E}(0;K_m) + t\mathscr{E}(1;K_m) + t^2\mathscr{E}(2; K_m) + \sum_{u \ge 3} (2t\|K_m\|_F)^u. \]

Now notice that $\mathscr{E}(0;K_m) = 1,$ and $\mathscr{E}(1;K_m) = \mathscr{E}(2;K_m) = 0$. The first of these is because there is only a single empty graph, while the other two follow since $K_m$ is a simple graph. Further, $\|K_m\|_F = \sqrt{m(m-1)} \le m$. Thus, we have \[ \Phi(\tau;K_m) \le 1 + \sum_{u \ge 3} (2tm)^u.\] Now, since $2\tanh(\alpha)m \le 2\alpha m \le 1/16 < 1/2,$  we sum up and bound the geometric series to conclude that $\Phi(\tau;K_m) \le 1 + 16 (tm)^3 \le 1 + (tm)^2,$ and as a consequence, \[ \Phi(\tau;K_m)^2 \le (1 + (tm)^2)^2 \le 1 + 3(tm)^2 \le \exp{3(tm)^2}.\]

Further, since $\tau m < 1/32,$  and $m \ge 1,$ we have $\tau < 1/32,$ which in turn implies that $(1- \tau^2)^{-1} \le \exp{2\tau^2}$. Thus, we find that \[ 1 + \chi^2(P\|Q) \le \exp{3 (\tau m)^2} \cdot (\exp{2\tau ^2})^{m^2/2} \le \exp{4(\tau m)^2}\le 1 + 8(\tau m)^2,\] where the final inequality uses the fact that for $x < \ln(2),$ $e^x \le 1 + 2x,$ which applies since $4(\tau m)^2 \le 4/(32)^2 < \ln(2).$\qedhere
\end{proof}

It is worth noting that Proposition \ref{prop:high_temp_clique} is also shown in the above framework by \cite{cao2018high}. The main difference, however, is that in the $\chi^2$ computations, the square of $\prod (1 + \tau X_iX_j)$ appears. The technique the authors use is to extend the notion of $\mathscr{E}$ to multigraphs, and show the same expansion for these, along with the same upper bound for $\mathscr{E}(u,G)$, this time with the entries of $G$ denoting the number of edges between the corresponding nodes.

\end{appendix}
\end{document}